\patchcmd{\section}{\scshape}{\large\bfseries}{}{}
\renewcommand{\@secnumfont}{\bfseries}
\numberwithin{equation}{section}
\newtheorem{thma}{Theorem}
\newtheorem{thmas}{Theorem}[section]
\newtheorem{lemma}{Lemma}[section]
\newtheorem{cor}{Corollary}[section]
\newtheorem{prop}{Proposition}[section]
\newtheorem{defin}{Definition}[section]
\theoremstyle{remark}
\newtheorem{rem}[lemma]{Remark}
\newcommand{\rd}[1]{\leavevmode{\color{red}{#1}}}
\newcommand{\nc}{\newcommand}
\def\eps{\varepsilon}   
\def\vphi{\varphi}
\def\smm{\smallsetminus}
\nc{\Aa}{{\mathcal{A}}}
\nc{\Bb}{{\mathcal{B}}}
\nc{\Cc}{{\mathcal{C}}}
\nc{\Dd}{{\mathcal{D}}}
\nc{\Ff}{{\mathcal{F}}}
\nc{\Gg}{{\mathcal{G}}}
\nc{\Hh}{{\mathcal{H}}}
\nc{\Ii}{{\mathcal{I}}}
\nc{\Jj}{{\mathcal{J}}}
\nc{\Kk}{{\mathcal{K}}}
\nc{\Ll}{{\mathcal{L}}}
\nc{\Mm}{{\mathcal{M}}}
\nc{\Nn}{{\mathcal{N}}}
\nc{\Oo}{{\mathcal{O}}}
\nc{\Pp}{{\mathcal{P}}}
\nc{\Qq}{{\mathcal{Q}}}
\nc{\Rr}{{\mathcal{R}}}
\nc{\Ss}{{\mathcal{S}}}
\nc{\Tt}{{\mathcal{T}}}
\nc{\Uu}{{\mathcal{U}}}
\nc{\Vv}{{\mathcal{V}}}
\nc{\Ww}{{\mathcal{W}}}
\nc{\Xx}{{\mathcal{X}}}
\nc{\Yy}{{\mathcal{Y}}}
\nc{\Zz}{{\mathcal{Z}}}
\nc{\mA}{{\mathrm{A}}}
\nc{\mB}{{\mathrm{B}}}
\nc{\mC}{{\mathrm{C}}}
\nc{\mD}{{\mathrm{D}}}
\nc{\mE}{{\mathrm{E}}}
\nc{\mF}{{\mathrm{F}}}
\nc{\mG}{{\mathrm{G}}}
\nc{\mH}{{\mathrm{H}}}
\nc{\mI}{{\mathrm{I}}}
\nc{\mJ}{{\mathrm{J}}}
\nc{\mK}{{\mathrm{K}}}
\nc{\mL}{{\mathrm{L}}}
\nc{\mM}{{\mathrm{M}}}
\nc{\mN}{{\mathrm{N}}}
\nc{\mO}{{\mathrm{O}}}
\nc{\mP}{{\mathrm{P}}}
\nc{\mQ}{{\mathrm{Q}}}
\nc{\mR}{{\mathrm{R}}}
\nc{\mS}{{\mathrm{S}}}
\nc{\mT}{{\mathrm{T}}}
\nc{\mU}{{\mathrm{U}}}
\nc{\mV}{{\mathrm{V}}}
\nc{\mW}{{\mathrm{W}}}
\nc{\mX}{{\mathrm{X}}}
\nc{\mY}{{\mathrm{Y}}}
\nc{\mZ}{{\mathrm{Z}}}
\nc{\BB}{{\mathbb{B}}}
\nc{\CC}{{\mathbb{C}}}
\nc{\DD}{{\mathbb{D}}}
\nc{\EE}{{\mathbb{E}}}
\nc{\FF}{{\mathbb{F}}}
\nc{\GG}{{\mathbb{G}}}
\nc{\HH}{{\mathbb{H}}}
\nc{\II}{{\mathbb{I}}}
\nc{\JJ}{{\mathbb{J}}}
\nc{\KK}{{\mathbb{K}}}
\nc{\LL}{{\mathbb{L}}}
\nc{\MM}{{\mathbb{M}}}
\nc{\NN}{{\mathbb{N}}}
\nc{\OO}{{\mathbb{O}}}
\nc{\PP}{{\mathbb{P}}}
\nc{\QQ}{{\mathbb{Q}}}
\nc{\RR}{{\mathbb{R}}}
\nc{\TT}{{\mathbb{T}}}
\nc{\UU}{{\mathbb{U}}}
\nc{\VV}{{\mathbb{V}}}
\nc{\WW}{{\mathbb{W}}}
\nc{\XX}{{\mathbb{X}}}
\nc{\YY}{{\mathbb{Y}}}
\nc{\ZZ}{{\mathbb{Z}}}
\DeclareMathOperator{\Tr}{\mathrm{Tr}}
\DeclareMathOperator{\SL}{\mathrm{SL}}
\DeclareMathOperator{\Id}{\mathrm{Id}}
\DeclareMathOperator{\supp}{\mathrm{supp}}
\DeclareMathOperator{\length}{\mathrm{length}}
\def\dist{\mathrm{dist}}
\def\diam{\mathrm{diam}}
\DeclareMathOperator{\Var}{\mathrm{Var}}
\def\loc{\mathrm{loc}}
\let\Re\relax
\let\Im\relax
\DeclareMathOperator{\Re}{\mathrm{Re}}
\DeclareMathOperator{\Im}{\mathrm{Im}}
\nc{\Su}{\overline{M}}
\nc{\oSuc}{\oSu^\delta}
\nc{\oSu}{M}
\nc{\Sur}{\Sigma}
\nc{\tiling}{\Diamond}
\nc{\G}{G_\delta}
\nc\K{K^\delta}
\nc\uS{\underline{S}^\delta}
\nc{\dG}{G^\dagger_\delta}
\nc{\w}{\mathrm{w}}
\nc{\fl}{\omega}
\nc{\flow}{\mathrm{flow}}
\nc{\dif}{\mathbf{w}}
\nc{\sqdif}{\mathbf{\varpi}}
\nc{\Div}{D}
\nc{\Bp}{\mathrm{B}}
\nc{\pf}{\mathrm{E}}
\nc{\spinor}{\varsigma}
\nc\inst{l}
\nc\insth[1]{\inst_{h_{#1}}}
\nc\iinst[1]{\inst^{\mathrm{int}}_{h_{#1}}}
\nc{\instm}{\inst_{\dif}}
\nc\class[1]{[#1]}
\nc\chr[2]{\mbox{\small$\left[ #1 \atop #2  \right]$}}
\def\cst{\mathrm{cst}}
\nc{\Fun}{\mathrm{Fun}}
\nc{\indic}{1\!\!1}
\nc{\sign}{\epsilon}
\nc{\CLE}{\mathrm{CLE}}
\nc{\Llor}{\Ll^{\shortrightarrow}}
\nc{\Cbw}{\mathcal{C}}
\nc{\Aring}{\mathcal{R}}
\nc{\sss}{p(s)}
\nc{\Cpd}{\mathbb{C}^+_\delta}
\nc{\fOd}{\delta}
\nc{\cyclic}{\mathcal{C}}
\nc{\NOd}{N_\delta}
\nc{\NCd}{N_\delta}
\nc{\Adj}{\mathrm{Adj}}
\DeclareMathOperator{\CR}{\mathrm{CR}}
\DeclareFontFamily{U}{matha}{\hyphenchar\font45}
\DeclareFontShape{U}{matha}{m}{n}{
      <5> <6> <7> <8> <9> <10> gen * matha
      <10.95> matha10 <12> <14.4> <17.28> <20.74> <24.88> matha12
      }{}
\DeclareSymbolFont{matha}{U}{matha}{m}{n}
\DeclareMathSymbol{\nabla}{0}{matha}{"B0}
\begin{document}

\title[Nesting of double-dimer loops: local fluctuations and convergence of the field]
{Nesting of double-dimer loops: local fluctuations and convergence to the nesting field of $\mathrm{CLE}(4)$}

%\date{\today}
\author[Mikhail Basok]{Mikhail Basok$^\mathrm{a}$}
\author[Konstantin Izyurov]{Konstantin Izyurov$^\mathrm{a}$}

\thanks{\textsc{${}^\mathrm{A}$ University of Helsinki.}}

\keywords{dimer model, nesting field, conformal loop ensemble, discrete multivalued functions}

\begin{abstract} We consider the double-dimer model in the upper-half plane discretized by the square lattice with mesh size $\delta$. For each point $x$ in the upper half-plane, we consider the random variable $N_\delta(x)$ given by the number of the double-dimer loops surrounding this point. We prove that the normalized fluctuations of $N_\delta(x)$ for a fixed $x$ are asymptotically Gaussian as $\delta\to 0+$. Further, we prove that the double-dimer nesting field $N_\delta(\cdot) - \EE N_\delta(\cdot)$, viewed as a random distribution in the upper half-plane, converges as $\delta\to 0+$ to the nesting field of $\CLE(4)$ constructed by Miller, Watson and Wilson~\cite{miller2015conformal}.
\end{abstract}

\maketitle

\tableofcontents

\section{Introduction}
\label{sec:introduction}

The dimer and double-dimer models are among the most studied models in planar statistical mechanics in recent years. To name a few, it has connections to random matrices and interacting particle systems \cite{johansson2002non}, algebraic geometry \cite{kenyon2006dimers,kenyon2007limit}, calculus of variations and PDE \cite{cohn2001variational, astala2020dimer}, integrable systems \cite{goncharov2013dimers}. A facet of particular interest for us is that the dimer model was one of the first models where conformal invariance of the scaling limit and its relation to the two-dimensional conformal field theory were rigorously established, when Kenyon proved the convergence of the height function to the Gaussian free field \cite{KenyonConfInvOfDominoTilings, KenyonGFF} . 

Since its introduction by Schramm \cite{schramm2000scaling}, Stochastic Loewner evolutions (SLE) and its relatives, Conformal loop ensembles (CLE) \cite{sheffield2012conformal} have taken a central place in the study of critical planar systems. With regard to the double-dimer model, a conjecture attributed to Kenyon in \cite{rohde2011basic} states that interfaces and loop ensembles in the double-dimer model converge to SLE${}_4$ and CLE${}_4$ respectively. 
This conjecture is widely expected to be true, and reasonable progress toward proving it was made during the last decade~\cite{KenyonConfInvOfDominoTilings,DubedatDoubleDimers,BasokChelkak,bai2023crossing}, culminating in a proof of convergence of probabilities of cylindrical events for the double-dimer loop laminations in the upper half-plane to the corresponding $\CLE_4$ quantities. However, in its full strength, the conjecture remains open, mainly because unlike many other models, the double-dimer model lacks an appropriate Russo--Seymour--Welsh theory, and, consequently, \emph{a priori} precompactness results for curves and loops.

A loop ensemble in a discrete domain $\Omega_\delta$ can be encoded by its nesting function $N_\delta(\cdot)$, where $\NOd(x),$ $x\in\Omega_\delta$, counts the number of loops in an ensemble surrounding $x$. While in the discrete, the two descriptions are equivalent, the question of the asymptotic behavior of $N_\delta$ as $\delta\to 0$ is different from the question of convergence of the loop ensemble; one may say that it concerns a different mode of convergence. This question is of particular interest for the double-dimer model given that the convergence of curves is not yet fully established.

In this paper, we state our main results for the double-dimer model in the upper half-plane $\CC^+_\delta:=\CC^+\cap\delta\ZZ^2$, where $\CC^+=\{z:\Im z>0\}$ and $\delta>0$ is the mesh size. This model is defined as a suitable infinite volume limit, see Section~\ref{sec:Height_function_loop_statistics_and_monodromy}. For extensions to Temperleyan domains, see the discussion at the end of the present introduction. Our first result concerns the behavior of $N_\delta$ at a single bulk point $v$:

\begin{thma}
    \label{thma:main1}
    Consider the double-dimer model in $\CC^+_\delta$, let $v$ be a fixed point in the upper half-plane, and denote by $\NCd(v)$ the number of double-dimer loops encircling $v$. Then we have \begin{enumerate}
        \item The average number of encircling loops has the asymptotics $\mu_\delta \coloneqq \EE \NCd(v) = -\frac{1}{\pi^2}\log \delta + O(1)$,
        \item The variance of the number of loops has the asymptotics $\sigma_\delta^2 \coloneqq \Var  \NCd(v) = -\frac{2}{3\pi^2}\log \delta + O(1)$,
        \item And we have for the normalized fluctuations:
        \[
        \frac{\NOd(v) - \mu_\delta}{\sigma_\delta} \xrightarrow{\text{in distribution}}\Nn(0,1),\quad \delta\to 0+.
        \]
    \end{enumerate}
\end{thma}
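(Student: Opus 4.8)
The plan is to reduce all three assertions to a single asymptotic statement about the generating function $\theta\mapsto\EE[(\cos\theta)^{\NCd(v)}]$, computed through a twisted Kasteleyn determinant, and then to read off (1)--(3) by differentiating, respectively by taking a Laplace transform of, this generating function. \emph{Step 1: a monodromy identity.} Following Kenyon and Dubédat \cite{DubedatDoubleDimers}, I would couple the double-dimer model to a unitary flat line bundle with holonomy $e^{i\theta}$ around $v$. Writing $\rho_\theta$ for the associated flat connection on $\CC^+_\delta\smm\{v\}$ --- realized on the lattice by multiplying by $e^{\pm i\theta}$ the edges crossing a fixed dual path from $v$ to $\RR$ --- and $K^{(\rho_\theta)}$ for the twisted Kasteleyn operator, and using that the double-dimer measure is the superposition of two independent dimer covers while double-dimer loops are simple (so every loop has winding $0$ or $\pm1$ about $v$ and contributes $2$, respectively $e^{i\theta}+e^{-i\theta}$), the Kenyon--Dubédat evaluation of the twisted partition function yields, first in finite volume and then on $\CC^+_\delta$ by passing to the infinite-volume limit,
\[
\EE\bigl[(\cos\theta)^{\NCd(v)}\bigr]\;=\;(\text{explicit normalization})\times\frac{\det K^{(\rho_\theta)}}{\det K},
\]
the right-hand side being understood as a relative determinant. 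For each fixed $\delta$ the left-hand side is even in $\theta$ and extends analytically to a complex neighbourhood of $\RR$, since $\NCd(v)$ has finite exponential moments at positive mesh.

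\emph{Step 2: asymptotics of the generating function --- the heart of the proof.} I would establish that there are $\theta_0\in(0,\pi)$, a complex neighbourhood $U$ of $[0,\theta_0]$, and a function $G$ analytic on $U$ with $G(0)=0$, such that
\[
\log\EE\bigl[(\cos\theta)^{\NCd(v)}\bigr]\;=\;\frac{\theta^2}{2\pi^2}\,\log\delta\;+\;G(\theta)\;+\;o(1),\qquad\delta\to0+,
\]
uniformly on compact subsets of $U$, with $\theta\mapsto\delta^{-\theta^2/(2\pi^2)}\EE[(\cos\theta)^{\NCd(v)}]$ uniformly bounded on $U$. The exponent $\tfrac{\theta^2}{2\pi^2}$ is the conformal weight of the electric/vortex operator inserted at $v$ in the $c=1$ Gaussian-free-field description underlying $\CLE(4)$, in accordance with Miller--Watson--Wilson \cite{miller2015conformal}. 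Concretely, one differentiates: $\partial_\theta\log\det K^{(\rho_\theta)}=\Tr\bigl((K^{(\rho_\theta)})^{-1}\partial_\theta K^{(\rho_\theta)}\bigr)$ is a finite sum, over the edges crossing the dual path, of products of entries of the twisted coupling function $(K^{(\rho_\theta)})^{-1}$ with Kasteleyn weights; one then feeds in the precise near-$v$ asymptotics of this twisted coupling function --- the discrete multivalued holomorphic function with monodromy $e^{i\theta}$ at $v$, the lattice analogue of $z\mapsto z^{\pm\theta/\pi}$, satisfying the boundary conditions on $\RR$ imposed by the half-plane geometry and the correct decay at infinity --- and integrates in $\theta$ from $0$; its short-distance expansion is what produces the power $\delta^{\theta^2/(2\pi^2)}$. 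This step rests on the discrete complex analysis of \cite{BasokChelkak,DubedatDoubleDimers}; the new points are pinning down the exact constant and obtaining uniform control (and analyticity) in $\theta$.

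\emph{Step 3: deducing (1), (2) and (3).} For each $\delta$ the generating function is analytic, and by Step 2 the rescaled family is uniformly bounded on a complex disc about $0$, so Vitali's theorem upgrades the convergence there to convergence of all $\theta$-derivatives at $0$; using evenness (odd derivatives vanish) one gets $\EE\,\NCd(v)=-\partial_\theta^2\big|_{0}\EE[(\cos\theta)^{\NCd(v)}]=-\tfrac1{\pi^2}\log\delta+O(1)$, which is (1), and then, extracting $\EE\,\NCd(v)^2$ from $\partial_\theta^4\big|_0\EE[(\cos\theta)^{\NCd(v)}]$ (which equals $3\,\EE\,\NCd(v)^2-2\,\EE\,\NCd(v)$) and combining with (1) --- the leading $(\log\delta)^2$ terms cancel --- one obtains $\Var\NCd(v)=-\tfrac{2}{3\pi^2}\log\delta+O(1)$, which is (2). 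For (3), fix $s\in\RR$, set $u_\delta:=s/\sigma_\delta\to0$ and $\theta_\delta$ with $\cos\theta_\delta=e^{-u_\delta}$ (so $\theta_\delta\to0$ and $\theta_\delta^2=2u_\delta-\tfrac23u_\delta^2+O(u_\delta^3)$); Step 2 then gives
\begin{align*}
\log\EE\bigl[e^{-s(\NCd(v)-\mu_\delta)/\sigma_\delta}\bigr]
&=\frac{s\mu_\delta}{\sigma_\delta}+\frac{\theta_\delta^2}{2\pi^2}\log\delta+G(\theta_\delta)+o(1)\\
&=\frac{s^2}{2}+o(1),
\end{align*}
since $\tfrac{\log\delta}{\pi^2}u_\delta=-\tfrac{s\mu_\delta}{\sigma_\delta}+o(1)$ by (1), $-\tfrac{\log\delta}{3\pi^2}u_\delta^2=\tfrac{s^2}{2}+o(1)$ by (2), the cubic remainder is $O(|\log\delta|^{-1/2})$, and $G(\theta_\delta)\to G(0)=0$ (for $s<0$, with $\cos$ replaced by $\cosh$, symmetrically). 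Thus the moment generating function of $(\NCd(v)-\mu_\delta)/\sigma_\delta$, finite on a fixed neighbourhood of $0$ for all small $\delta$, converges there to that of $\Nn(0,1)$; by the continuity theorem for moment generating functions this proves (3). (If Step 2 extends to a full complex disc about $0$ one instead reads off $\EE[e^{it\NCd(v)}]$ and invokes Lévy's continuity theorem.)

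\emph{Main obstacle.} Essentially all the difficulty is in Step 2: controlling the twisted coupling function --- a discrete multivalued holomorphic function with prescribed monodromy at $v$, the right boundary conditions on $\RR$, and the correct decay at infinity for the infinite-volume upper-half-plane model --- uniformly in $\delta$ and analytically in $\theta$, and distilling from its short-distance behaviour the exact exponent $\tfrac{\theta^2}{2\pi^2}$ together with the boundedness of the $O(1)$ terms and the continuity of $G$ at $0$. The non-compactness of $\CC^+$ compounds this: the relative determinant and its $\theta$-derivative are represented by only conditionally convergent sums, so the decay of $(K^{(\rho_\theta)})^{-1}$ away from $v$ must be handled carefully, and the Step-1 identity, stated first in finite volume, must be transported to the infinite-volume model.
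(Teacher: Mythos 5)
Your plan follows the same starting point as the paper---the monodromy identity $\EE[(\cos\theta)^{\NCd(v)}]=\det(K_{\CC^+_\delta,\rho_\theta}K^{-1}_{\CC^+_\delta,\Id})$ and its analysis via the variational formula for $\log\det$ and the near-diagonal asymptotics of the twisted inverse---but from there it branches in a genuinely different, and more demanding, direction.

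The paper does \emph{not} establish anything like your Step~2. It proves only a derivative bound (Proposition~\ref{prop: log_der_Laplace}): $\frac{d}{d\lambda}\log\EE e^{-\lambda X_\delta}=\lambda+O(\lambda^{-1/2}|\log\delta|^{-1/4})$ for real $\lambda\in(0,R]$, with an error that blows up as $\lambda\to 0$ but integrates to $o(1)$. The crucial feature is that the remainder $\Sigma_2$ there is only shown to be $O(1)$---not to converge. Your Step~2 implicitly claims convergence of all the $O(1)$ pieces (to an analytic $G(\theta)$ with $G(0)=0$) and uniform boundedness for \emph{complex} $\theta$ in a neighborhood of $[0,\theta_0]$; these are strictly stronger statements. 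Establishing them would require, beyond the real-$s$ uniformity of $K_s^{-1}$ as $s\to 0$ (the paper's Lemma~\ref{lemma:C_is_uniform}), an identification of the constant term in the near-diagonal asymptotics of the twisted coupling function and an extension of the whole apparatus (maximum principle, a priori bounds) to complex monodromies, none of which is in the paper or, to my knowledge, in Dub\'edat's work. So Step~2 is the genuine gap: you acknowledge it as the heart of the matter, but it is not achievable by the paper's techniques as they stand, and you give no indication of how you would obtain the uniform complex control and the convergence of the subleading terms.

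Parts~(1) and~(2) are handled quite differently in the paper, in a way that avoids monodromy entirely. The identities $\EE\NOd(v)=\EE h_\delta(v)^2$ and $3\,\EE\NOd(v)^2-2\,\EE\NOd(v)=\EE h_\delta(v)^4$ reduce them to fourth moments of the double-dimer height function, and these are controlled by an approximate Wick rule (Lemma~\ref{lemma:Wick_rule_dimer_height}), giving $\EE h^4_\delta=3(\EE h^2_\delta)^2+O(1)$. This is both lighter (no twisted operators at all) and sharper than what you would get by differentiating the generating function at $0$, which under your plan would further require Vitali and hence again the full strength of Step~2. It is worth noting that the paper explicitly remarks that its proof of part~(3) is logically independent of parts~(1) and~(2); your Step~3, by contrast, feeds (1) and (2) back in. That is not a flaw, but it does make your route circular if (1) and (2) are themselves derived from Step~2.

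Finally, two smaller points. Your Step~3 arithmetic (the cancellations in $\frac{s\mu_\delta}{\sigma_\delta}+\frac{\theta_\delta^2}{2\pi^2}\log\delta$, and the $O(|\log\delta|^{-1/2})$ bound on the cubic remainder) is correct. And your appeal to the moment-generating-function continuity theorem requires two-sided control (real and imaginary $\theta$); the paper instead proves an elementary one-sided Laplace-transform continuity lemma (Lemma~\ref{lem:tightness}) that only needs $\lambda\geq 0$, which is another place where the paper's route is strictly lighter than yours.
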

These results, including the constants $\frac{1}{\pi^2}$ and $\frac{2}{3\pi^2}$, are (expected to be) universal, and, in fact, they are easily seen to hold for $\CLE_4$, when one replaces $\NCd(v)$ with the number of $\CLE_4$ loops encircling $x$ whose (conformal) radius seen from $x$ is at least $\delta$. This is a consequence of the fact that $-\log \mathrm{CR}(\gamma_n,x)$, where $\gamma_n$ is the $n$-th outermost loop surrounding $x$ and $\mathrm{CR}$ denotes the conformal radius, is a sum of i.i.d. random variables with known distribution, see~\cite{schramm2009conformal}. Note, however, that since Theorem~\ref{thma:main1} concerns loops on all scales up to the lattice scale, this type of results do not follow easily from convergence to CLE and CLE computations.

Our second aim is to study $\NCd(x)$ as $x$ varies. In this case, we consider the fluctuations of the  \emph{double-dimer nesting field} $\vphi_\delta(x) = \NCd(x) - \EE \NCd(x)$ without further normalization. Obviously, $\vphi_\delta$ does not have a point-wise limit, but we can consider $\vphi_\delta$ as a field, more precisely, as a random generalized functions acting on test functions by integration. A natural candidate for the scaling limit of $\vphi_\delta$ is the \emph{$\CLE{}_4$ nesting field}, which we denote by $\varphi$. In~\cite{miller2015conformal} Miller, Watson and Wilson introduced and studied the so-called nesting fields for $\CLE{}_\kappa$ for $\kappa\in (8/3,8)$. They constructed these fields as the limit $\varphi(\cdot):=\lim_{\eps\to 0} (N^{B\eps}(\cdot)-\EE N^{B\eps}(\cdot))$, where $N^{B\eps}(x)$ is the number of loops in CLE surrounding a disc of radius $\eps$ centered at $x$. They proved that the limit exists almost surely in the topology of any Sobolev space $H^{-2-\nu}_\loc$ ($\nu>0$), and is conformally invariant; in fact, as we explain in Section \ref{subsec:nesting_fields_for_CLE(4)}, their arguments imply that the convergence in probability holds with respect to a stronger topology of $H^{-1-\nu}_\loc$. Our second result is as follows:

\begin{thma}
    \label{thma:main2}
  Let $\vphi_\delta = \NCd - \EE \NCd$ denote the double-dimer nesting field sampled in the discrete upper half-plane $\CC^+_\delta = \delta \ZZ^2\cap \CC^+$. Then for any $\nu>0$, the fields $\vphi_\delta$ converge to the CLE${}_4$ nesting field $\vphi$ in distribution with respect to the topology of the local Sobolev space $H^{-1-\nu}_\loc(\CC^+)$.
\end{thma}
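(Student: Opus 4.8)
The plan is to combine Theorem~\ref{thma:main1} (and its natural generalization to correlations of the nesting numbers at several points, and to the conformal-radius cutoff on the CLE side) with a tightness-plus-finite-dimensional-distributions argument in the Sobolev space $H^{-1-\nu}_{\loc}(\CC^+)$. Convergence in distribution in this Polish space follows from (a) tightness of the family $\{\vphi_\delta\}_{\delta>0}$ in $H^{-1-\nu}_{\loc}(\CC^+)$, together with (b) convergence of $\langle \vphi_\delta, f\rangle$ in distribution to $\langle\vphi,f\rangle$ for every fixed test function $f\in C_c^\infty(\CC^+)$ — indeed, since $H^{-1-\nu}_\loc$ embeds compactly into $H^{-1-\nu'}_\loc$ for $\nu'>\nu$, tightness in the former gives relative compactness in the latter, and (b) pins down the limit uniquely. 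So the proof splits into these two tasks.

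For (b), the key input is that, by the representation of the nesting field via winding/monodromy of discrete multivalued functions developed in the body of the paper, the pairing $\langle\vphi_\delta,f\rangle$ is (up to a deterministic shift) a linear statistic of the same objects controlling $\NCd(v)$ in Theorem~\ref{thma:main1}. I would first extend Theorem~\ref{thma:main1}(3) to joint convergence: for points $v_1,\dots,v_k\in\CC^+$ and the associated diagonal monodromy data, the vector of normalized nesting fluctuations converges to a Gaussian vector whose covariance is read off from the (logarithmic) Green-function-type kernel that already appears implicitly in parts (1)--(2). Concretely, one computes $\EE[\vphi_\delta(x)\vphi_\delta(y)]$ and shows it converges, as $\delta\to0$, to the covariance kernel $G(x,y)$ of the $\CLE_4$ nesting field $\vphi$ — which by Miller--Watson--Wilson is a constant multiple of the Green function of $\CC^+$ with a logarithmic diagonal singularity, the same kernel underlying the $\frac{2}{3\pi^2}$ in part (2). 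Integrating against $f\otimes f$ and invoking a CLT for the relevant linear functional (the one already proven, now upgraded to the linear-statistic/multi-point setting via the Cramér--Wold device and the combinatorial/analytic machinery of the paper) yields that $\langle\vphi_\delta,f\rangle\Rightarrow\Nn(0,\iint f(x)G(x,y)f(y)\,dx\,dy)=\langle\vphi,f\rangle$ in law. One must also match the \emph{deterministic} centering: $\EE\NCd(x)$ and $\EE N^{B\eps}(x)$ differ by the difference between the lattice cutoff and the conformal-radius cutoff, which is $O(1)$ uniformly and, crucially, \emph{converges} locally uniformly to a smooth function of $x$, so it contributes nothing to the centered field in the limit — this is where the $-\frac1{\pi^2}\log\delta+O(1)$ asymptotics of part (1), made \emph{uniform in $x$ on compacts with a convergent $O(1)$ term}, is used.

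For (a), tightness in $H^{-1-\nu}_\loc$, the standard criterion is to bound $\EE\|\vphi_\delta\|_{H^{-1-\nu}(K)}^2$ uniformly in $\delta$ for each compact $K\subset\CC^+$; since $H^{-1-\nu}$ is a Hilbert space with an explicit orthonormal basis (e.g.\ a smooth partition-of-unity localization followed by a Fourier/Dirichlet-eigenfunction expansion), this reduces to showing $\sum_j (1+\lambda_j)^{-1-\nu}\,\EE\langle\vphi_\delta,e_j\rangle^2 \le C(K)<\infty$, uniformly in $\delta$. The summand is controlled by the two-point function bound $|\EE[\vphi_\delta(x)\vphi_\delta(y)]|\le C(K)(|{\log|x-y||}+1)$ on $K\times K$, uniform in $\delta$ — a logarithmic singularity is integrable against $e_j\otimes e_j$ and produces eigenvalue decay fast enough for the sum to converge once $\nu>0$. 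Establishing this uniform-in-$\delta$ logarithmic two-point bound is the real work: it requires controlling the variance and covariance of nesting numbers down to the lattice scale, i.e.\ showing that the discrete monodromy pairing has the same $(|\log|x-y||)$-type decay as the continuum Green function without blowing up as $\delta\to0$, even when $|x-y|$ is comparable to $\delta$. I expect \textbf{this uniform two-point (and, for the CLT, higher-moment) estimate down to the microscopic scale to be the main obstacle}, precisely because — as the introduction stresses — no RSW/precompactness theory is available for the double-dimer model; all such bounds must instead be extracted from the exact/Kasteleyn-type structure and the analysis of discrete multivalued harmonic functions carried out in the earlier sections, and the absence of a priori regularity is exactly what makes the all-scales statement (as opposed to a soft consequence of CLE convergence) delicate. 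Once this estimate is in hand, tightness follows formally and, combined with (b), completes the proof. \qed
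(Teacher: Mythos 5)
Your proposal has a fatal conceptual gap in part (b): the $\CLE_4$ nesting field $\vphi$ is \emph{not Gaussian}, so the identification $\langle\vphi,f\rangle \stackrel{d}{=} \Nn(0,\iint f\,G\,f)$ that you propose to establish via a CLT is false, and the Cram\'er--Wold strategy for pinning down finite-dimensional distributions cannot work. To see why, observe that while Theorem~\ref{thma:main1} shows that $(\NOd(v)-\mu_\delta)/\sigma_\delta$ is asymptotically Gaussian, there the normalization $\sigma_\delta\sim\sqrt{|\log\delta|}\to\infty$ drives a genuine CLT, because loops on all scales between $1$ and $\delta$ contribute. For $\langle\vphi_\delta,f\rangle$ no such normalization is applied: the variance stays bounded as $\delta\to 0$, so there is no CLT mechanism, and the limit retains the non-Gaussian structure of the loop ensemble. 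Indeed, Corollary~\ref{lemma:nesting_via_height} relates $\vphi_\delta$ to the Wick-ordered square $\psi_\delta=:h_\delta^2:$ of the height function up to the correction term $P_\delta$; the Wick square of the GFF is already non-Gaussian (third cumulants do not vanish), and the correction $P_\delta$ is a quadratic polynomial in macroscopic nesting counts, which is again non-Gaussian. Unlike the GFF $h_\delta$, the field $\vphi_\delta$ has no sign symmetry $\vphi_\delta\stackrel{d}{=}-\vphi_\delta$, so odd moments genuinely survive.

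Beyond the Gaussianity error, your approach never actually touches the identification with the Miller--Watson--Wilson nesting field: you would have to prove that whatever field you converge to \emph{is} their $\vphi$, and a covariance match is far from sufficient for non-Gaussian fields. The paper's proof handles this entirely differently and never invokes a CLT. It introduces the two-point regularizations $\vphi^\eps_\delta(x)=\NOd(x,x+\eps)-\EE\NOd(x,x+\eps)$ and $\vphi^\eps(x)=N(x,x+\eps)-\EE N(x,x+\eps)$ and performs a three-way comparison: (i) $\vphi^\eps\to\vphi$ in the $\Ff_\loc$ topology as $\eps\to 0$ (Proposition~\ref{prop:two-point_approximation}, adapting MWW); (ii) $\vphi^\eps_\delta\to\vphi^\eps$ in distribution for each fixed $\eps$ (Lemma~\ref{lemma:vphi_delta_eps_to_vphi_eps}, using convergence of cylindrical events from Theorem~\ref{thmas:on_convergence_of_cyl_prob}, an external input from prior work); and (iii) $\|\vphi_\delta-\vphi^\eps_\delta\|_{\Ff(K)}$ is small uniformly in $\delta$ (Proposition~\ref{prop:phi_delta_eps_to_phi_delta}), which is where the combinatorial identities of Corollary~\ref{lemma:nesting_via_height} and the approximate Wick rule (Lemma~\ref{lemma:Wick_rule_dimer_height}) enter to control the singular small-$\eps$ part uniformly in $\delta$. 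This sidesteps any need to determine joint laws of $\NOd$ at several points, and in particular never requires a multi-point extension of the monodromy analysis. Your part (a) — uniform second-moment bounds implying tightness in $H^{-1-\nu}_\loc$ — is essentially the content of the paper's $\Ff_\loc$ framework (Proposition~\ref{prop:representable_operators}, Corollary~\ref{cor:Ff_is_complete}), so that part of your plan is on target; but the identification of the limit in part (b) needs to be replaced entirely.
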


The convergence in distribution in $H^{-1-\nu}_\loc(\CC^+)$ is equivalent to the convergence in distribution in $H^{-1-\nu}_\loc(K)$ for each open, relative compact $K\subset \CC^+$, see the proof of Theorem 2.

Let us discuss briefly our approach to the problem. As it is apparent nowadays, the dimer and double-dimer models are very hard to approach via soft probabilistic methods. The main obstacle is the lack of a nice domain Markov property which is caused by an extreme sensitivity of the model to the boundary conditions. The situation is better if the dimer model is sampled in a Temperleyan domain; in this case dimer configurations appear to be in a correspondence with spanning trees taken on the initial graph $\Gamma$, which allows to reduce the study of the dimer model to the UST model. Among other, this approach was employed in the series of papers~\cite{berestycki2020dimers},~\cite{BerestyckiLaslierRayI},~\cite{BerestyckiLaslierRayII},~\cite{benoit2022does},~\cite{laslier2021central} leading to many strong results concerning the dimer height function. Unfortunately, this machinery does not yet permit to analyze the behavior of double-dimer loops precisely enough to study their convergence: the double-dimer loops do not seem to be natural combinatorial objects arising directly from a pair of UST, while the correspondence between values of the height function and the loops is too subtle.

On the other hand, the dimer model in any planar domain is known to have a very rich algebraic structure due to the Kasteleyn theorem. In the case of graphs that are regular enough, such as lattices, this leads to a connection between the dimer model and discrete complex analysis (see~\cite{CLR1} for more details on this subject). This framework has been used broadly to study the planar dimer model starting from the landmark works of Kenyon~\cite{KenyonConfInvOfDominoTilings,KenyonGFF}, as well as numerous others some of which are~\cite{russkikh2018dimers, russkikh2020dominos, KLRR, CLR1, CLR2, chelkak2020fluctuations, berggren2024perfect, berggren2024perfectLozenge}. Due to the nature of this approach, some quantities that possess algebraic expressions in terms of the Kasteleyn matrix, such as point-wise height correlations or probabilities of topological events for double-dimer loops, are analyzed surprisingly precisely, while other very natural probabilistic aspects, such as crossing estimates for double-dimer loops, are still lacking satisfactory understanding, because it is not natural to approach them algebraically. As it turns out, nesting fields belong to the first type of objects. In particular, the starting point of our proof of Theorem~\ref{thma:main1} is the combinatorial relation
\[
    \EE \exp(i th_\delta(x)) = \EE (\cos t)^{\NOd(x)}
\]
between the Laplace transform of $\NOd(x)$ and the Fourier transform of the double-dimer height function at $x$. It was shown by Pinson~\cite{pinson2004rotational} that the `electric correlator' $\EE \exp(i t h_\delta(x))$ can be expressed via the determinant of the Kasteleyn matrix acting on the space of multivalued functions with monodromy $e^{it}$ around the face containing $x$ (see Section~\ref{sec:Height_function_loop_statistics_and_monodromy} for details). This observation was further developed in the work~\cite{DubedatFamiliesOfCR} by Dub\'edat to study correlations of the form $\EE \exp(i t_1h_\delta(x_1))\ldots\exp(i t_kh_\delta(x_k))$. The key point of Dub\'edat's work is computation of the near-diagonal asymptotics of the inverse Kasteleyn opeartor with monodromy: this asymptotics is used to control logarithmic variation of the aforementioned correlation when one of the points $x_i$ is moved.

In our proof we use a similar approach to the one of Dub\'edat, but in our case the point $x$ is fixed while the parameter $t$ is the one that varies. We use the same combinatorial idea to link the near-diagonal asymptotics of the inverse Kasteleyn operator with monodromy and the logarithmic variation of the Laplace transform $\EE (\cos t)^{\NOd(x)}$ with respect to $t$. Improving Dub\'edat's results on the asymptotics, we control the Laplace transform precisely enough to obtain the convergence result.

The main input for our proof of Theorem~\ref{thma:main2} is the known results~\cite{kenyon2014conformal, DubedatDoubleDimers, BasokChelkak, bai2023crossing} on convergence of probabilities of cylindrical events for the double-dimer laminations to the corresponding CLE${}_4$ probabilities, see Theorem~\ref{thmas:on_convergence_of_cyl_prob}. The first step is to modify the regularization procedure of \cite{miller2015conformal} used to define the nesting field, replacing $N^{B\eps}(x)$, i.e., the number of loops surrounding a disc to radius $\eps$ around $x$, with $N(x,x+\eps)$, the number of loops surrounding both $x$ and $x+\eps$. We show that this modification leads to the same field in the limit. The advantage is that correlations of its double-dimer counterpart, $N_\delta(x,x+\eps)$, can be now expressed in terms of cylindrical events, and thus converge as $\delta\to 0$ for a fixed $\eps$. We stress however that this alone is not sufficient to derive the convergence of nesting field, as there is a non-trivial exchange of limits involved. Put differently, since $\NOd(v)-\EE \NOd(v)$ counts all the loops surrounding $v$ up to the lattice mesh size, it does not appear to be possible to derive its convergence from convergence of \emph{macroscopic} loops to CLE${}_4$ alone. %In particular, we are not aware of any results on convergence of nesting fields in other models, even where convergence to CLE is known in a stronger form than it is for double dimers.

Our way to deal with this problem is highly model-specific, namely, we isolate the singular (as $\eps\to 0$) part of correlations of $\NOd(\cdot)$ in terms of the correlations of the \emph{square} of the height function, see Section~\ref{subsec:combinatorial_corresp_betwen_nesting_and_height}. The regularization procedure for the latter is similar to Wick normal ordering for the Gaussian free field, and we show that is can be controlled, uniformly with respect to $\delta$, using an approximate Wick's identity for the height function which may be of independent interest, see Lemma~\ref{lemma:Wick_rule_dimer_height}. 

It is natural to ask whether Theorems~\ref{thma:main1} and~\ref{thma:main2} extend to nice enough approximations $\Omega_\delta$ to arbitrary (say, simply connected) domains $\Omega$ with boundary instead of $\CC^+$. As mentioned above, the dimer model is notoriously sensitive to the structure of the boundary of the approximating domains $\Omega_\delta$. In particular, the conformal invariance results, such as convergence of the dimer height function to the Gaussian free field, do not hold in general in their literal form; instead, the domain may be divided into \emph{liquid}, \emph{gaseous} and \emph{frozen} regions, and convergence of the height function to the GFF is expected only in liquid region where the GFF is sampled with respect to a conformal structure whose definition might involve a non-trivial change of metric~\cite{kenyon2006dimers,kenyon2007limit}. This conjecture, supported by several partial results, remains widely open in general.

Thus, it is reasonable to restrict the class of approximations even further, and consider Temperleyan approximations. In our proof of Theorem 1, the ingredient that is only available in the upper half-plane is the construction and estimation of the inverse Kasteleyn matrix with monodromy, specifically Corollary \ref{cor:nea-diag_estimate_on_Krhoinv}. In fact, rather than carrying out this construction, it is easier to extend Theorem \ref{thma:main1} to the case of Temperleyan domains by as yet unpublished coupling arguments of the first author and B.~Laslier \cite{basoklaslier}. For Theorem \ref{thma:main2}, we stipulate that the only missing input is the extension of Theorem \ref{thmas:on_convergence_of_cyl_prob} to the case of Temperleyan domains; where again the main issue is the construction and estimation of the inverse Kasteleyn matrices with SL${}_2$($\RR$) monodromies. We believe that such extension is possible, but some technical details must be filled.

With that in mind, and in particular to stress that Theorem \ref{thma:main2} is in fact proved for Temperleyan domains conditionally on the extension of Theorem~\ref{thmas:on_convergence_of_cyl_prob}, for a large part of the paper we work in a domain $\Omega_\delta$ which is allowed to be either a bounded Temperleyan approximation to a bounded simply connected domain $\Omega$, with a horizontal or vertical boundary segment, or the upper half-plane $\CC^+_\delta$.

\subsection*{Organization of the paper} We begin with the technical Section~\ref{sec:Height_function_loop_statistics_and_monodromy} where we recall the classical discrete complex analysis framework and use it to obtain several estimates on height correlations and loop statistics for the double-dimer model. The section is concluded with Section~\ref{subsec:combinatorial_corresp_betwen_nesting_and_height} where we establish a combinatorial relation between two-point correlations of the nesting fields and certain observables expressed in terms of height function and macroscopic loops.

We continue with Section~\ref{sec:proof_of_main1} where we prove Theorem~\ref{thma:main1}. It begins with two subsections where we study the asymptotic of the inverse Kasteleyn operator with scalar monodromy around a puncture and two punctures, respectively. Theorem~\ref{thma:main1} is proven in Section~\ref{subsec:Proof_main1}. 

Finally, Section~\ref{sec:CLE_nesting} is devoted to the proof of Theorem~\ref{thma:main2}. We begin, in Section~\ref{subsec:def_of_mL_loc}, by preparatory lemmas concerning relevant spaces of random fields. In Section \ref{subsec:nesting_fields_for_CLE(4)} we review the construction of $\CLE_\kappa$ nesting fields developed by Miller, Watson, and Wilson. Our main technical objective there is to prove that in their construction, one can replace the number of loops encircling a ball of small radius about a point with the number of loops surrounding two points close to each other. After some more preparatory lemmas in Sections \ref{sec:FieldConv} and \ref{subsec:double-dimer_nesting_fields}, we complete a proof of Theorem~\ref{thma:main2} in Section~\ref{subsec:proof_main2}.

\subsection*{Acknowledgments} The work of both authors was supported by Academy of Finland through Academy projects Critical phenomena in dimension two: analytic and probabilistic methods (333932) and Lattice models and conformal field theory (363549). Part of this research was performed while M.B. was visiting the Institute for Pure and Applied Mathematics (IPAM), which is supported by the National Science Foundation (Grant No. DMS-1925919). We are grateful to Gaultier Lambert for pointing out the work \cite{pinson2004rotational} to us.
\section{Height function, loop statistics and monodromy}
\label{sec:Height_function_loop_statistics_and_monodromy}

We start this section by recalling the definition of Temperleyan domains. Recall that a domain $\Omega_\delta\subset \delta \ZZ^2$ is called \emph{Temperleyan} if it is obtained by the following procedure. First, consider a finite simply connected polygon $\gamma \subset 2\delta\ZZ^2$. Let $\widetilde{\Omega}_\delta$ consist of vertices of $\delta\ZZ^2$ that are inside or on $\gamma$; these are comprised of black vertices that are vertices of $2\delta\ZZ^2$ or its dual $(2\delta\ZZ+\delta)^2$, and white vertices that correspond to edges of $2\delta\ZZ^2$. The Euler formula implies that the number of black vertices in $\widetilde{\Omega}_\delta$ is bigger than the number of white vertices by one. We therefore remove from $\widetilde{\Omega}_\delta$ one black vertex lying on $\gamma$ to obtain a Temperleyan domain $\Omega_\delta$. We call the removed black vertex a ``root'' of the domain.

Let $\Omega\subset \CC$ be a proper simply-connected domain with a distinguished straight line horizontal segment on the boundary. We say that a sequence of Tempreley domains $\Omega_\delta$ approximates $\Omega$ nicely, if $\Omega_\delta$ converge to $\Omega$ in Carath\'eodory topology, and in addition there is a sequence of straight boundary segments of $\widetilde{\Omega}_\delta$ converging to the boundary segment of $\Omega$, and the roots of $\Omega_\delta$ converge to the middle point of the segment.

Throughout this section, $\Omega$ and $\Omega_\delta$ are either a bounded simply connected domain and its nice approximation by Temperleyan domains, or $\Omega=\CC$ and $\Omega_\delta=\CC^+_\delta$. In the former case, the primary object of study is the \emph{double dimer model}, which consists of two dimer covers (perferct matchings) of $\Omega_\delta$ chosen uniformly at random and independently of each other. When superimposed upon each other, two dimer covers form a collection of simple lattice loops, which is the primary object of our interest here.
In the the case $\Omega_\delta = \CC^+_\delta$ we define the corresponding probability measure by taking the limit over a sequence of finite Temperley domains exhausting $\CC^+_\delta$ with roots tending to infinity; as usual, this measure is considered with respect to the sigma-algebra generated by states of single edges. The dimer model constructed in this way inherit nice properties of the double-dimer models on finite Temperley domains, in a sense that local statistics can be still written via determinantal identities involving the unique inverse Kasteleyn operator vanishing at infinity, see~\cite[Section~5.4]{DubedatDoubleDimers}. The same property is classically known for the dimer Gibbs measure in the full-plane lattice $\ZZ^2$ with maximal entropy~\cite{de2007quadri},~\cite{kenyon1997local}; note that this measure can also be obtained by taking a limit along a sequence of finite Temperley domains. Another important property of the dimer model in $\CC^+_\delta$ is that there is almost surely no infinite path in the double-dimer configuration (see~\cite[Lemma~26]{DubedatDoubleDimers}).

\subsection{Combinatorial relations between nesting and height function}
\label{subsec:combinatorial_corresp_betwen_nesting_and_height}

Let us recall the definition of the height function of the dimer and the double dimer models; see~\cite{kenyon2009lectures} for more details. Given a collection $\Llor$ of oriented simple disjoint lattice loops in $\Omega_\delta$, we define $h_\delta^{\Llor}$ to be the function which is constant on faces of $\delta\ZZ^2$, zero outside $\Omega_\delta$, and whose values increase by $+1$ if one crosses a loop from left to right. Given two dimer configurations $D_1,D_2$ in $\Omega_\delta$, we superimpose them to get a collection of loops and double edges. We orient each loop so that all dimers in $D_1$ are oriented from black to white and denote by $h_\delta^{(D_1,D_2)}$ the corresponding height function. (If $\Omega_\delta = \CC^+_\delta$, then we assume additionally that there is no infinite path in the superposition.) If $(D_1,D_2)$ is sampled according to the law of the double-dimer model, then we get the double-dimer height function which we denote simply by $h_\delta$. If $D_1$ is sampled according to the law of the dimer model, while $D_2$ is taken arbitrary and fixed, then $h_{\delta,1} = h_\delta^{(D_1,D_2)} - \EE h_\delta^{(D_1,D_2)}$ is the centered dimer height function. Since changing $D_2$ amounts to shifting $h_\delta^{(D_1,D_2)}$ by a deterministic function, $h_{\delta,1}$ does not depend on $D_2$, and $h_\delta = h_{\delta,1} - h_{\delta,2}$, where $h_{\delta,2}$ has the same distribution as $h_{\delta,1}$ and is independent of it.

Let $\Ll$ denote the set of \emph{unoriented} loops in the random double dimer configuration $(D_1,D_2)$. Conditionally on $\Ll$,
orientations of loops in $\Llor$ are independent and each orientation is equally probable. It follows that the value of the double-dimer height function $h_\delta$ at a face $x$ of $\Omega_\delta$ can be computed as
\begin{equation}
    \label{eq:h_delta_sum_bernoulli}
    h_\delta(x) = \sum_{\Ll\ni \gamma\text{ surrounds }x} s_\gamma,
\end{equation}
where, conditionally on $\Ll$, $\{s_\gamma\}_{\gamma\in \Ll}$ are independent $\frac12$-Bernoulli variables taking values $\pm 1$.

Given faces $x_1,\dots, x_n\in\Omega$, denote by $\NOd(x_1,\dots, x_n)$ the number of loops in $\Ll$ surrounding these points. The relation \eqref{eq:h_delta_sum_bernoulli}, in particular, implies that for all $t\in \mathbb{R},$
\begin{equation}
    \label{eq:char_fct_of_h_delta_and_Laplce_transform_of_N}
    \EE\left[ \exp(i t h_\delta(x)) \right] = \EE \left[\left(\cos t\right)^{\NOd(x)}\right],
\end{equation}
and that we have the following relations between $h_\delta$ and $N$ for any faces $x,y$ of $\Omega$ (not necessarily distinct):
\begin{equation}
\label{eq:Nxy_from_h}
\EE \NOd(x,y)=\EE [h_\delta(x)h_\delta(y)].
\end{equation}
More generally, the relation between loops statistics and moments of $h_\delta$ han be summarised by the following lemma:
\begin{lemma}
    Given faces $x_1,\dots,x_n$ of $\Omega_\delta$ and $I\subset \{x_1,\dots, x_n\}$, denote by $A_I$ the number of loops in $\Ll$ which encircle every point $x_i$ for $i\in I$, but none of the points $x_i$ for $i\notin I$. Then we have
\begin{equation}
        \label{eq:nvh1}
        \EE \prod_{i = 1}^n h_
        \delta(x_i)^{d_i} = (-i)^{\sum_{i = 1}^n d_i}\left.\frac{\partial^{d_1 + \ldots + d_n}}{\partial t_1^{d_1}\dots \partial t_n^{d_n}}\right|_{t_i=0} \EE \prod_{I\subset \{x_1,\dots, x_n\}} \cos\left(\sum_{i \in I} t_i\right)^{A_I}.
    \end{equation}
\end{lemma}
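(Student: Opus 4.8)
The plan is to generalize the argument already used for \eqref{eq:char_fct_of_h_delta_and_Laplce_transform_of_N} and \eqref{eq:Nxy_from_h}, working conditionally on the unoriented loop collection $\Ll$ and using the representation \eqref{eq:h_delta_sum_bernoulli}. First I would condition on $\Ll$. By \eqref{eq:h_delta_sum_bernoulli}, for each point $x_i$ we have $h_\delta(x_i)=\sum_{\gamma\ni x_i}s_\gamma$, where the sum is over loops of $\Ll$ surrounding $x_i$ and the $s_\gamma$ are i.i.d.\ symmetric $\pm1$ Bernoulli variables. The key observation is that the index set of loops surrounding $x_i$ is exactly $\bigsqcup_{I\ni x_i}\{\gamma : \gamma\text{ contributes to }A_I\}$, so each loop $\gamma$ belongs to a unique class $A_I$ determined by the subset $I\subseteq\{x_1,\dots,x_n\}$ of points it encircles. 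Hence $h_\delta(x_i)=\sum_{I\ni x_i}\big(\sum_{\gamma\in A_I}s_\gamma\big)$, and the variables $s_\gamma$ across all classes are jointly independent given $\Ll$.

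Next I would compute, conditionally on $\Ll$, the joint Fourier/Laplace transform
\[
\EE\Big[\exp\Big(i\sum_{i=1}^n t_i h_\delta(x_i)\Big)\;\Big|\;\Ll\Big].
\]
Substituting the expression for $h_\delta(x_i)$ and reorganizing the double sum $\sum_i t_i\sum_{I\ni x_i}\sum_{\gamma\in A_I}s_\gamma=\sum_{I}\big(\sum_{i\in I}t_i\big)\sum_{\gamma\in A_I}s_\gamma$, the independence of the $s_\gamma$ factorizes the expectation over loops. For a single symmetric $\pm1$ variable $s$ one has $\EE[e^{i\theta s}]=\cos\theta$, so the contribution of the $A_I$ loops carrying total phase $\theta_I:=\sum_{i\in I}t_i$ is $(\cos\theta_I)^{A_I}$. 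Therefore
\[
\EE\Big[\exp\Big(i\sum_{i=1}^n t_i h_\delta(x_i)\Big)\;\Big|\;\Ll\Big]=\prod_{I\subseteq\{x_1,\dots,x_n\}}\cos\Big(\sum_{i\in I}t_i\Big)^{A_I},
\]
and taking the expectation over $\Ll$ gives the identity $\EE[\exp(i\sum_i t_i h_\delta(x_i))]=\EE\prod_I\cos(\sum_{i\in I}t_i)^{A_I}$.

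Finally, I would obtain \eqref{eq:nvh1} by differentiating this identity in the $t_i$ at $t_i=0$. Each application of $\partial/\partial t_i$ to $\exp(i\sum_j t_j h_\delta(x_j))$ brings down a factor $i\,h_\delta(x_i)$; taking $d_i$ derivatives in $t_i$ for each $i$ and evaluating at $t=0$ yields $i^{\sum d_i}\EE\prod_i h_\delta(x_i)^{d_i}$ on the left, hence the stated prefactor $(-i)^{\sum d_i}$ on the right after moving the power of $i$. Differentiation under the expectation is justified because, with $\Ll$ fixed and hence finitely many loops in a finite domain (or, in the case $\CC^+_\delta$, because $\NOd(x_i)<\infty$ a.s.\ and $|\cos|\le1$ together with dominated convergence), all quantities involved are bounded by $\prod_i(\text{number of loops surrounding }x_i)^{d_i}$, which has finite expectation by the estimates on $\EE\NOd(x)^k$ available from the earlier height-moment bounds.

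The only genuine subtlety — and the place I would be most careful — is the exchange of differentiation and expectation over $\Ll$, i.e.\ verifying an integrable dominating function for the $t$-derivatives uniformly near $t=0$; in the finite-domain case this is automatic from finiteness of $\Ll$, while in the half-plane case it follows once we know $h_\delta(x_i)\in L^p$ for all $p$, which in turn follows from \eqref{eq:char_fct_of_h_delta_and_Laplce_transform_of_N} and the logarithmic bound on $\EE\NOd(x)$ (the $n=1$ case, Theorem~\ref{thma:main1}(1), or the a priori moment estimates of Section~\ref{sec:Height_function_loop_statistics_and_monodromy}). Everything else is bookkeeping with the partition of the loops of $\Ll$ into the classes $A_I$.
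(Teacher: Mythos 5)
Your proof is correct and follows essentially the same route as the paper's: condition on $\Ll$, use \eqref{eq:h_delta_sum_bernoulli}, swap the order of summation to collect the phase $\sum_{i\in I}t_i$ attached to each loop class, factor by conditional independence of the $s_\gamma$ to get $\prod_I \cos(\sum_{i\in I}t_i)^{A_I}$, then take expectations and differentiate. The paper states this in one chain of equalities (\eqref{eq:nvh2}) and concludes with ``the result follows by taking expectations and differentiating,'' which is exactly your argument. The one thing you add is a remark about justifying the interchange of $\partial_t$ with $\EE$ in the $\CC^+_\delta$ case; this is a fair point (the paper is silent on it), though citing Theorem~\ref{thma:main1}(1) there would be circular, since Lemma~\ref{lemma:asymp_of_mu_delta_sigma_delta} is derived using the present lemma via \eqref{eq:N2viah4}. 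The non-circular justification is simply that, at fixed $\delta$, $N_\delta(x)$ has all moments (e.g.\ exponential tails, which is part of the construction of the infinite-volume double-dimer law via exhaustion by finite Temperleyan domains), hence $|h_\delta(x_i)|\le N_\delta(x_i)$ provides the needed dominating function.
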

\begin{proof}
We can write, using \eqref{eq:h_delta_sum_bernoulli} and computing the expectation conditionally on $\Ll$ first,
\begin{multline}
        \label{eq:nvh2}
        \EE \left(\left.\exp\left(i\sum_{i = 1}^n t_ih(x_i)\right)\right|\mathcal{L}\right) = \EE \exp\left(i\sum_{i = 1}^n t_i\sum_{\gamma\text{ surrounds }x_i}s_\gamma\right)\\=\EE \left.\exp\left(i\sum_{\gamma\in \mathcal{L}} s_\gamma \sum_{i:\gamma\text{ surrounds }x_i}t_i\right)\right)=\EE\prod_{\gamma\in\mathcal{L}}\cos\left(\sum_{i:\gamma\text{ surrounds }x_i}t_i\right)=\prod_{I\subset \{x_1,\dots, x_n\}} \cos\left(\sum_{i \in I} t_i\right)^{A_I},
    \end{multline}
    and the result follows by taking expectations and differentiating.
\end{proof}

We will need a few particular cases, one of which is $n=1,d_1=4$:
\begin{equation}
\label{eq:N2viah4}
\EE h^4_\delta(x)=\frac{d^4}{dt^4}\EE (\cos t)^{\NOd(x)}=3\EE N^2_\delta(x)-2\EE \NOd(x).
\end{equation}

For others, we introduce the following fields:

\begin{enumerate}
    \item Define the~\emph{nesting field} and its \emph{two-point approximation} by
    \[
        \vphi_\delta(x) = \NOd(x) - \EE \NOd(x),\qquad \vphi^\eps_\delta(x) = \NOd(x,x + \eps) - \EE \NOd(x,x+\eps),\quad \eps>0.
    \]

    \item Define, for $h=h_\delta$, the \emph{normal-ordered square of $h$} and its two-point approximation by
    \[
        \psi_\delta(x) = h_\delta(x)^2 - \EE h_\delta(x)^2,\qquad \psi^\eps_\delta(x) = h_\delta(x)h_\delta(x+\eps) - \EE h_\delta(x)h_\delta(x+\eps),\quad \eps>0.
    \]

\end{enumerate}

We will need the following relation between moments of these fields:

\begin{cor}
    \label{lemma:nesting_via_height}
    Define
    \begin{align*}
        P_\delta(x,y) =& \,2\NOd(x,y)^2-2\NOd(x,y),\\
        P_{\delta,\eps}(x,y) =& \,2\NOd(x,y)\NOd(x,y+\eps) - 2\NOd(x,y,y+\eps)\\
        P_{\delta,\eps,\eps}(x,y) =& \,\NOd(x,y)\NOd(x+\eps,y+\eps) + \NOd(x,y+\eps)\NOd(x+\eps,y) -2\NOd(x,x+\eps,y,y+\eps).
    \end{align*}
    Then we have
    \begin{align*}
        &\EE\psi_\delta(x)\psi_\delta(y) = \EE \vphi_\delta(x)\vphi_\delta(y) + \EE P_\delta(x,y),\\
        &\EE\psi_\delta(x)\psi_\delta^\eps(y) = \EE\vphi_\delta(x)\vphi_\delta^\eps(y) + \EE P_{\delta,\eps}(x,y),\\
        &\EE\psi_\delta^\eps(x)\psi_\delta^\eps(y) = \EE\vphi_\delta^\eps(x)\vphi_\delta^\eps(y) + \EE P_{\delta,\eps,\eps}(x,y)
    \end{align*}
\end{cor}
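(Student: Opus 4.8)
The plan is to deduce all three identities from the single master formula \eqref{eq:nvh1} by specializing to $n=2$ with various choices of points and derivative orders, matching $h_\delta(x)^2$-type quantities with $\partial^2_{t}$ at $t=0$ of products of cosines raised to the $A_I$ powers. The bookkeeping device is that for a loop $\gamma$, $s_\gamma$ is a symmetric $\pm1$ Bernoulli variable, so $\EE s_\gamma = 0$, $\EE s_\gamma^2 = 1$, and conditionally on $\Ll$ the $s_\gamma$ are independent; hence by \eqref{eq:h_delta_sum_bernoulli},
\[
\EE\bigl[h_\delta(x)h_\delta(y)\,\big|\,\Ll\bigr] = \#\{\gamma\in\Ll:\ \gamma\ \text{surrounds both } x \text{ and } y\} = \NOd(x,y),
\]
which recovers \eqref{eq:Nxy_from_h} and will also serve to compute the centered second moments directly.

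First I would treat the first identity. Here one takes $n=2$, the two ``points'' being $x$ and $y$, and computes $\EE[h_\delta(x)^2 h_\delta(y)^2]$ via \eqref{eq:nvh1} with $d_1=d_2=2$; equivalently, compute it directly from \eqref{eq:h_delta_sum_bernoulli} by expanding $\bigl(\sum_{\gamma\ni x}s_\gamma\bigr)^2\bigl(\sum_{\gamma'\ni y}s_{\gamma'}\bigr)^2$, taking conditional expectation over the $s$'s (using that odd moments vanish and $\EE s_\gamma^4 = 1$), and then expectation over $\Ll$. Grouping loops according to which of $x,y$ they surround — i.e. according to the sets $A_{\{x,y\}}$, $A_{\{x\}}$, $A_{\{y\}}$ — gives $\EE[h_\delta(x)^2 h_\delta(y)^2]$ as a polynomial in $\EE$ of products of the $A_I$. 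On the other side, $\EE\psi_\delta(x)\psi_\delta(y) = \EE[h_\delta(x)^2 h_\delta(y)^2] - \EE h_\delta(x)^2\,\EE h_\delta(y)^2$ by definition of $\psi_\delta$ (and using $\EE h_\delta(x)^2 = \EE\NOd(x)$ from \eqref{eq:h_delta_sum_bernoulli}), while $\EE\vphi_\delta(x)\vphi_\delta(y) = \EE[\NOd(x)\NOd(y)] - \EE\NOd(x)\,\EE\NOd(y)$ and $\NOd(x) = \NOd(x,x)$, $\NOd(y)=\NOd(y,y)$. Subtracting, the $\EE\NOd(x)\,\EE\NOd(y)$ terms and the ``purely factorized'' contributions cancel, and what remains is a $\Ll$-expectation of a polynomial in the $A_I$ that must equal $\EE P_\delta(x,y) = 2\EE[\NOd(x,y)^2] - 2\EE\NOd(x,y)$. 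Since $\NOd(x,y) = A_{\{x,y\}}$ (the count of loops surrounding both), and $\NOd(x,y)^2 = \sum_\gamma s_\gamma^2$-type bookkeeping shows $\NOd(x,y)^2 - \NOd(x,y) = \#\{\text{ordered pairs of distinct loops surrounding both }x,y\}$, the identity reduces to a finite combinatorial check on monomials in $A_{\{x,y\}}, A_{\{x\}}, A_{\{y\}}$, which I expect to close on the nose.

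The second and third identities follow the same template with more points. For the second, take $n=3$ with points $x,y,y+\eps$ and compute the mixed moment $\EE[h_\delta(x)^2\,h_\delta(y)h_\delta(y+\eps)]$ from \eqref{eq:h_delta_sum_bernoulli}; the relevant $A_I$ are indexed by the $2^3$ subsets of $\{x,y,y+\eps\}$, and the combinatorial identity to verify is that the centered version equals $\EE\vphi_\delta(x)\vphi_\delta^\eps(y) + \EE P_{\delta,\eps}(x,y)$, where one uses $\NOd(x)=\NOd(x,x)$ to interpret $\vphi_\delta(x)$ and $\NOd(y,y+\eps)$ for $\vphi_\delta^\eps(y)$. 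For the third, take $n=4$ with the four points $x,x+\eps,y,y+\eps$ and compute $\EE[h_\delta(x)h_\delta(x+\eps)\,h_\delta(y)h_\delta(y+\eps)]$; the centered version, after expanding and using independence of the $s_\gamma$, splits into a ``cross'' part giving $\EE\vphi_\delta^\eps(x)\vphi_\delta^\eps(y)$ together with a ``four-loop-coincidence'' correction which reorganizes into the stated $P_{\delta,\eps,\eps}$ via the identity $\NOd(x,y)\NOd(x+\eps,y) + \NOd(x,y+\eps)\NOd(x+\eps,y) - 2\NOd(x,x+\eps,y,y+\eps)$ counting ordered pairs of distinct loops suitably. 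In each case the structural input is identical: (a) conditional independence and the $\pm1$ symmetry of the $s_\gamma$, and (b) the observation that $\NOd(\cdot)$ of a multiset of points equals the $A_I$ with $I$ the set of distinct points, and $\NOd(S)^2 - \NOd(S) = \NOd(S)(\NOd(S)-1)$ counts ordered pairs of distinct loops all surrounding $S$.

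The main obstacle is purely the combinatorial bookkeeping: one must carefully track which terms in the expansion of the product $\prod_i\bigl(\sum_{\gamma\ni x_i}s_\gamma\bigr)$ survive after taking $\EE_s$, and then verify that the surviving $\Ll$-polynomial matches the $P$-polynomial as written. There is genuine risk of sign errors and of miscounting the ``diagonal'' terms (a single loop surrounding several points, contributing $s_\gamma^{2k}=1$) versus ``off-diagonal'' terms (distinct loops). A clean way to organize this, which I would adopt, is to note that since $\psi_\delta$ and $\psi_\delta^\eps$ are already centered, one has $\EE\psi\psi' = \EE\bigl[\operatorname{Cov}(\,\cdot\mid\Ll)\bigr] + \operatorname{Cov}\bigl(\EE[\,\cdot\mid\Ll]\bigr)$; the conditional covariance of products of Bernoulli sums is an elementary finite formula in the $A_I$, and the conditional-expectation covariance is exactly $\EE\vphi\vphi'$ because $\EE[h_\delta(x)h_\delta(y)\mid\Ll] = \NOd(x,y)$. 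This decomposition makes the identity essentially transparent and isolates the content into the one-line conditional-covariance computation.
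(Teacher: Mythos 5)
Your proposal is correct, and the ``clean way'' you describe at the end---expanding $\EE[\psi\,\psi']$ via the law of total covariance,
\[
\EE[\psi\,\psi'] = \EE\bigl[\operatorname{Cov}(\cdot,\cdot\,|\,\Ll)\bigr] + \operatorname{Cov}\bigl(\EE[\cdot\,|\,\Ll],\ \EE[\cdot\,|\,\Ll]\bigr),
\]
with the second term giving $\EE[\vphi\,\vphi']$ because $\EE[h_\delta(u)h_\delta(v)\,|\,\Ll]=\NOd(u,v)$, and the first term yielding the $P$-polynomial---is a genuinely different and arguably cleaner route than the paper's. The paper instead first establishes the general four-point identity \eqref{eq:nvh10} by differentiating the product-of-cosines formula \eqref{eq:nvh1} four times, carefully tracking which sine factors are created and re-consumed and which permutations contribute, and then obtains all three statements by substituting coincident points. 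Your decomposition bypasses the differentiation-bookkeeping entirely: the conditional covariance reduces to the elementary enumeration of even-multiplicity multisets $\{\gamma_1,\gamma_2,\gamma_3,\gamma_4\}$ of loops (all four equal, giving $\NOd(x,x+\eps,y,y+\eps)$, or two distinct pairs, giving the three cross-products minus the diagonal correction), and one gets $P_{\delta,\eps,\eps}$ exactly; the $P_\delta$ and $P_{\delta,\eps}$ cases follow by merging points. What the paper's approach buys is that \eqref{eq:nvh10} is established once as a standalone combinatorial identity, which is the same computational engine used for \eqref{eq:N2viah4}; your approach is more self-contained and directly isolates why the centered correlations split as they do. Your first paragraph's suggestion to instead use $n=2$, $d_1=d_2=2$ in \eqref{eq:nvh1} would also work but is a cosmetic variation on the paper's $n=4$, $d_i=1$ computation; the covariance decomposition is the substantive alternative here.
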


\begin{proof}

The corollary follows by applying \eqref{eq:nvh1} on the left-hand side and expressing each $A_I$ in terms of $\NOd(\cdot)$. We start with finding an expression for $\EE [h_\delta(x_1)h_\delta(x_2)h_\delta(x_3)h_\delta(x_4)]$ for some arbitrary $x_1,x_2,x_3,x_4$. By \eqref{eq:nvh1}, we have
   \begin{equation*}
   \EE [h_\delta(x_1)h_\delta(x_2)h_\delta(x_3)h_\delta(x_4)]
   = \left.\frac{\partial^4}{\partial t_1\partial t_2 \partial t_3 \partial t_4}\right|_{t=0} \EE \prod_{I\subset \{1,2,3,4\}} \cos\left(\sum_{i \in I} t_i\right)^{A_I}
   \end{equation*}
We need to sum the contributions from $\frac{\partial}{\partial t_i}$ acting on various terms of the form $\cos\left(\sum_{i \in I} t_i\right)^{A_I}$, in the product, or on terms of the form $\sin\left(\sum_{i \in I} t_i\right)$ arising from a previous differentiation. The only non-zero contributions are those for which all the sines produced are subsequently acted upon. Pairing $i$ with $j$ if $\frac{\partial}{\partial t_i}$ created a sine term which $\frac{\partial}{\partial t_j}$ has acted upon, and taking into account that $N_\delta(\{x_i\}_{i\in I})=\sum_{J\supset I}A_J$, we get
    \begin{multline*}
   \left.\frac{\partial^4}{\partial t_1\partial t_2 \partial t_3 \partial t_4}\right|_{t=0}\prod_{I\subset \{1,\dots, n\}} \cos\left(\sum_{i \in I} t_i\right)^{A_I}\\=\sum_{\substack{\{p,\hat{p}\}\text{ pairing}\\\text{of } 1,2,3,4}}\sum_{p\subset I\neq I'\supset \hat{p}}A_IA_{I'}+A_{\{1,2,3,4\}}(3A_{\{1,2,3,4\}}-2)\\=\sum_{\{p,\hat{p}\}\text{ pairing}}N_\delta(\{x_i\}_{i\in p})N_\delta(\{x_i\}_{i\in\hat{p}})-2N_\delta(x_1,x_2,x_3,x_4)
   \end{multline*}
   Taking the expectation we finally get
   \begin{multline}
       \label{eq:nvh10}
       \EE [h_\delta(x_1)h_\delta(x_2)h_\delta(x_3)h_\delta(x_4)] =\\
       = \EE\Big[ N_\delta(x_1,x_2)N_\delta(x_3,x_4) + N_\delta(x_1,x_3)N_\delta(x_2,x_4) + N_\delta(x_1,x_4)N_\delta(x_2,x_3) - 2N_\delta(x_1,x_2,x_3,x_4) \Big].
   \end{multline}
   We are now in the position to derive the relations required in the corollary. Expanding by definition we get
   \begin{multline*}
       \EE\psi_\delta^\eps(x)\psi_\delta^\eps(y) - \EE\vphi_\delta^\eps(x)\vphi_\delta^\eps(y) =\\
       = \EE [h_\delta(x)h_\delta(x+\eps)h_\delta(y)h_\delta(y+\eps)] - \EE[h_\delta(x)h_\delta(x+\eps)]\EE[h_\delta(y)h_\delta(y+\eps)] -\\
       - \EE [N_\delta(x,x+\eps)N_\delta(y,y+\eps)] + \EE N_\delta(x,x+\eps)\EE N_\delta(y,y+\eps).
   \end{multline*}
   Using~\eqref{eq:nvh1} with $x_1 =x,x_2=x+\eps,x_3=y,x_4=y+\eps$ to evaluate the first term and~\eqref{eq:Nxy_from_h} to evaluate the second term on the right-hand side of the expression above we get
   \[
    \EE\psi_\delta^\eps(x)\psi_\delta^\eps(y) - \EE\vphi_\delta^\eps(x)\vphi_\delta^\eps(y) = \EE P_{\delta,\eps,\eps}(x,y)
   \]
   as required. The other two expressions can be derived similarly.

\end{proof}
\begin{rem}\label{rem:P_well_defined_for_CLE}
Note that $P_\delta$, $P_{\delta,\eps}$, $P_{\delta,\eps,\eps}$ are second degree polynomials in $\NOd(\cdot)$, where $\cdot$ stands for sub-collections of $\{x,x+\eps,y,y+\eps\}$ containing at least one of $x,x+\eps$ and at least one of $y,y+\eps$. In particular, their analogs are well defined for infinite loops ensembles like CLE${}_4$, and when computed for double dimers, they remain stochastically bounded as the mesh size $\delta\to 0$ and/or $\eps\to 0$. Moreover, $P_\delta=P_{\delta,\eps}=P_{\delta,\eps,\eps}$ on the event that there is no large loop separating $x$ from $x+\eps$ or $y$ from $y+\eps$, which has high probability for small $\eps$.
\end{rem}

\subsection{Kasteleyn matrix, dimers and the Gaussian free field}
In this section we collect several basic facts and technical lemmas about the dimer height function and certain statistics of double-dimer loops. Along the way we will establish the asymptotic relations for $\mu_\delta$ and $\sigma_\delta$ stated in Theorem~\ref{thma:main1}.

We begin by introducing the (standard) discrete complex analysis setup we will stick to.
The Kasteleyn weights are fixed as follows. We assume that vertices of $\ZZ^2$ are bicolored, let $0\in \ZZ^2$ be black for definiteness. In what follows we will usually be using letters $b$ and $w$ to denote black and white vertices respectively. We set
\[
  K(w,w+e) = e,\qquad e\in \{ \pm1, \pm i \}
\]
where $w$ is an arbitrary white vertex.
For $u\in \ZZ^2$ set
\begin{equation}
    \label{eq:def_of_eta}
    \eta_u = \exp\left[ -\frac{\pi i}{2}(\Im u \mod 2) \right] = \begin{cases}
        1,\quad \Im u\in 2\ZZ,\\
        -i,\quad \Im u \in 2\ZZ +1.
    \end{cases}
\end{equation}
Note that $K(w,b) = \pm\bar{\eta}_w\bar{\eta}_b$ (we keep the conjugation intentionally to keep up with~\cite{CLR1}, where $\eta$ stands for the origami square root function). We recall that there is a unique full-plane inverse $K^{-1}$ satisfying
\begin{equation}
  \label{eq:K-1_asymp}
  K^{-1}(b,w) = \Pr\left[ \frac{1}{\pi(b-w)},\eta_b\eta_w\RR \right] + O\left( \frac{1}{|b-w|^2} \right)=\frac{1}{2\pi(b-w)}+\frac{\eta^2_w\eta^2_b}{2\pi(\bar{b}-\bar{w})}+ O\left( \frac{1}{|b-w|^2} \right).
\end{equation}
see~\cite{KenyonCriticalPlanarGraphs}.

Given $\delta>0$ we set
\[
    K_\delta(\delta w,\delta b) = \delta K(w,b),\qquad K_\delta^{-1}(\delta b,\delta w) = \delta^{-1} K^{-1}(b,w)
\]
When $u\in \delta\ZZ^2$, we write $\eta_u$ in place of $\eta_{\delta^{-1}u}$, abusing the notation slightly. Given $\Omega_\delta\subset \delta \ZZ^2$ we denote by $K_{\Omega_\delta}$ the restriction of $K_\delta$ to the vertices of $\Omega_\delta$. In the case when $\Omega_\delta = \CC^+_\delta = \delta\ZZ^2\cap \CC^+$ we define
\begin{equation}
    \label{eq:def_of_half-plane_K-1}
    K_{\CC^+_\delta}^{-1}(b,w) = K_\delta^{-1}(b,w) - \eta_b^2K_\delta^{-1}(\bar{b},w).
\end{equation}
It is easy to see that $K_{\CC^+_\delta}^{-1}(b,w)$ is both left and right inverse to $K_{\CC^+_\delta}$ as matrices. The following lemma is a classical result obtained by Kenyon~\cite{KenyonConfInvOfDominoTilings,KenyonGFF}.

\begin{lemma}
    \label{lemma:asymp_of_Kinv}
    Assume that $\Omega,\Omega_\delta$ are as in the beginning of Section~\ref{sec:Height_function_loop_statistics_and_monodromy}. Let $I$ denote the open horizontal interval on the boundary of $\Omega$, and let us assume that $I\subset\RR$. There exist functions
    $F^{[++]}, F^{[--]}:(\Omega\cup I)\times(\Omega\cup I)\smm\mathrm{diag}\to \CC$ and $F^{[+-]}, F^{-+}:(\Omega\cup I)\times(\Omega\cup I)\smm\mathrm{diag}(I\times I)\to \CC$ such that the following holds:
    \begin{enumerate}
        \item We have $F^{[++]} = -\overline{F^{[--]}}$ and $F^{[+-]} = -\overline{F^{-+}}$. The function $F^{[++]}$ is holomorphic in both variables, and $F^{[+-]}$ is holomorphic in the first variable and anti-holomorphic in the second one, and these statements hold up to $I\times I\smm\mathrm{diag}$.
        \item We have $F^{[++]}(x,y) = \frac{2}{\pi (x-y)} + \mathrm{reg}$, where $\mathrm{reg}$ is a function continuous on $(\Omega\cup I)\times (\Omega\cup I)$.
        \item We have
        \begin{multline}
        \label{eq:asymp_of_Kinv}
            K_{\Omega_\delta}^{-1}(b,w) = \frac{1}{4}\left( F^{[++]}(b, w) - \eta_w^2F^{[+-]}(b, w) + \eta_b^2F^{[-+]}(b, w) - (\eta_b\eta_w)^2F^{[--]}(b, w) \right) \\
            + o(1) + O\left(\frac{\delta}{(b-w)^2}\right) + O\left(\frac{\delta}{(b-\bar{w})^2}\right)
        \end{multline}
       as $\delta\to 0$ uniformly in $(b,w)$ from any compact in $(\Omega\cup I)\times(\Omega\cup I)$.
        \item Let $h$ be the GFF in $\Omega$ with Dirichlet boundary conditions and $x_1,\dots, x_n\in \Omega$ be some distinct points. Let $l_1,\dots, l_n$ be disjoint simple paths connecting these points with $I$ inside $\Omega$ and oriented towards $I$. Then we have
        \begin{multline}
        \label{eq:GFF_determ}
            \pi^{-\frac{n}{2}}
            \EE(h(x_1)\cdot\ldots\cdot h(x_n)) \\= (4i)^{-n}\sum_{s\in \{+,-\}^n}\int_{l_1}\dots\int_{l_n}\det\left[ F^{[s_is_j]}(z_i,z_j)\indic[i\neq j] \right]\,dz_1^{[s_1]}\dots dz_n^{[s_n]}
        \end{multline}
        where $dz^{[+]} = dz$ and $dz^{[-]} = d\bar{z}$.
    \end{enumerate}
\end{lemma}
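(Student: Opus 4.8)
The plan is to derive all four items from the known full-plane asymptotics \eqref{eq:K-1_asymp} together with the reflection construction \eqref{eq:def_of_half-plane_K-1}, following Kenyon's original argument but keeping careful track of the four ``spinor components'' indexed by the signs of $\eta_b^2$ and $\eta_w^2$. First, in the half-plane case $\Omega=\CC$, $\Omega_\delta=\CC^+_\delta$, I would substitute \eqref{eq:K-1_asymp} into \eqref{eq:def_of_half-plane_K-1}; the main term becomes a sum of four pieces of the form $c/(b-w)$, $c\eta^2/(\bar b-\bar w)$, $c\eta^2/(b-\bar w)$, $c/(\bar b - \overline{\bar w})$, which I would match against the ansatz \eqref{eq:asymp_of_Kinv} by \emph{defining} $F^{[++]}(x,y) = \tfrac{2}{\pi(x-y)}$, $F^{[+-]}(x,y) = \tfrac{2}{\pi(\bar x - y)}$ (note $\bar y$ does not appear because in the half-plane the Schwarz reflection identifies the second variable's conjugate with a genuine point), $F^{[-+]} = -\overline{F^{[+-]}}$, $F^{[--]} = -\overline{F^{[++]}}$. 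One checks this is consistent with the symmetry $F^{[++]}=-\overline{F^{[--]}}$, holomorphicity in item (1), the pole normalization in item (2), and that the error terms in \eqref{eq:K-1_asymp} transform into the claimed $O(\delta/(b-w)^2)+O(\delta/(b-\bar w)^2)$ after the rescaling $K_\delta^{-1}(\delta b,\delta w)=\delta^{-1}K^{-1}(b,w)$.

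For the bounded Temperleyan case, I would invoke Kenyon's theorem \cite{KenyonConfInvOfDominoTilings,KenyonGFF}: the rescaled inverse Kasteleyn operator $K_{\Omega_\delta}^{-1}$ converges, and its limit is built from the holomorphic/anti-holomorphic Green-type kernels determined by the conformal structure of $\Omega$ with the appropriate (Dirichlet-type) boundary conditions coming from the Temperleyan boundary and the horizontal segment $I\subset\RR$. Concretely, $F^{[++]}$ should be (a multiple of) the kernel $\partial_x\partial_y G_\Omega$-type object that is holomorphic in both variables with a simple pole of residue $2/\pi$ on the diagonal, $F^{[+-]}$ its ``mixed'' counterpart holomorphic in $x$ and antiholomorphic in $y$, and the boundary condition that $I$ is horizontal is exactly what allows $F^{[+-]}$ and $F^{[-+]}$ to extend continuously up to $I\times I$ off the diagonal (this is the place where the hypothesis $I\subset\RR$ is used, and it explains why the domain of $F^{[+-]}$ excludes only $\mathrm{diag}(I\times I)$ rather than all of $\mathrm{diag}$). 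The conjugation relations in item (1) and the pole normalization in item (2) then follow from the defining properties of these kernels, while \eqref{eq:asymp_of_Kinv} with its error terms is Kenyon's convergence statement repackaged in the spinor notation; the uniformity on compacts of $(\Omega\cup I)\times(\Omega\cup I)$ is part of that statement once one notes the extra error terms $O(\delta/(b-w)^2)$, $O(\delta/(b-\bar w)^2)$ absorb the boundary behavior near $I$.

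Finally, for item (4), I would start from the classical identity expressing moments of the Dirichlet GFF via the Green function, $\pi^{-n/2}\EE\prod h(x_i) = $ (sum over pairings of products of $G_\Omega(x_i,x_j)$ with the right normalization). The right-hand side of \eqref{eq:GFF_determ} is a sum over $s\in\{+,-\}^n$ of iterated contour integrals along the paths $l_i$ of a determinant with entries $F^{[s_is_j]}(z_i,z_j)\indic[i\ne j]$; I would evaluate each such integral by residues/Stokes, using that $\partial_{z}$ or $\partial_{\bar z}$ of the appropriate primitive of $F^{[s_is_j]}$ reproduces $G_\Omega$, and that integrating from $x_i$ to $I$ (where the GFF, hence its relevant primitive, vanishes) picks out exactly $G_\Omega(x_i,x_j)$. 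The combinatorics of expanding the determinant and summing over $s$ then reassembles precisely the sum over pairings; the factors $(4i)^{-n}$ and $\pi^{-n/2}$ are bookkeeping that one fixes by checking the case $n=2$. The main obstacle I anticipate is item (4): getting the contour-integral-of-a-determinant to collapse to the Wick/pairing sum requires care with orientations of the $l_i$, with the branch/continuity of the primitives of $F^{[+-]}$ across $I$, and with the vanishing boundary contributions; one must also verify that the integrals are well-defined (the paths are disjoint and avoid the diagonal, so the only potential issue is endpoint behavior at $I$, controlled by item (1)). The first three items are, by contrast, essentially a translation of \eqref{eq:K-1_asymp} and Kenyon's theorem into the present notation.
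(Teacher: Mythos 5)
Your approach is essentially the same as the paper's: all four items are obtained by citing Kenyon's results from \cite{KenyonConfInvOfDominoTilings,KenyonGFF}, identifying the functions $F^{[\pm\pm]}$ with Kenyon's $F_{1,\pm 1}$ and their conjugates (modulo bookkeeping with the $\eta$-factors), and verifying the half-plane case directly from \eqref{eq:K-1_asymp} and \eqref{eq:def_of_half-plane_K-1}. Two points, however, are worth flagging.

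First, your explicit formula $F^{[+-]}(x,y)=\tfrac{2}{\pi(\bar x-y)}$ for the half-plane is wrong, and the parenthetical justification (``$\bar y$ does not appear because\ldots'') is the source of the confusion. Matching the ansatz \eqref{eq:asymp_of_Kinv} against the expansion of \eqref{eq:def_of_half-plane_K-1} (using $\eta_{\bar b}^2=\eta_b^2$ and $\eta^4=1$), the term proportional to $\eta_w^2$ alone is $-\tfrac{\eta_w^2}{2\pi(b-\bar w)}$, which forces $F^{[+-]}(x,y)=\tfrac{2}{\pi(x-\bar y)}$; this is also the only choice compatible with item (1), which requires $F^{[+-]}$ to be \emph{holomorphic} in the first variable and \emph{anti-holomorphic} in the second. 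What you wrote is (up to a sign) $F^{[-+]}$.

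Second, you treat the uniformity of \eqref{eq:asymp_of_Kinv} on compacts of $(\Omega\cup I)\times(\Omega\cup I)$ as automatic once the extra error terms are present. The paper instead derives this by following Kenyon's Theorem 14: one Schwarz-reflects the inverse Kasteleyn kernel across the straight boundary segment and reruns the Theorem 13 argument in the doubled domain; the reflection is also what produces the second pole at $\bar w$ and hence replaces $C_0$ by $C_H$ (i.e.\ by $K_{\CC^+_\delta}^{-1}$) in the leading term. This reflection step is the substantive content of the boundary extension and is not absorbed into the error terms. Relatedly, for item (4) the paper simply cites \cite{KenyonGFF} together with conformal invariance of both sides (\cite[Proposition 15]{KenyonConfInvOfDominoTilings}) to pass between bounded domains and $\CC^+$; your residues/Stokes sketch would amount to re-proving Kenyon's identity and is not needed.
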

\begin{proof}
    If $\Omega$ is bounded, then the first three claims is a summary of the analysis performed in~\cite{KenyonConfInvOfDominoTilings}. Specifically, our mesh size $\delta$ and inverse Kasteleyn operator $K_{\Omega_\delta}^{-1}(b,w)$ are denoted in ~\cite{KenyonConfInvOfDominoTilings} by $\epsilon$, $\frac{1}{\epsilon}C(w,b)$, while the functions $F^{[++]}(b,w)$, $F^{[+-]}(b,w)$, $F^{[-+]}(b,w)$ and $F^{[--]}(b,w)$ correspond to the functions $F_{1,1}(w,b)$, $-F_{1,-1}(w,b)$, $\overline{F_{1,-1}(w,b)}$ and $-\overline{F_{1,1}(w,b)}$ respectively. The third claim, with uniformity over compacts in $\Omega\times\Omega$, follows from \cite[Theorem 13]{KenyonConfInvOfDominoTilings} after substituting the definitions of $F^{[\pm\pm]}$ and using the asymptotics \eqref{eq:K-1_asymp} to replace $C_0$ with its continuous counterpart. (The $\eta$ factors implement the convention described after \cite[Theorem 13]{KenyonConfInvOfDominoTilings}.) To extend the results to compacts in $(\Omega\cup I)\times (\Omega\cup I)$, we can follow the proof of~\cite[Theorem 14]{KenyonConfInvOfDominoTilings} where the asymptotics of $C(v_1,\cdot)$ is derived in the regime when $v_1$ is on the distance $O(\epsilon)$ from the straight boundary segment. The proof of this theorem consists of Schwartz reflecting $C(v_1,\cdot)$ and then repeating the arguments from the proof of~\cite[Theorem 13]{KenyonConfInvOfDominoTilings} in the doubled domain. To obtain the asymptotics that we need we observe that these arguments work verbatim if we take $v_1$ from any compact subset of $\Omega\cup I$. Note that after being extended to the doubled domain, the function $C(v_1,\cdot)$ acquires the second pole sitting in the reflected image of $v_1$; this explains the fact that the term $C_0(v_1,\cdot)$ from the asymptotics in~\cite[Theorem 13]{KenyonConfInvOfDominoTilings} is replaced with more precise term $C_H(v_1,\cdot)$ (which corresponds to $K_{\CC^+_\delta}^{-1}(b,w)$ in our notation).

    When $\Omega = \CC^+$, we have $F^{[++]}(x,y)=\frac{2}{\pi (x-y)}$ and $F^{[+-]}(x,y)=\frac{2}{\pi (x-\bar{y})}$, and the asymptotic relations between them and $K_{\CC^+_\delta}^{-1}$ follow directly from~\eqref{eq:K-1_asymp} and~\eqref{eq:def_of_half-plane_K-1}.

    The fourth claim is the content of~\cite{KenyonGFF}; by conformal invariance of both sides of the equation \cite[Proposition 15]{KenyonConfInvOfDominoTilings}, it does not matter whether one is working in a bounded domain or in $\CC^+_\delta$.
\end{proof}

We will need the following corollary:

\begin{lemma}
\label{lem:vanishings}
Let $S^{(n)}_{\geq 3}$ denote the set of permutations of $\{1,\dots,n\}$ that don't have any fixed points or $2$-cycles. Then,
\begin{equation}
\label{eq:vanishing}
\sum_{\sigma\in S^{(n)}_{\geq 3}}\sum_{s\in \{+,-\}^n}\int_{l_1}\dots\int_{l_n}(-1)^{\mathrm{sign}(\sigma)}\prod_{i=1}^n F^{[s_is_{\sigma(i)}]}(z_i,z_{\sigma(i)}) \,dz_1^{[s_1]}\dots dz_n^{[s_n]}=0.
\end{equation}
\end{lemma}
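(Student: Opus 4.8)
The statement says that the "height-function-determinant" integral from Lemma~\ref{lemma:asymp_of_Kinv}(4), when restricted to the sum over permutations $\sigma$ all of whose cycles have length $\geq 3$, vanishes. My plan is to show this by comparing with the Gaussian free field. The key observation is that the left-hand side of \eqref{eq:GFF_determ} is $\pi^{-n/2}\EE(h(x_1)\cdots h(x_n))$, and by Wick's theorem for the GFF this moment equals the sum over \emph{pair partitions} of products of two-point functions; in particular it is a sum indexed only by involutions (permutations with all cycles of length $\le 2$), with no contribution from longer cycles. On the right-hand side of \eqref{eq:GFF_determ}, expanding the determinant $\det[F^{[s_is_j]}(z_i,z_j)\indic[i\neq j]]$ by the Leibniz formula gives a sum over \emph{all} fixed-point-free permutations $\sigma$ of $\{1,\dots,n\}$, weighted by $(-1)^{\mathrm{sign}(\sigma)}$. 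So the right-hand side naturally splits as (contribution of involutions without fixed points, i.e. perfect matchings) $+$ (contribution of $S^{(n)}_{\geq 3}$, but also permutations mixing $2$-cycles and longer cycles). I need to argue that everything beyond perfect matchings cancels, and in particular isolate \eqref{eq:vanishing}.

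The cleanest route is to match the two sides \emph{term by term in the cycle type}. First I would check the "pair" part: for a fixed-point-free involution $\sigma$ given by a perfect matching $\{(i_1,j_1),\dots,(i_{n/2},j_{n/2})\}$, the corresponding term in the Leibniz expansion of the determinant, after summing over $s\in\{+,-\}^n$ and integrating along the $l_k$, reproduces (up to the $(4i)^{-n}$ normalization and a sign) precisely $\prod_k \pi^{-1}\EE(h(x_{i_k})h(x_{j_k}))$; this is exactly the $n=2$ case of Lemma~\ref{lemma:asymp_of_Kinv}(4) applied to each pair, and the signs work out because a product of disjoint transpositions has sign $(-1)^{n/2}$, matching the $(4i)^{-n}=(4i)^{-n}$ bookkeeping that already appears in the $n=2$ identity. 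Hence the perfect-matching part of the right-hand side of \eqref{eq:GFF_determ} already equals the full left-hand side, by Wick's theorem for the GFF. Therefore the sum of \emph{all remaining} terms in the Leibniz expansion — those indexed by fixed-point-free $\sigma$ with at least one cycle of length $\ge 3$ — must vanish.

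It remains to promote "the sum over all non-matching fixed-point-free $\sigma$ vanishes" to "the sum over $S^{(n)}_{\geq 3}$ alone vanishes", since a general fixed-point-free $\sigma$ may have some $2$-cycles and some longer cycles. Here I would use a factorization/induction argument: group permutations by the set $M$ of points lying in $2$-cycles. For a fixed such decomposition $\{1,\dots,n\}=M\sqcup M'$ with $M'$ nonempty and carrying a permutation all of whose cycles have length $\ge 3$, the integral factorizes as (the perfect-matching integral over the variables indexed by $M$) $\times$ (the $S^{(|M'|)}_{\ge 3}$ integral over the variables indexed by $M'$), because $F^{[s_is_j]}(z_i,z_j)$ couples only variables in the same cycle. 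So if \eqref{eq:vanishing} holds for all $n'<n$, every cross term with $M\neq\emptyset$, $M'\neq\emptyset$ vanishes, and the previous paragraph's "all non-matching terms sum to zero" collapses to exactly the $M=\emptyset$ term, which is \eqref{eq:vanishing} for $n$. The base cases $n=1,2$ are vacuous since $S^{(1)}_{\ge 3}=S^{(2)}_{\ge 3}=\emptyset$, so induction closes.

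The main obstacle I anticipate is purely bookkeeping: carefully tracking the powers of $4i$, the factor $(-1)^{\mathrm{sign}(\sigma)}$, and the orientation of the contours $l_k$ so that the $n=2$ identity of Lemma~\ref{lemma:asymp_of_Kinv}(4) slots exactly into each factor of the perfect-matching product — in particular making sure the sign of a transposition and the $i^{-2}$ per pair combine correctly — and making the "factorization along cycles" rigorous given that the contours $l_i$ are all distinct and disjoint (so the multiple integral genuinely factorizes over any partition of the index set). None of this is deep, but it must be written with care; alternatively one can bypass Wick's theorem and instead verify \eqref{eq:vanishing} directly by a residue/holomorphicity argument, closing each contour $l_i$ and using that $F^{[++]}$ is holomorphic and $F^{[+-]}$ holomorphic/antiholomorphic in the respective variables, but the GFF-Wick comparison is shorter and conceptually cleaner.
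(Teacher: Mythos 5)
Your proposal is correct and follows essentially the same route as the paper: expand the determinant in \eqref{eq:GFF_determ} by Leibniz, observe that the pairing terms reproduce the Wick expansion of the GFF moment on the left-hand side (so all non-pairing, fixed-point-free terms sum to zero), then strip off the $2$-cycles via factorization of the integral over disjoint contours and induction on $n$. The only difference is that you spell out the factorization step and the vacuous base cases, which the paper compresses into the phrase ``vanishes by induction hypothesis.''
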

\begin{rem}
\label{rem:vanishing}
Since \eqref{eq:vanishing} holds for any end-points $x_1,\dots,x_n$ of $l_1,\dots,l_n$, it is in fact the case that the sum of integrands is identically zero.
\end{rem}
\begin{proof}
We first observe that the statement is true if we sum instead over permutations that have no fixed points and that do not correspong to pairings. To wit, expand the determinant in the right-hand side of \eqref{eq:GFF_determ} into a sum over permutations, and observe that permutations which have fixed points do not contribute because of the $\indic[i\neq j]$ factor, and the permutations corresponding to pairings combine together to $\sum_{p\text{ pairing }}\prod_{\{i,j\}\in p}\frac{4}{\sqrt{\pi}}\EE [h(x_{i})h(x_{j})]$, which is equal to the left-hand side of \eqref{eq:GFF_determ} by Wick's formula. Now we can prove the lemma by induction: the sum over permutations without fixed points, not corresponding to a pairing, and with a given non-empty set of two-cycles, vanishes by induction hypothesis. Hence the sum over $S_{\geq 3}^{(n)}$ also vanishes.
\end{proof}

\subsection{Approximate Wick's rule and the asymptotic of \texorpdfstring{$\mu_\delta$}{mdelta} and \texorpdfstring{$\sigma_\delta$}{sigmadelta}}
\label{subsec:asymp_of_mu_delta_and_sigma_delta}

In~\cite{KenyonConfInvOfDominoTilings, KenyonGFF} Kenyon has proved that $h_{\delta,1}$ converges to the Gaussian free field in $\Omega$ with Dirichlet boundary conditions, rescaled by $\frac{1}{\sqrt{\pi}}$. The proof is based on an expression of multi-point correlations of $h_{\delta,1}$ in terms of the inverse Kasteleyn matrix $K_{\Omega_\delta}^{-1}$. In the next lemma we aim to specify how this expression results in an asymptotic Wick's rule for $h_{\delta,1}$.

Given an oriented dual lattice path $l$ and a dual edge $(bw)^\ast\in l$ oriented alongside $l$, define
\begin{equation}
  \label{eq:def_of_dedge}
  d(bw)^\ast = \begin{cases}
      K_\delta(w,b),\quad b\text{ is on the left of }l,\\
      -K_\delta(w,b),\quad \text{else}.
  \end{cases}
\end{equation}
Note that
\begin{equation}
    \label{eq:dedge=int}
    d(bw)^\ast = i\int_{(bw)^\ast} dz,\qquad (\eta_b\eta_w)^2\,d(bw)^\ast = -i\int_{(bw)^\ast} d\bar{z}.
\end{equation}

The following lemma is an approximate Wick's rule for the height function:

\begin{lemma}
    \label{lemma:Wick_rule_dimer_height}
    For each $n\geq 3$ there exists a function $F_\delta$ defined on $n$-tuples of faces of $\Omega_\delta$ such that the following holds. Let $x_1,\dots, x_n$ be some (not necessary distinct) faces of $\Omega_\delta$ identified with their centers, let $r = \min\{\diam \{x_i\}_{i\in I}\ \mid\ I\subset \{1,\dots, n\},\ |I| = 3\}$. Then we have
    \begin{equation}
   \label{eq:Wick_rule_dimer_height}
        \EE[h_{\delta,1}(x_1)\cdot\dots\cdot h_{\delta,1}(x_n)] =\sum_{p \text{ - pairing}}\prod_{\{i,j\}\in p}\EE \left[h_{\delta,1}(x_{i}) h_{\delta,1}(x_{j})\right]
        + F_\delta(x_1,\dots, x_n)
    \end{equation}
    and for any compact $K\subset \Omega\cup I$
    there exists a constant $C>0$ such that whenever $x_1,\dots, x_n\in K$ we have
    \begin{equation}
    \label{eq:Wick_rule_dimer_height_error}
        |F_\delta(x_1,\dots, x_n)| \leq C|\log \delta|^{\lfloor \frac{n-3}{2} \rfloor} \cdot \left(\min\left( \frac{\delta|\log\delta|^{n-1}}{r}, 1 \right)+ o(1)\right),
    \end{equation}
    where $o(1)$ refers to the same error term as in Lemma~\ref{lemma:asymp_of_Kinv}; in particular, it is uniform given $n,K$.

    The same statement is true if $h_{\delta,1}$ is replaced by $h_\delta$.
\end{lemma}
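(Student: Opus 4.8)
The plan is to start from Kenyon's expression for the centered dimer height correlations in terms of the inverse Kasteleyn matrix. Recall that $h_{\delta,1}(x)$ is the centered height function, and that its multi-point moments admit a determinantal/Pfaffian representation: introducing disjoint dual paths $l_1,\dots,l_n$ from $x_1,\dots,x_n$ to the boundary segment $I$, one has an expansion
\[
\EE[h_{\delta,1}(x_1)\cdots h_{\delta,1}(x_n)] = \sum_{s\in\{+,-\}^n}\ \sum_{\text{certain permutations }\sigma}(-1)^{\mathrm{sign}(\sigma)}\int_{l_1}\cdots\int_{l_n}\prod_{i}K_{\Omega_\delta}^{-1}\!\big(\text{edges of }l_i, l_{\sigma(i)}\big)\,dz_1^{[s_1]}\cdots dz_n^{[s_n]},
\]
obtained by expanding the height difference across each path $l_i$ as a sum of local dimer indicator functions and applying Kasteleyn's determinant formula for dimer statistics; the sum is over permutations with no fixed points (the combinatorics here is exactly the one used inside the proof of Lemma~\ref{lem:vanishings}). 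First I would make this expansion precise, tracking that the ``edge sums along $l_i$'' become the lattice integrals $\int_{l_i}dz^{[s_i]}$ via \eqref{eq:dedge=int}, and that $K_{\Omega_\delta}^{-1}$ is approximated by the continuous kernel $\tfrac14(F^{[++]}-\eta_w^2F^{[+-]}+\eta_b^2F^{[-+]}-(\eta_b\eta_w)^2F^{[--]})$ with the error control of Lemma~\ref{lemma:asymp_of_Kinv}(3).

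Next I would split the permutation sum into three parts: (a) pairings, which reassemble exactly into $\sum_{p}\prod_{\{i,j\}\in p}\EE[h_{\delta,1}(x_i)h_{\delta,1}(x_j)]$, again by the argument in the proof of Lemma~\ref{lem:vanishings}; (b) permutations in $S^{(n)}_{\geq 3}$, i.e.\ with no fixed points and no $2$-cycles; and (c) permutations that have no fixed points but \emph{do} have at least one $2$-cycle and are not pairings. By Lemma~\ref{lem:vanishings} (and crucially Remark~\ref{rem:vanishing}, which says the \emph{integrand} over $S^{(n)}_{\geq 3}$ is identically zero), the leading continuous part of group (b) vanishes pointwise, so the contribution of (b) is purely the error term from replacing $K_{\Omega_\delta}^{-1}$ by its continuous approximant — this is where the $o(1)$ summand in \eqref{eq:Wick_rule_dimer_height_error} comes from, and it is uniform over $K$ by Lemma~\ref{lemma:asymp_of_Kinv}. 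For group (c), I would argue recursively on the number of $2$-cycles: a permutation with a $2$-cycle $\{i,j\}$ contributes a factor $\int_{l_i}\int_{l_j}F^{[s_is_j]}(z_i,z_j)F^{[s_js_i]}(z_j,z_i)\,dz_i^{[s_i]}dz_j^{[s_j]}$ which, summed over $s_i,s_j$, is $\propto\EE[h_{\delta,1}(x_i)h_{\delta,1}(x_j)] = O(|\log\delta|)$ (bounded, since $x_i,x_j\in K$ stay at bounded mutual conformal distance), times the analogous integral over the remaining indices — and the ``remaining'' piece, having still no fixed points, is either again split recursively or, once all $2$-cycles are stripped, is governed by the $S_{\geq 3}$ vanishing applied to the residual index set. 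Counting: a permutation in group (c) has at most $\lfloor (n-3)/2\rfloor$ two-cycles once we insist it is not a pairing, which produces the factor $|\log\delta|^{\lfloor(n-3)/2\rfloor}$; the leftover residual part again contributes only through the Kasteleyn error, giving the same $o(1)$.

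The remaining, and I expect \emph{main}, obstacle is the $\min(\delta|\log\delta|^{n-1}/r,\,1)$ factor in \eqref{eq:Wick_rule_dimer_height_error}: this is the genuinely discrete gain, not visible in the continuum, and it requires a more careful estimate than just ``continuous approximant plus $o(1)$''. The point is that when the three closest points have small diameter $r$, the lattice integrals over the short paths linking them, combined with the $O(\delta/(b-w)^2)$ correction term in \eqref{eq:K-1_asymp}/\eqref{eq:asymp_of_Kinv}, yield a quantitative bound of order $\delta/r$ (up to $|\log\delta|$ powers from the other legs, whence the $|\log\delta|^{n-1}$), and this must be shown to dominate over the mere $o(1)$ in the regime $r\gtrsim\delta|\log\delta|^{n-1}$. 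Concretely, one has to bound, for an $S^{(n)}_{\geq 3}$ permutation restricted to the three close points $x_i,x_j,x_k$, the difference between the discrete sum $\sum_{\text{edges}}K_{\Omega_\delta}^{-1}\cdots$ and its continuous counterpart, using that (i) the continuous $S_{\geq 3}$ integrand over $\{i,j,k\}$ alone vanishes identically by Remark~\ref{rem:vanishing} applied with $n=3$, so only the \emph{error} in the discrete-to-continuous replacement survives, and (ii) that error, integrated along paths of total length $O(r)$ against kernels with an $O(\delta/\mathrm{dist}^2)$ correction, is $O(\delta\log(r/\delta)/r)$, absorbed into $\delta|\log\delta|^{?}/r$. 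The bookkeeping of which path legs are ``short'' (near the close triple) versus ``long'' (connecting to $I$, contributing $O(|\log\delta|)$ each) and how the $|\log\delta|$ powers multiply is the fiddly part; a clean way to organize it is to first prove the bound for $n=3$ by direct estimation, and then feed it into the recursive $2$-cycle peeling for general $n$, with each stripped $2$-cycle costing a factor $O(|\log\delta|)$. Finally, the ``same statement for $h_\delta$'' follows since $h_\delta = h_{\delta,1}-h_{\delta,2}$ with the two summands i.i.d., so its moments are sums over ways of partitioning $\{1,\dots,n\}$ into an $h_{\delta,1}$-block and an $h_{\delta,2}$-block, each handled by the version just proved, and the pairing/error structure is preserved under this operation; I would also note that $\EE[h_\delta(x_i)h_\delta(x_j)] = 2\,\EE[h_{\delta,1}(x_i)h_{\delta,1}(x_j)]$, so the Wick-pairing term has the right form.
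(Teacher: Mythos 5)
Your proposal follows the paper's strategy (Kenyon's expansion \eqref{eq:Wdh1}, separation of pairings, peeling $2$-cycles at $O(|\log\delta|)$ each, Lemma~\ref{lem:vanishings} and Remark~\ref{rem:vanishing} to kill the continuous leading part of the residual), but the part you flag as "the main obstacle" conceals two genuine gaps. The more serious one: the starting point \eqref{eq:Wdh1} requires dual paths $l_1,\dots,l_n$ that are pairwise not incident to any common vertex of $\Omega_\delta$ and satisfy the separation estimate \eqref{eq:distance_bound}. When some of the $x_i$ are close at the lattice scale or coincident --- which the lemma explicitly allows, and which is exactly the regime where $\min(\delta|\log\delta|^{n-1}/r,1)=1$ --- no such paths exist, so your entire expansion is unavailable. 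The paper devotes roughly half its proof to this case, perturbing to nearby $\hat x_i$ that do admit admissible paths, expanding the increments $h_{\delta,1}(x_i)-h_{\delta,1}(\hat x_i)$ as polynomials in local edge-occupation indicators $E_{(bw)}$, and running a case analysis on the interaction of the perturbed index set $S$ with the cycle structure; your plan has no analogue of this.

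The lighter gap: your picture of "lattice integrals over the short paths linking the close triple $\ldots$ of total length $O(r)$" does not describe what actually produces the $\delta/r$ factor --- the paths $l_i$ run from $x_i$ to $I$ and have length $\Theta(1)$, not $O(r)$. What happens instead is that a cycle of length $\geq 3$ visits a point set of diameter $\geq r$ by the definition of $r$, so by \eqref{eq:distance_bound} at least one consecutive kernel factor $K^{-1}_{\Omega_\delta}(b_k,w_{\sigma(k)})$ is $O(1/r)$; combining this with the $O(\delta/|b-w|^2)$ error term on another leg of the same cycle and using the integral bounds of Lemma~\ref{lem:estimate_integrals} to control the remaining legs produces $\delta|\log\delta|^{n-1}/r$. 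For the same reason, a "prove $n=3$ then recurse" plan would not cover a residual consisting of one long cycle of length $\geq 4$ or several cycles of length $\geq 3$; the paper bounds each residual term directly by substituting the seven-term expansion \eqref{eq:asymp_of_Kinv} of $K^{-1}_{\Omega_\delta}$ and sorting the resulting terms into five groups, with the vanishing of Lemma~\ref{lem:vanishings} applied as one group among several rather than as the single organizing principle.
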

We remark that the lemma is true both for even and odd $n$; in the latter case the sum over pairings is empty and we are left just with the error term. We also remark that for the last factor, we will only use that it tends to zero uniformly as $\delta\to 0$ and $r\gg \delta\log \delta$ (say, $r$ fixed).
We will need the following lemma:

\begin{lemma}
\label{lem:estimate_integrals}
For every $n>0$, there exists a constant $C>0$ such that for any $0<\delta<1/2$, we have the the following bounds:
\begin{eqnarray}
    \int_0^1\dots \int_0^1 \frac{1}{x_1+x_2 + \delta} \cdot\ldots\cdot \frac{1}{x_{n-1} + x_{n} + \delta}\,dx_1\dots dx_{n}\leq &C; \label{eq:int_bound_const}\\
        \int_0^1\dots \int_0^1 \frac{1}{x_1+x_2 + \delta} \cdot\ldots\cdot \frac{1}{x_{n} + x_{1} + \delta}\,dx_1\dots dx_{n}\leq &C |\log \delta|;\label{eq:int_bound_log}\\
        \int_0^1\dots \int_0^1 \frac{1}{(x_1+x_2 + \delta)^2} \frac{1}{x_2+x_3 + \delta}\cdot\ldots\cdot \frac{1}{x_{n-1} + x_{n} + \delta}\,dx_1\dots dx_{n}\leq &C |\log\delta|^{n-1};\label{eq:int_bound_log_bis}\\
        \int_0^1\dots \int_0^1 \frac{1}{(x_1+x_2 + \delta)^2} \frac{1}{x_2+x_3 + \delta}\cdot\ldots\cdot \frac{1}{x_{n} + x_{1} + \delta}\,dx_1\dots dx_{n}\leq &C \delta^{-1}.\label{eq:int_bound_delta_minus1}
\end{eqnarray}
\end{lemma}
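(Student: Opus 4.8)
The plan is to treat all four bounds by the same strategy: perform the $x$-integrations one at a time, from the outside in, keeping track of what each integration produces. The basic building blocks are the elementary estimates $\int_0^1 \frac{dx}{x+a} = \log\frac{1+a}{a} \le \log\frac{2}{a}$ for $0<a\le 1$, which is $O(|\log a|)$, and $\int_0^1 \frac{dx}{(x+a)^2} = \frac{1}{a} - \frac{1}{1+a} \le \frac 1a$. More useful in the chained setting is the observation that integrating a single variable that appears in exactly two consecutive factors behaves well: for the "path" integrals \eqref{eq:int_bound_const}, one integrates $x_n$ first in $\int_0^1 \frac{dx_n}{x_{n-1}+x_n+\delta} \le \log\frac{2}{x_{n-1}+\delta} \le \log\frac{2}{\delta}$, but this crude bound would give $|\log\delta|^{n-1}$, which is too weak for \eqref{eq:int_bound_const}. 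The correct, sharper route is to integrate and retain the dependence: set $a_{n-1} = x_{n-1}+\delta$ and note $\int_0^1 \frac{dx_n}{x_{n-1}+x_n+\delta} \le \log\frac{2}{a_{n-1}}$, but then the \emph{next} integral $\int_0^1 \log\frac{2}{x_{n-2}+x_{n-1}+\delta}\,\frac{dx_{n-1}}{x_{n-2}+x_{n-1}+\delta}$ — wait, the factor $\frac{1}{x_{n-2}+x_{n-1}+\delta}$ is still present — is $\int_0^1 \frac{\log(2/(x_{n-1}+\delta))}{x_{n-2}+x_{n-1}+\delta}\,dx_{n-1}$, which one bounds by $\int_0^\infty \frac{\log(2/u)_+}{u}\cdot\mathbbm 1[u\le 1]\,du \le C$ uniformly in $\delta$ and $x_{n-2}$, because $\frac{\log(2/u)}{u}$ is integrable near $0$. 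Iterating, each successive integration over a "fresh" variable of the path contributes a bounded factor, giving \eqref{eq:int_bound_const}.

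For \eqref{eq:int_bound_log}, the geometry is a cycle rather than a path. I would open the cycle by integrating all variables except, say, $x_1$: the remaining factors involving $x_1$ are $\frac{1}{x_1+x_2+\delta}$ and $\frac{1}{x_n+x_1+\delta}$, so after integrating out $x_2,\dots,x_n$ by the path estimate above (which produces a bounded constant times possibly a logarithm in $x_1+\delta$ or $x_n+\delta$), we are left with $\int_0^1 \frac{\text{(bounded or }\log(1/(x_1+\delta)))}{\text{(something)}}\,dx_1$; the worst case is $\int_0^1 \frac{dx_1}{x_1+\delta} = O(|\log\delta|)$, accounting for the single extra logarithm. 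The key point is that closing the cycle costs exactly one factor $|\log\delta|$ and no more, which requires being careful that the path integration of the other $n-1$ variables does not itself generate powers of $|\log\delta|$ — this is where the integrability of $\frac{\log(2/u)}{u}$ near zero is used repeatedly.

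For \eqref{eq:int_bound_log_bis} and \eqref{eq:int_bound_delta_minus1}, the new feature is the squared factor $\frac{1}{(x_1+x_2+\delta)^2}$. Here I would integrate $x_1$ first: $\int_0^1 \frac{dx_1}{(x_1+x_2+\delta)^2} \le \frac{1}{x_2+\delta}$. This converts the squared factor into an \emph{extra} copy of $\frac{1}{x_2+\delta}$, so that $x_2$ now appears to the first power in $\frac{1}{x_2+\delta}$ \emph{and} in $\frac{1}{x_2+x_3+\delta}$. For \eqref{eq:int_bound_log_bis}, integrating $x_2$ next gives $\int_0^1 \frac{dx_2}{(x_2+\delta)(x_2+x_3+\delta)} \le \int_0^1 \frac{dx_2}{(x_2+\delta)^2} \le \frac 1\delta$ in the worst case — but that's too lossy; instead bound it by $\frac{1}{x_3+\delta}\int_0^1(\frac{1}{x_2+\delta}-\frac{1}{x_2+x_3+\delta})\,dx_2$... more simply, $\int_0^1\frac{dx_2}{(x_2+\delta)(x_2+x_3+\delta)} \le \log\frac 2\delta \cdot \frac{1}{x_3+\delta}$ is false dimensionally; the honest bound is $\le \frac{C|\log\delta|}{x_3+\delta}$ only if... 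Let me instead just bound $\frac{1}{x_2+\delta}\le\frac1\delta$... no. The clean approach: $\int_0^1 \frac{dx_2}{(x_2+\delta)(x_2+x_3+\delta)} \le \int_0^1\frac{dx_2}{(x_2+\delta)^2}\le\frac1\delta$ gives, after the remaining path integrations each costing $O(|\log\delta|)$, the bound $\frac{1}{\delta}\cdot|\log\delta|^{n-2}$, which is \emph{worse} than claimed. So the sharp analysis must instead keep $\frac{1}{x_2+x_3+\delta}$ honest: write $\frac{1}{(x_2+\delta)(x_2+x_3+\delta)} = \frac{1}{x_3}\left(\frac{1}{x_2+\delta}-\frac{1}{x_2+x_3+\delta}\right)$, integrate in $x_2$ to get $\frac{1}{x_3}\log\frac{(1+\delta)(x_3+\delta)}{\delta(1+x_3+\delta)} \le \frac{C}{x_3}\log\frac{x_3+\delta}{\delta}$, which is $\le C|\log\delta|\cdot\frac{1}{x_3+\delta}$ after noting $\frac{1}{x_3}\log\frac{x_3+\delta}{\delta}$ is maximized appropriately and $\log\frac{x_3+\delta}{\delta}\le\log\frac 2\delta$ while $\frac 1{x_3}\le\frac{2}{x_3+\delta}$ on the relevant range. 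Then the remaining $n-3$ path integrations over $x_3,\dots,x_n$ each yield $O(|\log\delta|)$, for a total of $|\log\delta|\cdot|\log\delta|^{n-3}\cdot(\text{one more }|\log\delta|\text{ from closing/finishing}) = |\log\delta|^{n-1}$, matching \eqref{eq:int_bound_log_bis}. For \eqref{eq:int_bound_delta_minus1}, the extra cyclic factor $\frac{1}{x_n+x_1+\delta}$ means $x_1$ appears in \emph{three} factors $\frac{1}{(x_1+x_2+\delta)^2}$ and $\frac{1}{x_n+x_1+\delta}$; the squared factor already signals the $\delta^{-1}$ blow-up, which comes from the region where $x_1,x_2,x_n$ are all $O(\delta)$ simultaneously. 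Integrating $x_1$ out of $\frac{1}{(x_1+x_2+\delta)^2(x_1+x_n+\delta)}$ gives a bound of order $\frac{1}{\delta(x_2+\delta)}\wedge\frac{|\log\delta|}{(x_2+\delta)(x_n+\delta)}$ type terms; tracking the leading singularity and integrating the remaining cyclic chain produces $\delta^{-1}$ times bounded factors (the remaining integrations form a path and contribute $O(1)$ by \eqref{eq:int_bound_const}). \textbf{The main obstacle} is precisely this bookkeeping: one must organize the order of integration and the partial-fraction manipulations so that \emph{each} fresh variable of a chain contributes only $O(1)$ (when it sits between two distinct first-power factors whose sum is integrated) rather than $O(|\log\delta|)$, and identify exactly which steps — the closing of a cycle, or the presence of the squared factor — are responsible for the stated powers of $|\log\delta|$ or the factor $\delta^{-1}$. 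A uniform substitution $x_i = \delta y_i$ on the small-variable region, reducing everything to $\delta$-independent integrals over $[0,1/\delta]^n$ with the scaling of the measure producing the powers of $\delta$, gives a cleaner way to see the final exponents and I would present the argument in that normalized form.
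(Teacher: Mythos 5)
Your proposed route — integrating one variable at a time from one end of the chain and tracking the resulting logarithms — is genuinely different from the paper's proof, which first breaks the integral into $n!$ regions according to the ordering of the variables, drops the smaller of the two summed variables in each region, and then integrates inward using the elementary bound $\int_x^{1/2}|\log y|^k y^{-m}\,dy\leq c(k,m)|\log x|^{k+\mathbbm 1_{m=1}}x^{-(m-1)}$ while tracking the exponents $p_j$ and $q_j$. However, your argument for~\eqref{eq:int_bound_const} contains a concrete error at its core: you claim that $\int_0^1 \frac{\log(2/u)}{u}\,du$ is finite ``because $\frac{\log(2/u)}{u}$ is integrable near $0$,'' but this integral diverges — the antiderivative is $-\tfrac12\log^2(2/u)$, so $\int_\delta^1\frac{\log(2/u)}{u}\,du\sim\tfrac12|\log\delta|^2$. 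Consequently, it is \emph{not} the case that ``each successive integration over a fresh variable of the path contributes a bounded factor.'' The correct bookkeeping is that the power of the accumulated logarithm grows by one at each step: after integrating $x_n,\dots,x_{n-k+1}$ along the path one obtains a factor of order $\log^k\!\big(2/(x_{n-k}+\delta)\big)$, not a constant. The chain integral~\eqref{eq:int_bound_const} is bounded only because the final integration over the ``free'' endpoint $x_1$ kills the whole $\log^{n-1}$: $\int_0^1\log^{n-1}(2/x)\,dx<\infty$. Your treatment of~\eqref{eq:int_bound_log} relies on the same false step (``the path integration of the other $n-1$ variables does not itself generate powers of $|\log\delta|$''), and your handling of~\eqref{eq:int_bound_log_bis} and~\eqref{eq:int_bound_delta_minus1} is admittedly tentative and at one point produces a bound of the wrong order before attempted repairs.

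With the correct accumulation of log powers, the variable-by-variable strategy can likely be carried through, but the bookkeeping becomes delicate precisely in the ways you flag — deciding where the cycle-closing $|\log\delta|$ factor arises, and controlling how the squared factor interacts with the chain. The paper's ordering-of-variables decomposition is essentially a way of systematizing exactly this bookkeeping: once the variables are ordered, each step is a one-dimensional power-plus-log integral, and whether a step contributes an extra log or a power of $\delta^{-1}$ is read off directly from the combinatorics of the exponents $p_j$ and $q_j$. If you want to retain your approach, you would need to (i) replace the false integrability claim by the correct iterated estimate, (ii) verify that the log power after $k$ steps is exactly $k$ and that the terminal integral absorbs $\log^{n-1}$, and (iii) redo the cycle and double-edge cases with that corrected inductive hypothesis.
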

\begin{proof}
Note that we can actually integrate to $\frac14$ instead of $1$ as the difference will be $O(1)$. We break each of the integrals into $n!$ parts corresponding to orderings of $x_1,\dots,x_n$, and in each term, drop the smaller variable for the upper bound; we also shift all variables by $\delta$. This yields a bound of the form
\[
\dots\leq \sum_{\sigma}\int_{\delta<x_{\sigma(1)}<\dots<x_{\sigma(n)}<\frac34}\frac{dx_{\sigma(1)}\dots dx_{\sigma(n)}}{x^{d_1}_{\sigma(1)}\dots x^{d_n}_{\sigma(n)}}+O(1),
\]
where the degrees $d_i=d_i(\sigma)$ are the incoming degrees of vertices in the oriented multigraph with vertex set $\{1,\dots,n\}$ and one edge connecting $i,j$ for each term of the form $(x_{\sigma(i)}+x_{\sigma(j)}+\delta)^{-1}$ in the integrand, oriented from $\min(i,j)$ to $\max(i,j)$. Ignoring orientations, this multigraph is a simple $n$-loop, a simple $n$-path, a simple $n$-path with one double edge, and a simple $n$-loop with one double edge, in the cases \eqref{eq:int_bound_const}, \eqref{eq:int_bound_log}, \eqref{eq:int_bound_log_bis}, \eqref{eq:int_bound_delta_minus1} respectively.

We now estimate the integrals by integrating over $x_{\sigma(n)},\dots,x_{\sigma(1)},$ in this order, taking into account that for $0<x<\frac34$ and $m\geq 1$ we have
\begin{equation}
\label{eq:int_log_estimate}
\int_{x}^\frac12 \frac {|\log y|^k\,dy}{y^m}\leq c(k,m)\cdot \frac{|\log x|^{k+\indic_{m=1}}}{x^{m-1}}.
\end{equation}
Denote $p_j:=d_j+\dots+d_n-(n-j).$ Each excursion of the $n$-path or the $n$-loop into the set $\{j,\dots,n\}$ that visits $k$ vertices contributes at least $k$ to $d_j+\dots+d_n$, except in the case \eqref{eq:int_bound_const} and $j=1$. Hence, with that exception, we have $p_j\geq 1,$  and inductive application of \eqref{eq:int_log_estimate} gives
\begin{multline}
\int_{\delta<x_{\sigma(1)}<\dots<x_{\sigma(n)}<\frac34}\frac{dx_{\sigma(1)}\dots dx_{\sigma(n)}}{x^{d_1}_{\sigma(1)}\dots x^{d_n}_{\sigma(n)}}\\ \leq \dots\leq C_j\cdot \int_{\delta<x_{\sigma(1)}<\dots<x_{\sigma(j)}<\frac34}\frac{|\log x_{\sigma(j)}|^{q_j}dx_{\sigma(1)}\dots dx_{\sigma(j)}}{x^{d_1}_{\sigma(1)}\dots x^{d_{j-1}}_{\sigma(j-1)}x^{p_j}_{\sigma(j)}}
\leq C_0|\log \delta|^{q_0}\delta^{n-d_1-\dots-d_n}.
\end{multline}
where $q_i=|\{k>i:p_k=1\}|.$ To prove \eqref{eq:int_bound_const}, observe that the same induction is valid up to $j=1$, and we have $p_1=0$, hence the final integral has an integrable singularity. In the cases  \eqref{eq:int_bound_log}, \eqref{eq:int_bound_log_bis}, \eqref{eq:int_bound_delta_minus1}, the exponent $n-d_1-\dots-d_n$ is equal to $0,0,$ and $-1$ respectively. Observe furthermore that for $j>1$, in the case of loops, the above bound $p_j\geq 1$ is never attained, since in fact each excursion visiting $k$ vertices of $\{j,\dots,n\}$ will contribute at least $k+1$ to $d_j+\dots+d_n$. This means that in the cases \eqref{eq:int_bound_log},  \eqref{eq:int_bound_delta_minus1}, we have $q_i=0$ for $i\geq 1$. In the former case, we have $p_1=1$, hence $q_0=1$, and in the latter case we have $p_1=2$, hence $q_0=0$. Finally, for \eqref{eq:int_bound_log_bis}, we improve the trivial bound $q_0\leq n$ to $q_0\leq n-1$ by noticing that if $j>0$ and the double edge is incident to one of the vertices of $\{j,\dots,n\}$, then once again the bound $p_j\geq 1$ improves to $p_j>1$.
\end{proof}

\begin{proof}[Proof of Lemma \ref{lemma:Wick_rule_dimer_height}]
    We first assume that there can be chosen $n$ paths $l_1,\dots, l_n$ on the dual graph $\Omega_\delta^\ast$ such that each $l_i$ connects $x_i$ with the straight segment on the boundary of $\Omega_\delta$, and no two paths are incident to the same vertex of $\Omega_\delta$. Given a compact $K$ as in the statement of the lemma containing $x_1,\dots, x_n\in K$ one can further choose paths $l_1,\dots, l_n$ such that the distance from $l_i$ to the boundary of $\Omega_\delta$ with the straight segment removed is bounded from below uniformly in $\delta$, and such that for any $i\neq j$ and $x\in l_i, y\in l_j$ we have
    \begin{equation}
    \label{eq:distance_bound}
        |x-y| \gtrsim \length(l_i(x,x_i)) + |x_i-x_j| + \length(l_j(y,x_j)),
    \end{equation}
    with some constant independent on $\delta$, where $\length(l(a,b))$ is the Euclidean length of the path segment of $l$ between $a$ and $b$. We can moreover require that each $l_i$ is a union of segments of even length except maybe for one (in the units of $\delta$), parallel to the coordinate axis, with the number of segments bounded by a constant depending only on $n$ and $K$.

   The following formula was obtained in the course of the proof of  ~\cite[Proposition~20]{KenyonConfInvOfDominoTilings}:
    \begin{equation}
        \label{eq:Wdh1}
        \EE[h_{\delta,1}(x_1)\cdot\dots\cdot h_{\delta,1}(x_n)] = \sum_{(b_1w_1)^*\in l_1}\dots \sum_{(b_{n}w_{n})^*\in l_{n}} \det \left[K^{-1}_{\Omega_\delta}(b_i,w_j)\indic[i\neq j]\right]\,\prod_{i = 1}^{n} d(b_iw_i)^\ast.
    \end{equation}
Indeed, for $(b_i w_i)^\star\in l_i$, denote by $dh_{(b_i w_i)^\star}$ the change of the single-dimer height function along $(b_i w_i)^\star$, that is, $dh_{(b_i w_i)^\star}=\indic[(b_iw_i)\in D_1]
(-1)^{b_i \text{ is on the left of }  l_i}$. Then,
\[
h_{\delta,1}(x_i)=
\sum_{(b_i w_i)^\star\in l_i}dh_{(b_i w_i)^\star} -\EE dh_{(b_i w_i)^\star}=\sum_{(b_i w_i)^\star\in l_i}\left(\indic[(b_iw_i)\in D_1]-\EE \indic[(b_iw_i)\in D_1]\right)\frac{d(b_i w_i)^\star}{K_\delta(w_i,b_i)},\]
and \eqref{eq:Wdh1} follows from ~\cite[Lemma~21]{KenyonConfInvOfDominoTilings}.

    Given a permutation $\sigma$, denote by $\cyclic_{\sigma}=\{I^\sigma_1,\dots,I^\sigma_{c(\sigma)}\},$ where $I^\sigma_1\sqcup\dots\sqcup I^\sigma_{c(\sigma)}=\{1,\dots,n\}$ and $|I^\sigma_1|\leq|I^\sigma_2|\leq\dots\leq|I^\sigma_{c(\sigma)}|$, its cyclic decomposition. We start with ~\eqref{eq:Wdh1}, expand the determinant into a sum of permutations, and exchange the order of summation. The permutations which have a fixed point do not contribute because of the $\indic[i\neq j]$ factor. Therefore, we have
    \begin{multline*}
    \EE[h_{\delta,1}(x_1)\cdot\dots\cdot h_{\delta,1}(x_n)]=\sum_{\sigma: |I^\sigma_{1}|\geq 2}\sum_{(b_1w_1)^*\in l_1}\dots \sum_{(b_{n}w_{n})^*\in l_{n}}(-1)^{\mathrm{sign}(\sigma)}\prod_{j=1}^{c(\sigma)}\left[\prod_{i\in I^\sigma_{j}}K^{-1}_{\Omega_\delta}(b_i,w_{\sigma(i)})d(b_iw_i)^\ast\right]\\
    =\sum_{\sigma: |I^\sigma_{1}|\geq 2}(-1)^{\mathrm{sign}(\sigma)}\prod_{j=1}^{c(\sigma)}\left(\sum \prod_{i\in I^\sigma_{j}}K^{-1}_{\Omega_\delta}(b_i,w_{\sigma(i)})d(b_iw_i)^\ast\right),
    \end{multline*}
    where the inner sum is over the paths $l_{i_1},\dots,l_{i_{|I^\sigma_j|}}$ with $\{i_1,\dots, i_{|I^\sigma_j|}\}=I^\sigma_j$. Assume that $n$ is even and $\sigma$ corresponds to a pairing, i.e., $p = \{\{i_1,i_2\},\dots,\{i_{n-1},i_n\}\}$. Since $\mathrm{sign}(\sigma)=c(\sigma)=n/2$ and \[\det\left[K^{-1}_{\Omega_\delta}(b_l,w_k)\indic[l\neq k]\right]_{k,l\in\{i,\sigma(i)\}}=-K^{-1}_{\Omega_\delta}(b_i,w_{\sigma(i)})K^{-1}_{\Omega_\delta}(b_{\sigma(i)},w_{i}),\] the term corresponding to $\sigma$ in the above sum simplifies to $\prod_{\{i,j\}\in p}\EE \left[h_{\delta,1}(x_{i}) h_{\delta,1}(x_{j})\right].$  We infer that the error term $F_\delta(x_1,\dots,x_n)$ in \eqref{eq:Wick_rule_dimer_height} is equal to
    \begin{equation}
        \label{eq:Wdh2}
F_\delta(x_1,\dots,x_n)=   \sum_{\sigma: |I^\sigma_1|\geq 2,\,|I^\sigma_{c(\sigma)}|\geq 3}(-1)^{\mathrm{sign}(\sigma)}\prod_{j=1}^{c(\sigma)}\left(\sum \prod_{i\in I^\sigma_{j}}K^{-1}_{\Omega_\delta}(b_i,w_{\sigma(i)})d(b_iw_i)^\ast\right).
    \end{equation}

To give the reader an idea of the following proof, let us see what happens if we estimate each term by substituting the bound $ K^{-1}_{\Omega_\delta}(b,w) = O\left(\frac{1}{|b-w|}\right)$, coming from Lemma~\ref{lemma:asymp_of_Kinv}. Bounding the sums by the corresponding integrals and using \eqref{eq:distance_bound} and Lemma \ref{lem:estimate_integrals} (specifically, eq.~\eqref{eq:int_bound_log}), we get
\[
|F_\delta(x_1,\dots,x_n)|\leq C\sum_{\sigma: |I^\sigma_1|\geq 2,\,|I^\sigma_{c(\sigma)}|\geq 3}|\log r||\log \delta|^{c(\sigma)-1}\leq C|\log r||\log \delta|^{\lfloor\frac{n-3}{2}\rfloor}.
\] This is not quite as good as we need, in particular, for $n=4$ and $r\approx \delta$, this yields a bound of $O(|\log \delta|)$ instead of the $O(1)$ needed in the proof of Lemma \ref{lemma:asymp_of_mu_delta_sigma_delta} below. The improvement comes from cancellations captured in Lemma \ref{lem:vanishings}: the main term of the asymptotics of $F_\delta$ corresponds to its continuous counterpart, which vanishes because the continuous GFF satisfies the Wick rule exactly.

 Given a collection $\mathcal{I}=\{I_1,\dots,I_{r}\}$, $r=r(\mathcal{I})$, of disjoint two-element subsets of $\{1,\dots,n\}$, let $\mathfrak{S}_{\mathcal{I}}$ denote the set of permutations of $\{1,\dots,n\}$ with $I_1^{\sigma}=I_1,\dots,I_r^{\sigma}=I_r$, and $|I_{r+1}^\sigma|\geq 3$. Denote also $I_{\mathcal{I}}:=\{1,\dots,n\}\setminus \cup_{i=1}^r I_i$.
 We split the sum in \eqref{eq:Wdh2} into sums over $\mathfrak{S}_{\mathcal{I}}$ with a given $\mathcal{I}$, take the factors corresponding to the two-cycles out of the sum, and bound each of them as $O(|\log \delta|)$ using the bound $ K^{-1}_{\Omega_\delta}(b,w) = O\left(\frac{1}{|b-w|}\right),$ \eqref{eq:distance_bound} and~\eqref{eq:int_bound_log} as explained above. This yields
 \begin{equation}
        \label{eq:Wdh_resummed}
|F_\delta(x_1,\dots,x_n)|  \leq C\cdot \sum_{\mathcal{I}}|\log \delta|^{r(\mathcal{I})}\left|\sum_{\sigma\in\mathfrak{S}_{\mathcal{I}}}(-1)^{\mathrm{sign}(\sigma)}\prod_{j=r(\mathcal{I})+1}^{c(\sigma)}\left(\sum \prod_{i\in I^\sigma_{j}}K^{-1}_{\Omega_\delta}(b_i,w_{\sigma(i)})d(b_iw_i)^\ast\right)\right|.
    \end{equation}
For $m=1,\dots,7$, let $\Kterm[m](b,w)$ denote the $m$-th term in the expansion \eqref{eq:asymp_of_Kinv} of $K^{-1}_{\Omega_\delta}(b,w).$ Substituting \eqref{eq:asymp_of_Kinv} for $K^{-1}_{\Omega_\delta}(b_i,w_{\sigma(i)})$, the sum over $\sigma$ in the right-hand side of \eqref{eq:Wdh_resummed} becomes
\begin{equation}
\label{eq:sum_sigma_expanded}
\sum_{\sigma\in\mathfrak{S}_{\mathcal{I}}}(\dots)=\sum_{m:\in \{1,\dots,7\}^{I_{\mathcal{I}}}} \sum_{\sigma\in\mathfrak{S}_{\mathcal{I}}}(-1)^{\mathrm{sign}(\sigma)}\prod_{j=r(\mathcal{I})+1}^{c(\sigma)}\left(\sum \prod_{i\in I^\sigma_{j}}\Kterm[m_i](b_i,w_{\sigma(i)})d(b_iw_i)^\ast\right).
\end{equation}
We group the terms in this sum as follows.

\emph{Group 1.} The terms where $m_i=5$ for some $i$, i.e., $\Kterm[m_i](b_i,w_{\sigma(i)})=o(1)$. Substituting the bound $O\left(\frac{1}{|b_s-w_{\sigma(s)}|}\right)$ for all other $\Kterm[m_s]$ and using \eqref{eq:int_bound_const}, the term corresponding to $I^\sigma_j\ni i$ in \eqref{eq:sum_sigma_expanded} can be bounded by $o(1)\cdot O(1)=o(1),$ and all the other terms can be bounded using  \eqref{eq:int_bound_log} by $O(|\log \delta|)$. This gives a contribution to $|F_\delta|$ of
$
\sum_{\mathcal{I}}\sum_{\sigma\in\mathfrak{S}_{\mathcal{I}}}o(1)\cdot |\log \delta|^{c(\sigma)-1}\leq o(1)\cdot |\log \delta|^{\lfloor\frac{n-3}{2}\rfloor}.
$

\emph{Group 2.} The terms where $m_i=6$ for some $i$, i.e., $\Kterm[m_i](b_i,w_{\sigma(i)})=O\left(\frac{\delta}{|b_i-w_{\sigma(i)}|^2}\right)$. Let $j$ be such that  $I^\sigma_j\ni i$; bounding all other $\Kterm[m_s]$ by $O\left(\frac{1}{|b_s-w_{\sigma(s)}|}\right)$ and using \eqref{eq:int_bound_delta_minus1}, we get a bound of $O(1)$ for the term corresponding to $I^\sigma_j$. Note that since $|I^\sigma_j|\geq 3$, we can find $k\in I^\sigma_j$ such that $|b_k-w_{\sigma(k)}|\geq r/2$. If $k=i$, we bound $\Kterm[m_i](b_i,w_{\sigma(i)})$ by $O\left(\frac{\delta}{r|b_i-w_{\sigma(i)}|}\right)$ and all other $\Kterm[m_s]$ by $O\left(\frac{1}{|b_s-w_{\sigma(s)}|}\right)$; using \eqref{eq:int_bound_log} gives the bound of $O\left(\frac{\delta\log \delta}{r}\right).$ If $k\neq i$, we bound  $\Kterm[m_k](b_k,w_{\sigma(k)})$ by $O(1/r)$ and use \eqref{eq:int_bound_log_bis}; this gives a bound of $O\left(\frac{\delta|\log \delta|^{n-1}}{r}\right).$ Picking the smaller of the two bounds, and bounding other terms in the product $\prod_{j=r(\mathcal{I})+1}^{c(\sigma)}$ in \eqref{eq:Wdh_resummed} by $O(|\log \delta|)$, we get an overall bound of the form  $C\cdot \min\left(\frac{\delta\log \delta}{r},1\right)\cdot |\log \delta|^{\lfloor\frac{n-3}{2}\rfloor}.$

\emph{Group 3.} The terms where $m_i=7$ for some $i$. This is even easier than the previous case, since we can simply bound $O\left(\frac{\delta}{|b_i-\overline{w_{\sigma(i)}}|^2}\right)=O(\delta)$.

\emph{Group 4.} The terms where $m_i\in\{1,\dots,4\}$ for all $i$, and moreover for some choice of the signs $s_i,$ $i\in I_{\mathcal{I}}$, we have, for every $j>r(\mathcal{I})$,
\[\prod_{i\in I^\sigma_j}K_{m_i}^{-1}(b_i,w_{\sigma(i)})=\prod_{i\in I^\sigma_j}\frac{1}{4}F^{[s_is_{\sigma(i)}]}(b_i,w_{\sigma(i)})\int_{(b_iw_i)^\ast}dz^{[s_i]}.
\]
(Note that by \eqref{eq:dedge=int}, this simply means that for each $i$, the first sign in the superscript of $K_{m_i}^{-1}=(\dots)F^{[\cdot,\cdot]}$ agrees with the second sign in the superscript of $K_{m_{\sigma^{-1}(i)}}^{-1}=(\dots)F^{[\cdot,\cdot]}$.)
In this case, the innermost sums are Riemann sum approximations to the integrals in the right-hand side of \eqref{eq:vanishing}, which sum up to zero. Hence, this group equals the sum of several terms of of the form $\prod_{j=r(\mathcal{I})+1}^{c(\sigma)}Q_j$, where each $Q_j$ is either an integral of $\prod_{i\in I^\sigma_j}F^{[s_i,s_{\sigma(i)}]}(b_i,w_{\sigma(i)})dz^{[s_i]}$, or is an error of the Riemann sum approximation, with at least one $Q_j$ of the latter type. The former are of order $O(|\log \delta|)$. The error of the Riemann sum approximation can be bounded by $\delta$ times the integral of the norm of the gradient of $\prod_{i\in I^\sigma_j}F^{[s_i,s_{\sigma(i)}]}(b_i,w_{\sigma(i)})dz^{[s_i]}$, which in its turn can be bounded by
\[
C\cdot \sum_{k\in I^\sigma_j}\frac{1}{|b_k-w_{\sigma(k)}|}\prod_{i\in I^\sigma_j}\frac{1}{|b_{i}-w_{\sigma(i)}|}.
\] This is a bound of the same type as in the case of Group 2 above, and hence we get the same overall bound $C\cdot \min\left(\frac{\delta\log \delta}{r},1\right)\cdot |\log \delta|^{\lfloor\frac{n-3}{2}\rfloor}.$

\emph{Group 5.} The terms where $m_i\in\{1,\dots,4\}$ for all $i$, and for some $i$, the first sign in the superscript of $K_{m_i}^{-1}=(\dots)F^{[\cdot,\cdot]}$ disagrees with the second sign in the superscript of $K_{m_{\sigma^{-1}(i)}}^{-1}=(\dots)F^{[\cdot,\cdot]}$ In this case, instead of being a Riemann sum approximation to an integral, the sum over $l_i$ is approximately telescoping (cf. \cite[Proof of Proposition 20]{KenyonConfInvOfDominoTilings}): for two consecutive $(b_i,w_i)^\star$, $(\hat{b}_i\hat{w}_i)^\star$ on the same straight line segment of $l_i$, we have $\eta^2_{b_i}=-\eta^2_{\hat{b}_i}$ and $\eta^2_{w_i}=-\eta^2_{\hat{w}_i}$, hence
\begin{multline*}|K_{m_{\sigma^{-1}(i)}}^{-1}(b_{\sigma^{-1}(i)},w_i)K_{m_{i}}^{-1}(b_i,w_{\sigma(i)})+K_{m_{\sigma^{-1}(i)}}^{-1}(b_{\sigma^{-1}(i)},\hat{w}_i)K_{m_{i}}^{-1}(\hat{b}_i,w_{\sigma(i)})|\\
=\frac14\left| F^{[\cdot,s]}(b_{\sigma^{-1}(i)},w_i)F^{[\tilde{s},\cdot]}(b_i,w_{\sigma(i)})-F^{[\cdot,s]}(b_{\sigma^{-1}(i)},\hat{w}_i)F^{[\tilde{s},\cdot]}(\hat{b}_i,w_{\sigma(i)})\right|\\
\leq \frac{C\delta}{|b_{\sigma^{-1}(i)}-w_i|^2|b_i-w_{\sigma(i)}|}+ \frac{C\delta}{|b_{\sigma^{-1}(i)}-w_i||b_i-w_{\sigma(i)}|^2}.
\end{multline*}
Bounding other factors in the product $ \prod_{s\in I^\sigma_{j}}\Kterm[m_s](b_s,w_{\sigma(s)})$, where $I^\sigma_j\ni i$, by $O\left(\frac{1}{|b_s-w_{\sigma(s)}|}\right)$, we once again get the bound of the same type as in Groups 2 and 5 above, yielding the same overall bound.

Putting the bounds for all five groups together, we get the required result.

In the case that there are no paths $l_1,\dots, l_n$ with the required properties, we necessarily have $r=O(\delta)$, therefore, the last factor in \eqref{eq:Wick_rule_dimer_height_error} is $O(1)$. We can find points $\hat{x}_1,\dots,\hat{x}_n$ such that $\hat{x}_i$ can be connected to the straight segment on the boundary by paths $l_i$ with the above properties, and $x_i$ can be connected to $\hat{x}_i$ by a path $\hat{l}_i$ of length $O(\delta)$, such that moreover $l_i$ and $\hat{l}_j$ are never incident to the same vertex of $\Omega_\delta$ for $i\neq j$. For $S\subset \{1,\dots,n\}$, put $Q^S_i=h_{\delta,1}(\hat{x}_i)$ if $i\notin S$ and $Q^S_i=h_{\delta,1}(x_i)-h_{\delta,1}(\hat{x}_i)$ else. We can write
\begin{multline}
\label{eq:points_are_close_what_to_do}
\EE[h_{\delta,1}(x_1)\cdot\dots\cdot h_{\delta,1}(x_n)]=\EE[h_{\delta,1}(\hat{x}_1)\cdot\dots\cdot h_{\delta,1}(\hat{x}_n)]+\sum_{\emptyset\neq S\subset \{1,\dots,n\}}\EE\prod_{i=1}^n Q^S_i\\
=\sum_{p \text{ - pairing}}\prod_{i=1}^{n/2}\EE \left[h_{\delta,1}(x_{p(2i-1)}) h_{\delta,1}(x_{p(2i)})\right]+F_\delta(\hat{x}_1,\dots,\hat{x}_n)\\+\sum_{\emptyset\neq S\subset \{1,\dots,n\}}\left(\EE\prod_{i=1}^n Q^S_i-\sum_{p \text{ - pairing}}\prod_{i=1}^{n/2}\EE \left[Q^S_{p(2i-1)}Q^S_{p(2i)})\right]\right).
\end{multline}
We proceed to estimating the last term. Denote $E_{(bw)}:=\mathbb{I}[(bw)\in D_1]-\EE\mathbb{I}[(bw)\in D_1]$. Writing $h_{\delta,1}(x_i)-h_{\delta,1}(\hat{x}_i)$ in terms of $E_{(bw)},$ $(bw)^\ast\in \hat{l}_i$, we see that $\prod_{i\in S} Q_{i}^S$ is a homogeneous polynomial of degree $|S|$ in $E_{(bw)}$, $(bw)^\ast\in \hat{l}_1\cup\dots\cup \hat{l}_n$. In fact, it can be rewritten as a linear combination of monomials in $E_{(bw)}$, where each monomial only has factors corresponding to disjoint edges; this is because for non-disjoint edges $(bw),(\hat{b}\hat{w})$, we have $\mathbb{I}[(bw)\in D_1]\mathbb{I}[(\hat{b}\hat{w})\in D_1]\equiv 1$ or $\mathbb{I}[(bw)\in D_1]\mathbb{I}[(\hat{b}\hat{w})\in D_1]\equiv 0$. Let us denote this linear combintaion as $\sum_{U\in\mathcal{E}_S}\alpha_{U}\prod_{(bw)\in U}E_{bw}$, where $\mathcal{E}_S$ stands for the set of subsets of $\cup_{i\in S} \hat{l}_i$ of cardinality $\leq |S|$ containing only pairwise disjoint edges. We can now write, similarly to \eqref{eq:Wdh1} and the subsequent expansion,
\begin{multline}
\EE\prod_{i=1}^n Q^S_i=\sum_{U\in\mathcal{E}_S}\tilde{\alpha}_U\sum_{\prod_{i\in S^c}l_i}\det \left[K^{-1}_{\Omega_\delta}(b_i,w_j)\indic[i\neq j]\right]_{i,j\in S^c\cup U}\,\prod_{i\in S^c\cup U} d(b_iw_i)^\ast\\
=\sum_{U\in\mathcal{E}_S}\tilde{\alpha}_U\sum_{\sigma:|I^\sigma_1|\geq 2}(-1)^{\mathrm{sign}(\sigma)}\prod_{j=1}^{c(\sigma)}\sum \prod_{i\in I^\sigma_{j}}K^{-1}_{\Omega_\delta}(b_i,w_{\sigma(i)})d(b_iw_i)^\ast;
\end{multline}
where we are summing over permutations of $S^c\cup U$. Note that in this normalization, the coefficients $\tilde{\alpha}_U$ do not depend on $\delta$. As above, the contribution of each cycle not passing through $U$ can be bounded by $O(|\log \delta|),$ while for the cycles that do pass through $U$, we can use the bound $|K^{-1}_{\Omega_\delta}(b_i,w_{\sigma(i)})d(b_iw_i)^\ast|=O(1)$ for $i\in U$ and \eqref{eq:int_bound_const}, which gives a contribution of $O(1).$ We consider the following contributions to the last sum in \eqref{eq:points_are_close_what_to_do}, depending on $S$ and the permutation $\sigma$:

\emph{Case 1:} $|S|\geq 3$. In this case, the total number of cycles that do not pass through $U$ is at most $\lfloor\frac{n-3}{2}\rfloor$, thus we get the desired bound
$\EE\prod_{i=1}^n Q^S_i=O(|\log \delta|^{\lfloor\frac{n-3}{2}\rfloor}).$ Note also that for any pairing, at least two terms in the product $\prod_{i=1}^{n/2}\EE \left[Q^S_{p(2i-1)}Q^S_{p(2i)})\right]$ have $\{p(2i-1),p(2i)\}\cap S \neq \emptyset$, and thus can be bounded as $O(1)$, while the other terms can be bounded by $O(|\log \delta|)$. Thus, the sum over pairings is also bounded by $O(|\log \delta|^{\frac{n}{2}-2})=O(|\log \delta|^{\lfloor\frac{n-3}{2}\rfloor})$.

\emph{Case 2:} $|S|=1$. In this case, we in fact have $\tilde{\alpha}_\emptyset =0$, since the pairwise disjointness condition is void; in this case, $\EE\prod_{i=1}^n Q^S_i$ is given exacly by \eqref{eq:Wdh1} with summation over $l_i$ replaced by summation over $\hat{l}_i$ where $S=\{i\}$. If $\sigma$ has at least one cycle of length three, it has at most $\lfloor\frac{n-1}{2}\rfloor$ cycles, one of which passes through $U$, leading to a bound of $O(|\log \delta|^{\lfloor\frac{n-3}{2}\rfloor})$. If $\sigma$ corresponds to a pairing $p$, its contribution cancels out exactly with $\prod_{\{i,j\}\in p}\EE \left[Q^S_{i}Q^S_{j})\right].$

\emph{Case 3:} $|S|=2$ and $\sigma$ has at least one cycle of length three. In this case, either $U$ is empty and the total number of cycles in $\sigma$ is at most $\lfloor\frac{n-3}{2}\rfloor$, or $U$ is non-empty, the total number of cycles is at most $\lfloor\frac{n-1}{2}\rfloor$, but at least one of them passes through $U$. In both cases, we get a contribution of $O(|\log \delta|^{\lfloor\frac{n-3}{2}\rfloor})$.

\emph{Case 4:} $|S|=2$ and $n$ is even. The permutations unaccounted for are those correponding to pairings, which we group as follows: we fix $S$ and group together the terms with $U=\emptyset$ and a pairing $p$ of $\{1,\dots,n\}\setminus S$, and the terms with $U=S$ and the pairing $p\cup S$. The factors corresponding to the pairs of $p$ separate as $\prod_{\{i,j\}\in p}\EE[Q_i^S Q^S_j]=\prod_{\{i,j\}\in p}\EE [h_{1,\delta}(\hat{x}_i)h_{1,\delta}(\hat{x}_j)]$, and the remaining factors combine to $\EE[Q_{i_1}^S Q^S_{i_2}]$, where $\{i_1,i_2\}=S$. (Take into account that first-degree monomials in the expansion of $Q_{i_1}^S Q^S_{i_2}$ do not contrbute to the expectation since $\EE E_{(bw)}=0$.) We conclude that this group cancels out with $\prod_{\{i,j\}\in p\cup S}\EE[Q_i^S Q^S_j]$. The remaining terms are those where $U=S$ and $\sigma$ has two two-cycles passing through $S$; the contribution of those terms, as well as the corresponding products  $\prod_{\{i,j\}\in p}\EE[Q_i^S Q^S_j]$, is $O(|\log \delta|^{\frac{n}{2}-2})$.

\emph{Case 5:} $|S|=2$ and $n$ is odd. The terms unaccounted for are those with $|U|=1$ and premutations corresponding to pairings. These permutations has $\frac{n-1}{2}$ two-cycles, one of which passes through $U$. This yields a contribution of $O(|\log \delta|^{\frac{n-3}{2}})=O(|\log \delta|^{\lfloor\frac{n-3}{2}\rfloor}).$

    Finally, to derive the same statement for $h_\delta$, recall that $h_\delta=h_{\delta,1}-h_{\delta,2}$, where $h_{\delta,2}$ is an independent copy of
    $h_{\delta,1}$; hence $h_\delta$ would satisfy the exact Wick's rule if  $h_{\delta,1}$ and $h_{\delta,2}$ did. We can therefore write
    \begin{multline}
    \EE[h_{\delta}(x_1)\cdot\dots\cdot h_{\delta}(x_n)]=\sum_{S\subset \{1,\dots,n\}}(-1)^{|S|}\EE\left[\prod_{i\in S} h_{\delta,2}(x_i)\prod_{i\in S^c} h_{\delta,1}(x_i)\right]\\
    = \sum_{S\subset \{1,\dots,n\}}(-1)^{|S|}\sum_{p,\hat{p}\text{ - pairings of } S,S^c}\prod_{\{i,j\}\in p}\EE\left[h_{\delta,2}(x_i)h_{\delta,2}(x_j)\right]\prod_{\{i,j\}\in \hat{p}}\EE\left[h_{\delta,1}(x_i)h_{\delta,1}(x_j)\right] \\+ \tilde{F}_\delta(x_1,\dots,x_n)\\
    = \sum_{p\text{ - pairing of } \{1,\dots,n\}}\prod_{\{i,j\}\in p}\EE\left[h_{\delta}(x_i)h_\delta(x_j)\right]+\tilde{F}_\delta(x_1,\dots,x_n),
    \end{multline}
    where the error term can be written as
    \[
    \tilde{F}_\delta(x_1,\dots,x_n)=\sum_{S=\{i_1,\dots,i_{|S|}\}}(-1)^{|S|}\left(F_\delta(x_S)\EE\prod_{i\in S^c}h_{\delta,1}(x_i)+F_\delta(x_{S^c})\EE\prod_{i\in S}h_{\delta,2}(x_i)-F_\delta(x_{S})F_\delta(x_{S^c})\right),
    \]
where for $u\subset \{1,\dots,n\}$, $F_{\delta}(x_U)$ stands for $F_{\delta}(x_{i_1},\dots,x_{i_{|U|}})$ as in \eqref{eq:Wick_rule_dimer_height_error}. By \eqref{eq:Wick_rule_dimer_height} and the $O(|\log \delta|)$ bound for the two-point function, we have the bound
\[
\EE\prod_{i\in U}h_{\delta,1}(x_i)\leq C\cdot \begin{cases}|\log \delta|^{\frac{|U|}{2}}&|U|\text{ even,}\\
|\log \delta|^{\frac{|U|-3}{2}}&|U|\text{ odd.}
\end{cases}
\]
Since $r$ is decreasing under the inclusion order on the collections of marked points, we get the desired bound on $|\tilde{F}_\delta(x_1,\dots,x_n)|$ by plugging in the above estimate and \eqref{eq:Wick_rule_dimer_height_error}.
\end{proof}

\begin{lemma}
  \label{lemma:asymptotics_of_h(0)2}
Given a compact subset $K$ of $\Omega$, we have for any $x,y\in K$
  \[
    \EE h_{\delta}(x)h_{\delta}(y) = -\frac{1}{\pi^2} \log\max(|x-y|,\delta) + O(1),
  \]
 uniformly in $x,y\in K$ and in $\delta>0$.
\end{lemma}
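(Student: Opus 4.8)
The plan is to split into three ranges of $|x-y|$, using throughout the decomposition $h_\delta=h_{\delta,1}-h_{\delta,2}$ into two independent centered copies of the single-dimer height function, so that $\EE h_\delta(x)h_\delta(y)=2\,\EE h_{\delta,1}(x)h_{\delta,1}(y)$. Fix a large constant $M$. For $|x-y|\ge M\delta$ I would argue by discrete complex analysis; for $\delta\le|x-y|\le M\delta$ I would reduce to the diagonal case via the loop interpretation; and the diagonal case $x=y$, which by \eqref{eq:Nxy_from_h} is exactly the statement $\mu_\delta=-\tfrac1{\pi^2}\log\delta+O(1)$ of Theorem~\ref{thma:main1}, I would obtain from Kenyon's variance computation supplied with explicit error bounds. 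Note that $\log\max(|x-y|,\delta)=\log|x-y|$ when $|x-y|\ge M\delta$, while $\log\max(|x-y|,\delta)=\log\delta+O(1)$ when $|x-y|\le M\delta$, so the statement reduces accordingly in each range.

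\emph{The range $|x-y|\ge M\delta$.} Assume $x\ne y$ and choose disjoint dual paths $l_1\ni x$, $l_2\ni y$ to the straight boundary segment $I$, as in the proof of Lemma~\ref{lemma:Wick_rule_dimer_height}: made of coordinate segments of (mostly) even length, satisfying \eqref{eq:distance_bound}, exiting the ball $B(x,C|x-y|)$ in roughly opposite directions and ending at boundary points a definite distance apart, so that $|z_1-z_2|\gtrsim|x-y|+\dist(z_1,\{x,y\})+\dist(z_2,\{x,y\})$ for $z_i\in l_i$. Then \eqref{eq:Wdh1} with $n=2$ expresses $\EE h_{\delta,1}(x)h_{\delta,1}(y)$ as $-\sum_{(b_1w_1)^\ast\in l_1}\sum_{(b_2w_2)^\ast\in l_2}K^{-1}_{\Omega_\delta}(b_1,w_2)K^{-1}_{\Omega_\delta}(b_2,w_1)\,d(b_1w_1)^\ast d(b_2w_2)^\ast$. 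By Lemma~\ref{lemma:asymp_of_Kinv}(2)--(3) and \eqref{eq:K-1_asymp}, on $K\times K$ we have $K^{-1}_{\Omega_\delta}(b,w)=\tfrac1{2\pi(b-w)}+\tfrac{(\eta_b\eta_w)^2}{2\pi(\bar b-\bar w)}+R_\delta(b,w)$ with $R_\delta$ bounded off the diagonal and $|R_\delta(b,w)|=O(1)+O(\delta|b-w|^{-2})$. Expanding the product and using \eqref{eq:dedge=int}: the two ``holomorphic$\times$holomorphic'' and ``antiholomorphic$\times$antiholomorphic'' terms are Riemann sums for $-\tfrac1{4\pi^2}\int_{l_1}\int_{l_2}(z_1-z_2)^{-2}\,dz_1\,dz_2$ and its conjugate; iterating these (path-independent) contour integrals, and using that the endpoints of $l_1,l_2$ stay at a definite distance from each other and from $K$, gives $\int_{l_1}\int_{l_2}(z_1-z_2)^{-2}\,dz_1\,dz_2=\log(x-y)+O(1)$, so together with the conjugate term these produce the main term $-\tfrac1{2\pi^2}\log|x-y|+O(1)$. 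The two ``mixed'' terms carry oscillating $\eta^2$-prefactors, so their inner sums are approximately telescoping rather than Riemann sums — exactly as in ``Group~5'' of the proof of Lemma~\ref{lemma:Wick_rule_dimer_height} — and are $O(1)$. Every term carrying an $R_\delta$-factor is bounded, after splitting $R_\delta$ into its bounded and its $O(\delta|b-w|^{-2})$ parts, by integrals $\int\!\!\int|z_1-z_2|^{-p}$ ($p=1,2,3$) over the two contours emanating from $x$ and $y$, which the above separation property makes $O(1)$. (For $|x-y|$ bounded below one may instead quote the convergence of $h_{\delta,1}$ to the Gaussian free field, Lemma~\ref{lemma:asymp_of_Kinv}(4) with $n=2$, and the fact that the continuous two-point function equals $-\tfrac1{2\pi^2}\log|x-y|$ plus a locally bounded harmonic function.)

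\emph{The range $|x-y|\le M\delta$, and the diagonal.} Here I would first observe that a double-dimer loop surrounds exactly one of two adjacent faces if and only if it uses the edge separating them, and each edge lies in at most one loop; chaining $O(M)$ adjacent faces from $x$ to $y$ therefore gives $0\le N_\delta(x)-N_\delta(x,y)\le O(M)$ deterministically, so by \eqref{eq:Nxy_from_h}
\[
\EE h_\delta(x)h_\delta(y)=\EE N_\delta(x,y)=\EE N_\delta(x)+O(1)=\EE h_\delta(x)^2+O(1)=-\tfrac1{\pi^2}\log\delta+O(1),
\]
once the diagonal asymptotics $\EE h_\delta(v)^2=\mu_\delta=-\tfrac1{\pi^2}\log\delta+O(1)$ is known. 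For the latter I would take a dual path $l$ from $v$ to $I$ and write $\EE h_{\delta,1}(v)^2=\sum_{e\in l}p_e(1-p_e)-\sum_{e\ne e'\in l}K^{-1}_{\Omega_\delta}(b,w')K^{-1}_{\Omega_\delta}(b',w)\,d(bw)^\ast d(b'w')^\ast$ (the self-variances plus the determinantal two-edge covariances, with $p_e=\Pr[e\in D_1]$), then substitute the near-diagonal expansion of $K^{-1}_{\Omega_\delta}$ from \eqref{eq:K-1_asymp}: the $O(\delta^{-1})$ contributions of the self-variance sum and of the near-diagonal two-edge covariances cancel exactly as in the proof of \cite[Proposition~20]{KenyonConfInvOfDominoTilings}, leaving $\tfrac1{\pi^2}\sum_{k=1}^{\lfloor c/\delta\rfloor}\tfrac1k+O(1)=-\tfrac1{\pi^2}\log\delta+O(1)$. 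The hard part of the whole argument is precisely this diagonal computation: away from the diagonal the estimate is a routine combination of Lemma~\ref{lemma:asymp_of_Kinv} with elementary integral bounds, but at the lattice scale one must extract the logarithm from the genuine cancellation between the diagonal self-variances and the near-diagonal two-edge covariances and verify that the resulting $O(1)$ is uniform over $K$ and over $\delta$ — that is, one must reprove, with explicit error control, the part of Kenyon's variance computation producing the coefficient $1/\pi^2$.
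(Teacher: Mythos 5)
Your proposal is correct in substance, but it takes a longer route than the paper's, and the extra work is concentrated precisely at the point where the paper's method is designed to save labor.

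The paper treats \emph{all} of $x,y\in K$ — including $x=y$ — by a single argument: choose two edge-disjoint (indeed vertex-disjoint on $\Omega_\delta$) dual paths $l_1\ni x$, $l_2\ni y$ to the boundary segment, write $\EE h_{\delta,1}(x)h_{\delta,1}(y)$ via \eqref{eq:Wdh1} with $n=2$, expand $K^{-1}_{\Omega_\delta}$ using Lemma~\ref{lemma:asymp_of_Kinv}, and compare the double sum to $-\tfrac{1}{4\pi^2}\int_{l_1}\int_{l_2\setminus B_\delta(x)}(z_1-z_2)^{-2}\,dz_1dz_2$ plus its conjugate plus $O(1)$. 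Because the two paths can both start from the same face when $x=y$ and remain edge-disjoint, the $n=2$ determinantal formula with the $\indic[i\ne j]$ factor involves \emph{no} self-variance term at all, and the $\delta$-scale cutoff in the discrete sum automatically produces $-\tfrac{1}{2\pi^2}\log\max(|x-y|,\delta)+O(1)$ for $h_{\delta,1}$ in every regime. Your proposal, by contrast, singles out the diagonal, switches to a single-path expansion with $\sum_e p_e(1-p_e)$ self-variances and off-diagonal covariances, and then must re-prove Kenyon's cancellation between the $O(\delta^{-1})$ contributions with an explicit $O(1)$ error — you correctly flag this as ``the hard part,'' but you don't actually do it, and the two-disjoint-paths trick is precisely what lets the paper never confront it. Your near-diagonal reduction (a separating loop must cross each of the $O(M)$ chain edges, each edge is in at most one loop, hence $0\le N_\delta(x)-N_\delta(x,y)\le O(M)$ deterministically) is a nice combinatorial observation not in the paper, but it becomes unnecessary once the diagonal and off-diagonal cases are unified. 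For the range $|x-y|\ge M\delta$ your argument is essentially the paper's: both use the Riemann-sum/telescoping dichotomy of Groups~4 and~5 of Lemma~\ref{lemma:Wick_rule_dimer_height}, though you absorb the bounded $F^{[+-]},F^{[-+]}$ pieces into a remainder $R_\delta$ rather than tracking the $F^{[\cdot\cdot]}$'s separately.

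Finally, there is a numerical slip you should fix: you claim the diagonal computation yields $\EE h_{\delta,1}(v)^2=\tfrac1{\pi^2}\sum_{k\le c/\delta}\tfrac1k+O(1)=-\tfrac1{\pi^2}\log\delta+O(1)$, but since $\EE h_\delta(v)^2=2\,\EE h_{\delta,1}(v)^2$ this would give $-\tfrac{2}{\pi^2}\log\delta$, not the $-\tfrac1{\pi^2}\log\delta$ stated in the lemma. The correct single-dimer asymptotics is $\EE h_{\delta,1}(v)^2=-\tfrac{1}{2\pi^2}\log\delta+O(1)$, consistent with $h_{\delta,1}\to\pi^{-1/2}\cdot\mathrm{GFF}$ and with the $-\tfrac1{2\pi^2}\log\max(|x-y|,\delta)$ that the paper obtains for $h_{\delta,1}$ before doubling.
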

\begin{proof}
  Choose two dual lattice edge disjoint paths $l_1,l_2$ connecting $x,y$ with the straight boundary segment of $\Omega_\delta$ as it was done in the proof of Lemma~\ref{lemma:Wick_rule_dimer_height}; in this case we can moreover always choose them to satisfy the properties as in the beginning of the proof of that Lemma.  We start with analysing the correlation of the single-dimer height function. Using ~\eqref{eq:Wdh1}, plugging in the expansion \eqref{eq:asymp_of_Kinv} and grouping the terms as in the proof of Lemma \ref{lemma:Wick_rule_dimer_height}, we can write
   \begin{multline}
    \label{eq:Eh2_via_K}
    \EE h_{\delta, 1}(x)h_{\delta, 1}(x) = \\
    = \sum_{\substack{(b_1w_1)^*\in l_1,\\ (b_2w_2)^*\in l_2}} -K_{\Omega_\delta}(b_1,w_2)K_{\Omega_\delta}(b_2,w_1)d(b_1w_1)^\ast d(b_2w_2)^\ast
    \\
    =-4^{-2}\sum_{s_1,s_2\in \{+,-\}}\int_{l_1}\int_{l_2\setminus{B_\delta(x)}}F^{[s_1s_2]}(z_1,z_2)F^{[s_2s_1]}(z_2,z_1)\,dz_1^{[s_1]}dz_2^{[s_2]}+O(1);
  \end{multline}
observe that in this case, all the groups except for Group 4 will give a contribution of $O(1)$. Furthermore, since $F^{[+-]}$ and $F^{[-+]}$ are bounded up to diagonal, the corresponding integrals also contribute of $O(1)$. Plugging in the expansions $F^{[++]}(z_1,z_2)=\frac{2}{\pi i(z_1-z_2)}+R(z_1,z_2)$ and $F^{[--]}(z_1,z_2)=-\frac{2}{\pi i(\bar{z}_1-\bar{z}_2)}+\overline{R(z_1,z_2)}$, where $R(z_1,z_2)=O(1)$, and using \eqref{eq:int_bound_const} to bound the contributions from terms like $F^{[++]}(z_1,z_2)R(z_2,z_1)$ etc., we get
\begin{multline*}
    \EE h_{\delta, 1}(x)h_{\delta, 1}(x) = \\
    =-\frac{1}{4\pi^2}\left(\int_{l_1}\int_{l_2\setminus B_\delta(x)}\frac{dz_1dz_2}{(z_1-z_2)^2}+\int_{l_1}\int_{l_2\setminus B_\delta(x)}\frac{d\bar{z}_1d\bar{z}_2}{(\bar{z}_1-\bar{z}_2)^2}\right)+O(1)\\
    =-\frac{1}{2\pi^2}\log\max(|x-y|,\delta)+O(1).
  \end{multline*}
The result for $h_\delta$ follows since $h_\delta=h_{\delta,1}-h_{\delta,2}$ with $h_{\delta,1}$ and $h_{\delta,2}$ independent.

\end{proof}

We are ready to deduce the asymptotics of $\mu_\delta$ and $\sigma_\delta$ in Theorem 1:
\begin{lemma}
    \label{lemma:asymp_of_mu_delta_sigma_delta}
    We have
    \begin{equation}
      \label{eq:asymp_of_mu}
      \mu_\delta = -\frac{1}{\pi^2}\log\delta + O(1),\qquad \sigma^2_\delta = -\frac{2}{3\pi^2}\log\delta + O(1).
    \end{equation}
    as $\delta\to 0+$.
\end{lemma}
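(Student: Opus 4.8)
The plan is to reduce everything to moments of the double-dimer height function $h_\delta$ at the single point $v$, all of which are already controlled: the one-point second moment by Lemma~\ref{lemma:asymptotics_of_h(0)2}, and the fourth moment by the approximate Wick rule of Lemma~\ref{lemma:Wick_rule_dimer_height}.

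First, the asymptotics of $\mu_\delta$ is immediate. By \eqref{eq:Nxy_from_h} applied with $x=y=v$ we have $\mu_\delta=\EE\NOd(v)=\EE\NOd(v,v)=\EE[h_\delta(v)^2]$, and Lemma~\ref{lemma:asymptotics_of_h(0)2} with $x=y=v$ gives $\EE[h_\delta(v)^2]=-\tfrac{1}{\pi^2}\log\delta+O(1)$, which is the first claim.

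For the variance, I would start from \eqref{eq:N2viah4}, which rearranges to $\EE\NOd^2(v)=\tfrac13\EE h_\delta^4(v)+\tfrac23\mu_\delta$, whence $\sigma_\delta^2=\EE\NOd^2(v)-\mu_\delta^2=\tfrac13\EE h_\delta^4(v)+\tfrac23\mu_\delta-\mu_\delta^2$. Thus it suffices to prove $\EE h_\delta^4(v)=3\mu_\delta^2+O(1)$: substituting this gives $\sigma_\delta^2=\tfrac23\mu_\delta+O(1)=-\tfrac{2}{3\pi^2}\log\delta+O(1)$, the second claim.

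To estimate $\EE h_\delta^4(v)$ I would apply Lemma~\ref{lemma:Wick_rule_dimer_height} with $n=4$ and $x_1=x_2=x_3=x_4=v$. The three pairings of $\{1,2,3,4\}$ each contribute $\EE[h_\delta(v)^2]^2=\mu_\delta^2$, giving the main term $3\mu_\delta^2$. The only point requiring care is the error term $F_\delta(v,v,v,v)$: since all four marked points coincide we have $r=0$ (equivalently, we are in the degenerate sub-case of the proof of Lemma~\ref{lemma:Wick_rule_dimer_height} where no admissible disjoint paths $l_1,\dots,l_4$ exist), so the first factor in \eqref{eq:Wick_rule_dimer_height_error} is just $1$; but for $n=4$ the exponent $\lfloor\tfrac{n-3}{2}\rfloor$ equals $0$, so the bound reads $|F_\delta(v,v,v,v)|\le C(1+o(1))=O(1)$. (This is precisely why the error bound in Lemma~\ref{lemma:Wick_rule_dimer_height} was stated with a floor function rather than a crude power of $|\log\delta|$.) Hence $\EE h_\delta^4(v)=3\mu_\delta^2+O(1)$, which closes the argument. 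I do not anticipate any genuine obstacle here; the only thing to verify carefully is that the degenerate-points case of Lemma~\ref{lemma:Wick_rule_dimer_height} indeed applies verbatim to give the $O(1)$ bound on the Wick error when $n=4$.
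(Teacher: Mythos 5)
Your proof is correct and follows essentially the same route as the paper's: $\mu_\delta$ via \eqref{eq:Nxy_from_h} with $x=y$ together with Lemma~\ref{lemma:asymptotics_of_h(0)2}, and $\sigma_\delta^2$ via \eqref{eq:N2viah4} combined with the $n=4$ case of the approximate Wick rule (Lemma~\ref{lemma:Wick_rule_dimer_height}), noting that $\lfloor\frac{4-3}{2}\rfloor=0$ makes the Wick error $O(1)$ even when all four points coincide. Your careful remark about the degenerate-points case of Lemma~\ref{lemma:Wick_rule_dimer_height} (where $r=0$ forces the $\min$ factor to equal $1$) is exactly the subtlety the floor function in \eqref{eq:Wick_rule_dimer_height_error} was designed to handle, so no gap there.
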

\begin{proof}
The asymptotics of $\mu_\delta$ is immediate from \eqref{eq:Nxy_from_h} and Lemma \ref{lemma:asymptotics_of_h(0)2} by putting $x=y$. For $\sigma_\delta$, we use \eqref{eq:N2viah4}, \eqref{eq:Nxy_from_h} and Lemma \ref{lemma:Wick_rule_dimer_height} to get
\begin{multline*}
\sigma_\delta^2=\EE N^2_{\Omega_\delta}(x)-\mu^2_\delta=\frac{1}{3}\EE h^4_\delta(x) +\frac{2}{3}\mu_\delta - \mu_\delta^2 =\left(\EE h^2_\delta(x)\right)^2+\frac23 \mu_\delta -\mu_\delta^2 +O(1)=\frac{2}{3}\mu_\delta+O(1).
\end{multline*}
\end{proof}

A as a peculiar byproduct of this results, we get a central limit theorem for the \emph{single dimer} height function; a similar result can be obtained from the computations in \cite{pinson2004rotational}:
\begin{cor}
    For a fixed $x\in\Omega$, we have the following convergence in distribution for the centered single dimer height function:
    \[
    \sqrt{-\frac{2\pi^2}{\log \delta}}h_{\delta,1}(x)\stackrel{\delta\to 0}{\longrightarrow} \Nn(0,1).
    \]
\end{cor}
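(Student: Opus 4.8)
The plan is to prove this central limit theorem by the method of moments, using the approximate Wick rule of Lemma~\ref{lemma:Wick_rule_dimer_height} to control all higher moments of $h_{\delta,1}(x)$ and reduce them to the Gaussian ones.

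First I would record the variance asymptotics. Setting $\sigma_{\delta,1}^2 := \Var h_{\delta,1}(x) = \EE h_{\delta,1}(x)^2$, the single-dimer computation performed inside the proof of Lemma~\ref{lemma:asymptotics_of_h(0)2} (specialized to $x=y$) gives $\sigma_{\delta,1}^2 = -\tfrac{1}{2\pi^2}\log\delta + O(1)$, so that $-\tfrac{2\pi^2}{\log\delta}\,\sigma_{\delta,1}^2 \to 1$ as $\delta\to 0+$. By Slutsky's theorem it therefore suffices to show that $h_{\delta,1}(x)/\sigma_{\delta,1} \xrightarrow{\text{in distribution}} \Nn(0,1)$.

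Next I would compute the moments of $h_{\delta,1}(x)/\sigma_{\delta,1}$. For $n\geq 3$, applying Lemma~\ref{lemma:Wick_rule_dimer_height} with $x_1 = \dots = x_n = x$ gives
\[
\EE h_{\delta,1}(x)^n = (n-1)!!\,\sigma_{\delta,1}^{\,n}\,\indic[n\ \mathrm{even}] + F_\delta(x,\dots,x),\qquad |F_\delta(x,\dots,x)| \leq C\,|\log\delta|^{\lfloor (n-3)/2\rfloor},
\]
where the bound on $F_\delta$ is \eqref{eq:Wick_rule_dimer_height_error} with $r=0$, which forces the first factor there to equal $1$. Since $\lfloor (n-3)/2\rfloor < n/2$ for every $n\geq 3$, while $\sigma_{\delta,1}^{\,n} = \Theta(|\log\delta|^{n/2})$, we conclude $F_\delta(x,\dots,x) = o(\sigma_{\delta,1}^{\,n})$. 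Together with the elementary cases $n=1$ (where $\EE h_{\delta,1}(x)=0$ by centering) and $n=2$ (where $\EE h_{\delta,1}(x)^2 = \sigma_{\delta,1}^2$), this shows that $\EE\big(h_{\delta,1}(x)/\sigma_{\delta,1}\big)^n$ converges, for every $n\geq 1$, to $(n-1)!!$ when $n$ is even and to $0$ when $n$ is odd, i.e.\ to the moments of $\Nn(0,1)$. Since the standard Gaussian is determined by its moments, the method of moments yields $h_{\delta,1}(x)/\sigma_{\delta,1}\to \Nn(0,1)$ in distribution, and the corollary follows.

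The only point requiring care is the use of Lemma~\ref{lemma:Wick_rule_dimer_height} at a single point, where the parameter $r=\min\{\diam\{x_i\}_{i\in I}\}$ degenerates to $0$; but this merely makes the first factor in \eqref{eq:Wick_rule_dimer_height_error} trivial, and the surviving power $\lfloor(n-3)/2\rfloor$ of $|\log\delta|$ stays strictly below the Wick power $n/2$, which is exactly what is needed. In effect there is no genuine obstacle remaining here: all the analytic difficulty has already been absorbed into the proof of the approximate Wick rule, and the corollary is a bookkeeping consequence of it together with the variance asymptotics from Lemma~\ref{lemma:asymp_of_mu_delta_sigma_delta}.
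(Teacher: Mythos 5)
Your proposal is correct, and it takes a genuinely different route from the paper's. The paper proves tightness from the variance asymptotics, then notes via \eqref{eq:h_delta_sum_bernoulli} that $h_{\delta,1}(x)-h_{\delta,2}(x)$ (for an independent copy $h_{\delta,2}$) is a sum of $N_\delta(x)$ i.i.d.\ signs; combining the concentration of $N_\delta(x)$ from Lemma~\ref{lemma:asymp_of_mu_delta_sigma_delta} with the classical CLT for Bernoulli sums shows that $\sqrt{-\pi^2/\log\delta}\,(h_{\delta,1}-h_{\delta,2})$ is asymptotically $\Nn(0,1)$, and then a characteristic-function/Hadamard-factorization argument upgrades ``the difference of two i.i.d.\ copies is Gaussian'' to ``each copy is Gaussian.'' You instead run a straightforward method of moments at a single point, using the approximate Wick rule of Lemma~\ref{lemma:Wick_rule_dimer_height} with $x_1=\dots=x_n=x$. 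Both work: your route is shorter and more mechanical, but it does lean on the full strength of the error bound \eqref{eq:Wick_rule_dimer_height_error}, in particular on the part of its proof that handles $r=O(\delta)$ (including $r=0$) via the auxiliary points $\hat x_i$ — the cruder bound stated in the middle of that proof has a spurious $|\log r|$ factor and would not suffice with $r=0$. The paper's route, by contrast, sidesteps higher moments at a coincident point entirely and instead exploits the Bernoulli-sum structure, at the cost of the not-entirely-elementary ``deconvolution'' step at the end. Your observation that $\lfloor(n-3)/2\rfloor<n/2$ for all $n\geq 3$, and hence $F_\delta(x,\dots,x)=o(\sigma_{\delta,1}^n)$, is exactly the bookkeeping needed, and the reduction via Slutsky to $h_{\delta,1}(x)/\sigma_{\delta,1}$ is fine.
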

\begin{proof}
Since the left-hand side has variance $1$, it is tight as $\delta\to 0$, hence we only need to prove that the only possible subsequential limit is a standard Gaussian. From the relation \eqref{eq:h_delta_sum_bernoulli}, we know that $h_{\delta,1}(x)-h_{\delta,2}(x)$ is a sum of $\NOd(x)$ i.i.d. Bernoulli random variables. Let $E_R$ denote the event $|\NOd+\frac{1}{\pi^2}\log\delta|\leq R\sqrt{-\log \delta}$; from Lemma \ref{lemma:asymp_of_mu_delta_sigma_delta}, we have $\PP[E_R]=1-O\left(R^{-2}\right)$ and also by Central limit theorem \begin{multline*}\PP\left[\left.\sqrt{-\frac{\pi^2}{\log\delta}}(h_{\delta,1}(x)-h_{\delta,2}(x))\leq \lambda \right|E_R\right]=\PP\left[\left.\frac{1}{\sqrt{\NOd(x)}}(h_{\delta,1}(x)-h_{\delta,2}(x))\leq \lambda+o(1)\right|E_R\right]\\ =\PP[\Nn(0,1)\leq \lambda]+o(1).
\end{multline*} Hence, $\sqrt{-\frac{\pi^2}{\log\delta}}\left(h_{\delta,1}-h_{\delta,2}\right)$ converges in distribution to the standard Gaussian, that is, any subsequential limit $h_1$ of $h_{\delta,1}$ has the property that $\frac{1}{\sqrt{2}}(h_{1}-h_2)$ is a standard Gaussian, where $h_2$ is an independent copy of $h_1$. We claim that this implies that $h_1$ itself is a Gaussian. Let $M$ be the median of $h_1$, then for a real non-negative $a$, we have \[\exp(a^2)=\mathbb{E}e^{a(h_1-h_2)}\geq \mathbb{E}\left[e^{a(h_1-M)}\mathbf{1}[h_2\leq M]\right]\geq\frac12e^{-aM}\mathbb{E}e^{a h_1},\] and similarly for a negative real $a$. It follows that the characteristic function $f(z)=\mathbb{E}\exp(i z h_1)$ is defined for all $z\in \CC$, and it is an entire function of order $2$. Also, we have $f(z)f(-z)=\exp(-z^2),$ hence $f$ has no zeros. Therefore by Hadamard factorization, $f(z)=\exp(az^2+bz+c)$ for some $a,b,c\in \CC$; the condition  $f(z)f(-z)=\exp(-z^2)$ further restricts it to $f(z)=\exp(-z^2/2+bz),$ and $b=i\EE[h_1]=0.$
\end{proof}

\subsection{Loop statistics and Kasteleyn operator with an \texorpdfstring{$\SL(2,\CC)$}{SL(2,C)} monodromy}
\label{subsec:Loop statistics and Kasteleyn operator with an SL2 monodromy}

Following~\cite{kenyon2014conformal, DubedatDoubleDimers} we define a Kasteleyn operator with inserted $\SL(2,\CC)$ monodromy as follows. Let $\lambda_1,\dots, \lambda_n$ be a collection of distinct faces of $\Omega_\delta$ and assume that there exist a collection of edge-disjoint dual lattice paths $l_1,\dots, l_n$ connecting these faces to the boundary of $\Omega_\delta$. Let $\gamma_1,\dots, \gamma_n$ be loops representing generators of $\pi_1(\Omega_\delta\smm\{\lambda_1,\dots, \lambda_n\})$ (the base point is chosen arbitrary) chosen such that $\gamma_i$ makes one counterclockwise turn around $\lambda_i$ and does not intersect $l_j$ for $j\neq i$. This choice provides a bijection between $n$-tuples $(\rho_1,\dots, \rho_n)$ of $\SL(2,\CC)$ matrices and representations $\rho: \pi_1(\Omega_\delta\smm\{\lambda_1,\dots, \lambda_n\})\to \SL(\CC,2)$.

Given a representation $\rho$ (and the corresponding tuple of matrices) we modify the operator $K_{\Omega_\delta}$ as follows. Orient each $l_i$ towards the boundary.
\begin{equation}
    \label{eq:def_of_KOmega_delta_rho}
    K_{\Omega_\delta, \rho}(w,b) = \begin{cases}
        K_{\Omega_\delta}(w,b)\cdot \Id_{2\times 2}, \qquad wb\text{ does not cross }l_1,\dots, l_n,\\
        K_{\Omega_\delta}(w,b) \cdot \rho_i,\qquad wb\text{ crosses $l_i$ and $b$ is on the left},\\
        K_{\Omega_\delta}(w,b) \cdot \rho_i^{-1},\qquad wb\text{ crosses $l_i$ and $b$ is on the right}.
    \end{cases}
\end{equation}
By this, we simply mean a matrix of twice the size of $K_{\Omega_\delta}$, i.e., with columns (respectively, rows) indexed by two copies of the set of black (respectively, white) vertices of $\Omega^\delta$, where each entry of $K_{\Omega_\delta}$ is replaced by a corresponding $2\times 2$ block. Note that $K_{\Omega_\delta, \rho}$ is intertwined with $K_{\Omega_\delta}$ acting on the space of multivalued functions with monordomy $\rho$. We have
\begin{lemma}
    \label{lemma:det_Krho}
    Denote by $\Ll$ the double-dimer loop ensemble (the collection of unoriented loops corresponding to the random double-dimer configuration). Then we have
    \begin{equation}
        \det (K_{\Omega_\delta, \rho}K_{\Omega_\delta, \Id}^{-1}) = \EE \prod_{\gamma\in \Ll}\frac{\Tr\rho(\gamma)}{2}.
    \label{eq:det_to_loop_stat}
    \end{equation}
\end{lemma}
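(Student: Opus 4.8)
The plan is to follow the determinant expansion used for identities of this type in~\cite{kenyon2014conformal, DubedatDoubleDimers}. Since $\Omega_\delta$ is Temperleyan, $K_{\Omega_\delta}$ is a square matrix and $K_{\Omega_\delta,\Id}$ is, after reordering rows and columns, $K_{\Omega_\delta}\otimes\Id_{2\times2}$, so $\det K_{\Omega_\delta,\Id}=(\det K_{\Omega_\delta})^2$ and the left-hand side equals $\det K_{\Omega_\delta,\rho}/(\det K_{\Omega_\delta})^2$. It suffices to treat finite $\Omega_\delta$: when $\Omega_\delta=\CC^+_\delta$ the operator $K_{\Omega_\delta,\rho}K_{\Omega_\delta,\Id}^{-1}-\Id$ has finite rank --- it is supported on the finitely many edges crossing $l_1,\dots,l_n$ --- so its determinant is a genuine (finite) determinant, equal to the limit of the corresponding quantities along an exhaustion of $\CC^+_\delta$ by finite Temperleyan domains, exactly as in the construction of the measure. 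Fix a reference dimer cover $D_*$ of $\Omega_\delta$; Kasteleyn's theorem asserts that $\mathrm{sgn}(\pi_D)\prod_{(bw)\in D}K(w,b)$ takes the same value $\zeta$ for every dimer cover $D$, where $\pi_D$ is the permutation expressing $D$ in terms of $D_*$; in particular $\det K_{\Omega_\delta}=\zeta Z$ with $Z$ the number of dimer covers.

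Next I would expand $\det K_{\Omega_\delta,\rho}$ by the Leibniz formula, as a sum over bijections $\sigma$ between the doubled white vertices $W\times\{1,2\}$ and the doubled black vertices $B\times\{1,2\}$, and group the terms by the \emph{shadow} of $\sigma$: the sub-multigraph of $\Omega_\delta$ made of the edges that $\sigma$ uses, counted with multiplicity. A non-vanishing $\sigma$ has a spanning, $2$-regular shadow, hence a disjoint union of simple loops and doubled edges --- precisely the combinatorial shape of a superposition $\omega=D_1\cup D_2$ of two dimer covers. Fixing such an $\omega$ with loops $\gamma_1,\dots,\gamma_{m(\omega)}$, and choosing a reference bijection adapted to one pair $(D_1,D_2)$ with $D_1\cup D_2=\omega$, I would factor $\mathrm{sgn}(\sigma)$ and the product of matrix entries over the connected components of $\omega$. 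The ``global'' part of the sign is $\mathrm{sgn}(\pi_{D_1})\mathrm{sgn}(\pi_{D_2})$, which combines with the $K$-weights of all edges of $\omega$ into $\zeta^2$ by Kasteleyn's theorem, exactly as for $K_{\Omega_\delta,\Id}$; a doubled edge $wb$ (not crossed by any $l_i$) contributes $K(w,b)^2$ in both $\det K_{\Omega_\delta,\rho}$ and $\det K_{\Omega_\delta,\Id}$. For a loop $\gamma$, summing over the colourings of its vertices compatible with $\sigma$ having shadow $\omega$ turns the product of entries along $\gamma$ into a transfer-matrix trace, namely the trace of the ordered product of the matrices $\rho_i^{\pm1}$ over the crossings of $\gamma$ with $l_1,\dots,l_n$, that is, $\Tr\rho(\gamma)$ (well-defined regardless of the orientation of $\gamma$, since $\Tr$ is invariant under $\rho(\gamma)\mapsto\rho(\gamma)^{-1}$).

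The main point --- and the only step needing genuine care --- is the sign accounting in this last computation: one must verify that the Leibniz sign of the cyclic permutation carried by $\gamma$, together with the signs built into the Kasteleyn weights $K(w,w+e)=e$ (equivalently, the $\eta$-conventions), conspire so that the per-loop value is $\Tr\rho(\gamma)$ on the nose, that is, exactly $\tfrac12\Tr\rho(\gamma)$ times the value $\Tr\Id=2$ obtained when $\rho\equiv\Id$. Granting this, the total contribution of a shadow $\omega$ to $\det K_{\Omega_\delta,\rho}$ is $\zeta^2\prod_{j=1}^{m(\omega)}\Tr\rho(\gamma_j)$, so its contribution to $\det K_{\Omega_\delta,\Id}$ is $\zeta^2\,2^{m(\omega)}$, consistently with $\det K_{\Omega_\delta,\Id}=\zeta^2 Z^2$. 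Summing over all shadows $\omega$ and using that each $\omega$ arises from exactly $2^{m(\omega)}$ ordered pairs $(D_1,D_2)$ --- hence $\sum_\omega 2^{m(\omega)}=Z^2$ and $\sum_\omega\prod_j\Tr\rho(\gamma_j)=\sum_\omega 2^{m(\omega)}\prod_j\tfrac12\Tr\rho(\gamma_j)=\sum_{(D_1,D_2)}\prod_{\gamma\text{ loop of }D_1\cup D_2}\tfrac12\Tr\rho(\gamma)$ --- we obtain
\[
\det\big(K_{\Omega_\delta,\rho}K_{\Omega_\delta,\Id}^{-1}\big)=\frac{1}{Z^2}\sum_{(D_1,D_2)}\;\prod_{\gamma\text{ loop of }D_1\cup D_2}\frac{\Tr\rho(\gamma)}{2}=\EE\prod_{\gamma\in\Ll}\frac{\Tr\rho(\gamma)}{2},
\]
the last equality because the double-dimer measure is uniform on ordered pairs $(D_1,D_2)$ and $\Ll$ is the collection of loops of the superposition $D_1\cup D_2$.
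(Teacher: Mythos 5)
The paper's own proof is essentially a citation: it invokes Kenyon~\cite[Theorem~2]{kenyon2014conformal} for finite Temperleyan $\Omega_\delta$ (together with the remark there reducing the Q-determinant to the ordinary one), and Dub\'edat~\cite[Section~5.4]{DubedatDoubleDimers} for the passage to $\CC^+_\delta$. Your proposal reproves the identity from scratch via the Leibniz expansion of $\det K_{\Omega_\delta,\rho}$ over bijections of doubled vertices, grouping by the shadow multigraph, factoring over connected components, and normalizing against $\rho\equiv\Id$. This is exactly the content of Kenyon's cited argument, and the scaffolding --- Kasteleyn constancy, the identification of non-vanishing shadows with superpositions $\omega=D_1\cup D_2$, the count $\sum_\omega 2^{m(\omega)}=Z^2$, the finite-rank reduction for $\CC^+_\delta$ --- is all correct.

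The step you flag and then assume (``Granting this, $\dots$'') is, however, not a normalization to be settled by the $\rho\equiv\Id$ comparison: it is precisely where the $\SL(2,\CC)$ hypothesis enters, and without it the per-loop sum is not even proportional to $\Tr\rho(\gamma)$. If one sums over bijections on a loop $\gamma$ of the shadow while ignoring the Leibniz sign, the transfer-matrix trace one gets is $\Tr\prod_i\bigl(A_i\,F B_{i+1}^T F\bigr)$, where $A_i,B_i\in\{\Id,\rho_m,\rho_m^{\pm1}\}$ are the $\rho$-weights of the alternating edges of $\gamma$ and $F=\left(\begin{smallmatrix}0&1\\1&0\end{smallmatrix}\right)$; here $FB^TF$ is the ``anti-transpose'' of $B$, which is \emph{not} $B^{-1}$. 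The Leibniz sign of a contributing $\sigma$ relative to the reference $\sigma_0$ equals $(-1)$ to the number of copy-swaps at vertices of $\gamma$; absorbing these signs into the transfer matrices replaces $F$ by $J=\left(\begin{smallmatrix}0&1\\-1&0\end{smallmatrix}\right)$, and only the $\SL_2$ identity $JM^TJ^{-1}=M^{-1}$ then converts $JB_{i+1}^TJ^{-1}$ into $B_{i+1}^{-1}$, so that the trace becomes $\Tr\rho(\gamma)$. That computation is the actual content of the lemma, not a cross-check; the $\rho\equiv\Id$ comparison then fixes the per-shadow constant as $\zeta^2$, exactly as you say. A smaller remark: the parenthetical ``not crossed by any $l_i$'' for doubled edges is unwarranted --- a doubled edge crossing $l_m$ admits both the identity bijection and the swap, and their signed sum contributes $K(w,b)^2\det\rho_m = K(w,b)^2$, so again $\det\rho_m=1$ is what makes it work, and the restriction can simply be dropped.
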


\begin{proof}
    In the case when $\Omega_\delta$ is finite this formula was established by Kenyon~\cite[Theorem~2]{kenyon2014conformal} (see also the remark after the proof of this theorem which justifies our use of the usual determinant in place of the Q-determinant used by Kenyon). When $\Omega_\delta = \CC^+_\delta$,  this formula follows by taking a limit over a sequence of finite Temperley domains~\cite[Section~5.4]{DubedatDoubleDimers}.
\end{proof}

\begin{rem}
\label{ref: infinite_determinants}
In the case  $\Omega_\delta = \CC^+_\delta$, the operator $L_\rho:=K_{\CC^+_\delta, \rho}K_{\CC^+_\delta, \Id}^{-1}$ acts on an infinite-dimensional space. We recall from in~\cite[Section~5.4]{DubedatDoubleDimers} the following interpretation of the determinant: we have that $L_\rho(u,w)=\delta_{u,w}\cdot \Id_{2\times w}$ unless $u$ is adjacent to one of $l_i$. In particular, if $V$ denotes the set of vectors which are zero outside of white vertices adjacent to the cuts, then $V$ is an invariant subspace for $L_{\rho}$, and so is any $V'\supset V$, and $\det L_{\rho}|_{V'}=\det L_{\rho}|_{V},$ which is taken as the definition of the determinant in \eqref{eq:det_to_loop_stat}.

For $w$ adjacent to a cut, we only have a bound $L_\rho(w,u)=O(|u|^{-1}),$ in particular, $L_{\rho}$ does not act on $L^2(\CC^+_\delta).$ However, it does act in a weighted $L^2(\CC^+_\delta),$ say with the weight $(|x|+1)^{-1}$; which puts the above interpretation into the standard framework of Fredholm determinants, see e.g. \cite{gohberg1978introduction}.
\end{rem}

We use Lemma~\ref{lemma:det_Krho} to get the following estimate.
\begin{equation}
    \label{eq:def_of_N(lambda)}
    \NOd(x_1,\dots, x_n) = \#\text{ of double-dimer loops surrounding }\{x_1,\dots, x_n\}.
\end{equation}

\begin{lemma}
    \label{lemma:moments_of_Nxy}
    For any compact $K\subset \Omega$ and $j>0$ there exists a constant $C>0$ such that for any faces $x\neq y\in K$ of $\Omega_\delta$ we have
    \[
        \left|\EE \NOd(x,y)^j - \left( -\frac{1}{\pi^2}\log|x-y| \right)^j\right| \leq C \left(|\log|x-y||^{j-1} + 1\right).
    \]
\end{lemma}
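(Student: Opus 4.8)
The plan is to deduce Lemma~\ref{lemma:moments_of_Nxy} from the combinatorial identity \eqref{eq:nvh1} together with the approximate Wick's rule of Lemma~\ref{lemma:Wick_rule_dimer_height} and the two-point asymptotics of Lemma~\ref{lemma:asymptotics_of_h(0)2}, by estimating moments $\EE \NOd(x,y)^j$ through moments of the height function. Write $h=h_\delta$. The starting point is \eqref{eq:Nxy_from_h}, which gives $\EE \NOd(x,y)=\EE[h(x)h(y)]=-\frac{1}{\pi^2}\log\max(|x-y|,\delta)+O(1)=-\frac{1}{\pi^2}\log|x-y|+O(1)$ for $x\neq y$; this is the case $j=1$ and the base case of an induction on $j$. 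For higher $j$, I would apply \eqref{eq:nvh1} with $n=2j$ points, taking $x_1=x_3=\dots=x_{2j-1}=x$ and $x_2=x_4=\dots=x_{2j}=y$, so that the left side is $\EE[h(x)^j h(y)^j]$. The point is that in this degenerate configuration, the quantities $\NOd(\{x_i\}_{i\in I})$ appearing on the right collapse: for any index set $I$, $\NOd(\{x_i\}_{i\in I})$ equals $\NOd(x)$ if $I$ contains only copies of $x$, equals $\NOd(y)$ if only copies of $y$, and equals $\NOd(x,y)$ otherwise. Carrying out the differentiation in \eqref{eq:nvh1}, or equivalently expanding as in the proof of Corollary~\ref{lemma:nesting_via_height}, one pairs up the $2j$ derivative slots; a pairing of a copy of $x$ with a copy of $y$ contributes a factor of $\NOd(x,y)$, while a pairing within the $x$'s (resp.\ within the $y$'s) contributes $\NOd(x)$ (resp.\ $\NOd(y)$), plus lower-order polynomial corrections in these three variables of total degree $<j$ coming from the cycles of length $\geq 3$ in the cycle-structure expansion.

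The key observation is that the dominant term is the contribution of the $j!$ pairings that match every copy of $x$ to a copy of $y$: these give $j!\,\EE[\NOd(x,y)^j]$, while every other pairing produces a monomial of the form $\EE[\NOd(x,y)^a\NOd(x)^b\NOd(y)^c]$ with $b+c\geq 1$, $a+b+c=j$. Now on the event that there is no loop separating $x$ from $y$ — more precisely, one should note $\NOd(x)\geq \NOd(x,y)$ always, and $\NOd(x)-\NOd(x,y)$ counts loops around $x$ but not $y$ — the variables $\NOd(x)$ and $\NOd(y)$ are \emph{not} bounded, so one cannot simply discard those terms. Instead I would run an induction on $j$: assuming the lemma for all smaller exponents, and using the single-point bound $\EE\NOd(x)^j=O(|\log\delta|^j)$ (from Lemma~\ref{lemma:asymp_of_mu_delta_sigma_delta} together with $\EE h^{2j}=O(|\log\delta|^j)$), one controls every mixed moment $\EE[\NOd(x,y)^a\NOd(x)^b\NOd(y)^c]$ by Cauchy--Schwarz against $\EE\NOd(x,y)^{2a}$ and $\EE\NOd(x)^{2b}$, $\EE\NOd(y)^{2c}$. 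Since $a<j$ in every such term, the induction hypothesis bounds $\EE\NOd(x,y)^{2a}$ by $(|\log|x-y||^{2a}+1)\cdot C$, and the single-point moments contribute powers of $|\log\delta|\lesssim |\log|x-y||+|\log\delta/|x-y||$. Here one must be slightly careful and split according to whether $|x-y|\gg\delta$ or $|x-y|\approx\delta$; in the latter regime all of $\NOd(x),\NOd(y),\NOd(x,y)$ are of comparable size $O(|\log\delta|)$ and the bound is immediate, while in the former regime the separation $|\log\delta|$ vs.\ $|\log|x-y||$ has to be tracked, but the error from the extra loops between $x$ and $y$ is what one estimates. Meanwhile, the left side $\EE[h(x)^j h(y)^j]$ is evaluated by Lemma~\ref{lemma:Wick_rule_dimer_height}: its leading term is the Wick sum over pairings, of which the $j!$ pairings matching $x$'s to $y$'s each contribute $(\EE[h(x)h(y)])^j=(-\frac{1}{\pi^2}\log|x-y|)^j+O(|\log|x-y||^{j-1})$ by Lemma~\ref{lemma:asymptotics_of_h(0)2}, pairings with a within-$x$ or within-$y$ pair contribute $\EE[h(x)^2]\EE[h(y)^2]\cdot(\dots)=O(|\log\delta|^2)\cdot(\dots)$ which again must be bracketed between the two regimes, and the Wick error term $F_\delta$ is $O(|\log\delta|^{\lfloor (2j-3)/2\rfloor})$ times a factor that is $O(1)$, hence of lower order. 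Solving the resulting identity for $j!\,\EE\NOd(x,y)^j$ yields the claim.

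The main obstacle is the bookkeeping of the sub-leading terms and, in particular, keeping track of the correct power of $\log$ uniformly across the two regimes $|x-y|\gg\delta$ and $|x-y|\asymp\delta$: in the degenerate point configuration the separation scale $r$ in Lemma~\ref{lemma:Wick_rule_dimer_height} equals $|x-y|$ (the three-point diameter), so the Wick error is genuinely of size $|\log\delta|^{\lfloor(2j-3)/2\rfloor}=|\log\delta|^{j-2}$, which is of smaller order than the target $|\log|x-y||^{j-1}$ only after one also absorbs the mixed-moment terms on the $N$-side; the whole argument only closes because these contributions are all of strictly lower total ``$\log$-degree'' than the leading $(\log|x-y|)^j$, and the induction is organized so that the constant $C$ depends only on $j$ and $K$. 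An alternative, cleaner route — which I would fall back on if the direct bookkeeping gets unwieldy — is to introduce the centered field $\vphi_\delta^{\,y-x}$ as in the definition before Corollary~\ref{lemma:nesting_via_height}, write $\NOd(x,y)=\EE\NOd(x,y)+\vphi_\delta^{\,y-x}(x)$, and bound the centered moments $\EE(\vphi_\delta^{\,y-x}(x))^j$ separately; by Remark~\ref{rem:P_well_defined_for_CLE} and the relation between $\psi_\delta$ and $\vphi_\delta$ these centered moments are stochastically bounded in a suitable sense, reducing the whole lemma to the already-established first moment \eqref{eq:Nxy_from_h} plus Lemma~\ref{lemma:asymptotics_of_h(0)2}.
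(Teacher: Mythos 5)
Your approach is genuinely different from the paper's, and it has a gap that I don't see how to close. The paper proves Lemma~\ref{lemma:moments_of_Nxy} via Lemma~\ref{lemma:det_Krho}: it inserts unipotent $\SL(2,\CC)$ monodromies $\rho_{t1}=\left(\begin{smallmatrix}1&t\\0&1\end{smallmatrix}\right)$, $\rho_{t2}=\left(\begin{smallmatrix}1&0\\t&1\end{smallmatrix}\right)$ at $x$ and $y$, so that a loop $\gamma$ has $\frac12\Tr\rho(\gamma)=1$ unless it encircles \emph{both} $x$ and $y$, in which case $\frac12\Tr\rho(\gamma)=\frac{2+t^2}{2}$. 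This makes the determinant identity collapse to $\det(K_{\rho}K_{\Id}^{-1})=\EE\left(\frac{2+t^2}{2}\right)^{N_\delta(x,y)}$, a generating function in $N_\delta(x,y)$ \emph{alone}. Differentiating $2j$ times and expanding the Fredholm determinant over cycle structures, with the estimates of Lemma~\ref{lem:estimate_integrals}, then shows that the pairing permutations dominate and gives $\EE N^j=(\EE N)^j+O(|\log|x-y||^{j-1}+1)$ directly.

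Your route through \eqref{eq:nvh1} applied to $\EE[h(x)^jh(y)^j]$ inevitably drags in $N_\delta(x)$ and $N_\delta(y)$, which are of order $|\log\delta|$ rather than $|\log|x-y||$, and this is where the argument fails to close. Matching the Wick expansion of $\EE[h(x)^jh(y)^j]$ against the derivative of $\EE\prod_I\cos(\sum_{i\in I}t_i)^{A_I}$ gives, term by term over pairings, an identity whose problematic pieces have the form
\[
\EE\!\left[N_\delta(x)^aN_\delta(y)^aN_\delta(x,y)^{j-2a}\right]\;-\;(\EE N_\delta(x))^a(\EE N_\delta(y))^a(\EE N_\delta(x,y))^{j-2a},\qquad a\geq 1.
\]
Each factor here is of size $|\log\delta|^{2a}|\log|x-y||^{j-2a}$, which for $|x-y|\gg\delta$ is far larger than the target $|\log|x-y||^{j-1}$; so you need cancellation in this difference, i.e.\ a bound on mixed covariances involving $N_\delta(x)$, $N_\delta(y)$, $N_\delta(x,y)$. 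Your proposal handles this by Cauchy--Schwarz against the single-point moments $\EE N_\delta(x)^{2b}=O(|\log\delta|^{2b})$, but Cauchy--Schwarz only degrades the estimate: it gives you $|\log\delta|$-powers back, not $|\log|x-y||$-powers, and the inductive hypothesis only covers $\EE N_\delta(x,y)^{2a}$ for $a<j$, which does not help with the factors involving $N_\delta(x),N_\delta(y)$. Already for $j=2$ the $h$-side identity only pins down the combination $2\,\Var N_\delta(x,y)+\mathrm{Cov}(N_\delta(x),N_\delta(y))=2\,\EE N_\delta(x,y)+O(1)$; extracting $\Var N_\delta(x,y)=O(|\log|x-y||)$ alone requires knowing that $\mathrm{Cov}(N_\delta(x),N_\delta(y))$ is not very negative, which is not available from these identities. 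The paper's monodromy choice sidesteps all of this because $N_\delta(x)$ and $N_\delta(y)$ simply never enter. Your ``fallback'' also does not work: Remark~\ref{rem:P_well_defined_for_CLE} says that the specific second-degree polynomials $P_\delta,P_{\delta,\eps},P_{\delta,\eps,\eps}$ are stochastically bounded, not that $N_\delta(x,y)-\EE N_\delta(x,y)$ is; the latter has variance of order $|\log|x-y||$, which is exactly what the lemma is trying to quantify, so invoking it is circular.
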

\begin{proof}
    (Cf. \cite[Section 10]{kenyon2014conformal}). Choose edge disjoint dual lattice paths $l_1,l_2$ connecting $x,y$ to the straight segment on the boundary of $\Omega_\delta$ satisfying the properties as described in the beginning of the proof of Lemma \ref{lemma:Wick_rule_dimer_height}, and define $\rho_t$ by declaring the corresponding 2-tuple to be
    \[
        \rho_{t1} = \begin{pmatrix} 1 & t \\ 0 & 1 \end{pmatrix},\qquad \rho_{t2} = \begin{pmatrix} 1 & 0 \\ t & 1 \end{pmatrix}.
    \]
   Denote $N=\NOd(x,y)$. Lemma~\ref{lemma:det_Krho} implies that
    \begin{equation}
        \label{eq:moN1}
        \det (K_{\Omega_\delta, \rho}K_{\Omega_\delta, \Id}^{-1}) = \EE \left(\frac{2+t^2}{2}\right)^{N}.
    \end{equation}
   Expanding the right-hand side in a Taylor series and using induction, we get
    \begin{equation}
        \label{eq:moN2}
        \left.\frac{d^{2j}}{dt^{2j}}\EE \left(\frac{2+t^2}{2}\right)^{N}\right|_{t=0} = \frac{(2j)!}{2^j}\EE{N \choose j}=\frac{(2j)!}{2^j j!}\EE N^j+O\left(|\log(x-y)|^{j-1}+1\right).
    \end{equation}
    To evaluate the derivative of the left-hand side of~\eqref{eq:moN1} we note that we can write $K_{\Omega_\delta, \rho}K_{\Omega_\delta, \Id}^{-1}=\Id+tA$, where, recalling \eqref{eq:def_of_dedge},
    \begin{equation}
    A(w,u)=\sum_{\substack{b\sim w\\ (bw)\cap l_1\neq \emptyset}}K^{-1}_{\Omega_\delta}(b,u)\begin{pmatrix} 0 & d(bw)^\star \\ 0 & 0 \end{pmatrix}+\sum_{\substack{b\sim w\\ (bw)\cap l_2\neq \emptyset}}K^{-1}_{\Omega_\delta}(b,u)\begin{pmatrix} 0 & 0 \\ d(bw)^\star & 0 \end{pmatrix}.
    \label{eq: frperturbation}
    \end{equation}
    By Fredholm expansion, we have
   \[
\left.\frac{d^{2j}}{dt^{2j}}\det(\Id+tA)\right|_{t=0}=(2j)!\Tr\left[\Lambda^{2j}(A)\right]=\sum_{\substack{w^{[s_1]}_1,\dots,w^{[s_{2j}]}_{2j}\\ \text{distinct}}}\sum_{\sigma}(-1)^{\mathrm{sign}(\sigma)}\prod_{i=1}^{2j}A(w^{[s_i]}_i,w^{[s_{\sigma(i)}]}_{\sigma(i)}),
    \]
where for $x\in \Omega^\delta$ we denote by $x^{[1]}$ and $x^{[2]}$ the two copies of $x$ used in the indexing of columns and rows of $A$. We observe from \eqref{eq: frperturbation} that the last product is zero unless, for every $i$, we have either $s_i=1$, $s_{\sigma(i)}=2$, and $w_i$ adjacent to $l_1$, or $s_i=2$, $s_{\sigma(i)}=1$, and $w_i$ adjacent to $l_2$. Moreover, changing the order of summation, plugging in \eqref{eq: frperturbation} and expanding, we can rewrite this as
\[
\left.\frac{d^{2j}}{dt^{2j}}\det(\Id+tA)\right|_{t=0}=\sum_{\sigma,s}(-1)^{\mathrm{sign}(\sigma)}\sum_{\substack{(b_1w_1)^\star \in l_{s_1},\dots,(b_{2j}w_{2j})^\star\in l_{s_{2j}}\\ \text{distinct}}}\prod_{i=1}^{2j}K^{-1}_{\Omega_\delta}(b_i,w_{\sigma(i)})d(b_iw_i)^\star,
\]
where we sum over $s$ alternating along the cycles of $\sigma$; in particular, only $\sigma$ with all cycles of even length contribute. Using the bound $K^{-1}_{\Omega_\delta}(b,w)=O(|b-w|)$ and \eqref{eq:int_bound_log}, we see that the contribution of a permutation $\sigma$ is bounded by $O(|\log |x-y|+1|^{c(\sigma)})$, which is $O(|\log |x-y|+1|^{j-1})$ unless $\sigma$ has all cycles of length $2$. Similarly, using that $K^{-1}(b_i,w_{\sigma(i)})d(b_iw_i)^\ast = O(1)$ and \eqref{eq:int_bound_const}, one observes that dropping the restriction that $(b_iw_i)^\star$ are distinct introduces an error of order $O(|\log |x-y|+1|^{j-1})$. Putting these observations together and re-labeling the terms in the product, we get
\begin{multline*}
\left.\frac{d^{2j}}{dt^{2j}}\det(\Id+tA)\right|_{t=0}\\
=\sum_{p\ \text{pairing}}\sum_{\substack{s:s_i\neq s_k \\ \text{ for }\{i,k\}\in p}}\prod_{\{i,k\}\in p}\left(\sum_{(b_iw_i)^\star \in l_{s_{i}},(b_{k}w_{k})^\star\in l_{s_{k}}}-K^{-1}_{\Omega_\delta}(b_i,w_{k})K^{-1}_{\Omega_\delta}(b_k,w_{i})d(b_iw_i)^\star d(b_kw_k)^\star\right)\\+O(|\log |x-y|+1|^{j-1})\\
=2^{j}(2j-1)!!\left(\sum_{(b_1w_1)^\star \in l_{1},(b_{2}w_{2})^\star\in l_{2}}-K^{-1}_{\Omega_\delta}(b_1,w_{2})K^{-1}_{\Omega_\delta}(b_2,w_{1})d(b_1w_1)^\star d(b_2w_2)^\star\right)^j+O(|\log |x-y|+1|^{j-1}).
\end{multline*}
Comparing this with \eqref{eq:moN2} and \eqref{eq:moN1}, and taking into account that $(2j-1)!!2^j j!=(2j)!$, we get
\begin{multline*}
\EE N^j=\left(\sum_{(b_1w_1)^\star \in l_{1},(b_{2}w_{2})^\star\in l_{2}}-2K^{-1}_{\Omega_\delta}(b_1,w_{2})K^{-1}_{\Omega_\delta}(b_2,w_{1})d(b_1w_1)^\star d(b_2w_2)^\star\right)^j+O(|\log |x-y|+1|^{j-1})\\
=(\EE N)^j+O(|\log |x-y|+1|^{j-1}).
\end{multline*}
We conclude by recalling that $\EE N =\EE [h_{\delta}(x) h_{\delta}(y)]= -\frac{1}{\pi^2}\log|x-y| + O(1)$ by Lemma~\ref{lemma:asymptotics_of_h(0)2}.
\end{proof}

\section{Normalized fluctuations of \texorpdfstring{$\NOd(x)$}{N(x)} and Kasteleyn operator with monodromy}
\label{sec:proof_of_main1}

The goal of this section is to prove Theorem~\ref{thma:main1}. We have already established the asymptotic of $\mu_\delta$ and $\sigma_\delta$ in Lemma~\ref{lemma:asymp_of_mu_delta_sigma_delta}, so we are left to prove that $(\NOd(v) - \mu_\delta)\sigma_\delta^{-1}$ converges to a standard normal variable in distribution. We begin by analysing the asymptotic of the inverse operator $K_{\Omega_\delta, \rho}^{-1}$ (cf.~\eqref{eq:def_of_KOmega_delta_rho}) in the case when $\Omega_\delta = \delta \ZZ^2$ and $\rho$ is of the form $\begin{pmatrix}
    e^{2\pi i s} & 0 \\ 0 & e^{-2\pi i s}
\end{pmatrix}$. In this case $K_{\Omega_\delta, \rho}$ can be interpreted as a direct sum of two Kasteleyn operators with scalar monodromies inserted, so we can focus on analyzing such an operator.
It should be mentioned that analytic aspects of Kasteleyn operators with scalar monodromies were studied by Dub\'edat in~\cite[Section~7]{DubedatFamiliesOfCR}. Our goal is to refine the analysis performed by Dub\'edat to get precise enough near-diagonal asymptotic of the inverse operator in the regime when $s$ tends to zero with the rate $\left(-\log \delta\right)^{-1/4}$.

\subsection{Full-plane operator with monodromy around one point}
\label{subsec:full-plane-Kinv}

We begin by considering the case of one puncture. Note that by scale covariance it is enough to assume that $\delta = 1$. We also shift the lattice to make the origin to be the center of a face for notational convenience; thus, in this section we will consider the Kasteleyn operator on $\ZZ^2 + \frac{1+i}{2}$. We keep the previous definition of the function $\eta$, that is, in the new notation we have
\[
    \eta_u = \begin{cases}
        1,\qquad \Im u \in 2\ZZ + \frac{1}{2},\\
        -i,\qquad \Im u \in 2\ZZ - \frac{1}{2}.
    \end{cases}
\]
Let $l$ be an infinite simple dual lattice oriented path emerging from the origin; we may take $l$ to go straight down for definiteness. We need the following auxiliary notation. Given two vertices (maybe of the same color) $u_1$ and $u_2$ of $\ZZ^2 + \frac{1+i}{2}$, let us say that $u_1$ is on the right and $u_2$ is on the left of $l$ is the straight line segment from $u_1$ to $u_2$ intersects $l$, and the algebraic intersection of $l$ and this segment is positive. For a given $s\in \RR$ we put
  \begin{equation}
    \label{eq:def_of_chi_extended}
    \chi_{s,l}(u_1,u_2) = \begin{cases}
      e^{2\pi i s},\quad u_1\text{ is on the right of $l$ and $u_2$ is on the left,}\\
      e^{-2\pi i s},\quad u_2\text{ is on the left from $l$ and $u_1$ is on the left,}\\
      1,\quad \text{else}.
    \end{cases}
  \end{equation}
Recall that $K$ denotes the full-plane Kasteleyn operator, see Section~\ref{sec:Height_function_loop_statistics_and_monodromy}. Define
\begin{equation}
  \label{eq:def_of_Ks}
    K_s(w,b) = \chi_{s,l}(w,b)K(w,b).
\end{equation}
This notation is temporary for the current section; it should not be mixed with the notation $K_\delta$ for the rescaled version of $K$. Below, we will use functions like $(b/w)^s$, these are to be understood as a branch in $\mathbb{C}\setminus l$ equal to $1$ at $b=w$. We note that $K_s$ can be understood as a discretization of the Cauchy-Riemann operator acting on multi-valued function with multilicative monodromy  $e^{2\pi i s}$ at the origin; thus we will refer to such functions $f$ satisfying $K_sf=0$ as \emph{discrete holomorphic}. Similarly to the case of single-valued discrete holomorphic functions, a multivalued discrete holomorphic function on a square lattice give rise to a pair of discrete harmonic functions with the same monodromy on two sublattices which, however, fail to be harmonic at the origin. This in particular implies a version of a maximal principle for discrete holomorphic functions which we formulate now.

Let $\Gamma = (2\ZZ + 1/2)^2,\Gamma^\dagger = (2\ZZ + 3/2)^2$ be the two black sublattices. Following the definition~\eqref{eq:def_of_Ks} of $K_s$ we introduce the corresponding Laplacians: for $f:\Gamma\to\CC$, we put
\begin{equation}
  \label{eq:def_of_Deltas}
    (\Delta_s f)(b) = \sum_{\tilde{b}\sim b} (\chi_{l,s}(b,\tilde{b})f(\tilde{b}) - f(b)),\\
\end{equation}
and for $f:\Gamma^\dagger\to \CC$ we define $\Delta_s^\dagger f$ in the same way. As above, $\Delta_s,\Delta_s^\dagger$ can be understood as Laplacians acting on functions with multiplicative monodromy $e^{2\pi i s}.$

\begin{lemma}
    \label{lemma:maximum_principle_for_Delta_s}
    Assume that $h$ is a (complex-valued) function on $\Gamma$ (resp.~$\Gamma^\dagger$) and $\Omega$ is a subset of vertices such that $\Delta_s h = 0$ (resp. $\Delta_s^\dagger h = 0$) holds at any vertex of $\Omega$. Then
    \[
        \max_{v} |h(v)| = \max_{v\in \partial\Omega} |h(v)|,
    \]
    where $\partial\Omega$ is the set of vertices of $\Gamma\setminus\Omega$ (resp. $\Gamma^\dagger\setminus \Omega$) adjacent to $\Omega$.
\end{lemma}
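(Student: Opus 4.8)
The plan is to reduce the statement to the classical discrete maximum principle for (untwisted) subharmonic functions; the only point that needs attention is that the monodromy coefficients $\chi_{l,s}$ are unimodular, so the twist is invisible to the triangle inequality. I will treat the case of $\Gamma$ and $\Delta_s$, the case of $\Gamma^\dagger$ and $\Delta_s^\dagger$ being identical, and I will take $\Omega$ to be finite (the only case used later), with $\max_v|h(v)|$ understood as $\max_{v\in\Omega\cup\partial\Omega}|h(v)|$; the general case needs the extra hypothesis that $h$ be bounded, after which the same argument applies.

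First I would record a sub-mean-value inequality for $|h|$. Each vertex of the square lattice $\Gamma$ has four neighbours, and by the definition of $\partial\Omega$ every neighbour of a vertex $b\in\Omega$ lies in $\Omega\cup\partial\Omega$; hence $(\Delta_s h)(b)=0$ rewrites as
\[
4\,h(b)=\sum_{\tilde b\sim b}\chi_{l,s}(b,\tilde b)\,h(\tilde b),\qquad b\in\Omega .
\]
Since a lattice edge meets the cut $l$ in finitely many points and each intersection contributes a factor $e^{\pm 2\pi i s}$, we have $|\chi_{l,s}(b,\tilde b)|=1$, so the triangle inequality gives
\[
|h(b)|\le\tfrac14\sum_{\tilde b\sim b}|h(\tilde b)|\le\max_{\tilde b\sim b}|h(\tilde b)|,\qquad b\in\Omega .
\]
Thus $|h|$, an ordinary single-valued function on $\Gamma$, would be subharmonic on $\Omega$ for the standard nearest-neighbour averaging operator.

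Then I would run the usual propagation argument. Set $M:=\max_{v\in\Omega\cup\partial\Omega}|h(v)|$ and suppose, for contradiction, that $M>\max_{v\in\partial\Omega}|h(v)|$, so $|h|$ attains $M$ at some $b_0\in\Omega$. Combining $|h(b_0)|=M$ with $|h(\tilde b)|\le M$ for each neighbour $\tilde b$ of $b_0$, the first inequality above forces $\tfrac14\sum_{\tilde b\sim b_0}|h(\tilde b)|=M$, whence $|h(\tilde b)|=M$ for \emph{every} $\tilde b\sim b_0$. Iterating, the set $\{v\in\Omega\cup\partial\Omega:|h(v)|=M\}$ contains, together with each of its vertices lying in $\Omega$, all neighbours of that vertex; since $\Gamma$ is connected and infinite while $\Omega$ is finite, a path leaving $b_0$ through vertices of $\Omega$ must reach $\partial\Omega$, and along it $|h|\equiv M$, so some vertex of $\partial\Omega$ has $|h|=M$ — a contradiction. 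Hence $M=\max_{v\in\partial\Omega}|h(v)|$, and $\max_{v\in\partial\Omega}|h(v)|\le M$ is immediate; the same argument with $\Delta_s^\dagger,\Gamma^\dagger$ covers the other case.

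There is essentially no obstacle here: this is the textbook discrete maximum principle, and the single new input is the unimodularity of $\chi_{l,s}$ used above, which is precisely what makes the monodromy harmless at the level of moduli. The only thing one must be careful about is pinning down the hypotheses (finiteness of $\Omega$, or boundedness of $h$, and the meaning of the unqualified maximum), after which the two steps go through verbatim.
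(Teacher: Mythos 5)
Your proof is correct and matches what the paper intends: the paper's proof is simply the one-liner ``Follows from~\eqref{eq:def_of_Deltas},'' and your write-up fills in exactly the expected argument, namely that unimodularity of $\chi_{l,s}$ makes $|h|$ subharmonic for the standard averaging operator, after which the classical propagation argument applies. Your remark about finiteness of $\Omega$ (or boundedness of $h$) is a reasonable clarification that is implicit in the paper's use of the lemma, e.g.\ with $\Omega=\Gamma\cap\bar B(0,R)$.
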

\begin{proof}
    Follows from~\eqref{eq:def_of_Deltas}.
\end{proof}
Let $b_0=\frac12+\frac{i}{2}\in \Gamma,b_0^\dagger=-\frac12-\frac{i}{2}\in \Gamma^\dagger$ be the black vertices incident to the face containing $0$.

\begin{lemma}
  \label{lemma:Laplacian_with_monodromy}
 If $b\in \Gamma\cup \Gamma^\dagger$ and a function $f$ is a discrete holomorphic with monodromy $e^{2\pi is}$ (i.e., satisfies $(K_sf)(w)=0$) at all white neighbors of $b$, then we have
  \begin{equation}
    \label{eq:Laplace_at_zero}
    \begin{split}
      &\Delta_s f(b) = \begin{cases}
      0,\quad b \neq b_0,\\
      i (e^{-2\pi i s}-1)f(b_0^\dagger),\quad b = b_0,
    \end{cases}\qquad b\in \Gamma, \\
      &\Delta_s^\dagger f(b) = \begin{cases}
      0,\quad b \neq b^\dagger_0,\\
      i (1-e^{2\pi i s})f(b_0),\quad b = b_0^\dagger,
    \end{cases}\qquad b\in \Gamma^\dagger
    \end{split}
  \end{equation}
\end{lemma}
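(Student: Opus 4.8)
The plan is to recognize the twisted sublattice Laplacian $-\Delta_s$ (resp.\ $-\Delta_s^\dagger$) as the operator $K_s^*K_s$ acting on functions on the black vertices, modified by an explicit finite-rank correction localized at the puncture. Here $K_s^*$ denotes the matrix $K_s^*(b,w):=\overline{K_s(w,b)}=\chi_{s,l}(b,w)\,\overline{K(w,b)}$, which vanishes unless $w\sim b$. The point of this factorization is that the vanishing statement becomes automatic: if $f$ is discrete holomorphic with monodromy at all white neighbours of $b$, then $(K_s^*K_sf)(b)=\sum_{w\sim b}\overline{K_s(w,b)}\,(K_sf)(w)=0$, so everything reduces to identifying $K_s^*K_s$ as an operator.

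To carry this out I would expand $(K_s^*K_sf)(b)=\sum_{w\sim b}\sum_{b'\sim w}\chi_{s,l}(b,w)\chi_{s,l}(w,b')\,\overline{K(w,b)}K(w,b')\,f(b')$, where $b'$ runs over $b$, the four same-sublattice neighbours $b\pm2,b\pm2i$, and the four opposite-sublattice vertices $b\pm1\pm i$. Using $|K(w,b)|=1$ and $\chi_{s,l}(b,w)\chi_{s,l}(w,b)=1$, the $b'=b$ terms give $4f(b)$. Each $b'\in\{b\pm2,b\pm2i\}$ is reached through a single common neighbour $w$ lying on the segment $[b,b']$, so $\chi_{s,l}(b,w)\chi_{s,l}(w,b')=\chi_{s,l}(b,b')$, and with $\overline{K(w,b)}K(w,b')=-1$ these reproduce $-\sum_{b'}\chi_{s,l}(b,b')f(b')$; so far this is exactly $-\Delta_sf(b)$. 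For a diagonal $b'=b\pm1\pm i$ there are two common white neighbours $w_1,w_2$ --- the two white corners of the unit face $F$ with diagonal $\{b,b'\}$ --- and I would use the classical identity $\overline{K(w_1,b)}K(w_1,b')+\overline{K(w_2,b)}K(w_2,b')=0$ (the same cancellation that makes discrete holomorphic functions discrete harmonic on the two black sublattices) to write the coefficient of $f(b')$ as $\big(\chi_{s,l}(b,w_1)\chi_{s,l}(w_1,b')-\chi_{s,l}(b,w_2)\chi_{s,l}(w_2,b')\big)\overline{K(w_1,b)}K(w_1,b')$. The bracket is the difference of the monodromy factors along the two two-edge paths from $b$ to $b'$, hence vanishes unless the loop $\partial F$ has nontrivial monodromy, i.e.\ unless $0\in F$. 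Since $0$ lies in exactly one face, whose black corners are $b_0\in\Gamma$ and $b_0^\dagger\in\Gamma^\dagger$ and whose white corners I would call $w^+:=\tfrac12-\tfrac i2$ and $w^-:=-\tfrac12+\tfrac i2$, all diagonal coefficients vanish except that of $f(b_0^\dagger)$ in $(K_s^*K_sf)(b_0)$ and that of $f(b_0)$ in $(K_s^*K_sf)(b_0^\dagger)$. In particular $(K_s^*K_sf)(b)=-\Delta_sf(b)$ for $b\in\Gamma\setminus\{b_0\}$ and $=-\Delta_s^\dagger f(b)$ for $b\in\Gamma^\dagger\setminus\{b_0^\dagger\}$, which gives the vanishing part of the statement.

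It then remains to evaluate the two exceptional coefficients. Of the four edges of the origin face only $[w^+,b_0^\dagger]$ crosses the downward ray $l$, and fixing the left/right convention of \eqref{eq:def_of_chi_extended} appropriately one gets $\chi_{s,l}(w^+,b_0^\dagger)=e^{-2\pi is}$, all other $\chi$-factors among $b_0,b_0^\dagger,w^\pm$ being $1$. With $K(w,b)=b-w$, the coefficient of $f(b_0^\dagger)$ in $(K_s^*K_sf)(b_0)$ is $\overline{K(w^-,b_0)}K(w^-,b_0^\dagger)+e^{-2\pi is}\overline{K(w^+,b_0)}K(w^+,b_0^\dagger)=-i+ie^{-2\pi is}$, so $(K_s^*K_sf)(b_0)=-\Delta_sf(b_0)+i(e^{-2\pi is}-1)f(b_0^\dagger)=0$, and symmetrically the coefficient of $f(b_0)$ in $(K_s^*K_sf)(b_0^\dagger)$ comes out to $i(1-e^{2\pi is})$; this yields the two displayed formulas.

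The substantive work --- and the only place errors are likely --- is the sign bookkeeping in the two steps above: verifying the path-versus-segment identity $\chi_{s,l}(b,w)\chi_{s,l}(w,b')=\chi_{s,l}(b,b')$ for same-sublattice neighbours, checking that the origin face is genuinely the unique face that can spoil the diagonal cancellation, and pinning down the orientation convention so that $\chi_{s,l}(w^+,b_0^\dagger)=e^{-2\pi is}$ (rather than $e^{+2\pi is}$), which is exactly what fixes the signs in the statement. An equivalent but more computational route, avoiding $K_s^*$ altogether, is to expand $\Delta_sf(b)$ directly as a linear combination of the four relations $(K_sf)(w)=0$ for $w\in\{b\pm1,b\pm i\}$, solve each for the relevant difference of $f$-values, and observe that the diagonal values $f(b\pm1\pm i)$ telescope away for every $b\ne b_0$; for $b$ on $l$ one checks that the $\chi$-factor attached to the same-sublattice edge in $\Delta_s$ precisely cancels those produced by the modified white-vertex relations.
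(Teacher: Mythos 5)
Your proof is correct and takes the same route as the paper: factor $-\Delta_s$ (resp.\ $-\Delta_s^\dagger$) through $K_s^*K_s$, so that $(K_s^*K_sf)(b)=0$ for discrete holomorphic $f$ gives the vanishing away from the origin face, and then evaluate the finite-rank correction at $b_0,b_0^\dagger$. Your sign bookkeeping is in fact the more careful version: the paper's display writes $+(\Delta_sf)(b_0)$ and $i(1-e^{-2\pi is})$ where the correct terms are $-(\Delta_sf)(b_0)$ and $i(e^{-2\pi is}-1)$, two slips that cancel, so the stated lemma agrees with what you derive.
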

\begin{proof}
    Note that for each discrete holomorphic function $f$ with monodromy $e^{2\pi is}$ we have $K_s^\ast K_s f = 0$. On the other hand, it is a straightforward computation to show that for each $b\in \Gamma\cup \Gamma^\dagger$ such that $b\notin\{b_0,b_0^\dagger\}$ and for an arbitrary $f$ we have $(K_s^\ast K_sf)(b) = (\Delta_s f)(b)$ if $b\in \Gamma$ and $(K_s^\ast K_sf)(b) = (\Delta_s f)(b)$ if $b\in \Gamma^\dagger$. It remains to analyze the situation near the origin. Straightforward computation shows that
    \begin{multline*}
        0 = (K_s^\ast K_s f)(b_0) =\\
        = (\Delta_s f)(b_0) + f(b_0^\dagger)(\overline{K(b_0-1,b_0)}K(b_0-1,b_0^\dagger) + \overline{K(b_0-i, b_0)}K(b_0-i, b_0^\dagger)e^{-2\pi i s} =\\
        = (\Delta_s f)(b_0) + i(1 - e^{-2\pi i s})f(b_0^\dagger)
    \end{multline*}
    which implies the first claim. The case of $b_0^\dagger$ is similar.
\end{proof}

\begin{cor}
(Maximum principle) If $R>0$ and $f$ satisfies $(K_sf)(w)\equiv 0$ whenever $|w|\leq R$, then
\begin{equation}
\label{eq:maximum_principle}
|f(b)|\leq \max_{R-2\leq |b'|\leq R}|f(b')|+2\max\{|f(b_0)|,|f(b^\dagger_0)|\}, \quad |b|\leq R.
\end{equation}
\end{cor}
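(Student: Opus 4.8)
My plan is to deduce \eqref{eq:maximum_principle} from the ordinary discrete maximum principle, Lemma~\ref{lemma:maximum_principle_for_Delta_s}, applied separately on the two black sublattices $\Gamma$ and $\Gamma^\dagger$, by simply \emph{excising} the single bad vertex on each — $b_0$ on $\Gamma$ and $b_0^\dagger$ on $\Gamma^\dagger$. The key point is that, by Lemma~\ref{lemma:Laplacian_with_monodromy}, $f|_\Gamma$ is genuinely $\Delta_s$-harmonic away from $b_0$ and $f|_{\Gamma^\dagger}$ is genuinely $\Delta_s^\dagger$-harmonic away from $b_0^\dagger$; so I do not need to propagate the defect through a Green's function at all (which would only give a bound degrading like $\log R$), I can instead treat $b_0$ as a boundary vertex for the harmonic region on $\Gamma$ and pay only $|f(b_0)|$ for it, and likewise $|f(b_0^\dagger)|$ on $\Gamma^\dagger$.

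Concretely, I would first record the radius ranges: the four white neighbours $w$ of a vertex $b\in\Gamma$ satisfy $|w|\le|b|+1$, so the hypothesis $(K_sf)(w)\equiv 0$ for $|w|\le R$, together with Lemma~\ref{lemma:Laplacian_with_monodromy}, gives $\Delta_sf(b)=0$ for every $b\in\Gamma\setminus\{b_0\}$ with $|b|\le R-1$; and similarly $\Delta_s^\dagger f(b)=0$ for $b\in\Gamma^\dagger\setminus\{b_0^\dagger\}$ with $|b|\le R-1$. Then I would apply Lemma~\ref{lemma:maximum_principle_for_Delta_s} on the finite set $\Omega:=\{b\in\Gamma:|b|\le R-2\}\setminus\{b_0\}$. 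Since neighbouring vertices of $\Gamma=(2\ZZ+\frac12)^2$ are at distance $2$, any vertex adjacent to $\Omega$ has norm $\le R$, while a vertex lying outside $\Omega$ either equals $b_0$ or has norm $>R-2$; hence $\partial\Omega\subseteq\{b_0\}\cup\{b\in\Gamma:R-2<|b|\le R\}$ and therefore $\max_{\partial\Omega}|f|\le\max\bigl(|f(b_0)|,\ \max_{R-2\le|b'|\le R}|f(b')|\bigr)$. The maximum principle then yields $|f(b)|\le \max_{R-2\le|b'|\le R}|f(b')|+|f(b_0)|$ for all $b\in\Omega$; the only remaining $b\in\Gamma$ with $|b|\le R$ are $b_0$ itself (contributing $|f(b_0)|$) and the vertices with $R-2<|b|\le R$, for which the bound is trivial. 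Running the identical argument on $\Gamma^\dagger$ with $b_0^\dagger$ in place of $b_0$ gives the same inequality with $|f(b_0^\dagger)|$ replacing $|f(b_0)|$, and then bounding $|f(b_0)|,|f(b_0^\dagger)|\le 2\max\{|f(b_0)|,|f(b_0^\dagger)|\}$ produces \eqref{eq:maximum_principle}. (In fact this argument gives the slightly sharper statement with the factor $2$ replaced by $1$.)

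I do not anticipate a real obstacle: the statement is essentially a bookkeeping corollary of the two preceding lemmas. The one point requiring care is precisely the radius accounting — the sublattice $\Gamma$ has mesh $2$, so the cut-off radius for $\Omega$ must be taken to be $R-2$ rather than $R$ in order to keep $\partial\Omega$ inside $\{|b'|\le R\}$, and invoking Lemma~\ref{lemma:Laplacian_with_monodromy} costs a further unit of radius because of the white neighbours. I would also separately dispose of the degenerate small-$R$ cases, where $\Omega$ may be empty or may fail to contain $b_0$: in that regime every black vertex $b$ with $|b|\le R$ is $b_0$, $b_0^\dagger$, or within distance $2$ of $\{|b'|=R\}$, so it already lies in the set controlled by the right-hand side of \eqref{eq:maximum_principle} and there is nothing to prove.
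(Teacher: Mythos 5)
Your proof is correct and is essentially the same argument the paper uses: restrict $f$ to the two black sublattices, invoke Lemma~\ref{lemma:Laplacian_with_monodromy} to get genuine $\Delta_s$- (resp.\ $\Delta_s^\dagger$-) harmonicity away from $b_0$ (resp.\ $b_0^\dagger$), and apply Lemma~\ref{lemma:maximum_principle_for_Delta_s} with the single exceptional vertex excised, so it enters only as a boundary vertex. The paper's write-up is terser — it simply puts $\Omega=\Gamma\cap\bar B(0,R)$ and asserts harmonicity on $\Omega\setminus\{b_0\}$ without tracking the unit of radius lost to white neighbours or the mesh-2 step on $\Gamma$ — whereas you do this bookkeeping explicitly; that is exactly what the width-$2$ annulus in \eqref{eq:maximum_principle} is there to absorb, and your accounting confirms it. Your parenthetical observation that the argument actually yields the inequality with the factor $2$ replaced by $1$ (since the $\Gamma$ side only ever pays $|f(b_0)|$ and the $\Gamma^\dagger$ side only $|f(b_0^\dagger)|$) is also correct; the paper's bound is simply not optimized in this respect.
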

\begin{proof}
Let $h, h^\dagger$ be restrictions of $f$ to $\Gamma$ and $\Gamma^\dagger$ respectively. Put $\Omega = \Gamma\cap \bar{B}(0,R)$ and $\Omega^\dagger = \Gamma^\dagger \cap \bar{B}(0,R)$. By Lemma~\ref{lemma:Laplacian_with_monodromy} $h$ (resp. $h^\dagger$) is a multivalued harmonic function on $\Omega \smm\{b_0\}$ (resp. $\Omega^\dagger\smm\{b_0^\dagger\}$). Eq.~\eqref{eq:maximum_principle} now follows from Lemma~\ref{lemma:maximum_principle_for_Delta_s}.
\end{proof}

We begin by collecting some facts proven in~\cite{DubedatFamiliesOfCR}.

\begin{lemma}
    \label{lemma:multivalued_Green_function}
    There exists a function $G_s(b,b_1)$ on $\Gamma\times \Gamma$ (resp. $\Gamma^\dagger\times \Gamma^\dagger$) and a constant $C>0$ such that the following holds:
    \begin{enumerate}
        \item For each $b_1$ we have $\Delta_s G_s(\cdot, b_1) = \delta_{b_1}(\cdot)$ (resp. $\Delta^\dagger_s G_s(\cdot, b_1) = \delta_{b_1}(\cdot)$).
        \item If $b,b_1$ satisfy $|b_1|/2\leq |b|\leq 2|b_1|$, and $b_1'\sim b_1$ (i.e. there is an edge between $b_1'$ and $b_1$ in $\Gamma$ (resp. $\Gamma^\dagger$)) and $w$ is the white vertex lying on the edge $b_1b_1'$, then
        \[
            |G_s(b,b_1') - G_s(b,b_1)| \leq C|b-w|^{-1}.
        \]
    \end{enumerate}
\end{lemma}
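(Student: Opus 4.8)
\smallskip
\noindent\textbf{Proof proposal.} These statements essentially collect facts established by Dub\'edat~\cite{DubedatFamiliesOfCR}; here is the line of argument I would follow.

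First I would construct $G_s$ and prove (1). The operator $\Delta_s$ is Hermitian, since $\chi_{s,l}(\tilde b,b)=\overline{\chi_{s,l}(b,\tilde b)}$ for real $s$, and $|\chi_{s,l}|\equiv 1$ gives the maximum principle $|h(b)|\le\max_{\tilde b\sim b}|h(\tilde b)|$ for $\Delta_s$-harmonic $h$ (Lemma~\ref{lemma:maximum_principle_for_Delta_s}). For $s\notin\ZZ$, $\Delta_s$ has no nonzero twisted-harmonic function on $\Gamma$ decaying at infinity: by Lemma~\ref{lemma:Laplacian_with_monodromy} such a function is genuinely harmonic away from $b_0$, and inspection of the admissible radial modes $r^{\pm|n+s|}$ — among which there is no constant when $s\notin\ZZ$ — together with the maximum principle~\eqref{eq:maximum_principle} forces it to vanish. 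Hence $\Delta_s$ is invertible on a suitable weighted $\ell^2$ space, and one sets $G_s(\cdot,b_1):=\Delta_s^{-1}\delta_{b_1}$; equivalently $G_s(\cdot,b_1)=\lim_{R\to\infty}G_s^R(\cdot,b_1)$, where $G_s^R(\cdot,b_1)$ solves the finite Dirichlet problem $\Delta_sG_s^R(\cdot,b_1)=\delta_{b_1}$ on $\Gamma\cap B(0,R)$ with zero boundary data, the convergence being controlled by the a priori bounds below. The $\Gamma^\dagger$ case is identical.

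Next I would prove (2) by discrete elliptic estimates. By scale covariance $G_s(\lambda b,\lambda b_1)=G_s(b,b_1)$ one may assume $|b_1|\asymp 1$. The key a priori input is the global bound $|G_s(b,b_1)|\le C\bigl(1+\log^+\tfrac{|b_1|+1}{|b-b_1|+1}\bigr)$: near the diagonal this is the usual two-dimensional logarithmic singularity, obtained by comparing $G_s(\cdot,b_1)$ on a ball around $b_1$ that avoids the origin and — after a gauge change moving the cut $l$ — is disjoint from $l$, with the full-plane discrete Green's function $\mathsf{G}_0$ satisfying $\mathsf{G}_0(z)=-\tfrac1{2\pi}\log|z|+O(|z|^{-2})$; away from the diagonal and towards infinity it rests on the decay estimate $G_s(b,b_1)=O\bigl((|b_1|/|b|)^{s\wedge(1-s)}\bigr)$, which I would obtain by using lattice analogues of the barriers $r^{\pm s}$ in the maximum principle~\eqref{eq:maximum_principle}. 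Now fix $b,b_1$ with $|b_1|/2\le|b|\le 2|b_1|$ and put $\rho=|b-b_1|$; by Hermitian symmetry $G_s(b,b_1)=\overline{G_s(b_1,b)}$, so $b_1\mapsto G_s(b,b_1)$ is $\Delta_{-s}$-twisted-harmonic away from $b$, hence genuinely harmonic on the ball $B(b_1,\rho/3)$, which avoids both $b$ and the origin (for $|b_1|=O(1)$ the asserted estimate is immediate from boundedness of $G_s$ and of its discrete gradient near the branch point). If $\rho\gtrsim|b_1|$, the global bound gives $|G_s(b,\cdot)|=O(1)$ on this ball and the interior discrete gradient estimate for harmonic functions yields $|G_s(b,b_1')-G_s(b,b_1)|\le C\rho^{-1}$. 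If $\rho\ll|b_1|$, after a gauge change assume $l\cap B(b_1,\rho/3)=\emptyset$; then $G_s(b,\cdot)=\mathsf{G}_0(b-\cdot)+g$ on this ball with $g$ genuinely harmonic and, by the global bound, $\max_{B(b_1,\rho/3)}|g|=O(1)$, so bounding the gradient of $\mathsf{G}_0(b-\cdot)$ directly by $C\rho^{-1}$ and that of $g$ by $C\rho^{-1}\max|g|$ gives the same conclusion. Since $|b-w|\asymp\rho$ throughout, this is precisely (2).

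The step I expect to be the real obstacle is the decay estimate $G_s(b,b_1)=O\bigl((|b_1|/|b|)^{s\wedge(1-s)}\bigr)$ on which all of the a priori control rests: producing discrete twisted super/subharmonic comparison functions comparable to $r^{\pm s}$, and — crucially for the later application in the regime $s\asymp(-\log\delta)^{-1/4}$ — keeping the constant $C$ uniform as $s\to 0$, is the delicate point, and this is essentially where one must reproduce and slightly sharpen Dub\'edat's Fourier-mode analysis of $G_s$.
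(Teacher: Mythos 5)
The paper's proof of this lemma is simply a citation: Dub\'edat's~\cite[Lemma~9]{DubedatFamiliesOfCR} already constructs the twisted Green's function $G_\chi$ and establishes the gradient bound (item (6) there), and the paper's only work is to translate notation (restricting $G_\chi$ to a fundamental domain for the cut $l$). You instead re-derive the lemma from elliptic principles: invertibility of $\Delta_s$ via the maximum principle and the absence of decaying radial modes when $s\notin\ZZ$; a global logarithmic bound and a polynomial decay estimate; and interior discrete gradient estimates for harmonic functions. Your sketch is sound in outline and identifies correctly that the technical crux is the decay estimate $G_s(b,b_1)=O\bigl((|b_1|/|b|)^{s\wedge(1-s)}\bigr)$, which is essentially what Dub\'edat's Fourier-mode analysis supplies; what the paper's route buys is that it avoids re-deriving this entirely. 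Two comments on precision. First, in the $\rho\ll|b_1|$ case the global bound you stated yields only $\max_{B(b_1,\rho/3)}|g|=O(\log\rho^{-1})$ rather than $O(1)$; to get $O(1)$ you need the sharper near-diagonal comparison $G_s(\cdot,b_1)-\mathsf{G}_0(\cdot-b_1)=O(1)$ after the gauge change, which you do mention but do not explicitly invoke at that step, so as written the gradient estimate loses a logarithm. Second, your closing concern that $C$ must be uniform as $s\to 0$ is not actually needed for this lemma: both the statement and its use in Lemma~\ref{lemma:existense_of_Ksinv} permit $C$ to depend on $s$ (the resulting $C_s$ does), and the paper obtains the eventually-needed uniformity of the Cauchy kernel not by sharpening the Green's function bound but by the separate precompactness argument of Lemma~\ref{lemma:C_is_uniform}, which does not invoke the present lemma at all. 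So your flagged ``real obstacle'' is genuine for Dub\'edat's original proof but is not on the critical path to the uniformity that the paper actually requires.
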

\begin{proof}
    The lemma is a compilation of~\cite[Lemma~9]{DubedatFamiliesOfCR}. Note that in the notation~\cite{DubedatFamiliesOfCR}, $M_V$ denote the set of vertices of the graph $\Gamma$, while $(\CC^{M_V})_\chi$ denote the set of functions which act on the pullback of $M_V$ to the universal cover of $\CC\smm\{0\}$ and are multiplied by $\chi$ under the action of deck transformations. To construct our function $G_s$ out of~\cite[Lemma~9]{DubedatFamiliesOfCR} we should put $\chi = e^{2\pi i s}$ and restrict the function $G_\chi$ from~\cite[Lemma~9]{DubedatFamiliesOfCR} to a fundamental domain for the action of $\pi_1(\CC\smm\{0\})$ corresponding to the path $l$. The second property now follows from the statement (6) in~\cite[Lemma~9]{DubedatFamiliesOfCR}.
\end{proof}

\begin{lemma}
  \label{lemma:existense_of_Ksinv}
  For any $s\in (0,1/2)$ and any white vertex $w_0$, there exists a unique function $K^{-1}_s(\cdot,w_0)$ satisfying $(K_sK_s^{-1})(w)=\delta_{w_0}(w)$ and $K^{-1}_s(b,w_0)=O(|b|^{s-1})$ for $|b|>2|w_0|$. Moreover, there is a constant $C_s>0$ such that for all $b,w$,
  \begin{equation}
    \label{eq:estimate_on_K_sinv}
    K_s^{-1}(b,w) \leq \frac{C_s}{|b-w|} \left( \left|\frac{b}{w}\right|^s\vee \left|\frac{w}{b}\right|^s\right).
  \end{equation}

\end{lemma}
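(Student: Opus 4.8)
The plan is to construct $K_s^{-1}(\cdot,w_0)$ from the multivalued Green functions $G_s$ of Lemma~\ref{lemma:multivalued_Green_function}, and to reduce both the existence and the uniqueness to a single scalar nonvanishing condition at the origin. The starting point is the computation in the proof of Lemma~\ref{lemma:Laplacian_with_monodromy}: the operator $K_s^\ast K_s$, acting on functions on $\Gamma\sqcup\Gamma^\dagger$, equals $\widehat\Delta_s:=\Delta_s\oplus\Delta_s^\dagger$ plus a rank-two operator $P$ supported on $\{b_0,b_0^\dagger\}$, namely $(Pf)(b_0)=i(e^{-2\pi i s}-1)f(b_0^\dagger)$ and $(Pf)(b_0^\dagger)=i(1-e^{2\pi i s})f(b_0)$. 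Since $\widehat\Delta_s$ is inverted by $\mathcal{G}_s:=G_s\oplus G_s$, inverting $\widehat\Delta_s+P$ is a finite-rank correction: it amounts to inverting the $2\times2$ matrix with $1$'s on the diagonal and off-diagonal entries $-i(e^{-2\pi i s}-1)G_s(b_0,b_0)$ and $-i(1-e^{2\pi i s})G_s(b_0^\dagger,b_0^\dagger)$, whose determinant equals $1-4\sin^2(\pi s)\,G_s(b_0,b_0)\,G_s(b_0^\dagger,b_0^\dagger)$. Granting that this determinant is nonzero for every $s\in(0,1/2)$, I would set $K_s^{-1}(\cdot,w_0):=(K_s^\ast K_s)^{-1}K_s^\ast\delta_{w_0}$.

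Then $K_s^\ast\bigl(K_sK_s^{-1}(\cdot,w_0)-\delta_{w_0}\bigr)=0$, and since $K_s^\ast$ is injective on functions tending to $0$ at infinity --- by the maximum principle (Lemma~\ref{lemma:maximum_principle_for_Delta_s}) applied to the monodromy Laplacian $K_sK_s^\ast$ on the two white sublattices, which is nondegenerate for the same reason (by the $90^\circ$ rotational symmetry of the lattice about the origin, the relevant scalar condition there coincides with the one above) --- it follows that $K_sK_s^{-1}(\cdot,w_0)=\delta_{w_0}$. For uniqueness, two solutions differ by a function $g$ with $K_sg\equiv0$ and $g(b)=O(|b|^{s-1})\to0$; by Lemma~\ref{lemma:Laplacian_with_monodromy}, $\Delta_s(g|_\Gamma)=i(e^{-2\pi i s}-1)g(b_0^\dagger)\,\delta_{b_0}$, and as the only decaying $\Delta_s$-harmonic function on $\Gamma$ is $0$ (Lemma~\ref{lemma:maximum_principle_for_Delta_s}), we get $g|_\Gamma=i(e^{-2\pi i s}-1)g(b_0^\dagger)\,G_s(\cdot,b_0)$ and likewise $g|_{\Gamma^\dagger}=i(1-e^{2\pi i s})g(b_0)\,G_s(\cdot,b_0^\dagger)$; evaluating at $b_0,b_0^\dagger$ gives the homogeneous system governed by the same $2\times2$ matrix, so $g\equiv0$.

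For the bounds, I would read everything off the structure of $G_s$. When $|b-w_0|\lesssim|w_0|$, $K_s^{-1}(b,w_0)$ is, up to the finite-rank correction, a discrete gradient of $G_s$ in the second variable, so the gradient estimate of Lemma~\ref{lemma:multivalued_Green_function}(2) gives the $O(|b-w_0|^{-1})$ singularity at $w_0$. For $b$ away from $w_0$ one has $K_sK_s^{-1}(\cdot,w_0)=0$, so $K_s^{-1}(\cdot,w_0)$ is discrete holomorphic with monodromy there; comparing with the twisted full-plane asymptotics $(b/w_0)^{s}\bigl[\tfrac1{2\pi(b-w_0)}+\tfrac{\eta_b^2\eta_{w_0}^2}{2\pi(\bar b-\bar{w}_0)}\bigr]$ coming from \eqref{eq:K-1_asymp} identifies the leading term as the holomorphic mode of size $|b|^{s-1}$ when $|b|\gg|w_0|$ and as the antiholomorphic mode of size $|b|^{-s}$ when $|b|\ll|w_0|$, which yields $K_s^{-1}(b,w_0)=O(|b|^{s-1})$ for $|b|>2|w_0|$ and, keeping track of the $|b|,|w|$ dependence (and using the radial decay/growth rates of $G_s$ from \cite{DubedatFamiliesOfCR}), the uniform bound \eqref{eq:estimate_on_K_sinv}; the regime where $|b|$ and $|w|$ are comparable but $|b-w|$ is not small is again handled by the gradient bound on $G_s$.

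The crux, and the step I expect to be the main obstacle, is the scalar nonvanishing $4\sin^2(\pi s)\,G_s(b_0,b_0)\,G_s(b_0^\dagger,b_0^\dagger)\ne1$ for all $s\in(0,1/2)$: both the invertibility of $K_s^\ast K_s$ and the uniqueness statement rest on it, and it is not a soft fact, requiring the explicit near-origin values of the multivalued Green function (as $s\to0$ or $s\to\tfrac12$ one checks directly that $\sin^2(\pi s)\,G_s(b_0,b_0)\,G_s(b_0^\dagger,b_0^\dagger)\to0$, but the intermediate range needs the quantitative input of \cite{DubedatFamiliesOfCR}). A secondary delicate point is that the far-field exponent $s-1$ is strictly smaller than the generic decay rate $-s$ of $\Delta_s$-harmonic functions when $s<1/2$, so pinning it down requires comparison with the precise expansion \eqref{eq:K-1_asymp} of the untwisted $K^{-1}$ rather than the Green-function estimates alone.
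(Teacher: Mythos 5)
Your approach is genuinely different from the paper's. The paper does not construct $K_s^{-1}$ from scratch: it invokes \cite[Lemma~13]{DubedatFamiliesOfCR} for existence, uniqueness, the far-field bound $O(|b|^{s-1})$, and the estimate \eqref{eq:estimate_on_K_sinv} in the regimes $|b|\geq 2|w|$ and $|b|\leq|w|/2$, and then supplies only the intermediate regime $|w|/2\leq|b|\leq 2|w|$ by writing $K_s^{-1}(\cdot,w)=G_s(\cdot,b_1)-G_s(\cdot,b_1')+h$ with $h$ harmonic and applying the gradient estimate of Lemma~\ref{lemma:multivalued_Green_function}(2) plus the maximum principle. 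Your plan instead builds the inverse directly from $K_s^\ast K_s=\widehat\Delta_s+P$ via a rank-two Woodbury-type correction, which is an interesting alternative. The structural observation $K_s^\ast K_s=\widehat\Delta_s+P$ with $P$ supported on $\{b_0,b_0^\dagger\}$ is correct (and is exactly what the proof of Lemma~\ref{lemma:Laplacian_with_monodromy} computes), so the reduction to a $2\times2$ scalar condition is well founded in principle. As a minor algebra point, with the paper's sign convention the correction is $(Pf)(b_0)=i(1-e^{-2\pi i s})f(b_0^\dagger)$, and the resulting determinant is $1+4\sin^2(\pi s)\,G_s(b_0,b_0)\,G_s(b_0^\dagger,b_0^\dagger)$, with a $+$ rather than a $-$.

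However, the proposal as written has substantive gaps that you yourself flag but do not close, and they are not small. First, the nonvanishing of the $2\times2$ determinant requires the diagonal values $G_s(b_0,b_0)$, $G_s(b_0^\dagger,b_0^\dagger)$, which Lemma~\ref{lemma:multivalued_Green_function} does not furnish (it records only a discrete-gradient bound in a comparable-scale annulus). Establishing this, and the analogous nondegeneracy for $K_sK_s^\ast$ on whites that you need for the injectivity step, would require importing the quantitative near-origin analysis of $G_\chi$ from \cite{DubedatFamiliesOfCR} — at which point one is essentially re-deriving Dubédat's Lemma~13 rather than bypassing it. The $90^\circ$-rotation argument does not settle the white-side condition either: the rotation swaps $\Gamma$ and $\Gamma^\dagger$ and maps whites to whites, but it does not immediately identify the two $2\times2$ determinants without tracking the action on the cut and the monodromy, so this step needs more than a symmetry remark. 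Second, the far-field exponent $s-1$ is a genuine cancellation. Generic decaying $\Delta_s$-harmonic functions come in two modes, of orders $|b|^{s-1}$ and $|b|^{-s}$, and for $s<1/2$ the latter decays more slowly. Your construction gives $u=(K_s^\ast K_s)^{-1}K_s^\ast\delta_{w_0}$ as a combination of $G_s$'s and their gradients, and Lemma~\ref{lemma:multivalued_Green_function} says nothing about the far-field rate of $G_s(\cdot,b_1)$ as $|b|\to\infty$ for fixed $b_1$; one therefore cannot read the $O(|b|^{s-1})$ bound off the stated properties of $G_s$ without additional asymptotic input. Relatedly, the ansatz $(b/w_0)^s\bigl[\tfrac{1}{2\pi(b-w_0)}+\tfrac{\eta_b^2\eta_{w_0}^2}{2\pi(\bar b-\bar w_0)}\bigr]$ you compare against has the wrong twist on the antiholomorphic part: the correct twisted kernel carries $(\bar w/\bar b)^s$ on the antiholomorphic term (cf.\ the definition of $\Kinv[s]$ preceding Lemma~\ref{lemma:near-diagonal-expansion}), and this affects which mode is actually being matched. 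Finally, the associativity manipulations and the injectivity of $K_s^\ast$ on decaying functions are asserted informally; since the operators act on infinite lattices, these require the same kind of justification the paper spells out elsewhere (e.g.\ in the proof of \eqref{eq: log_det_variational}). In short, the approach is plausible but would require substantially more work than the paper's two-line citation-plus-patch, and the missing pieces are precisely the quantitative content of Dubédat's lemmas.
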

\begin{proof}
    The existence and uniqueness are shown in~\cite[Lemma~13]{DubedatFamiliesOfCR}, along with the estimate \eqref{eq:estimate_on_K_sinv} when $|b|\geq 2|w|$ and $|b|\leq \frac{|w|}{2}$. The case when $\frac{|w|}{2} \leq |b|\leq 2|w|$ is not explicitly stated in that lemma, so let us briefly explain how it can be deduced from the other estimates. Fix a $w$ and consider the function $F(b) = K_s^{-1}(b,w)$ defined on the region $\Omega$ consisting of $b\in \Gamma$ such that $\frac{|w|}{2} \leq |b|\leq 2|w|$. Let $b_1,b_1'\in \Gamma$ be the two neighbors of $w$. Without loss of generality we can assume that $b_1 = w + 1$. Let $G_s$ be the function from Lemma~\ref{lemma:multivalued_Green_function}. Using Lemma~\ref{lemma:Laplacian_with_monodromy} it is easy to deduce that the function $F$ can be decomposed as
    \[
        F(b) = G_s(b,b_1) - G_s(b,b_1') + h(b)
    \]
    where $h$ is some function which is harmonic in $\Omega$. Note that
    \[
        \max_{b\in \partial \Omega} |h(b)|\leq (2C + C_s)|w|^{-1}
    \]
    where $C$ is the constant from Lemma~\ref{lemma:multivalued_Green_function}. Thus, using Lemma~\ref{lemma:maximum_principle_for_Delta_s} to estimate $h$ and Lemma~\ref{lemma:multivalued_Green_function} to estimate $G_s(b,b_1) - G_s(b,b_1')$ we get
    \[
        |F(b)|\leq C|b-w|^{-1} + (2C + C_s)|w|^{-1}
    \]
    for all $b\in \Omega$. The case when $b\in \Gamma^\dagger$ can be treated similarly.

\end{proof}

Note that we can set
\begin{equation}
  \label{eq:K-s}
  K_{-s}^{-1}(b,w) = \eta_b^2\eta_w^2\overline{K_s^{-1}(b,w)}
\end{equation}
which provides us with the kernel $K_s^{-1}$ for all $s\in (-1/2,1/2)$ (we can take $K^{-1}$ from Section~\ref{sec:Height_function_loop_statistics_and_monodromy} for $s=0$). Nevertheless, we would like to stress that the proof of~\cite[Lemma~13]{DubedatFamiliesOfCR} does not produce the estimate~\eqref{eq:estimate_on_K_sinv} uniform in $s\to 0$. We will show below how this uniformity follows from precompactness arguments.

The next two lemmas follow from the analysis of suitable discrete exponentials.

\begin{lemma}
  \label{lemma:holomorphic_functions}
  For any $s\in (-1/2,1/2)$ there exist unique discrete holomorphic functions $f_s, g_s$ both with monodromies $e^{2\pi i s}$ around the origin and having the following asymptotics:
  \begin{equation}
    \label{eq:asymp_of_fs}
    \begin{split}
      &f_s(b) = \eta_b^2\bar{b}^{-s} + O(sb^{|s|-1}),\\
      &g_s(b) = b^s + O(sb^{|s|-1}).
    \end{split}
  \end{equation}
  uniformly in $s\to 0+$. Moreover, when $s>0$ we have
  \begin{equation}
      \label{eq:asympt_of_fs_precise}
      f_s(b) = \eta_b^2\bar{b}^{-s} + 2^s\frac{e^{\pi i/4}}{\sqrt{2}}\frac{\Gamma(1-s)}{\Gamma(s)}b^{s-1} + O(sb^{|s|-2}).
  \end{equation}
  Finally, if $b\in\{b_0,b^\dagger_0\}$, then
  \begin{equation}
    \label{eq:fs_value_at_0}
    \begin{split}
      &f_s(b) = \Gamma(1-s)\eta_{b}^2\bar{b}^{-s},\\
      &g_s(b) = \Gamma(1+s)b^s.
    \end{split}
  \end{equation}
  Analogous statements are true if we consider discrete holomorphic functions on white vertices, that is, if we consider functions satisfying $\sum_{w\sim b}f(w)K_s(w,b) = 0$ for each $b$. Namely, functions $f_s$ and $g_s$ can be extended to white vertices such that the extension is still discrete holomorphic and~\eqref{eq:asymp_of_fs},~\eqref{eq:fs_value_at_0} hold if one just replace $b$ with $w$ everywhere.
\end{lemma}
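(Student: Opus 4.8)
The plan is to build $f_s$ and $g_s$ as contour integrals of discrete exponential functions, following the strategy used by Dub\'edat in \cite{DubedatFamiliesOfCR} (and, in the un-twisted case, by Kenyon for $K^{-1}$), and then to extract the asymptotics \eqref{eq:asymp_of_fs}--\eqref{eq:fs_value_at_0} by a steepest-descent analysis carried out with estimates uniform as $s\to 0+$. Recall that on $\ZZ^2+\frac{1+i}{2}$ one has, for each $\lambda\notin\{\pm1,\pm i\}$, a discrete exponential $\mathrm{e}_\lambda$, defined on vertices of both colours by prescribing its multiplicative increment across each lattice edge as an explicit rational function of $\lambda$ and the edge direction; it satisfies $K\mathrm{e}_\lambda=0$ and the dual identity $\sum_{w\sim b}\mathrm{e}_\lambda(w)K(w,b)=0$, the product of its increments around every face equals $1$, and as a function of $\lambda$ it is rational with zeros and poles only at $\pm1,\pm i$. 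The idea is to set, schematically,
\[
g_s(b)=\frac{c_s}{2\pi i}\int_{\gamma}\lambda^{-s-1}\,\mathrm{e}_\lambda(b)\,d\lambda,\qquad f_s(b)=\frac{c'_s}{2\pi i}\int_{\gamma}\lambda^{s-1}\,\widetilde{\mathrm{e}}_\lambda(b)\,d\lambda,
\]
where $\widetilde{\mathrm{e}}_\lambda$ is the companion discrete exponential with anti-holomorphic-type far field $\eta_b^2\bar b^{(\cdot)}$ (obtained from $\mathrm{e}_\lambda$ through the involution $\lambda\mapsto\bar\lambda$ together with multiplication by $\eta^2$), $\gamma$ is a Hankel-type contour wrapping the branch cut of $\lambda\mapsto\lambda^{\pm s}$, positioned so that it separates the branch point $0$ from the relevant roots of unity and so that its location relative to the cut $l$ is consistent with the branch convention for $(b/w)^s$, and $c_s,c'_s$ are normalising constants. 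Discrete holomorphicity of $g_s,f_s$ — and of their white-vertex extensions — is inherited from that of $\mathrm{e}_\lambda,\widetilde{\mathrm{e}}_\lambda$ term by term, while the monodromy $e^{2\pi i s}$ is automatic: transporting $b$ once counterclockwise around the puncture forces $\lambda$ to be analytically continued once around the branch point of $\lambda^{\pm s}$, multiplying the integrand by the correct phase; this is exactly where the position of $\gamma$ and the direction of $l$ enter.

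Granting such representations, I would obtain \eqref{eq:asymp_of_fs} and \eqref{eq:asympt_of_fs_precise} by deforming $\gamma$ through the saddle point of $\lambda\mapsto\log\mathrm{e}_\lambda(b)$ for $|b|$ large: once $c_s,c'_s$ are fixed, the leading contribution is $b^s$ (resp.\ $\eta_b^2\bar b^{-s}$), and the first correction, produced by the part of the contour near $\lambda=0$ together with the deformation, is a power $b^{s-1}$ whose coefficient is a Hankel/incomplete-$\Gamma$ integral evaluating to $2^{s}\frac{e^{\pi i/4}}{\sqrt 2}\frac{\Gamma(1-s)}{\Gamma(s)}$, the $\Gamma$-ratio being the standard constant and the other factors the square-lattice normalisation. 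The exact values \eqref{eq:fs_value_at_0} come from evaluating the same integrals at $b\in\{b_0,b_0^\dagger\}$, where the path from the base vertex to $b$ is minimal, $\mathrm{e}_\lambda(b)$ and $\widetilde{\mathrm{e}}_\lambda(b)$ reduce to an explicit rational function of $\lambda$, and the contour integral collapses to a Hankel/Beta integral producing $\Gamma(1+s)$ and $\Gamma(1-s)$; this also pins down $c_s,c'_s$. Uniqueness is the usual consequence of the maximum principle \eqref{eq:maximum_principle} applied to the difference of two solutions, which is a discrete-holomorphic function with monodromy decaying at infinity strictly faster than the leading terms $b^{s},\bar b^{-s}$; together with the local identity of Lemma~\ref{lemma:Laplacian_with_monodromy} at $b_0,b_0^\dagger$ this forces it to vanish. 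Finally, the statements on white vertices require nothing new, since $\mathrm{e}_\lambda,\widetilde{\mathrm{e}}_\lambda$ live on both colours and the same steepest-descent analysis applies verbatim with $w$ in place of $b$.

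The main obstacle I anticipate is making all of the above \emph{uniform as $s\to 0+$}. In that limit the weight $\lambda^{\pm s-1}$ degenerates towards $\lambda^{-1}$, the contour integral formally degenerates towards a residue at $0$, the leading term $b^s$ tends to the constant solution, and the constant $\Gamma(1-s)/\Gamma(s)$ collapses to $0$; to see the $O(sb^{|s|-1})$ error one must therefore expand the integrand to first order in $s$, isolate the contribution of a fixed neighbourhood of $\lambda=0$ (which after rescaling is a genuine $\Gamma$-function integral, independent of the lattice) from the saddle-point contribution, and verify that all steepest-descent error bounds, as well as the comparison with the continuum function $b^s$, hold with constants that do not blow up as $s\to 0$. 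The same bookkeeping is what simultaneously produces the explicit prefactor $2^{s}\frac{e^{\pi i/4}}{\sqrt 2}$ in \eqref{eq:asympt_of_fs_precise} and the exact near-diagonal values in \eqref{eq:fs_value_at_0}, so this is where most of the work will go.
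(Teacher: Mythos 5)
Your proposal reconstructs Dub\'edat's construction of $f_s,g_s$ from scratch via contour integrals of discrete exponentials; the paper's proof, by contrast, is essentially a convention translation from \cite[Lemma 10]{DubedatFamiliesOfCR}: it identifies $f_s$ and $g_s$ with suitably rescaled and gauge-transformed versions of Dub\'edat's $f_\chi$ and $\overline{f_{1,\chi}}$ and reads off \eqref{eq:asymp_of_fs}--\eqref{eq:fs_value_at_0} from the asymptotics stated there, asserting uniformity in $s$ ``follows from inspection of Dub\'edat's proof.'' Both routes rest on the same underlying mathematics (Dub\'edat's contour-integral / steepest-descent analysis), so your plan is on target; what you buy by reconstructing is self-containedness and explicit control over the $s\to 0^+$ uniformity --- precisely the point the paper handles by assertion rather than by argument --- at the cost of substantial extra work, especially for pinning down the exact coefficient $2^s\frac{e^{\pi i/4}}{\sqrt 2}\frac{\Gamma(1-s)}{\Gamma(s)}$ in \eqref{eq:asympt_of_fs_precise} and the $\Gamma$-values in \eqref{eq:fs_value_at_0}.

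One place that needs tightening if you pursue the reconstruction: your claim that the monodromy is ``automatic'' because moving $b$ around the puncture ``forces $\lambda$ to be analytically continued once around the branch point'' is glossed over and, as stated, not quite a proof. The discrete exponential $\mathrm{e}_\lambda(b)$ is single-valued in $b$ (as you yourself note, the product of increments around every face is $1$), so the multivaluedness of $g_s,f_s$ cannot come from $\mathrm{e}_\lambda$ itself; it has to be injected through a $b$-dependent choice of contour $\gamma$ (or, equivalently, by defining the whole construction on the universal cover of $\CC\setminus\{0\}$ as Dub\'edat does and keeping track of how the convergent half of the $\lambda$-contour rotates with the argument of $b$). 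You would need to make precise how the convergence constraints on $\gamma$ depend on the homotopy class of the path from the base vertex to $b$ relative to $l$, and verify that a full $2\pi$ rotation of $b$ forces $\gamma$ to wind once around $\lambda=0$, picking up exactly $e^{2\pi i s}$. That step is not automatic, and since Lemma~\ref{lemma:Laplacian_with_monodromy} and the uniqueness argument both require the monodromy to be exactly $e^{2\pi i s}$, it cannot be left implicit.
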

\begin{proof}
    Assertions of the lemma are essentially a subset of assertions from~\cite[Lemma~10]{DubedatFamiliesOfCR}, but for comparing these two set of assertions we need to explain how the conventions in~\cite{DubedatFamiliesOfCR} correspond to those used in this paper. Namely,
    \begin{itemize}
        \item The formulas for $f_s,g_s$ are not scale invariant. To keep up with the conventions in~\cite{DubedatFamiliesOfCR} one has to rescale the square lattice by $\frac{1}{\sqrt{2}}$, that is, to make the diagonal of a square to be equal to one.
        \item In~\cite{DubedatFamiliesOfCR} several (related by a gauge transformation) notions of the Kasteleyn operator are used. The operator $\mathsf{K}$ used in~\cite[Lemma~10]{DubedatFamiliesOfCR} is related to the operator $K$ used by us as follows:
        \[
            \mathsf{K}((\sqrt{2})^{-1}w,(\sqrt{2})^{-1}b) = \frac{1}{\sqrt{2}} \eta_w K(w,b)\eta_b.
        \]
        In particular, if $f$ is a function on black vertices lying in the kernel of $\mathsf{K}$, then $\eta_b f(b)$ lies in the kernel of $K$.
        \item The choice of operator $\mathsf{K}$ in~\cite{DubedatFamiliesOfCR} requires introducing additional signs when restricting holomorphic functions to the lattice, implemented by the operators
            \[
            \begin{split}
                &(R_B\vphi)(b) = (\bar{R}_B\vphi)(b) = \vphi(b),\qquad b\in \Gamma,\\
                &(R_B\vphi)(b) = -(\bar{R}_B\vphi)(b) = i\vphi(b),\qquad b\in \Gamma^\dagger
            \end{split}
            \]
            (see~\cite[eq.~(3.13)]{DubedatFamiliesOfCR}). Note that
            \[
                \eta_b (R_B\vphi)(b) = \vphi(b),\qquad \eta_b (\bar{R}_B\vphi)(b) = \eta_b^2 \vphi(b).
            \]
        \item The constant $\tau$ used in~\cite[Lemma~10]{DubedatFamiliesOfCR} specializes as $\tau = e^{\pi i/4}$ in our case.
    \end{itemize}
    We are now in the position to define $f_s$ and $g_s$. In both cases we use functions $f_{k,\chi}$ from~\cite[Lemma~10]{DubedatFamiliesOfCR} properly rescaled. When $s>0$ we define
    \[
        f_s(b) \qquad \text{ to be}\qquad \eta_b(\sqrt{2})^s\Gamma(s)^{-1}e^{\pi i/4}f_\chi((\sqrt{2})^{-1}b) \qquad \text{with $\chi = e^{2\pi i s}, v_0 = 0$ }
    \]
    and
    \[
        g_s(b) \qquad \text{ to be}\qquad \eta_b(\sqrt{2})^{-s}e^{-\pi i/4}\Gamma(s)^{-1}\overline{f_{1,\chi}((\sqrt{2})^{-1}b)}\qquad \text{with $\chi = e^{2\pi i (1-s)}, v_0 = 0$ }.
    \]
    (Note that if $f$ lies in the kernel of $K$, then $\eta_b^2 \overline{f(b)}$ is in the kernel as well since $K(w,b) = \pm \bar{\eta}_w\bar{\eta}_b$.) When $s<0$ we can put
    \[
        f_s(b) = \eta_b^2 \overline{g_{-s}(b)},\qquad g_s(b) = \eta_b^2 \overline{f_{-s}(b)}.
    \]

    The asymptotic relations for $f_s$ and $g_s$ declared in the lemma now follow from asymptotic relations asserted in~\cite[Lemma~10]{DubedatFamiliesOfCR}. Note that the error terms in the latter asymptotic relations are uniform in $s$, as follows from inspection of Dub\'edat's proof .

    Finally, to extend $f_s$ and $g_s$ to white vertices we can just interchange the colors and use the same construction. We leave details to the reader.
\end{proof}

\begin{lemma}
  \label{lemma:Ksinv_when_b_is_at_0}
  The kernel $K_s^{-1}(b_0,w)$ has the following asymptotics when $s\in (-1/2,1/2)$:
  \begin{equation}
    \label{eq:Ksinv_when_b_is_at_0}
    K_s^{-1}(b_0,w) = \frac{\Gamma(1+s)}{2}\frac{1}{\pi(b_0-w)}\left( \frac{b_0}{w} \right)^s + \frac{\Gamma(1-s)}{2}\frac{(\eta_{b_0}\eta_w)^2}{\pi(\bar{b}_0-\bar{w})}\left( \frac{\bar{w}}{\bar{b}_0} \right)^s + O(w^{s-2})
  \end{equation}
  uniformly in $s\to 0+$. Moreover, we have
  \begin{equation}
    \label{eq:Ksinv_when_b_is_at_0_and_sto0}
    \lim\limits_{s\to 0+}K_s^{-1}(b_0,w) = K^{-1}(b_0,w)
  \end{equation}
  for any $w$. Similarly, if $w_0$ is incident to the origin, we have
  \begin{equation}
    \label{eq:Ksinv_when_w_is_at_0}
    K_s^{-1}(b,w_0) = \frac{\Gamma(1-s)}{2}\frac{1}{\pi(b-w_0)}\left( \frac{b}{w_0} \right)^s + \frac{\Gamma(1+s)}{2}\frac{(\eta_{b}\eta_{w_0})^2}{\pi(\bar{b}-\bar{w}_0)}\left( \frac{\bar{w}_0}{\bar{b}} \right)^s + O(b^{s-2})
  \end{equation}
  uniformly in $s\to 0+$, and we have for any $b$
  \begin{equation}
    \label{eq:Ksinv_when_w_is_at_0_and_sto0}
    \lim\limits_{s\to 0+}K_s^{-1}(b,w_0) = K^{-1}(b,w_0)
  \end{equation}
\end{lemma}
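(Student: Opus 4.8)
\emph{Strategy.} The plan is to obtain \eqref{eq:Ksinv_when_b_is_at_0} for $|w|\to\infty$ by matching the behaviour of $K^{-1}_s(b,w)$ away from the puncture with the near-origin structure of discrete holomorphic functions with monodromy, and then to deduce the pointwise limit \eqref{eq:Ksinv_when_b_is_at_0_and_sto0} by a compactness argument; the relations \eqref{eq:Ksinv_when_w_is_at_0}--\eqref{eq:Ksinv_when_w_is_at_0_and_sto0} will follow from the same reasoning after interchanging the colours of the lattice, which turns $K_s$ into a Kasteleyn operator with monodromy $e^{-2\pi i s}$ and exchanges the two variables -- this is what exchanges $\Gamma(1+s)$ and $\Gamma(1-s)$ between the two groups of formulas. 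For $|w|$ in any fixed bounded range both sides of \eqref{eq:Ksinv_when_b_is_at_0} are bounded (the left side by \eqref{eq:estimate_on_K_sinv}, once we know the constant there stays bounded as $s\to 0+$; the right side since $|w-b_0|\ge 1$ and $|w|\ge 2^{-1/2}$), so only the regime $|w|\to\infty$ needs an argument, and I restrict to it.

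\smallskip
Fix $w$ with $|w|$ large. On the disc $D:=\{|b|<|w|/2\}$ the function $b\mapsto K^{-1}_s(b,w)$ is discrete holomorphic with monodromy $e^{2\pi i s}$, its only defect lying at $w\notin D$. By the discrete-exponential analysis behind Lemma~\ref{lemma:holomorphic_functions} and \cite[Lemma~10]{DubedatFamiliesOfCR}, on a fixed neighbourhood of the origin $K^{-1}_s(\cdot,w)$ expands into the model discrete holomorphic functions with monodromy $e^{2\pi i s}$, of which $f_s$ and $g_s$ are the two leading ones. Comparing this expansion, in the intermediate range $|w|^{1/2}\ll|b|<|w|/2$, with the asymptotics of $K^{-1}_s(b,w)$ there -- namely the continuum Cauchy-type kernel $\tfrac1{2\pi(b-w)}(b/w)^{s}+\tfrac{(\eta_b\eta_w)^2}{2\pi(\bar b-\bar w)}(\bar w/\bar b)^{s}+O(|b-w|^{-2})$, the $s\ne 0$ analogue of Lemma~\ref{lemma:asymp_of_Kinv}(3) obtained from $f_s,g_s$ and the Green's function $G_s$ of Lemma~\ref{lemma:multivalued_Green_function} as in its proof (and essentially present in \cite[Section 7]{DubedatFamiliesOfCR}) -- pins down the coefficients of the expansion; the comparison is legitimate out to infinity thanks to the decay $K^{-1}_s(b,w)=O(|b|^{s-1})$ from \eqref{eq:estimate_on_K_sinv}, and the single discrete source produced when the local expansion is truncated is absorbed using the $O(|b-b'|^{-1})$ bound on increments of $G_s$. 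Carrying the expansion back to $b=b_0$ and substituting $g_s(b_0)=\Gamma(1+s)b_0^{\,s}$ and $f_s(b_0)=\Gamma(1-s)\eta^2_{b_0}\bar b_0^{-s}$ from \eqref{eq:fs_value_at_0} in place of the ``continuum values'' $b_0^{\,s}$ and $\eta_{b_0}^2\bar b_0^{-s}$ then yields \eqref{eq:Ksinv_when_b_is_at_0}, the remainder $O(w^{s-2})$ collecting the discretisation corrections, the sub-leading model functions, and the tail of the Green's function sum. All error terms are uniform in $s\to 0+$ (those of Lemma~\ref{lemma:holomorphic_functions} by inspection of Dub\'edat's proof, the Green's function estimate and $\Gamma(1\pm s)\to1$ as well), so \eqref{eq:Ksinv_when_b_is_at_0} holds uniformly as $s\to 0+$.

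\smallskip
For \eqref{eq:Ksinv_when_b_is_at_0_and_sto0} with $w$ fixed, the family $\{K^{-1}_s(b_0,\cdot)\}_{s\in(0,1/2)}$ is bounded on any fixed finite set of white vertices by \eqref{eq:estimate_on_K_sinv} (the uniformity in $s\to 0+$ of that bound being the precompactness argument foreshadowed earlier), so along any $s_k\to 0+$ some subsequence converges pointwise to a function $L$ with $L(w)=O(1/|w|)$; letting $k\to\infty$ in $\sum_{w'\sim b}K^{-1}_{s_k}(b_0,w')K_{s_k}(w',b)=\delta_{b,b_0}$ and using $K_{s_k}\to K$ shows that $L$ is the unique full-plane inverse kernel $K^{-1}(b_0,\cdot)$, so the limit holds along the whole family; \eqref{eq:Ksinv_when_w_is_at_0} and \eqref{eq:Ksinv_when_w_is_at_0_and_sto0} then follow by the colour interchange described above. \textbf{The main obstacle} is the intermediate-range analysis of $K^{-1}_s$ together with the bookkeeping of the coefficient matching: extracting precisely the constants $\tfrac1{2}\Gamma(1\pm s)$ in \eqref{eq:Ksinv_when_b_is_at_0} (not merely the correct order of magnitude), and simultaneously keeping every error term uniform all the way down to $s=0$, where the model functions degenerate to ordinary discrete exponentials and Dub\'edat's $s$-dependent estimates must be replaced by the normalisation/precompactness argument.
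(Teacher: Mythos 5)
The paper's proof does not use a matching argument at all. For $w_0$ adjacent to the origin and $s>0$, it writes down the explicit, exact identity
\[
K_s^{-1}(b,w_0)=\alpha_s\bigl(\widetilde f_s(b)-f_s(b)\bigr),\qquad \alpha_s=\bigl((1-e^{2\pi i s})\widetilde f_s(b_0)\bigr)^{-1},
\]
where $\widetilde f_s$ is the analogue of $f_s$ with the puncture shifted from $0$ to $i$. Both $\widetilde f_s$ are annihilated by $K_s$ except at $w_0$, where the mismatch of sheets produces exactly the $\delta$-source; so the identity is checked by a local computation, and \eqref{eq:Ksinv_when_w_is_at_0} together with its uniformity then drop out of the precise asymptotics \eqref{eq:asympt_of_fs_precise} of $f_s$ (whose uniformity Lemma~\ref{lemma:holomorphic_functions} already provides). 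The only other inputs are $f_s(b_0)=\Gamma(1-s)\eta_{b_0}^2\bar b_0^{-s}$, the reflection $\widetilde f_s(b)=-e^{-\pi i s}\overline{f_s(-\bar b -i)}$, and the functional equation for $\Gamma$. No intermediate-range asymptotics of $K_s^{-1}$ and no inner/outer matching are required; the $s<0$ and $(w,b_0)$ cases are handled by conjugation and colour interchange exactly as you say.

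Your matching strategy has a genuine gap, and it is not a minor omission. You want to compare the inner expansion of $K_s^{-1}(\cdot,w)$ in $f_s,g_s$ against an ``outer'' asymptotics of $K_s^{-1}(b,w)$ in the intermediate range $|w|^{1/2}\ll|b|<|w|/2$, with error terms uniform in $s$. But in the logical order of the paper, that outer asymptotics (Lemma~\ref{lemma:near-diagonal-expansion}, via Corollary~\ref{cor:big_b_asymptotics_revisited}) and even the uniform-in-$s$ bound on $C_s$ (Lemma~\ref{lemma:C_is_uniform}, whose proof explicitly invokes \eqref{eq:Ksinv_when_b_is_at_0}) are consequences of the lemma you are trying to prove. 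Your own sentence ``once we know the constant there stays bounded as $s\to 0+$'' therefore points at a circularity: the compactness argument you are leaning on is set up downstream of this lemma. Dub\'edat's Section~7 supplies these estimates only for fixed $s$, which is precisely why the paper cannot simply cite them. Beyond that, ``carrying the expansion back to $b=b_0$'' is exactly where the work lives — you need control of the full tail of sub-leading model functions down to the bounded scale, which is what the explicit two-puncture construction in the paper sidesteps — and you flag it yourself as ``the main obstacle'' without resolving it. The limit relations \eqref{eq:Ksinv_when_b_is_at_0_and_sto0}, \eqref{eq:Ksinv_when_w_is_at_0_and_sto0} via subsequential limits, and the colour interchange for the other half of the lemma, are both correct and consistent with the paper.
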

\begin{proof}

    The identities~\eqref{eq:Ksinv_when_w_is_at_0} and~\eqref{eq:Ksinv_when_w_is_at_0_and_sto0}, without the uniformity claim, are in~\cite[Lemma~14(2)]{DubedatFamiliesOfCR}. The proof there, via~\cite[Lemma~11]{DubedatFamiliesOfCR}, goes by explicitly constructing $K_s^{-1}(\cdot,w_0)$ using discrete exponentials. By writing out the details, it would be possible to see that the error terms are uniform as $s\to 0+$. Instead, we will use the alternative construction of $K_s^{-1}(b,w_0)$ sketched in the remark after~\cite[Lemma~14]{DubedatFamiliesOfCR}. We first assume that $w_0 = -\frac{1}{2} + \frac{i}{2}$ and $s>0$. We try to express $K_s^{-1}(\cdot,w_0)$ as
    \begin{equation}
        \label{eq:Ksinv_ansatz}
        K_s^{-1}(b,w_0) = \alpha_s \left(\widetilde{f}_s(b) - f_s(b)\right)
    \end{equation}
    where $\widetilde{f}_s$ be the version of the function $f_s$ but constructed when the puncture is shifted to $i$ and $\alpha_s$ is a constant which we determine below. Note that both $f_s$, $\tilde{f}_s$ are discrete holomorphic, except at $w_0$ where there's a mismatch of sheets for $\tilde{f}_s$. At $w_0$ we have
    \[
        K_s\left(\alpha_s \left(\widetilde{f}_s - f_s\right)\right)(w) = \begin{cases}0,& w\neq w_0 \\ \alpha_s(1 - e^{2\pi is}) \widetilde{f}_s(b_0),& w=w_0\end{cases}
    \]
    thus a choice $\alpha_s = \left((1 - e^{2\pi is})\widetilde{f}_s(b_0)\right)^{-1}$ yields $(K_sK_s^{-1})(w,w_0) = \delta_{w_0}(w)$. Let us now derive the asymptotics of the right-hand side of~\eqref{eq:Ksinv_ansatz} with this choice of $\alpha_s$. First, we need the asymtotics of $\widetilde{f}_s$. It can be constructed in the same way $f_s$ is, but to avoid additional computations we can just put
    \[
        \widetilde{f}_s(b)\coloneqq -e^{-\pi i s}\overline{f_s(-\bar{b} -i)}.
    \]
    The constant $-e^{-\pi i s}$ in front of the right-hand side is chosen so that the main term in the asymptotics of $\widetilde{f}_s(b)$ is $\eta_b^2\bar{b}^{-s}$. (Note that $\bar{\eta}_{-\bar{b} - i}^2 = -\eta_b^2$.) To verify that $\widetilde{f}_s(b)$ is discrete holomorphic with correct monodromy, let $\widetilde{K}_s(w,b) = \chi_{s,l+[0,i]}(w,b)K(w,b)$ be defined as $K_s(w,b)$ (see~\eqref{eq:def_of_Ks}) but with the cut $l+[0,i]$  starting from $i$ instead of $0$. We note that $\widetilde{f}_s$ is in the kernel of $\overline{K_s(-\bar{w}-i, -\bar{b}-i)}$ and we have
    \[
        \overline{K_s(-\bar{w}-i, -\bar{b}-i)} = \overline{\chi_{s,l}(-\bar{w} - i, -\bar{b}  -i)}(w-b) = -\chi_{s,l+[0,i]}(w,b)K(w,b) = -\widetilde{K}_s(w,b)
    \]
    so $\widetilde{f}_s$ is in the kernel of $\widetilde{K}_s$.

    Using the asymptotics of $f_s(b)$ and the fact that $\bar{\eta}_{-\bar{b} - i}^2 = -\eta_b^2$ we get
    \begin{equation}
        \label{eq:asymp_of_tilde_fs}
        \widetilde{f}_s(b) = \eta_b^2(\bar{b} + i)^{-s} + 2^s\frac{e^{-\pi i/4}}{\sqrt{2}}\frac{\Gamma(1-s)}{\Gamma(s)}(b-i)^{s-1} + O(sb^{s-2}).
    \end{equation}
    Furthermore, note that $-\bar{b}_0 - i = b_0^\dagger = \bar{w}_0$ and $\eta_{b_0^\dagger}^2 = -1$. Using~\eqref{eq:fs_value_at_0} we get
    \[
        \widetilde{f}_s(b_0) = -e^{\pi is}\overline{f_s(b_0^\dagger)} = e^{-\pi i s}\Gamma(1-s)\bar{w}_0^{-s}.
    \]
    Using the properties of the $\Gamma$ function, we conclude that
    \begin{equation}
        \label{eq:alphas_value}
        \alpha_s = \left((1 - e^{2\pi is})\widetilde{f}_s(b_0)\right)^{-1} = -\left( 2i\sin(\pi s) \Gamma(1-s)\bar{w}_0^{-s} \right)^{-1} = -\frac{\Gamma(s)}{2\pi i}\bar{w}_0^s.
    \end{equation}
    Substituting the asymptotics of $f_s$~\eqref{eq:asympt_of_fs_precise}, asymptotics of $\widetilde{f}_s$~\eqref{eq:asymp_of_tilde_fs} and the value of $\alpha_s$~\eqref{eq:alphas_value} into~\eqref{eq:Ksinv_ansatz} we get
    \begin{multline*}
        K_s^{-1}(b,w_0) = \alpha_s \left(\widetilde{f}_s(b) - f_s(b)\right) =\\
        = -\eta_b^2 \frac{\Gamma(s)}{2\pi i}\bar{w}_0^s\left((\bar{b} + i)^{-s} - \bar{b}^{-s}\right) - \frac{\Gamma(1-s)}{2\pi i}2^s\bar{w}_0^s\left(\frac{e^{-\pi i/4}}{\sqrt{2}}(b-i)^{s-1} - \frac{e^{\pi i/4}}{\sqrt{2}}b^{s-1}\right) + O(b^{s-2}).
    \end{multline*}
    Observing that $w_0^{-s} = 2^s\bar{w}_0^s$ we get the desired asymptotics for $K_s^{-1}(b,w_0)$.

    When $w_0 = \frac{1-i}{2}$ similar computations can be used, we omit the details. When $s<0$ we can put $K_s(w,b) = (\eta_w\eta_b)^2\overline{K_{-s}(w,b)}$ and use the previous result. Finally, the asymptotics of $K_s^{-1}(w,b_0)$ can be derived by just interchanging the roles of the sublattices.
\end{proof}

Our goal is to prove the following lemma:

\begin{lemma}
  \label{lemma:C_is_uniform}
  The inequality~\eqref{eq:estimate_on_K_sinv} in Lemma~\ref{lemma:existense_of_Ksinv} holds uniformly in $s\to 0+$, i.e., there exists a constant $C$ and $\eps>0$ such that $C_s\leq C$ for all $s\in(0,\eps)$.
\end{lemma}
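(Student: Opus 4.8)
The plan is to establish the uniform bound by a blow-up (normal families) argument, arguing by contradiction. Since \eqref{eq:estimate_on_K_sinv} holds for each fixed $s$ with a finite constant $C_s$, a failure of uniformity means $\limsup_{s\to 0+}C_s=\infty$; choose then $s_n\downarrow 0$ and black/white vertices $b_n,w_n$ with
\[
M_n:=|b_n-w_n|\,|K_{s_n}^{-1}(b_n,w_n)|\,\bigl(|b_n/w_n|^{s_n}\vee|w_n/b_n|^{s_n}\bigr)^{-1}\ \geq\ \tfrac12\,C_{s_n}\ \longrightarrow\ \infty .
\]
A first reduction rules out the possibility that $b_n,w_n$ stay in a bounded region: then they range over a finite set, and it is enough to know that $K_s^{-1}(b,w)\to K^{-1}(b,w)$ as $s\to 0+$ for each fixed pair $(b,w)$. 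This extends \eqref{eq:Ksinv_when_b_is_at_0_and_sto0}--\eqref{eq:Ksinv_when_w_is_at_0_and_sto0} from punctured‑adjacent vertices to all vertices, and follows from the resolvent identity $K_s^{-1}(\cdot,w)-K^{-1}(\cdot,w)=K_s^{-1}\bigl[(K-K_s)K^{-1}(\cdot,w)\bigr]$, whose right‑hand side is supported on white vertices adjacent to the cut $l$, where $(K-K_s)K^{-1}(\cdot,w)=O(s)$ and decays summably along $l$. A second reduction, using Lemmas~\ref{lemma:Ksinv_when_b_is_at_0}--\ref{lemma:Ksinv_when_w_is_at_0} together with the maximum principle (Corollary to Lemma~\ref{lemma:Laplacian_with_monodromy}) in the regimes where one of $b_n,w_n$ lies close to the puncture, lets us assume $|b_n/w_n|\in[\tfrac12,2]$, so that the weight in $M_n$ is bounded.

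Assuming then $\Lambda_n:=|w_n|\to\infty$, I rescale by $\Lambda_n$ and pass to a subsequence so that on the rescaled lattice $\Lambda_n^{-1}(\ZZ^2+\tfrac{1+i}{2})$ (which exhausts $\CC$), with rescaled cut $\Lambda_n^{-1}l$, the points $\hat w_n:=\Lambda_n^{-1}w_n$ and $\hat b_n:=\Lambda_n^{-1}b_n$ converge to $\hat w_\infty$ (with $|\hat w_\infty|=1$) and $\hat b_\infty$ (with $\tfrac12\le|\hat b_\infty|\le 2$). Let $v_n(z):=\Lambda_n K_{s_n}^{-1}(\Lambda_n z,w_n)$, which solves the rescaled Kasteleyn equation $\tilde K_{s_n}v_n=\delta_{\hat w_n}$ with monodromy $e^{2\pi i s_n}$ around $0$, and set
\[
A_n:=\sup_z|z-\hat w_n|\,|v_n(z)|\,\bigl(|z/\hat w_n|^{s_n}\vee|\hat w_n/z|^{s_n}\bigr)^{-1},
\]
the supremum over black vertices; then $\tfrac12 C_{s_n}\le A_n\le C_{s_n}$ (lower bound from $z=\hat b_n$, upper bound from \eqref{eq:estimate_on_K_sinv}), so $A_n\to\infty$. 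I would then study $u_n:=v_n/A_n$: the estimate \eqref{eq:estimate_on_K_sinv}, with the harmless ratio $C_{s_n}/A_n=O(1)$, gives $|u_n(z)|\le C(|z|^{s_n}\vee|z|^{-s_n})/|z-\hat w_n|$, so $u_n$ is uniformly bounded on compacts of $\CC\setminus\{0,\hat w_\infty\}$, decays like $|z|^{s_n-1}$ at infinity, and is $O(|z|^{-s_n})$ near $0$; combining this with the discrete maximum principle and the $s$‑uniform near‑puncture asymptotics of Lemmas~\ref{lemma:Ksinv_when_b_is_at_0}--\ref{lemma:Ksinv_when_w_is_at_0}, I would upgrade it to equicontinuity and extract, in the sense of discrete complex analysis (cf.~\cite{KenyonConfInvOfDominoTilings,DubedatFamiliesOfCR}), a subsequential limit $u_\infty$.

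To conclude, I identify $u_\infty$. The rescaled equation reads $\tilde K_{s_n}u_n=A_n^{-1}\delta_{\hat w_n}\to 0$, and the monodromy $e^{2\pi i s_n}\to 1$ becomes trivial in the limit, so $u_\infty$ lies in the kernel of the limiting (continuum) operator on all of $\CC$: the possible singularity at $\hat w_\infty$ is removable because the source has been rescaled away, and the one at $0$ because $u_n$ is bounded near $0$ uniformly in $n$ by the above together with Lemma~\ref{lemma:Ksinv_when_b_is_at_0}. A bounded global solution decaying at infinity vanishes, so $u_\infty\equiv 0$. On the other hand, $z_n=\hat b_n$ is a near‑maximizer of the weighted quantity for $u_n$ (its value there is $M_n/A_n\ge\tfrac12$, while the supremum is $1$), it lies in the fixed annulus $\tfrac12\le|z|\le2$, and one may assume $\hat b_\infty\ne\hat w_\infty$ (if $\hat b_n\to\hat w_\infty$, then near $\hat w_\infty$ the kernel $v_n$ is asymptotic to the standard Cauchy kernel, the monodromy being concentrated at the distant puncture, which already forces $M_n=O(1)$, a contradiction). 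Hence $|z_n-\hat w_n|\,|u_n(z_n)|\asymp 1$, and in the limit $|z_\infty-\hat w_\infty|\,|u_\infty(z_\infty)|\asymp 1$, contradicting $u_\infty\equiv 0$.

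The main obstacle is the discrete‑to‑continuum step: precompactness of the rescaled kernels $v_n$ (equivalently $u_n$) and identification of the scaling limit. The delicate points are that the monodromy $e^{2\pi i s_n}$ degenerates to the trivial one, so one must verify that the limit carries no monodromy around $0$ — this is exactly where the uniformity in $s\to 0+$ of Lemmas~\ref{lemma:holomorphic_functions}, \ref{lemma:Ksinv_when_b_is_at_0} and~\ref{lemma:Ksinv_when_w_is_at_0} is used — and that the rescaled configuration of $0$, $\hat b_\infty$, $\hat w_\infty$ may degenerate ($\hat b_\infty$ colliding with $\hat w_\infty$, or either point colliding with $0$, or a point escaping to $\infty$ when the second reduction is instead run by rescaling by $\min(|b_n|,|w_n|)$), each such configuration requiring the maximum principle to be combined with the explicit near‑puncture asymptotics in order to anchor the estimates.
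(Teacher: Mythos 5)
Your overall strategy --- argue by contradiction, normalize the supremum, rescale by $|w_n|$, extract a discrete-to-continuum limit, and show it is forced to vanish --- is essentially the same as the paper's. The reduction to the annular region $|b/w|\in[\tfrac12,2]$, $|b-w|\gtrsim|w|$ via Lemmas~\ref{lemma:Ksinv_when_b_is_at_0}--\ref{lemma:Ksinv_when_w_is_at_0} and the maximum principle matches the paper's first step (the bound \eqref{eq: K_bound_all_b}), and your Case~2 (unbounded $w_n$) tracks the paper's Case~2 closely, differing only in how $u_\infty=0$ is concluded: you invoke Liouville plus decay and removability of the singularities at $0$ and $\hat w_\infty$, while the paper pins down the constants $\Phi(0),\Phi^\ast(0)$ explicitly via discrete Green's identities with the kernel $K_{-s}^{-1}(b_0,\cdot)$ and its $s\to 0$ limit from Lemma~\ref{lemma:Ksinv_when_b_is_at_0}. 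Your route is a plausible alternative, but the removability and decay bookkeeping is left as a sketch.

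The genuine gap is the ``first reduction'': ruling out bounded $(b_n,w_n)$ by claiming $K_s^{-1}(b,w)\to K^{-1}(b,w)$ for each fixed pair via the resolvent identity
$K_s^{-1}(\cdot,w)-K^{-1}(\cdot,w)=K_s^{-1}\bigl[(K-K_s)K^{-1}(\cdot,w)\bigr]$.
The right-hand side requires you to bound $K_s^{-1}$ applied to a function supported along the cut $l$ of size $O(s)$ (in fact $O(s/|w'|^2)$ after telescoping). Any bound on this sum goes through the very constant $C_s$ in \eqref{eq:estimate_on_K_sinv} that you are trying to control: even with the improved source decay, you get an error of order $s\,C_s$, which you cannot make small without already knowing $C_s$ is bounded (or at least $o(1/s)$). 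So the argument is circular. Lemma~\ref{lemma:Ksinv_when_b_is_at_0} gives $s$-uniform control only when one of $b,w$ is incident to the puncture, and that does not by itself propagate to general $(b,w)$ via the resolvent.

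The paper sidesteps this by \emph{also} normalizing in the bounded case: it considers $\widetilde K_s^{-1}=M_s^{-1}K_s^{-1}$, which is uniformly bounded by \eqref{eq: K_bound_all_b}, extracts a pointwise subsequential limit $\Phi$ on the fixed lattice, shows $K\Phi\equiv 0$ (the source vanishes because $M_s^{-1}\to 0$), concludes $\Phi$ is constant on each sublattice by discrete Liouville, and finally kills the constants by a contour-integral (discrete Green's) identity $\sum_{(bw)^\ast\in\gamma}\widetilde K_s^{-1}(b,w_s)\,K^{-1}_{-s}(b_0,w)\,d(bw)^\ast=0$, whose $s\to 0$ limit is controlled precisely by the near-puncture convergence of Lemma~\ref{lemma:Ksinv_when_b_is_at_0}. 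This is the same ``normalize $\to$ extract $\to$ kill the limit'' pattern you used in Case~2, just applied without rescaling. If you replace your resolvent step by this normalized extraction (and, if you prefer, conclude $\Phi\equiv 0$ by the decay $\Phi(b)=O(|b|^{-1})$ instead of Green's identity), the bounded case is handled and the proposal becomes correct in outline.
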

\begin{rem}
    The methods of \cite{DubedatFamiliesOfCR} leading to the proof of Lemma \ref{lemma:existense_of_Ksinv} in fact give estimates that are uniform in $s$ away from zero; thus that lemma holds with a constant that does not depend on $s$ at all.
\end{rem}
\begin{proof}
  Let us put
  \[
    M_s = \sup_{|w|/2\leq |b|\leq 2|w|,\ |b-w|\geq |w|/4} |w||K^{-1}_s(b,w)|.
  \]
  According to Lemma~\ref{lemma:existense_of_Ksinv}, $M_s<\infty$ for any $s>0$. We first claim that there exist absolute constant $c>0$ such that
  \begin{equation}
  \label{eq: K_bound_all_b}
  K_s^{-1}(b,w)\leq \frac{cM_s}{|b-w|}\left( \left|\frac{b}{w}\right|^s\vee \left|\frac{w}{b}\right|^s\right),\qquad \text{for all } b,w.
  \end{equation}
First, note that by moving the cut if necessary, $K_s(\cdot,w)-K(\cdot,w)$ may be viewed as a (single-valued) discrete holomorphic function in the ball $\{b:|b-w|\leq|w|/4\}$, thus by maximum principle
  \begin{equation}
  \label{eq: neardiag_bound_by_M}
    |K^{-1}_s(b,w)|\leq \frac{M_s}{|w|}+\frac{c_1}{|b-w|}\leq \frac{4M_s+c_1}{|b-w|}, \quad |b-w|\leq|w|/4
  \end{equation}
  where $c_1>0$ is a constant depending on~\eqref{eq:K-1_asymp} only.

  To treat the region $|b|\leq |w|/2$,   let, as above, $b_0^\dagger=-\frac12-\frac{i}{2}\in \Gamma^\dagger$ denote the vertex incident to the origin. Note that by~\eqref{eq:Laplace_at_zero} the function
  \begin{equation}
      \label{eq:Ciu1}
      F_s(b) \coloneqq f_s(b_0^\dagger)K_s^{-1}(b,w) - K_s^{-1}(b_0^\dagger,w) f_s(b),\qquad b\in \Gamma\cap B(0,|w|/2),
  \end{equation}
  is $\Delta_s$-harmonic. When $b\in \partial(\Gamma\cap B(0,|w|/2))$, we get that $|f_s(b_0^\dagger)K_s^{-1}(b,w)|\leq c_2 M_s|w|^{-1}$ by definition of $M_s$ and~\eqref{eq:fs_value_at_0} and $K_s^{-1}(b_0^\dagger,w) f_s(b)\leq c_3 |w|^{-1}$ by  ~\eqref{eq:Ksinv_when_b_is_at_0} and~\eqref{eq:asymp_of_fs}. Thus, by maximal principle, $F_s(b)\leq (c_2M_s+c_3)|w|^{-1}$, and using~\eqref{eq:fs_value_at_0},~\eqref{eq:Ksinv_when_b_is_at_0} (the expressions for $f_s(b_0^\dagger)$ and $K_s^{-1}(b_0^\dagger, w)$), and~\eqref{eq:asymp_of_fs} (asymptotics of $f_s$) we conclude that
  \begin{equation}
  \label{eq: nearzero_bound_by_M}
    |K^{-1}_s(b,w)|\leq c_4M_s\left|\frac{w}{b}\right|^s\frac{1}{|w|}\leq 2c_4M_s\left|\frac{w}{b}\right|^s\frac{1}{|b-w|},\qquad |b|\leq |w|/2,\,b\in\Gamma.
  \end{equation}
  for all $s>0$ small enough, where $c_{2,3,4}>0$ are absolute constants.

  In a similar way, we get the same estimate for $b\in\Gamma^\dagger.$ Replacing $K^{-1}_s$ with $K^{-1}_{-s}$ (cf.~\eqref{eq:K-s}) and interchanging the roles of $b$ and $w$ we get the symmetric estimate when $b\geq 2|w|$. By combining that estimate with \eqref{eq: neardiag_bound_by_M} and \eqref{eq: nearzero_bound_by_M}, we get~\eqref{eq: K_bound_all_b}.

It remains to prove that $\limsup_{s\to 0+}M_s<\infty$.
Assume on the contrary that $M_s\to\infty$ (along a subsequence that we will henceforth not mention), and consider
  \[
    \widetilde{K}_s^{-1}(b,w) \coloneqq M_s^{-1} K_s^{-1}(b,w).
  \]
  For each $s$ let $b_s,w_s$ be such that $|w_s\widetilde{K}_s^{-1}(b_s,w_s)|\geq 1/2$ and $|w_s|/2\leq |b_s|\leq 2|w_s|,\ |b_s-w_s|\geq |w_s|/4$.

  We now consider two cases:

  \textit{Case 1: $w_s$ stays bounded as $s\to 0$.} The bound~\eqref{eq: K_bound_all_b} allow us to find a sequence $s_k\to 0+$ such that  $\widetilde{K}_{s_k}^{-1}(\cdot,w_{s_k})$ converges pointwise to a bounded function $\Phi(\cdot)$. Since $K_s \widetilde{K}_{s}^{-1}(w,w_s)=M_s^{-1}\delta_{w_s}(w)$ and by our assumption $M_s^{-1}\to 0$, we have $K\Phi\equiv 0$. Thus, $\Phi$ is a bounded discrete holomorphic function, i.e., it is constant on each of the lattices $\Gamma$ and $\Gamma^\dagger$. Denote these constants by $X$ and $X^\dagger$. We have $\max(|X| , |X^\dagger|)\geq 1/2$ by the construction of $\widetilde{K}_s$ and the choice of $w_s$. If we show that $X = X^\dagger = 0$, then we obtain a contradiction.

  To this end, let $\gamma$ be a simple counterclockwise loop composed of edges of the dual graph to $(\ZZ + 1/2)^2$ and encircling $0$ and $w_{s}$. Consider the following expression:
  \[
    Y_s = \sum_{(bw)^\ast\in \gamma} \widetilde{K}_s^{-1}(b,w_s)K^{-1}_{-s}(b_0,w)d(bw)^\ast.
  \]
  By a discrete version of Green's identity, one sees that $Y$ does not depend on the choice $\gamma$ provided $\gamma$ encircles all $w_{s_k}$ and the origin. Expanding $\gamma$ to infinity and using Lemmas~\ref{lemma:existense_of_Ksinv},~\ref{lemma:Ksinv_when_b_is_at_0} we conclude that $Y_{s_k} = 0$ for each $k$. On the other hand,
  \[
    \lim\limits_{k\to \infty} Y_{s_k} = \sum_{(bw)^\ast\in \gamma}\Phi(b)K^{-1}(b_0,w)d(bw)^\ast = \Phi(b_0) = X,
  \]
  where we used Lemma~\ref{lemma:Ksinv_when_b_is_at_0} to take the limit $\lim\limits_{s\to 0+} K^{-1}_{-s_k}(b_0,w) = K^{-1}(b_0,w)$.
  It follows that $X = 0$. Arguing in the same way we conclude that $X^\dagger = 0$ too.

  \textit{Case 2: $w_s$ is not bounded as $s\to 0+$.} Put $\delta_s = |w_s|^{-1}$; we may assume by passing to a subseqeunce that $\delta_s\to 0$ and that moreover $\delta_s w_s\to w$. Consider the function
  \[
    \Phi_s(b) = \delta^{-1}_s\widetilde{K}_s^{-1}(\delta^{-1}_s b, w_s).
  \]
  on the lattice $\delta_s(\ZZ + 1/2)^2$
  The bound~\eqref{eq: K_bound_all_b} ensures that $\Phi_s(\cdot)$ is uniformly bounded on compact subsets of $\CC\setminus\{0,w\}$. Standard precompactness theory for discrete holomorphic functions(see~\cite{CLR1} for a modern approach) imply that we can find a sequence $s_k\to 0+$ and two functions $\Phi, \Phi^\ast$ defined on $\CC\setminus\{0,w\}$ such that $\Phi$ is holomorphic and $\Phi^\ast$ is antiholomorphic, and  $\Phi_{s_k}(\cdot ) = \Phi(\cdot) + \eta_b^2 \Phi^\ast(\cdot) + o(1)$ as $k\to \infty$ uniformly on compact subsets of $\CC\smm\{0,w_0\}$.

  The bound~\eqref{eq: K_bound_all_b} and the first inequality in \eqref{eq: neardiag_bound_by_M} imply that $\Phi,\Phi^\ast$ are in fact bounded, hence they are constant. We also must have at least one of these constant to be non-zero by the definition of $M_s$ and $w_s$. Let us show that $\Phi(0) = \Phi^\ast(0) = 0$ and obtain a contradiction.

  Put $F_s(w) = \delta_s^{-1}K_{-s}^{-1}(b_1,\delta_s^{-1}w)$, where $b_1$ is a fixed black vertex. As above, we have
  \begin{equation}
    \label{eq:residue_is_zero}
    \sum_{(bw)^\ast\in \gamma_s} \Phi_s(b)F_s(w)d(bw)^\ast = 0
  \end{equation}
  for each $s$ and a counterclockwise simple loop $\gamma_s$ on the dual lattice to $\delta_s(\ZZ + 1/2)^2$ encircling $0,\delta_s w_s,\delta_s b_1$. Let us substitute $b_1 = b_0$, where $b_0\in \Gamma$ is incident to the face containing the origin and take $\gamma_s$ to approximate a given simple polygonal loop $\gamma$ encircling $w_0$ and the origin. By Lemma~\ref{lemma:Ksinv_when_b_is_at_0} we have $F_s(w)= -\frac{1}{2\pi w}-\frac{\eta^2_{b_0}\eta_w^2}{2\pi \bar w}+o(1)$ as $s\to 0$, uniformly near $\gamma$. This implies
  \begin{multline*}
    \sum_{(bw)^\ast\in \gamma_{s_k}} \Phi_{s_k}(b)F_{s_k}(w)d(bw)^\ast = \\
    -\sum_{(bw)^\ast}\left(\frac{\Phi}{2\pi w}+ \frac{\Phi^\star\eta^2_{b_0}(\eta_b\eta_w)^2}{2\pi \bar{w}}+\frac{\Phi\eta_{b_0}\eta^2_w}{2\pi \bar{w}}+\frac{\Phi^\ast\eta_b^2}{2\pi w}\right)d(bw)^\ast+o(1)=
    \\-i\int_\gamma \left(\Phi\,\frac{dz}{2\pi z} - \Phi^\ast \,\frac{\eta_{b_0}^2d\bar{z}}{2\pi \bar{z}}\right) + o(1),\qquad k\to +\infty,
  \end{multline*}
  where we have used \eqref{eq:dedge=int} and the observation that $\eta_b^2$ (respectively, $\eta_w^2$) are alternating $\pm 1$ along straight line segments of $\gamma$, leading to approximate telescoping for the last two terms in the second sum. We infer that $ \Phi(0) - \eta_{b_0}^2\Phi^\ast(0) = 0$, and similarly, setting $b_1 = b_0^\dagger$ we get $\Phi(0) - \eta^2_{b_0^\dagger} \Phi^\ast(0) = 0.$ Since $\eta_{b_0}^2 = -\eta_{b_0^\dagger}^2$ we conclude that $\Phi(0) = \Phi^\ast(0) = 0$, which is a desired contradiction.

\end{proof}

Using Lemma~\ref{lemma:existense_of_Ksinv} completed with Lemma~\ref{lemma:C_is_uniform} and following the approach of~\cite{DubedatFamiliesOfCR} we can obtain the following asymptotic for $K_s^{-1}$. Recall the notation~\eqref{eq:def_of_chi_extended}. We denote
\begin{align}
      \Kinv[s](b,w)&:=\frac{1}{2\pi (b-w)}\left( \frac{b}{w} \right)^s + \frac{\eta^2_w\eta^2_b}{2\pi (\bar{b} - \bar{w})} \left( \frac{\bar{w}}{\bar{b}} \right)^s, \\
      \Kinv[s,b\approx w](b,w)&:=\chi_{s,l}(w,b)^{-1}\left( K^{-1}(b,w) + \frac{s}{2\pi}\left[\frac{1}{w} - \frac{(\eta_b\eta_w)^2}{\bar{w}}\right]\right),\\
      \Kinv[s,b\approx 0](b,w)&:=-\frac{g_s(b)}{2\pi w^{1+s}} - \frac{\eta_w^2f_s(b)}{2\pi \bar{w}^{1-s}}.
\end{align}

\begin{lemma}
  \label{lemma:near-diagonal-expansion}
  The unique $K_s^{-1}$ given in Lemma \ref{lemma:existense_of_Ksinv} has the following asymptotics near the diagonal (respectively, for large $b$)
  \begin{equation}
    \label{eq:near-diagonal-expansion}
    K_s^{-1}(b,w) = \begin{cases}\Kinv[s,b\approx w](b,w) + O\left(\log|w||w|^{-5/4}\right),& |b-w|\leq |w|^{3/4}\\
    \Kinv[s](b,w)+O\left(|b|^{s-1}|w|^{-s-\frac12}\right),& |b|\geq 2|w|
    \end{cases}
  \end{equation}
  uniformly in $s\to 0+$.
\end{lemma}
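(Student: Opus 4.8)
The plan is to follow Dub\'edat's construction of $K_s^{-1}$ from discrete exponentials (\cite[Section~7]{DubedatFamiliesOfCR}) while keeping all error terms uniform as $s\to 0+$. One first reduces to $|w|\ge R_0$ for a fixed large constant $R_0$: when $|w|$ is bounded, both asserted expansions are estimates on bounded quantities and follow directly from Lemmas~\ref{lemma:existense_of_Ksinv} and~\ref{lemma:C_is_uniform}. Also, for $s$ bounded away from $0$ the expansions (with a constant depending on that bound) are already contained in~\cite{DubedatFamiliesOfCR}, cf.\ the Remark above; so the only new content is uniformity in a right neighbourhood of $s=0$, and the role of Lemma~\ref{lemma:C_is_uniform} is exactly to supply the a priori bound~\eqref{eq:estimate_on_K_sinv} with a constant independent of such $s$, which is what controls the remainders in Dub\'edat's scheme.

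\emph{The construction and the regime $|b|\ge 2|w|$.} I would build an approximate inverse $\widetilde{K}_s^{-1}(\cdot,w)$ patched together from the three model kernels: $\mathcal{D}_{s,b\approx 0}(\cdot,w)$ (built from $f_s,g_s$ of Lemma~\ref{lemma:holomorphic_functions}) on $\{|b|\le c|w|\}$, $\mathcal{D}_{s,b\approx w}(\cdot,w)$ (built from the full-plane kernel~\eqref{eq:K-1_asymp}) on $\{|b-w|\le c|w|^{3/4}\}$, and $\mathcal{D}_s(\cdot,w)$ elsewhere, glued with cutoffs. That these three agree, to the required precision, on the overlaps is a steepest-descent computation on scale $|w|$ for the discrete exponentials — it is there that the relative error $O(|w|^{-1/2})$ present in the region $|b|\ge 2|w|$ arises — and the coefficients of the combination are fixed by the three defining properties of $K_s^{-1}(\cdot,w)$: discrete holomorphy with multiplicative monodromy $e^{2\pi is}$ at the origin, the simple pole at $b=w$ with the principal part of~\eqref{eq:K-1_asymp}, and the decay $O(|b|^{s-1})$ at infinity. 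One then writes $K_s^{-1}=\widetilde{K}_s^{-1}-K_s^{-1}\!\ast\!(K_s\widetilde{K}_s^{-1}-\delta_w)$ and bounds the correction using~\eqref{eq:estimate_on_K_sinv}: as $K_s\widetilde{K}_s^{-1}-\delta_w$ is supported near the gluing seams and small there, this yields $K_s^{-1}(b,w)=\mathcal{D}_s(b,w)+O\!\bigl(|b|^{s-1}|w|^{-s-1/2}\bigr)$ for $|b|\ge 2|w|$, uniformly as $s\to 0+$ since all the inputs (Lemmas~\ref{lemma:holomorphic_functions},~\ref{lemma:Ksinv_when_b_is_at_0},~\ref{lemma:C_is_uniform}) are.

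\emph{The regime $|b-w|\le|w|^{3/4}$.} Here I would deform $l$ to avoid $B(w,2|w|^{3/4})$ (possible since $2|w|^{3/4}<|w|/2$ for $|w|\ge R_0$). On that ball $K_s$ coincides with the ordinary Kasteleyn operator and $\chi_{s,l}(w,b)=1$, while $\mathcal{D}_{s,b\approx w}(\cdot,w)$ is, by its very definition, a single-valued discrete holomorphic function off $w$ with the same pole at $w$ as $K_s^{-1}(\cdot,w)$; hence $D:=K_s^{-1}(\cdot,w)-\mathcal{D}_{s,b\approx w}(\cdot,w)$ is a single-valued discrete holomorphic function on all of $B(w,2|w|^{3/4})$. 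On the annulus $\{|b-w|\asymp|w|^{3/4}\}$ the construction above gives $K_s^{-1}(b,w)=\mathcal{D}_s(b,w)+O(\log|w|\,|w|^{-5/4})$, whereas the Taylor expansion $(b/w)^s=1+\tfrac{s}{w}(b-w)+\dots$, together with the $O(|b-w|^{-2})$ gap between $K^{-1}$ and its continuum limit, shows $\mathcal{D}_s(b,w)=\mathcal{D}_{s,b\approx w}(b,w)+O(s|w|^{-5/4}+|w|^{-3/2})$ there — the constant $\tfrac{s}{2\pi}\bigl[\tfrac1w-\tfrac{(\eta_b\eta_w)^2}{\bar w}\bigr]$ being precisely the order-$(b-w)^0$ part of that expansion. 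Thus $|D|=O(\log|w|\,|w|^{-5/4})$ on $\partial B(w,|w|^{3/4})$, and the maximum principle for single-valued discrete holomorphic functions propagates this bound to all of $B(w,|w|^{3/4})$, giving $K_s^{-1}(b,w)=\mathcal{D}_{s,b\approx w}(b,w)+O(\log|w|\,|w|^{-5/4})$.

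\emph{Main obstacle.} The delicate point is the near-diagonal estimate: the target $O(\log|w|\,|w|^{-5/4})$ is much sharper than the $O(|w|^{-3/4})$ that the maximum principle produces from the crude bound~\eqref{eq:estimate_on_K_sinv} alone on $\partial B(w,|w|^{3/4})$, so it cannot be obtained by purely local means — it requires the far-field expansion on the surrounding annulus together with the exact Taylor matching with $\mathcal{D}_{s,b\approx w}$, the logarithmic factor being intrinsic to the discrete analysis. The second, and decisive, difficulty is that every implied constant must be independent of $s$ near $0$, which is precisely what forces the systematic use of the uniform-in-$s$ forms of Lemmas~\ref{lemma:holomorphic_functions},~\ref{lemma:Ksinv_when_b_is_at_0} and, crucially,~\ref{lemma:C_is_uniform} — the last being the only ingredient not already available in~\cite{DubedatFamiliesOfCR}.
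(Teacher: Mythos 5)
Your plan — gluing the three model kernels into a parametrix $\widetilde{K}_s^{-1}$, correcting by $K_s^{-1}\ast(K_s\widetilde{K}_s^{-1}-\delta_w)$, and controlling the correction via the uniform-in-$s$ a priori bound of Lemma~\ref{lemma:C_is_uniform} — is in substance the proof the paper carries out, with $S=\widetilde{K}_s^{-1}$ and $T=K_sS-\Id$. The $|b|\geq 2|w|$ branch is in good shape (the exponent $|w|^{-s-1/2}$ is right), and your identification of $\tfrac{s}{2\pi}\bigl[\tfrac1w-\tfrac{(\eta_b\eta_w)^2}{\bar w}\bigr]$ as the degree-zero Taylor coefficient of $\mathcal{D}_s$ at $b=w$ is correct.

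There is, however, a real gap in the near-diagonal branch. You reduce the claim to a bound $|D|=O(\log|w|\,|w|^{-5/4})$ on the circle $|b-w|\asymp|w|^{3/4}$ and then push it inward by the maximum principle, but you supply no derivation of that boundary bound: the only explicit correction estimate you state is for $|b|\geq 2|w|$, whereas on the circle $|b|\asymp|w|$, so ``the construction above'' does not cover it. That circle estimate \emph{is} the content of the near-diagonal assertion. It has to come from bounding the Neumann-series correction $(K_s^{-1}T)(b,w)$ at scale $|b-w|\asymp|w|^{3/4}$, which is exactly what the paper does: split the $u$-sum into the regions of \eqref{eq:sum_over_U_123}, use $T(u,w)=O(|w|^{-5/4})$ on the seam $|u-w|\asymp|w|^{3/4}$ (eq.~\eqref{eq:estimate_on_T_one_puncture_3}), and observe that summing $|b-u|^{-1}$ over that annulus produces the logarithm. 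Once done, this sum controls $(K_s^{-1}T)(b,w)$ for \emph{all} $|b-w|\leq|w|^{3/4}$ at once — the boundary circle is in fact the worst case — so your additional maximum-principle step becomes superfluous. In short: to close your argument you must perform the same $K_s^{-1}T$ estimate the paper does in this intermediate regime, at which point the two proofs coincide and the max-principle detour can be dropped.
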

\begin{proof}
  We will use a similar construction as in the proof of~\cite[Lemma~14]{DubedatFamiliesOfCR}, but with some small adjustments needed for our purpose. Hereinafter, by $A\lesssim B$ we mean $|A|\leq C\cdot |B|$ for an absolute constant $C$. Define the parametrix $S(b,w)$ as follows
 \[
    S(b,w) = \begin{cases}
      \Kinv[s](b,w), & |b|> |w|^\frac12,\ |b-w|> |w|^{\frac34},\\
      \Kinv[s,b\approx w](b,w),& |b-w|\leq |w|^{\frac34},\\
      \Kinv[s,b\approx 0](b,w),& |b|\leq |w|^\frac12.
    \end{cases}
  \]
  Let $f$ be a branch in $\mathbb{C}\setminus l$ of a holomorphic $e^{2\pi i s}$ multiplicative function. Then, since $Kz\equiv Kz^2\equiv 0$, Taylor expansion gives
\begin{equation}
  \label{eq: estimate_K_f}
  |K_s f|(u)\lesssim \sup_{|z-u|\leq 1}|\partial^3f(z)|
  \end{equation}
  where we omit the restriction to the lattice from the notation. Similarly, if $f$ is anti-holomorphic and $\hat{f}(b)=\eta^2_bf(b)$, we have
  \begin{equation}
  \label{eq: estimate_K_f_bar}
  |K_s \hat{f}|(u)\lesssim \sup_{|z-u|\leq 1}|\bar{\partial}^3f(z)|.
  \end{equation}
  Define $T = K_sS - \Id$.  We estimate $T(u,w)$ as follows:
  \begin{itemize}
  \item In the region $|u|> |w|^\frac12+1,\ |u-w|>|w|^{\frac34} + 1$, we apply (\ref{eq: estimate_K_f}, \ref{eq: estimate_K_f_bar}) to get
\begin{equation}
   \label{eq:estimate_on_T_one_puncture_1}
 T(u,w) \lesssim
      \left( \frac{1}{|u|^3} + \frac{1}{|u-w|^3} \right)\frac{1}{|u-w|} \left( \left|\frac{u}{w}\right|^s\vee \left|\frac{w}{u}\right|^s\right)
\end{equation}
\item In the discs $|u|<w^\frac12-1$ and $|u-w|<|w|^\frac34-1$, we note that $(K_sS)(u)=\delta_{u,w}$, so
\begin{equation}
   \label{eq:estimate_on_T_one_puncture_2}
 T(u,w) = 0
\end{equation}
\item In the annular strip $|u-w| \in [|w|^{3/4}-1,|w|^{3/4}+1]$, we use that $$|(K_sS)(u,w)|=|(K_s\Kinv[s,b\approx w])(u,w)+K_s(S-\Kinv[s,b\approx w]))(u,w)|\leq 4\sup_{b\sim u}|\Kinv[s](b,w)-\Kinv[s,b\approx w](b,w)|.$$ To estimate the RHS, assuming without loss of generality that $l$ does not pass through the region, we use the Taylor expansion $$\Kinv[s](b,w)=\frac{1}{2\pi (b-w)}+\frac{\eta^2_w\eta^2_b}{2\pi (\bar{b} - \bar{w})}+\frac{s}{2\pi}\left[\frac{1}{w} - \frac{(\eta_b\eta_w)^2}{\bar{w}}\right]+O\left(\frac{|b-w|}{|w|^2}\right)$$ and combining this with \eqref{eq:K-1_asymp} yields
\begin{equation}
   \label{eq:estimate_on_T_one_puncture_3}
 T(u,w) = O\left(\frac{|u-w|}{|w|^2}\right)+O\left(\frac{1}{|u-w|^2}\right)=O\left(|w|^{-\frac54}\right)
\end{equation}
\item for $|u|\in [|w|^\frac12-1,|w|^\frac12+1]$, we apply a similar argument to $\Kinv[s,b\approx w]$ and $\Kinv[s,b\approx 0]$, using the estimate
$$
\Kinv[s](b,w)=-\frac{b^s}{2\pi w^{s+1}}-\frac{\eta^2_w\eta^2_b\bar{w}^{s-1}}{2\pi \bar{b}^s}+O\left(\frac{|b|^{s+1}}{|w|^{s+2}}\right)+O\left(\frac{|w|^{s-2}}{|b|^{s-1}}\right).
$$
Combining with \eqref{eq:asymp_of_fs}, we get
\begin{equation}
   \label{eq:estimate_on_T_one_puncture_4}
 T(u,w) = O\left(\frac{|u|^{s+1}}{|w|^{s+2}}+\frac{|w|^{s-2}}{|u|^{s-1}}+\frac{|u|^{|s|-1}}{|w|^{s+1}}+\frac{|u|^{|s|-1}}{|w|^{1-s}}\right)=O\left(|w|^{-\frac{3}{2}(1-s)}\right).
\end{equation}
  \end{itemize}
Next, we estimate $K_s^{-1}T$ by writing, using Lemma~\ref{lemma:C_is_uniform},
$$
|(K^{-1}_sT)(b,w)|\leq\sum_u|K^{-1}_s(b,u)||T(u,w)|\lesssim\sum_u\frac{1}{|b-u|}\left( \left|\frac{b}{u}\right|^s\vee \left|\frac{u}{b}\right|^s\right)|T(u,w)|.
$$
We split the sum into three parts, $\mathcal{U}_1=\{|u|<\frac{2}{3}|b|\}$, $\mathcal{U}_2=\{\frac{2}{3}|b|\leq |u|<\frac{4}{3}|b|\}$ and $\mathcal{U}_3=\{\frac43|b|\leq|u|\},$ where the prefactor is of order $|b|^{s-1}|u|^{-s}$, $|b-u|^{-1}$ and $|u|^{s-1}|b|^{-s}$ respectively. We thus have:
\begin{multline}
   \label{eq:sum_over_U_123}
 |(K^{-1}_sT)(b,w)|\lesssim |b|^{s-1}\sum_{u\in \mathcal{U}_1}|u|^{-s} |T(u,w)|\\+ \sum_{u\in \mathcal{U}_2}|b-u|^{-1}|T(u,w)|+|b|^{-s}\sum_{u\in \mathcal{U}_3}|u|^{s-1}|T(u,w)|.
\end{multline}
For the last term, applying \eqref{eq:estimate_on_T_one_puncture_1}, we get the bound $\lesssim |b|^{-s}\sum_{u\in \mathcal{U}_3}|u|^{s-1}|u|^{-4+s}|w|^{-s}\lesssim |w|^{-s}|b|^{-3+s}.$ For the first and the second term, consider first the case $|b|>2|w|$. For $u\in\mathcal{U}_2$, we have $|T(u,w)|\lesssim |b|^{s-4}|w|^{-s},$  and the sum again evaluates to $\lesssim |b|^{s-3}|w|^{-s}.$ Finally, for the first term in \eqref{eq:sum_over_U_123}, applying \eqref{eq:estimate_on_T_one_puncture_3},\eqref{eq:estimate_on_T_one_puncture_4} on the corresponding annular regions yields $|b|^{s-1}|w|^{-\frac12-s}$ and $|b|^{s-1}|w|^{s-1}$, of which the former is larger for small $s$. Finally, summing \eqref{eq:estimate_on_T_one_puncture_1} over $\mathcal{U}_1$ yields the contribution $\lesssim|b|^{s-1}|w|^{-\frac{3}{2}}$ to the first term in \eqref{eq:sum_over_U_123}. Put together, we get
\begin{equation}
\label{eq: K_s-1T_bound}
|(K^{-1}_sT)(b,w)|\lesssim|b|^{s-1}(|w|^{s-1}\vee |w|^{-\frac12-s}).
\end{equation}
In the regime $|b-w|\leq |w|^{3/4},$ the estimate of the first term in \eqref{eq:sum_over_U_123} is identical, while the second term, summing \eqref{eq:estimate_on_T_one_puncture_3} on the corresponding annual region will yield, in the worst case when $|b-w|\approx |w|^{\frac34}$, the bound$\lesssim |w|^{-\frac54}\log|w|$. Other contributions are smaller.
 Now, we can consider $S-K_s^{-1}T$ and note that $K_s(S-K_s^{-1}T)=K_sS-T=\Id$. Also, because of \eqref{eq: K_s-1T_bound}, it satisfies, for a fixed $w$, the bound $(S-K_s^{-1}T)(b,w)=O(|b|^{s-1}).$ Since there are no non-trivial discrete holomorphic $e^{2\pi s}$ multivalued functions with this asymptotics (cf.~\cite[Lemma~10]{DubedatFamiliesOfCR}), we conclude that
  \begin{equation}
    \label{eq:Kinv_via_S_one_puncture}
    K_s^{-1} = S - K_s^{-1}T.
  \end{equation}
 The result now follows from the bounds on $|(K^{-1}_sT)(b,w)|$ and the definition of $S$.
\end{proof}

In what follows, it will be important to get a better approximation to $K_s^{-1}(b,w)$ when $b$ is large compared to $w$ which in its turn may be arbitrarily close the the origin:

\begin{cor}
  \label{cor:big_b_asymptotics_revisited}
  Let $K_s^{-1}$ be as in Lemma~\ref{lemma:existense_of_Ksinv}. The following asymptotical relation holds uniformly as $s\to 0+$:
  \begin{equation}
    \label{eq:big_b_asymptotics_revisited}
    K_s^{-1}(b,w) = \frac{1}{2\pi (b-w)}b^sg_{-s}(w) + \frac{\eta^2_b}{2\pi (\bar{b} - \bar{w})} \bar{b}^{-s}f_{-s}(w) + O\left(\frac{1}{|b|^{5/4-s/2}}\right),\qquad |b|\geq 2|w|.
  \end{equation}
\end{cor}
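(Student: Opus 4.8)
The plan is to combine the second case of Lemma~\ref{lemma:near-diagonal-expansion} with the explicit data on $f_{-s},g_{-s}$ supplied by Lemmas~\ref{lemma:holomorphic_functions} and~\ref{lemma:Ksinv_when_w_is_at_0}. Write $\mathsf{M}_s(b,w)$ for the right-hand side of \eqref{eq:big_b_asymptotics_revisited} without the error term; equivalently, $\mathsf{M}_s$ is $\Kinv[s](b,w)$ with $w^{-s}$ replaced by $g_{-s}(w)$ and $\eta_w^2\bar w^{\,s}$ by $f_{-s}(w)$. Set $D(b,w):=K_s^{-1}(b,w)-\mathsf{M}_s(b,w)$; the goal is $|D(b,w)|=O(|b|^{s/2-5/4})$ for $|b|\ge 2|w|$, uniformly as $s\to0+$. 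For $|b|^{1/2}\le|w|\le|b|/2$ this is immediate: by Lemma~\ref{lemma:holomorphic_functions}, $g_{-s}(w)=w^{-s}+O(s|w|^{s-1})$ and $f_{-s}(w)=\eta_w^2\bar w^{\,s}+O(s|w|^{s-1})$, hence $\mathsf{M}_s(b,w)-\Kinv[s](b,w)=O(s|b|^{s-1}|w|^{s-1})$ (using $|b-w|\gtrsim|b|$), while Lemma~\ref{lemma:near-diagonal-expansion} gives $K_s^{-1}(b,w)-\Kinv[s](b,w)=O(|b|^{s-1}|w|^{-s-1/2})$; for $|w|\ge|b|^{1/2}$ and $s$ small both are $O(|b|^{s/2-5/4})$, and on the circle $|w|=|b|/2$ one even gets $|D(b,w)|=O(|b|^{-3/2})$.

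The remaining range $|w|\le|b|^{1/2}$ (indeed all of $|w|\le|b|/2$) I would handle by a maximum-principle argument in the variable $w$. With $b$ fixed, $g_{-s}(\cdot)$ and $f_{-s}(\cdot)$ are globally discrete holomorphic on white vertices with monodromy $e^{-2\pi i s}$ about the origin, so $\mathsf{M}_s(b,\cdot)$ differs from a discrete holomorphic function only through the two smooth factors $(b-w)^{-1}$, $(\bar b-\bar w)^{-1}$; consequently, by (the white-vertex, monodromy-$e^{-2\pi is}$ version of) Lemma~\ref{lemma:Laplacian_with_monodromy}, $D(b,\cdot)$ is $\Delta_{-s}$-harmonic on $\{|w|\le|b|/2\}$ up to a distributed defect together with the usual source contributions at the two white vertices $w_0,w_0^\dagger$ incident to the origin. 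Representing $D(b,w)$ through the discrete Green's function on the disc $\{|w|\le|b|/2\}$ — the white analogue of the maximum principle \eqref{eq:maximum_principle}, with the multivalued Green's function of Lemma~\ref{lemma:multivalued_Green_function} — writes $D(b,w)$ as the sum of the $\Delta_{-s}$-harmonic extension of the boundary values $\{D(b,w'):|w'|\asymp|b|/2\}$, a contribution of the origin source terms governed by $D(b,w_0),D(b,w_0^\dagger)$, and the Green-potential of the distributed defect.

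These three pieces would be controlled as follows. The boundary piece is $O(|b|^{-3/2})$ by the first paragraph. The origin terms: comparing the asymptotics of $K_s^{-1}(b,w_0)$ in Lemma~\ref{lemma:Ksinv_when_w_is_at_0} with the exact values $g_{-s}(w_0)=\Gamma(1-s)w_0^{-s}$, $f_{-s}(w_0)=\Gamma(1+s)\eta_{w_0}^2\bar w_0^{\,s}$ from \eqref{eq:fs_value_at_0}, the $\Gamma$-prefactors cancel and the leading parts of $K_s^{-1}(b,w_0)$ and $\mathsf{M}_s(b,w_0)$ agree, so $|D(b,w_0)|,|D(b,w_0^\dagger)|=O(|b|^{s-2})$ and their contribution is $O(|b|^{s-2}\log|b|)$. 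The distributed defect: a Leibniz-type expansion of $K_s\!\big[g_{-s}(\cdot)(b-\cdot)^{-1}\big]$ and of its conjugate analogue shows that at a vertex $b'$ with $|b'|\le|b|/2$ the defect is $O(|b|^{s-2}|b'|^{-s-2})+O(|b|^{s-4}|b'|^{-s})$ — the first term coming from the second difference of $g_{-s}$, which is $O(s|b'|^{-s-2})$ thanks to the exact discrete holomorphicity of $g_{-s}$ and $\partial^2 w^{-s}=O(s|w|^{-s-2})$, the second from the genuine $O(|b-b'|^{-4})$ defect of the Cauchy kernel. Since $\sum_{|b'|\le|b|/2}|b'|^{-s-2}=O(1)$ and $\sum_{|b'|\le|b|/2}|b'|^{-s}=O(|b|^{2-s})$, and the Green's function is $O(\log|b|)$, the Green-potential is $O(|b|^{s-2}\log|b|)+O(|b|^{-2}\log|b|)$. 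Adding up, $|D(b,w)|=O(|b|^{s-2}\log|b|)+O(|b|^{-3/2})=o(|b|^{s/2-5/4})$ for small $s$, which is the assertion.

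The main obstacle is the distributed-defect estimate: the candidate kernel $\mathsf{M}_s(b,\cdot)$ is not discrete holomorphic in $w$ at any lattice site, and a crude pointwise bound for the defect produces a Green-potential of size $|b|^{-1}\log|b|$, which exceeds $|b|^{s/2-5/4}$. Avoiding this requires genuinely using that $g_{-s},f_{-s}$ are \emph{exactly} discrete holomorphic — so that the leading defect sits on a second difference of a smooth power, winning a factor $s$ and, decisively, an extra $|b'|^{-2}$ that makes $\sum_{b'}|b'|^{-s-2}$ converge — together with the exact cancellation of $\Gamma$-factors between \eqref{eq:fs_value_at_0} and Lemma~\ref{lemma:Ksinv_when_w_is_at_0} that pins $D$ down near the origin. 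One could alternatively build a parametrix in the style of~\cite{DubedatFamiliesOfCR}, gluing $\mathsf{M}_s$ in the region $|b|\ge2|w|$ to the parametrix of Lemma~\ref{lemma:near-diagonal-expansion} elsewhere, but this meets the same difficulty when $b$ and $w$ are both close to the origin.
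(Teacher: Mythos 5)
Your proof and the paper's proof agree on the initial step: for $|w|\ge|b|^{1/2}$, both sides are compared to $\Kinv[s](b,w)$ using the asymptotics of $g_{-s},f_{-s}$ (Lemma~\ref{lemma:holomorphic_functions}) and the second case of Lemma~\ref{lemma:near-diagonal-expansion}, with error $O(|b|^{s/2-5/4})$ at $|w|\asymp|b|^{1/2}$. You also correctly identify the two ingredients needed for small $|w|$: that $g_{-s},f_{-s}$ are exactly discrete holomorphic, and the $\Gamma$-factor match between \eqref{eq:fs_value_at_0} and Lemma~\ref{lemma:Ksinv_when_w_is_at_0}.

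However, for $|w|\le|b|^{1/2}$ you diverge from the paper, and the path you take has a gap that the paper's proof avoids by a simple trick you missed. The paper does \emph{not} compare $K_s^{-1}(b,\cdot)$ with $\mathsf{M}_s(b,\cdot)$ in this regime. Instead it first replaces the $w$-dependent prefactors $1/(b-w)$ and $1/(\bar b-\bar w)$ by their values at $w=0$, namely $1/b$ and $1/\bar b$; for $|w|\le|b|^{1/2}$ this replacement costs only $O(|b|^{2s-3/2})$, which is well within budget. The resulting comparison function
\[
F(w)=K_s^{-1}(b,w)-\frac{g_{-s}(w)}{2\pi b^{1-s}}-\frac{\eta_b^2 f_{-s}(w)}{2\pi \bar b^{1+s}}
\]
is then \emph{exactly} white-discrete-holomorphic with monodromy $e^{-2\pi is}$ (constant-in-$w$ coefficients times discrete holomorphic functions), so the white-vertex version of the maximum principle \eqref{eq:maximum_principle} applies directly: boundary data $O(|b|^{s/2-5/4})$ at $|w|\asymp|b|^{1/2}$, origin data $O(|b|^{s-2})$ at $w_0,w_0^\dagger$, done. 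There is no distributed defect at all.

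Your version, by contrast, keeps $\mathsf{M}_s$ as the comparison function and therefore has to control the non-vanishing defect of $D=K_s^{-1}-\mathsf{M}_s$ throughout $\{|w|\le|b|/2\}$ via a Green's-function representation. This is precisely where your sketch is incomplete, for two concrete reasons. First, you appeal to the Green's function of Lemma~\ref{lemma:multivalued_Green_function}, but that is the full-plane Green's function; representing $D$ on a disc with boundary data requires the disc Green's function (or an equivalent decomposition), which you'd have to construct. Second, and more seriously, the defect estimate $O(|b|^{s-2}|b'|^{-s-2})$ is not justified by your Leibniz sketch: the natural estimate, taking the prefactor $\approx b^{s-1}$ times the second difference of $g_{-s}\approx s|b'|^{-s-2}$, gives $O(s|b|^{s-1}|b'|^{-s-2})$, whose Green-potential $\sim s|b|^{s-1}\log|b|$ \emph{exceeds} the target $|b|^{s/2-5/4}$ for fixed $s>0$ and large $|b|$. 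Getting the extra power of $|b|^{-1}$ you wrote down would require a more careful cancellation (exploiting that the sum $\sum_{w\sim b'}g_{-s}(w)(w-b')K_s(w,b')$ is also small), which you do not exhibit. You flag this as ``the main obstacle,'' and indeed it is; the paper's prefactor substitution eliminates the obstacle rather than fighting it.

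In short: correct skeleton and correct choice of tools (maximum principle plus origin values), but you missed the one-line simplification --- freeze $1/(b-w)$ at $w=0$ --- that makes the comparison function genuinely discrete holomorphic, and the workaround you propose is not carried out to a point where one can verify that it closes.
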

\begin{proof}
We first note that the relation holds true for $|w|\geq |b|^\frac12$. Indeed, using first \eqref{eq:asymp_of_fs} and then \eqref{eq:near-diagonal-expansion}, we can write
\[
\frac{1}{2\pi (b-w)}b^sg_{-s}(w) + \frac{\eta^2_b}{2\pi (\bar{b} - \bar{w})} \bar{b}^{-s}f_{-s}(w)=\Kinv(b,w)+O(|b|^{2|s|-2})=K_s^{-1}(b,w)+O(|b|^{2s-2}+|b|^{\frac{s}{2}-\frac{5}{4}}).
\]
On the other hand, when $|w|\leq |b|^\frac{1}{2}$, we have
\[
\frac{1}{2\pi (b-w)}b^sg_{-s}(w) + \frac{\eta^2_b}{2\pi (\bar{b} - \bar{w})} \bar{b}^{-s}f_{-s}(w)=\frac{g_{-s}(w)}{2\pi b^{1-s}} + \frac{\eta^2_bf_{-s}(w)}{2\pi \bar{b}^{1+s}}+O(b^{-\frac{3}{2}+2s}).
\]
Therefore, it is enough to show that $|F(w)| = O(|b|^{-5/4+s/2})$ for $|w|\leq |b|^\frac{1}{2}$, where the discrete holomorphic function $F$ is given by
  \[
    F(w) = K_s^{-1}(b,w) - \frac{g_{-s}(w)}{2\pi b^{1-s}} - \frac{\eta^2_bf_{-s}(w)}{2\pi \bar{b}^{1+s}}.
  \]

To this end, we apply the maximum principle \eqref{eq:maximum_principle} and note that $|F(w)| = O(|b|^{-5/4+s/2})$ for $|w|\in [|b|^\frac12-1,|b|^\frac12]$ by the above computations, while by \eqref{eq:Ksinv_when_w_is_at_0} and \eqref{eq:fs_value_at_0} we have
 \begin{multline*}
  K_s^{-1}(b,w_0) = \frac{\Gamma(1-s)w^{-s}_0}{2\pi b^{1-s}} + \Gamma(1+s)\eta^2_{w_0}\bar{w}_0^s\frac{\eta^2_b}{2\pi \bar{b}^{1+s}} +O(|b|^{-2+s})+ O(|b|^{s-2})\\
  = \frac{g_{-s}(w_0)}{2\pi b^{1-s}}+\frac{\eta^2_bf_{-s}(w_0)}{2\pi \bar{b}^{1+s}}+O(|b|^{-2+s}).
 \end{multline*}
 Therefore, $|F(w_0)|=O(|b|^{-2+s})$, and similarly $|F(w^\dagger_0)|=O(|b|^{-2+s})$, and applying \eqref{eq:maximum_principle} completes the proof.
\end{proof}

\begin{rem}
The last step of the above proof hinges on an identity between the coefficient in front of the leading terms of the asymptotics of $K_s^{-1}(\cdot,w_0)$ and the values of $g_{-s}$ and $f_{-s}$ at $w_0$. We remark here that this identity can be recovered by considering the sums $\sum_{(bw)^\ast\in \gamma}K_s^{-1}(b,w_0)g_{-s}(w)$ and  $\sum_{(bw)^\ast\in \gamma}K_s^{-1}(b,w_0)f_{-s}(w)$ as in the proof of Lemma \ref{lemma:C_is_uniform}.
\end{rem}

Given $s>0$, we define $K_s^{-1}(b,w;v)=K_s^{-1}(b-v,w-v)$ the kernel with the puncture at $v$ instead of $0$. We also set
\[
  K_{-s}^{-1}(b,w;v) = (\eta_b\eta_w)^2\overline{K_s^{-1}(b,w;v)}.
\]

\subsection{Full-plane operator with monodromy around two points}
\label{subsec:coupling function with a monodromy: two punctures}

We continue using the setup of the previous section with the following modification: we shift the grid back, so it is $\ZZ^2$ now, and we make two punctures, $v$ and $\bar{v}$, where $v\in \CC^+$.
Let $\rho: \pi_1(\widehat{\CC}\smm\{ v,\bar{v} \})\to \TT$ be the representation such that a small counterclockwise oriented loop around $v$ is sent to $e^{2\pi i s}$ by $\rho$. Let $l$ be a straight line (seen as a path on the dual lattice) connecting $v$ and $\bar{v}$ and oriented from $v$ to $\bar{v}$. We define $K_{s,-s}$ by~\eqref{eq:def_of_Ks} with this path instead of the infinite path from the origin. We want to construct and estimate $K_{s,-s}^{-1}$. We begin by introducing the expected continuous limit:
\begin{equation}
  \label{eq:def_of_S_C_rho}
  \Dd_{s,-s}(b,w) = \Dd_{s}(b,w;v,\bar{v})\coloneq \frac{1}{2\pi (b-w)} \left( \frac{b-v}{b-\bar{v}} \right)^s\left( \frac{w-\bar{v}}{w-v} \right)^s + \frac{(\eta_b\eta_w)^2}{2\pi (\bar{b} - \bar{w})} \left( \frac{\bar{b} - v}{\bar{b} - \bar{v}} \right)^s \left( \frac{\bar{w} - \bar{v}}{\bar{w} - v} \right)^s.
\end{equation}
In particular, we denote $\Kinv[0](b,w)=\frac{1}{2\pi (b-w)} + \frac{(\eta_b\eta_w)^2}{2\pi (\bar{b} - \bar{w})}.$

We proceed with some local asymptotic of $\Dd_{s,-s}(b,w)$; all the asymptotics below are unfiorm in $s$ for $s\in(0,1/10)$. When $b-w$ is small (say, $2|b-w|\leq \min(|w-v|,|w-\bar{v}|)$) we record

\begin{equation}
  \label{eq:S_C_b_close_to_w}
  \Dd_{s,-s}(b,w) = \Kinv[0](b,w)+\Cbw_{b\approx w}+\eta^2_b\cdot\Cbw^\star_{b\approx w}+ O\left( |b-w|\cdot \left( \frac{1}{|w-v|^2}+ \frac{1}{|w-\bar{v}|^2} \right) \right),
\end{equation}
where the coefficients $\Cbw_{b\approx w}=\Cbw_{b\approx w}(w,v,s)$ and $\Cbw^\star_{b\approx w}=\Cbw^\star_{b\approx w}(w,v,s)$ are given by
\[
\Cbw_{b\approx w}=\frac{si\Im v}{\pi} \left[ \frac{1}{(w-\Re v)^2 + (\Im v)^2}\right],\qquad \Cbw^\star_{b\approx w}= \frac{si\Im v}{\pi}\left[\frac{\eta_w^2}{(\bar{w} - \Re v)^2 + (\Im v)^2} \right].
\]
When $b$ is close to $v$, say, $4|b-v|\leq \min(|w-v|,|v-\bar{v}|)$), we have
\begin{multline}
  \label{eq:S_C_b_close_to_v}
  \Dd_{s,-s}(b,w) =\Cbw_{b\approx v}\cdot(b-v)^s+\Cbw^\star_{b\approx v}\cdot\eta_b^2\cdot(\bar{b}-\bar{v})^{-s} \\ +O\left(\frac{|b-v|^{1+|s|}(\Im v)^{|s|}|w-\bar{v}|^{|s|}}{|w-v|^{1-|s|}}\left(\frac{1}{|w-v|}+\frac{1}{\Im v}\right)\right),
\end{multline}
(assuming all marked points are at distance at least $\frac12$ from each other), where
\[
\Cbw_{b\approx v}=-\frac{1}{2\pi}\frac{(w-\bar{v})^s}{(v-\bar{v})^s(w-v)^{s+1}},\qquad
\Cbw^\star_{b\approx v}=-\frac{\eta^2_w}{2\pi}\frac{(\bar{v}-v)^s(\bar{w}-\bar{v})^{s-1}}{(\bar{w}-v)^{s}}.
\]
Note that the above estimate holds even when $w$ is close to $\bar{v}$. A similar estimate holds as $b$ is close to $\bar{v}$, with a substitution $s\mapsto -s$ and $v\leftrightarrow \bar{v}$.

We now construct an approximate inverse $S_{s,-s}$. The general idea is to put $S_{s,-s}(b,w)=\Kinv[s,-s](b,w)$ when the marked points $b,w,v,\bar{v}$ are away from each other, and whenever $b$ (respectively, $w$; both $b,w$) is close to one of the other marked points, use the best available fit out of our toolbox of functions that are discrete holomorphic in $b$ (respectively, in $w$; in both $b,w$). This amounts to replacing $\Dd_0(b,w;v)$, $(b-v)^s$, $(w-v)^s$,  $\eta_b^2(\bar{b}-\bar{v})^{-s}$, etc. in \eqref{eq:def_of_S_C_rho}, \eqref{eq:S_C_b_close_to_v}  by $K^{-1}(b,w)$, $g_s(b-v)$, $g_s(w-b)$, $f_s(b-v)$, etc., respectively. Recall the definition of $\chi_{s,l}(w,b)$ given in~\eqref{eq:def_of_chi_extended}.

We start with more notation. Denote
\[
\Cbw_{b\approx w\approx v}=\frac{s i}{4\pi\Im v},\qquad \Cbw^\star_{b\approx w\approx v}=\frac{s  i}{4\pi\Im v}.
\]
\[
\Cbw_{b\approx v,w\approx \bar{v}}=\frac{e^{-i\pi s}}{2\pi }(2\Im v)^{-2s-1},\qquad
\Cbw^\star_{b\approx v,w\approx \bar{v}}=-\frac{e^{\pi i s}}{2\pi}(2\Im v)^{2s-1}.
\] We put
\begin{align}
      \Kinv[b\approx v](b,w)&:=\Cbw_{b\approx v}\cdot g_s(b-v)+\Cbw^\star_{b\approx v}\cdot f_s(b-v);\\
      \Kinv[b\approx w](b,w)&:=\chi_{s,l}(w,b)^{-1}\left(K^{-1}(b,w)+\Cbw_{b\approx w}+\eta^2_b\Cbw^\star_{b\approx w}\right);\\
      \Kinv[b\approx v,w\approx \bar{v}](b,w)&:=\Cbw_{b\approx v,w\approx \bar{v}}\cdot g_{s}(w-\bar{v})g_s(b-v)+\Cbw^\star_{b\approx v,w\approx \bar{v}}\cdot f_{s}(w-\bar{v})f_s (b-v);\\
      \Kinv[b\approx w\approx v](b,w)&:=K_s^{-1}(b,w;v)
      +\Cbw_{b\approx w\approx v}\cdot g_{-s}(w-v)g_s(b-v)+\Cbw^\star_{b\approx w\approx v}\cdot f_{-s}(w-v)f_s (b-v).
\end{align}
Expressions for other regimes are obtained from the above using the symmetries
\[
\Kinv[s](b,w,v,\bar{v})=\Kinv[-s](b,w;\bar{v},v),\qquad \Kinv[s](b,w,v,\bar{v})=-\Kinv[-s](w,b,v,\bar{v}).
\]
Thus, we define $\Kinv[b\approx \bar{v}](b,w)$ and $\Kinv[b\approx w\approx \bar{v}](b,w)$ by substituting $s\mapsto -s$ and $v\leftrightarrow \bar{v}$ into the expressions for $\Kinv[b\approx v](b,w)$ and $\Kinv[b\approx w\approx v](b,w)$ respectively. Similarly, we define $\Kinv[w\approx v]$, $\Kinv[w\approx \bar{v}]$ and $\Kinv[b\approx \bar{v},w\approx v]$ by substituting $s\mapsto -s$, $b\leftrightarrow w$, and changing the sign in the expressions for $\Kinv[b\approx v]$, $\Kinv[b\approx \bar{v}]$ and $\Kinv[b\approx v,w\approx \bar{v}]$, respectively.

We are ready to define the parametrix $S_{s,-s}(b,w)$. Let $a>0$ be a small parameter (say, $a<\frac{1}{100}$) and define $r_1<r_2<r_3$ by
\begin{equation}
  \label{eq:def_of_r123}
  r_1 = |\Im v|^{1-10a},\quad r_2 = |\Im v|^{1-3a},\quad r_3 = |\Im v|^{1-a}.
\end{equation}
To define $S_{s,-s}(\cdot, w)$ we modify $\Dd_{s,-s}(\cdot, w)$ in accordance with the position of $w$.

\begin{defin}
If $|w-v|> r_2$ and $|w-\bar{v}|>r_2$, we put
\begin{equation*}
S_{s,-s}(b,w)=\begin{cases}\Kinv[b\approx v](b,w),& |b-v|\leq r_1;\\
\Kinv[b\approx \bar{v}](b,w),& |b-\bar{v}|\leq r_1;\\
\Kinv[b\approx w](b,w),& |b-w|\leq r_1;\\
\Kinv[s](b,w,v,\bar{v}),& \text{else}.\end{cases}
\end{equation*}
If $|w-v|\leq r_2$ or $|w-\bar{v}|\leq r_2$, we put respectively
\begin{equation*}
S_{s,-s}(b,w)=\begin{cases}
\Kinv[b\approx w\approx v](b,w),& |b-v|\leq r_3;\\
\Kinv[b\approx \bar{v},w\approx v](b,w),&  |b-\bar{v}|\leq r_3;\\
\Kinv[w\approx v](b,w),& \text{else,}\end{cases}\quad\text{or}\quad S_{s,-s}(b,w)=\begin{cases}
\Kinv[b\approx w\approx \bar{v}](b,w),& \quad |b-\bar{v}|\leq r_3;\\
\Kinv[b\approx v,w\approx \bar{v}](b,w),& \quad |b-v|\leq r_3;\\
\Kinv[w\approx \bar{v}](b,w),& \quad \text{else.}\end{cases}
\end{equation*}
\end{defin}

Depending on the position of $w$, define three annular rings $\Aring_{v}$, $\Aring_{\bar{v}}$, $\Aring_w$ as follows: if $|w-v|\wedge |w-\bar{v}|\geq r_2 $, put
  \[
\Aring_{z}=\{u:r_1-1\leq |u-z|\leq r_1+1\},\quad z\in\{v,\bar{v},w\},
  \]
while if $|w-v|<r_2$ (respectively,  $|w-\bar{v}|<r_2$), we put $\Aring_w=\Aring_v$ (respectively, $\Aring_w=\Aring_{\bar{v}}$) and
\[
\Aring_{z}=\{u:r_3-1\leq |u-z|\leq r_3+1\},\quad z\in\{v,\bar{v}\}.
\]
We denote by $\Aring$ the unbounded component of the complement to these three rings.

In the estimates below, by $\sss$ we mean any expression of the form $\alpha s+\beta a s$, with absolute constants $\alpha,\beta$. As we are interested in the regime $s\to 0$, such expressions should be thought of as negligible; eventually, we will pick $a$ small enough and then choose $s_0>0$ so that all $p(s)$ are smaller than $a/2$ for $|s|<s_0$. Thus, for example, we will write $r_i^s,$ for $i=1,2,3,$ as $(\Im v)^{\sss}$.
\begin{lemma}
For all $b,w$, we have a uniform estimate
\begin{equation}
\label{eq: unif_estimate_on_S}
|S_{s,-s}(b,w)|\lesssim |b-w|^{-1}(\Im v)^{\sss}.
\end{equation}
\end{lemma}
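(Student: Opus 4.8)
The plan is to verify \eqref{eq: unif_estimate_on_S} region by region, following the piecewise definition of $S_{s,-s}$, and in each region to bound the defining expression using the a priori estimates already at our disposal: the expansion \eqref{eq:K-1_asymp} of $K^{-1}$, the asymptotics \eqref{eq:asymp_of_fs} and exact values \eqref{eq:fs_value_at_0} of $f_{\pm s},g_{\pm s}$, and the uniform-in-$s$ bound \eqref{eq:estimate_on_K_sinv} on $K_s^{-1}$ provided by Lemmas~\ref{lemma:existense_of_Ksinv} and~\ref{lemma:C_is_uniform}. The elementary fact used throughout is that if a point $b$ lies at distance at least $r_1=|\Im v|^{1-10a}$ (respectively $r_3=|\Im v|^{1-a}$) from each of $v,\bar v,w$, then
\[
\left|\frac{b-v}{b-\bar v}\right|^{\pm 1}\le 1+\frac{2|\Im v|}{r_1}\lesssim |\Im v|^{10a},
\]
and likewise with $w$ in place of $b$; raising such quantities to the power $s$ produces precisely factors of the form $|\Im v|^{\sss}$. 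The same mechanism governs the stray powers of the distances between marked points that are hidden in the coefficients $\Cbw_\bullet$, using the hierarchy $1\ll r_1\ll r_2\ll r_3\ll |\Im v|$ so that, for instance, $|w-v|^{-s}\le r_2^{-s}=|\Im v|^{\sss}$ whenever $|w-v|\ge r_2$.

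In detail: in the generic region $S_{s,-s}=\Kinv[s](b,w,v,\bar v)=\Dd_{s,-s}(b,w)$, each of the two summands in \eqref{eq:def_of_S_C_rho} equals $(2\pi|b-w|)^{-1}$ times a product of four $(\text{ratio})^{\pm s}$ factors, each controlled by $|\Im v|^{\sss}$ via the display above (when $|w-v|$ or $|w-\bar v|$ drops below $r_2$ one argues instead from the symmetric region $\Kinv[w\approx v]$ or $\Kinv[w\approx\bar v]$, using $|b-w|\approx|w-v|$, resp.\ $|b-w|\approx|w-\bar v|$, together with the $f_{\pm s},g_{\pm s}$ bounds). In the regions $S_{s,-s}=\Kinv[b\approx w](b,w)$, the factor $\chi_{s,l}(w,b)^{-1}$ has modulus $1$, $|K^{-1}(b,w)|\lesssim|b-w|^{-1}$ by \eqref{eq:K-1_asymp}, and by \eqref{eq:S_C_b_close_to_w} the corrections $\Cbw_{b\approx w},\Cbw^\star_{b\approx w}$ are $O(s|\Im v|\,\dist(w,\{v,\bar v\})^{-2})$, which in the relevant range of $|b-w|$ is dominated by $|b-w|^{-1}\ge r_1^{-1}$. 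In the remaining regions $b$ or $w$ is near a puncture and $S_{s,-s}$ is one of $\Kinv[b\approx v]$, $\Kinv[b\approx v,w\approx\bar v]$, $\Kinv[b\approx w\approx v]$ or a symmetric variant: there $|b-w|$ is comparable to the distance between the two relevant marked points, the coefficient $\Cbw_\bullet$ carries a matching power $|\Im v|^{-1}$ (or $|\Im v|^{-1\mp 2s}$), the factors $f_{\pm s},g_{\pm s}$ evaluated at arguments of modulus $\le r_3$ contribute at most $r_3^{\pm s}=|\Im v|^{\sss}$ (or $O(1)$ at the puncture-adjacent vertices by \eqref{eq:fs_value_at_0}), and the term $K_s^{-1}(b,w;v)$ occurring inside $\Kinv[b\approx w\approx v]$ is handled by \eqref{eq:estimate_on_K_sinv} with the ratio $|b-v|/|w-v|$ bounded by $r_3=|\Im v|^{1-a}$. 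Summing over the boundedly many cases and absorbing the boundedly many factors $|\Im v|^{\sss}$ into a single such factor yields \eqref{eq: unif_estimate_on_S}.

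I expect the only genuine bookkeeping difficulty to be the mixed regimes where $w$ lies within $r_2$ of a puncture while $b$ is generic, so that $|b-w|\approx|w-v|$ can be as small as $r_2$: there the coefficients $\Cbw_\bullet$ produce negative powers of $|w-v|$ and of $|\Im v|$, and one must check that, after combining with the $g_{\pm s},f_{\pm s}$ factors, every surviving power is either a positive power of $|b-w|^{-1}$ or of the form $|w-v|^{-\sss}$ or $|\Im v|^{-\sss}$; this is exactly where the precise exponents $1-10a$, $1-3a$, $1-a$ in \eqref{eq:def_of_r123} enter. All the other steps are routine substitutions.
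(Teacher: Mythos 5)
Your proof is correct and follows essentially the same approach as the paper's, which is considerably more compressed: the paper simply asserts that $S_{s,-s}$ is uniformly comparable to $\Kinv[s](b,w,v,\bar v)=\Dd_{s,-s}(b,w)$ and then bounds the latter directly, noting that each ratio factor $\left|\frac{b-v}{b-\bar v}\right|^{\pm s}$ (and its three analogues) is uniformly bounded when $|b|>2\Im v$ and dominated by $(3\sqrt 2\,\Im v)^s$ otherwise. Your region-by-region verification is a more explicit unwinding of the paper's ``uniformly comparable'' claim, using the same ingredients — the asymptotics \eqref{eq:asymp_of_fs}, \eqref{eq:fs_value_at_0} of $f_{\pm s},g_{\pm s}$ and the uniform-in-$s$ bound on $K_s^{-1}$ from Lemma~\ref{lemma:C_is_uniform}.
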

\begin{proof}
It is enough to prove the result for $S_{s,-s}(b,w)$ replaced with $\Kinv[s](b,w,v,\bar{v})$, as the two are uniformly comparable to each other. For the latter, note that $\left|\frac{b-v}{b-\bar{v}}\right|^s$ is uniformly bounded when $|b|>2\Im v$ and dominated by $(3\sqrt{2} \Im v)^s$ else, and similarly for $\left|\frac{\bar{b}-v}{\bar{b}-\bar{v}}\right|^s, \left|\frac{w-\bar{v}}{w-v}\right|^s, \left|\frac{\bar{w}
-\bar{v}}{\bar{w}-v}\right|^s$.
\end{proof}

\begin{lemma}
\label{lem:bounds_on_T}
Let $T = K_{s,-s} S_{s,-s} - \Id$. Then, we have
\begin{equation}
\label{eq:T_bound_w_far}
|T(u,w)|\lesssim \begin{cases}
\left(|u-w|^{-3} +|u-v|^{-3} + |u-\bar{v}|^{-3}\right)|u-w|^{-1}(\Im v)^{\sss},& u\in \Aring;\\
(\Im v)^{-1-a+\sss},&u\in \Aring_w\cup \Aring_v\cup \Aring_{\bar{v}};\end{cases}
\end{equation}
with the implied constant uniform over $w$ and over $s$ small enough. Also,  \[T(u,w)= 0,\qquad u\notin(\Aring\cup\Aring_v\cup\Aring_{\bar{v}}\cup\Aring_w).\]
\end{lemma}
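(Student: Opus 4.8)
The proof proceeds by examining $T(u,w) = (K_{s,-s}S_{s,-s})(u,w) - \delta_{u,w}$ region by region, depending on where $u$ lies relative to the rings $\Aring, \Aring_v, \Aring_{\bar v}, \Aring_w$, exactly as in the one-puncture case (Lemma \ref{lemma:near-diagonal-expansion}), but now with the bookkeeping complicated by the presence of two punctures and by the two distance scales $r_1, r_2, r_3$. First I would record the easy vanishing statement: away from all four rings, the parametrix $S_{s,-s}(\cdot,w)$ is one of the explicitly chosen model functions, each of which is discrete holomorphic in $b$ with the correct monodromy (being built out of $K^{-1}$, the discrete exponentials $g_{\pm s}, f_{\pm s}$, and the single-puncture kernel $K_s^{-1}(\cdot,\cdot;v)$), and in the region where $u$ is far from $w$ it is not equal to $\delta_{u,w}$; hence $(K_{s,-s}S_{s,-s})(u,w)=0$ away from the rings except inside the ``inner'' disc containing $w$, where $S_{s,-s}(\cdot,w)$ is exactly a genuine inverse and gives $\delta_{u,w}$. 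This yields $T(u,w)=0$ for $u \notin (\Aring\cup\Aring_v\cup\Aring_{\bar v}\cup\Aring_w)$.

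Next, for $u\in\Aring$ (i.e.\ $u$ far from all of $v,\bar v,w$, but still in the region where $S_{s,-s}=\Dd_{s,-s}$ or one of its two-puncture variants), I would apply the Taylor-expansion bounds \eqref{eq: estimate_K_f} and \eqref{eq: estimate_K_f_bar}: since $\Dd_{s,-s}$ is an honest holomorphic/antiholomorphic $e^{2\pi i s}$-multivalued function in $b$, with $b$-derivatives controlled by $\bigl(|b-v|^{-1}\wedge|b-\bar v|^{-1}\bigr)$-type factors and the main singular factor $|b-w|^{-1}$, differentiating three times produces the stated bound $\bigl(|u-w|^{-3}+|u-v|^{-3}+|u-\bar v|^{-3}\bigr)|u-w|^{-1}$, with the monodromy factor contributing only $(\Im v)^{\sss}$ by the argument of the previous lemma. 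This is the routine part.

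The main work — and the main obstacle — is estimating $T$ on the ``matching rings'' $\Aring_w\cup\Aring_v\cup\Aring_{\bar v}$, where the parametrix switches between two different model functions. Here I would write $T(u,w)=(K_{s,-s}S_{s,-s})(u,w)$ as a sum $K_{s,-s}(S_{s,-s}-S^{\mathrm{out}}) + (K_{s,-s}S^{\mathrm{out}}-\delta)$ where $S^{\mathrm{out}}$ is the model function used just outside the ring; the second term is handled as in the $\Aring$-case, and the first term is bounded by $4\max_{b\sim u}|S_{s,-s}(b,w)-S^{\mathrm{out}}(b,w)|$, i.e.\ by the mismatch between two adjacent parametrices on the ring. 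This mismatch is controlled precisely by the local asymptotic expansions \eqref{eq:S_C_b_close_to_w}, \eqref{eq:S_C_b_close_to_v} (and their $s\mapsto-s$, $v\leftrightarrow\bar v$ analogues), together with the expansions \eqref{eq:asymp_of_fs}, \eqref{eq:fs_value_at_0} of $f_s,g_s$ and \eqref{eq:near-diagonal-expansion}, \eqref{eq:big_b_asymptotics_revisited} for $K_s^{-1}(\cdot,\cdot;v)$: the leading terms of the two model functions agree by design (this is exactly why the coefficients $\Cbw_{b\approx v}$, $\Cbw_{b\approx w\approx v}$, etc.\ were chosen as they were), so the difference is an error term. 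On the ring $\Aring_z$ at radius $r_i$ one then plugs in $|b-z|\approx r_i$ and the relevant larger distances $\approx \Im v$; e.g.\ on $\Aring_v$ at radius $r_1=(\Im v)^{1-10a}$ the error from \eqref{eq:S_C_b_close_to_v} is of order $|b-v|^{1+|s|}(\Im v)^{\sss-2}\approx (\Im v)^{(1-10a)(1+|s|)-2+\sss}=(\Im v)^{-1-10a+\sss}$, while on the rings at radius $r_3=(\Im v)^{1-a}$ one gets $(\Im v)^{-1-a+\sss}$, and the worst of these is $(\Im v)^{-1-a+\sss}$ after renaming $a$; the careful choice of the three radii $r_1<r_2<r_3$ is precisely what makes every ring contribute no worse than $(\Im v)^{-1-a+\sss}$. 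The delicate point, which I expect to consume most of the proof, is to verify that on the ``double-matching'' rings $\Aring_v,\Aram{\Aring_{\bar v}}$ in the regime $|w-v|\le r_2$ (where three model functions $\Kinv[b\approx w\approx v]$, $\Kinv[b\approx \bar v,w\approx v]$, $\Kinv[w\approx v]$ meet, and $w$ itself is at an intermediate scale between $r_1$ and $r_3$) all the leading-order terms still cancel and the residual error is of the claimed size, uniformly in $s\to 0+$; here one must use the uniform-in-$s$ control of $f_s,g_s,K_s^{-1}(\cdot,\cdot;v)$ from Lemmas \ref{lemma:holomorphic_functions}, \ref{lemma:C_is_uniform}, \ref{lemma:near-diagonal-expansion} and track how each factor $(\Im v)^{\pm s}$, $r_i^{\pm s}$, $(w-v)^{\pm s}$ contributes to $\sss$. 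Collecting the bounds over all rings and the region $\Aring$ gives \eqref{eq:T_bound_w_far}, uniformly over $w$ and over small $s$.
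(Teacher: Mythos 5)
Your proposal follows essentially the same route as the paper: the vanishing statement is immediate from the definition of $S_{s,-s}$; on $\Aring$ one applies the Taylor bounds \eqref{eq: estimate_K_f}, \eqref{eq: estimate_K_f_bar} to the holomorphic/antiholomorphic parts of the model function together with \eqref{eq: unif_estimate_on_S}; and on each matching ring one reduces $|T(u,w)|$ to $4\sup_{b\sim u}|\Kinv[\mathrm{in}](b,w)-\Kinv[\mathrm{out}](b,w)|$ (the paper subtracts $\Kinv[\mathrm{in}]$, which is \emph{exactly} discrete holomorphic, rather than your $S^{\mathrm{out}}$, but this is a cosmetic difference), then invokes the local expansions \eqref{eq:S_C_b_close_to_w}, \eqref{eq:S_C_b_close_to_v}, \eqref{eq:asymp_of_fs}, \eqref{eq:near-diagonal-expansion}, \eqref{eq:big_b_asymptotics_revisited} to estimate the mismatch. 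One small slip in your sample computation on $\Aring_v$ in the regime $|w-v|\wedge|w-\bar v|\geq r_2$: the $O(\cdot)$ term in \eqref{eq:S_C_b_close_to_v} depends on $|w-v|$, whose worst case is $|w-v|\approx r_2$ rather than $\approx\Im v$, so the correct bound there is $r_1^{1+s}r_2^{-2+s}(\Im v)^{2s}\lesssim(\Im v)^{-1-4a+\sss}$, not your $(\Im v)^{-1-10a+\sss}$; this is still comfortably below $(\Im v)^{-1-a+\sss}$, so your conclusion stands.
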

\begin{proof}
The last claim is immediate from definitions; for the rest, we proceed as in the one-puncture case. In the case $u\in \Aring$, we have that $S_{s,-s}$ is equal to $\Kinv[s](b,w,v,\bar{v}),$ $\Kinv[w\approx v](b,w)$, or $\Kinv[w\approx \bar{v}](b,w)$. All three of them have the form $f(b,w)+\eta^2_b\tilde{f}(b,w)$, with $f$ holomorphic and $\tilde{f}$ anti-holomorphic; we apply \eqref{eq: estimate_K_f}, \eqref{eq: estimate_K_f_bar} respectively. This yields a bound of the form $|T(u,w)|\lesssim P|f|+\tilde{P}|\tilde{f}|$, where $P,\tilde{P}$ are homogeneous third degree polynomials in $|u-v|^{-1},|u-\bar{v}|^{-1}$, and possibly $|u-w|^{-1}$, with coefficients bounded as $s\to 0$. We conclude by noticing that $f,\tilde{f}$ satisfy the same bound as in \eqref{eq: unif_estimate_on_S}.

For $u\in \Aring_{v},\Aring_{\bar{v}},\Aring_{w},$ we use the bound $|T(u,w)|\leq 4 \sup_{b\sim u}|\Kinv[in](b,w)-\Kinv[out](b,w)|$, where $\Kinv[in](\cdot,w),\Kinv[out](\cdot,w)$ are the expressions for $S_{s,-s}$ inside and outside of the corresponding ring. The latter estimate involves two contributions: the error term in the continuous asymptotic expansions such as \eqref{eq:S_C_b_close_to_v}, \eqref{eq:S_C_b_close_to_w}, and ones coming from the error terms in the asymptotics \eqref{eq:K-1_asymp}, \eqref{eq:asymp_of_fs}, \eqref{eq:big_b_asymptotics_revisited} of discrete holomorphic functions. We proceed case by case:
\begin{itemize}
\item When $|w-v|\wedge |w-\bar{v}|\geq r_2$ and $u\in \Aring_w$, it follows from \eqref{eq:S_C_b_close_to_w} and \eqref{eq:K-1_asymp} that
\[
|T(u,w)|\lesssim r^{-2}_1+r_1r^{-2}_2\lesssim(\Im v)^{-1-4a}.
\]
\item When $|w-v|\wedge |w-\bar{v}|\geq r_2$ and $u\in \Aring_v$, we apply \eqref{eq:S_C_b_close_to_v} and \eqref{eq:asymp_of_fs}; observing that the $O(\cdot)$ term in \eqref{eq:S_C_b_close_to_v} is the largest when $|w-v|\approx r_2$, we get
\[
|T(u,w)|\lesssim r_2^{s-1}r_1^{s-1}+r^{1+s}_1r_2^{-2+s}(\Im v)^{2s}\lesssim (\Im v)^{-1-4a+\sss}.
\]
\item When $|w-\bar{v}|< r_2$ and $u\in \Aring_{v}$, we have by \eqref{eq:asymp_of_fs}
\begin{multline*}
\Kinv[in](b,w)=\Cbw_{b\approx v,w\approx \bar{v}}\cdot g_{s}(w-\bar{v})g_s(b-v)+\Cbw^\star_{b\approx v,w\approx \bar{w}}\cdot f_{s}(w-\bar{v})f_s (b-v)\\=\Cbw_{b\approx v,w\approx \bar{v}}\cdot g_{s}(w-\bar{v})(b-v)^s+\Cbw^\star_{b\approx v,w\approx \bar{w}}\cdot f_{s}(w-\bar{v})\eta^2_b(\bar{b}-\bar{v})^{-s}+O\left((\Im v)^{-1+\sss}r^{s-1}_3\right)
\end{multline*}
while by definition of $\Kinv[out]=\Kinv[w\approx \bar{v}]$ and Taylor expansion, we have
\begin{multline*}
\Kinv[out](b,w)=\frac{1}{2\pi}\frac{(b-v)^s}{(\bar{v}-v)^s(b-\bar{v})^{s+1}}g_s(w-\bar{v})+\frac{1}{2\pi}\frac{\eta^2_b(v-\bar{v})^s(\bar{b}-\bar{v})^{-s}}{(\bar{b}-v)^{1-s}}f_{s}(w-\bar{v})
\\= \Cbw_{b\approx v,w\approx \bar{v}}\cdot g_{s}(w-\bar{v})(b-v)^s+\Cbw^\star_{b\approx v,w\approx \bar{w}}\cdot f_{s}(w-\bar{v})\eta^2_b(\bar{b}-\bar{v})^{-s}+O\left(r_3(\Im v)^{-2+\sss}\right),
\end{multline*}
so that we conclude
\[
|T(u,w)|\lesssim r_3^{-1}(\Im v)^{-1+\sss}+r_3 (\Im v)^{-2+\sss}\lesssim(\Im v)^{-1-a+\sss}.
\]
\item When $|w-v|< r_2$ and $u\in \Aring_v$, we use that by \eqref{eq:big_b_asymptotics_revisited},
\begin{multline*}
\Kinv[in](b,v)=K_{s}(b,w;v)+\Cbw_{b\approx w\approx v}\cdot g_{-s}(w-v)g_s(b-v)+\Cbw^\star_{b\approx w\approx v}\cdot f_{-s}(w-v)f_s (b-v)\\=\left(\frac{1}{2\pi}(b-v)^{s-1}+\Cbw_{b\approx w\approx v}\cdot g_s(b-v)\right)g_{-s}(w-v)\\+\left(\frac{\eta_b^2}{2\pi}(\bar{b}-\bar{v})^{-s-1}+\Cbw^\star_{b\approx w\approx v}\cdot f_s (b-v)\right)f_{-s}(w-v)+O\left(r_2^{s+1}r_3^{s-2}+r^{-\frac{5}{4}+s}_3\right).
\end{multline*}
and, by definition of $\Kinv[out]=\Kinv[w\approx v]$ and Taylor approximation again,
\begin{multline*}
\Kinv[out](b,v)=\frac{1}{2\pi}\frac{(b-\bar{v})^{-s}}{(v-\bar{v})^{-s}(b-v)^{-s+1}}g_{-s}(w-v)+\frac{\eta_b^2}{2\pi}\frac{(\bar{v}-v)^{-s}(\bar{b}-\bar{v})^{-s-1}}{(\bar{b}-v)^{-s}}f_{-s}(w-v)
\\= \frac{1}{2\pi}\left((b-v)^{s-1}+\frac{si}{2\Im v}(b-v)^s\right)g_{-s}(w-v)\\ +\frac{\eta_b^2}{2\pi}\left((\bar{b}-\bar{v})^{-s-1}+\frac{si}{2\Im v}(\bar{b}-\bar{v})^{-s}\right)f_{-s}(w-v)+O\left(r^{s+1}_3r^s_2(\Im v)^{-2}\right).
\end{multline*}
By applying \eqref{eq:asymp_of_fs}, we conclude, when $a$ is small enough, \[|T(u,v)|\lesssim r_2^{s+1}r_3^{s-2}+r^{-\frac{5}{4}+s}_3+r^{s+1}_3r^s_2(\Im v)^{-2}+r_2^sr_3^{s-1}(\Im v)^{-1}\lesssim (\Im v)^{-1-a+\sss}\]
\item The remaining cases $|w-v|< r_2$ and $u\in \Aring_{\bar{v}}$; $|w-\bar{v}|< r_2$ and $u\in \Aring_{\bar{v}}$; $|w-v|\wedge |w-\bar{v}|\geq r_2$ and $u\in \Aring_{\bar{v}}$, are identical to those already considered up to substitutions $v\leftrightarrow \bar{v}$ and $s\mapsto -s$.
\end{itemize}
\end{proof}
\begin{lemma}
  \label{lemma:Krhoinv}
  There exist absolute constants $\cst>0$, $s_0>0$, $a>0$ and $\beta>0$, such that if $s\in(0,s_0)$ and $\Im v\geq \cst$, there exists a unique right inverse $K_{s,-s}^{-1}$ of $K_{s,-s}$ satisfying, for a fixed $w$,
\begin{equation}
\label{eq:K_bound_b_large}
|K_{s,-s}^{-1}(b,w)|\leq C(w,v)|b|^{-1},\quad |b|>8(\Im v\vee |w|).
\end{equation}
Moreover, $K_{s,-s}^{-1}$ is also a left inverse of $K_{s,-s}$, and  we have
\begin{equation}
K_{s,-s}^{-1}(b,w)=S_{s,-s}(b,w)+O\left((\Im v)^{-1-\beta}\right),
\label{eq:error_bound_everywhere}
\end{equation}
uniformly in $b,w,s$, and
\begin{equation}
\label{eq:error_bound_b_large}
K_{s,-s}^{-1}(b,w)=S_{s,-s}(b,w)+O\left((\Im v)^{-\beta}|b|^{-1}\right),
\end{equation}
uniformly in $|b|>8(\Im v\vee |w|)$ and in $s$.
\end{lemma}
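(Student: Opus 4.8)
The plan is to invert $\Id+T$, where $T=K_{s,-s}S_{s,-s}-\Id$, by a Neumann series in a suitable function space, and then to set $K_{s,-s}^{-1}:=S_{s,-s}(\Id+T)^{-1}$. The first step is to extract from Lemma~\ref{lem:bounds_on_T} two estimates, uniform in $w$ and in small $s$:
\[
\sum_{u}|T(u,w)|\ \lesssim\ (\Im v)^{-2a+\sss},\qquad \sup_{b}\ \sum_{u}\frac{|T(u,w)|}{|b-u|}\ \lesssim\ (\log\Im v)\,(\Im v)^{-1-a+\sss}.
\]
For the first, each ring $\Aring_v,\Aring_{\bar v},\Aring_w$ carries $O(r_3)\le O((\Im v)^{1-a})$ lattice points, on which $|T|=O((\Im v)^{-1-a+\sss})$, while on $\Aring$ one has $|u-v|,|u-\bar v|,|u-w|\gtrsim r_1$, so $\sum_{u}|u-w|^{-4}\lesssim r_1^{-2}$ and $\sum_{u}|u-v|^{-3}|u-w|^{-1}\lesssim r_1^{-2}$ (splitting according to whether $|u-v|\le|u-w|$); the second estimate uses in addition that $\sum_{u}|b-u|^{-1}\lesssim\log\Im v$ over a width-$1$ ring. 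Having fixed $a$ small, then $s_0$ so that $\sss<a/4$ for $|s|<s_0$, and then $\cst$ large, we get $\|T\|_{\ell^{1}\to\ell^{1}}=\sup_{w}\sum_{u}|T(u,w)|\le\tfrac12$; hence $(\Id+T)^{-1}=\Id+R$ exists on $\ell^{1}$ with $R=\sum_{k\ge1}(-T)^{k}$, the kernel $R(u,w)$ converging absolutely, $\sup_{w}\sum_{u}|R(u,w)|\lesssim(\Im v)^{-a/2}$, and, on iterating $R=-T-TR$, also $\sup_{b}\sum_{u}|b-u|^{-1}|R(u,w)|\lesssim(\log\Im v)(\Im v)^{-1-a+\sss}$.

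Next I would take $K_{s,-s}^{-1}(b,w)=S_{s,-s}(b,w)+\sum_{u}S_{s,-s}(b,u)R(u,w)$, the series being absolutely convergent by \eqref{eq: unif_estimate_on_S}. Since $K_{s,-s}$ has finitely many nonzero entries in each row, composition is associative, and $K_{s,-s}K_{s,-s}^{-1}=(K_{s,-s}S_{s,-s})(\Id+T)^{-1}=\Id$, so $K_{s,-s}^{-1}$ is a right inverse. Estimate \eqref{eq:error_bound_everywhere} follows from $|K_{s,-s}^{-1}(b,w)-S_{s,-s}(b,w)|\lesssim(\Im v)^{\sss}\sum_{u}|b-u|^{-1}|R(u,w)|\lesssim(\Im v)^{-1-a/4}$, so one may take $\beta=a/4$. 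For \eqref{eq:error_bound_b_large}, in the range $|b|>8(\Im v\vee|w|)$ I would split the $u$-sum at $|u|=|b|/2$: on $\{|u|\le|b|/2\}$ one has $|b-u|\ge|b|/2$, hence $|S_{s,-s}(b,u)|\lesssim|b|^{-1}(\Im v)^{\sss}$ and this part is $\lesssim|b|^{-1}(\Im v)^{\sss-a/2}$, while on $\{|u|>|b|/2\}$, $u$ lies in the far region $\Aring$, where a short iteration of $R=-T-TR$ (using there $|T(u,\cdot)|\lesssim(|\cdot-u|^{-4}+|u|^{-3}|\cdot-u|^{-1})(\Im v)^{\sss}$) gives $|R(u,w)|\lesssim|u|^{-3}(\Im v)^{-a/4}$, so that part contributes $\lesssim(\Im v)^{-a/4}|b|^{-2}$. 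Finally \eqref{eq:K_bound_b_large} is immediate, as also $|S_{s,-s}(b,w)|\lesssim|b-w|^{-1}(\Im v)^{\sss}\lesssim|b|^{-1}(\Im v)^{\sss}$ once $|b|>8|w|$.

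It remains to establish uniqueness and the left-inverse property; both follow from the (discrete Liouville) statement that a discrete holomorphic function with the $(s,-s)$-monodromy which is $O(|b|^{-1})$ at infinity vanishes identically — proved exactly as its one-puncture analogue in Lemma~\ref{lemma:existense_of_Ksinv}, the constraints at the two puncture faces (Lemma~\ref{lemma:Laplacian_with_monodromy}) precluding a pole-type singularity at $v$ or $\bar v$, after which a discrete maximum-principle argument applies. Uniqueness is then obtained by applying this to the difference of two decaying right inverses; the left-inverse property by applying it to the columns of $K_{s,-s}^{-1}K_{s,-s}-\Id$, which lie in $\ker K_{s,-s}$ and, since $K_{s,-s}$ is finite-band, equal $\sum_{w\sim b_0}K_{s,-s}^{-1}(b,w)K_{s,-s}(w,b_0)-\delta_{b,b_0}=O(|b|^{-3})+O((\Im v)^{-\beta}|b|^{-1})$ by \eqref{eq:error_bound_b_large} and the fact that $\sum_{w\sim b_0}\tfrac1{2\pi(b-w)}K_{s,-s}(w,b_0)$ discretizes $\bar\partial_w\tfrac1{2\pi(b-w)}=0$.

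The main obstacle, and the only genuinely delicate point, is the choice of the $\ell^{1}$ operator norm in the Neumann step: $\ell^{\infty}$ cannot be used, because the rings around $v,\bar v$ swell to the intermediate radius $r_3$ precisely when $w$ comes within $r_2$ of a puncture, so a single lattice point $u$ near such a ring lies in $\Aring_w$ for $\asymp r_2^{2}$ distinct $w$, making $\sup_{u}\sum_{w}|T(u,w)|$ large, whereas the $\ell^{1}$ norm sees only the benign $\sup_{w}\sum_{u}|T(u,w)|$. The remaining work — propagating the $\ell^{1}$ control of $R$ to the pointwise and large-$|b|$ statements \eqref{eq:error_bound_everywhere}–\eqref{eq:error_bound_b_large} — is bookkeeping, with the auxiliary quantity $\sup_{b}\sum_{u}|b-u|^{-1}|T(u,w)|$ and the self-improving tail bound on $R$ as the crucial inputs.
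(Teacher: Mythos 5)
Your proposal follows the paper's parametrix/Neumann architecture exactly: the error operator $T=K_{s,-s}S_{s,-s}-\Id$, the control on $\sup_w\sum_u|T(u,w)|$ via Lemma~\ref{lem:bounds_on_T}, and the ansatz $K_{s,-s}^{-1}=S_{s,-s}(\Id+T)^{-1}$ all match. (Your closing paragraph about ``$\ell^1$ vs.\ $\ell^\infty$'' is only a labelling disagreement: for the right-acting operator $f\mapsto fT$ that the paper uses, the quantity $\sup_w\sum_u|T(u,w)|$ \emph{is} the $\ell^\infty\!\to\!\ell^\infty$ norm — you and the paper are computing the same thing.) Where your proposal genuinely diverges is in two places, and the first of these has a gap. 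For the large-$|b|$ tail estimate \eqref{eq:error_bound_b_large} you assert that ``a short iteration of $R=-T-TR$'' gives $|R(u,w)|\lesssim|u|^{-3}(\Im v)^{-a/4}$ when $|u|>|b|/2$. But $(TR)(u,w)=\sum_{u'}T(u,u')R(u',w)$ contains contributions from $u'$ with $|u'|\asymp|u|$ (e.g.\ $u\in\Aring_{u'}$, a ring of radius $\asymp r_1$ about $u'$), where you have no pointwise bound on $R(u',w)$ beyond the $\sup_w\sum_u$-estimate — that is exactly what you are trying to prove, and a single iteration does not break the circularity. The paper resolves this by expanding $R=\sum_{k\ge 1}(-T)^k$ into chains $u=u_0,u_1,\dots,u_{k+1}=w$ and, for each chain with $|u|$ large and $|w|\le|b|/8$, isolating the first link $i_{\min}$ with $|u_i-u_{i+1}|>|b|/(4(k+1))$ and $|u_i|>2\Im v$, where the $|u_i-u_{i+1}|^{-4}$ decay from \eqref{eq:T_bound_w_far} kicks in. Either that chain decomposition, or an explicit bootstrap over dyadic annuli $|u|\ge 2^k$ with the $k$-index tracked carefully, is needed to close the argument; as written, your sketch does not.

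The uniqueness/left-inverse argument is a different route from the paper's, not a gap per se but a claim you would need to substantiate. You appeal to a two-puncture discrete Liouville theorem: any $K_{s,-s}$-holomorphic multivalued function decaying like $O(|b|^{-1})$ at infinity vanishes. The paper deliberately avoids proving this, instead constructing a left inverse $\widetilde K_{s,-s}^{-1}$ by the identical recipe with $b\leftrightarrow w$ swapped, then showing equality via a discrete Green-identity contour deformation, which needs only the decay bound \eqref{eq:K_bound_b_large} on both candidates. Your Liouville statement is plausible — near $v$ (respectively $\bar v$) the function must be a combination of $g_{\pm s}(\cdot-v)$, $f_{\pm s}(\cdot-v)$ by Lemma~\ref{lemma:Laplacian_with_monodromy}, and decay plus the maximum principle force vanishing — but it is \emph{not} ``exactly as in Lemma~\ref{lemma:existense_of_Ksinv}'': that lemma is a one-puncture construction cited from Dub\'edat and does not contain a Liouville theorem, and the two-puncture version requires a matching argument at each puncture that is absent from your sketch. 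If you want the Liouville route, that theorem needs its own proof; the contour-integral argument is cheaper because it only requires the already-proved asymptotics, not a classification of decaying kernel elements.
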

\begin{proof}
  Define $T = K_{s,-s} S_{s,-s} - \Id$. We view $T$ as a linear operator acting on functions on the right, i.e. $f(\cdot)\mapsto \sum_uf(u)T(u,\cdot).$ Our goal is to construct $K_{s,-s}^{-1}$ as
  \begin{equation}
  \label{eq:expression for Kss}
  K_{s,-s}^{-1}=S_{s,-s}(\Id+T)^{-1}=S_{s,-s}-S_{s,-s}T(\Id+T)^{-1}.
  \end{equation}
  To this end, we observe that $\|T\|_{l^\infty\to l^\infty}\to 0$ as $\Im v\to\infty$; more precisely, we will prove that $\|T\|_{l^\infty\to l^\infty}\lesssim(\Im v)^{-\beta}$ for a suitable choice of the parameter $a$. We proceed to estimating $\|T\|_{l^\infty\to l^\infty}=\sup_w\sum_u|T(u,w)|$ using Lemma \ref{lem:bounds_on_T}. The contributions from $u\in\Aring_w,\Aring_{v},\Aring_{\bar{v}}$ are estimated by multiplying the bound in \eqref{eq:T_bound_w_far} by $r_1$ or $r_3$, after plugging in \eqref{eq:def_of_r123}, this yields a bound $ \sum_{u\in\Aring_z}|T(u,w)|\lesssim (\Im v)^{-2a+p(s)}$. For the contribution from $u\in \Aring$, we can rescale the arguments by $\delta=(\Im v)^{-1}$ and compare sums to integrals. This yields

\[
\sum_{u\in\Aring}|T(u,w)|\lesssim \delta^2(\Im v)^{\sss}\int_{\delta\Aring} \left(|u-\delta w|^{-3} +|u-\delta v|^{-3} + |u-\delta \bar{v}|^{-3}\right)|u-\delta w|^{-1}\,du.
\]
When $|w-v|\wedge|w-\bar{v}|\geq r_2$, the balls of radii $\delta r_1$ around $\delta v,\delta \bar{v},\delta w$ are excluded from $\delta \Aring$, and using $|\delta v-\delta w|\geq \delta r_2$ and $|\delta \bar{v}-\delta w|\geq \delta r_2$, we get \[\lesssim\delta^2(\Im v)^{\sss}\left((\delta r_1)^{-2}+(\delta r_2)^{-1}(\delta r_1)^{-1}\right)\lesssim(\Im v)^{-2+20a+\sss}.\]  Similarly, if $|w-v|<r_2$ or $|w-\bar{v}|<r_2$, we get the bound $\lesssim \delta^2(\Im v)^{\sss}(\delta r_3)^{-2}=(\Im v)^{-2+2a+\sss}$.
 Therefore, for $s$ small enough, we can indeed choose $a$ such that all the exponents are smaller than $-\beta<0.$

  Note that by \eqref{eq: unif_estimate_on_S}, for any $b$, we have $\|S_{s,-s}(b,\cdot)\|_{l^\infty}<\infty.$ Therefore, once $\|T\|_{l^\infty\to l^\infty}<1$, the right-hand side of \eqref{eq:expression for Kss} makes sense. Moreover, since for each $b$, $K_{s,-s}(b,w)=0$ for all but four $w$, we can apply associativity to get \[K_{s,-s}  K_{s,-s}^{-1}=K_{s,-s}(S_{s,-s}(\Id+T)^{-1})=(K_{s,-s}S_{s,-s})(\Id+T)^{-1}=(\Id+T)(\Id+T)^{-1}=\Id,\]
so that $K_{s,-s}^{-1}$ is indeed a right inverse of $K_{s,-s}$.

We proceed by estimating $(S_{s,-s}T)(b,w)=\sum_u S_{s,-s}(b,u)T(u,w).$
Using \eqref{eq:T_bound_w_far} and then \eqref{eq: unif_estimate_on_S}, we get, for $z=w,v,\bar{v}$ and all $b,w$,
\begin{equation}
\sum_{u\in \Aring_z} |S_{s,-s}(b,u)T(u,w)|\lesssim (\Im v)^{-1-a+\sss}\sum_{u\in \Aring_z}|S_{s,-s}(b,u)|\lesssim \log (\Im v) (\Im v)^{-1-a+\sss}. \\
\end{equation}
To estimate the contribution from $u\in\Aring$, we can simply write by \eqref{eq: unif_estimate_on_S} and the above bound,
\[
\sum_{u\in\Aring}|S_{s,-s}(b,u)T(u,w)|\leq \sup_{u}|S_{s,-s}(b,u)|\sum_{u\in\Aring}|T(u,w)|\lesssim (\Im v)^{-2+20a+\sss}.
\]
Putting everything together, we get by \eqref{eq:expression for Kss} for all $b, w$,
\[
|K_{s,-s}^{-1}(b,w)-S_{s,-s}(b,w)|\leq \|(S_{s,-s}T)(b,\cdot)\|_{l^\infty}\|(1+T)^{-1}\|_{l^\infty \to l^\infty}\leq (\Im v)^{-1-\beta}.
\]
for $\beta=a/2$, provided that $a$ is small enough and $s$ is so small that all $p(s)$ are smaller than $a/4$. This completes the proof of \eqref{eq:error_bound_everywhere}.

Let us improve the above estimate for $|b|>8(|w|\vee \Im v)$. We then have
\begin{multline}
|(S_{s,-s}T^k)(b,w)|\leq \left|\sum_{u:|u|\leq \frac{|b|}{2}}S_{s,s}(b,u)T^k(u,w)\right|+\left|\sum_{u:|u|> \frac{|b|}{2}}S_{s,s}(b,u)T^k(u,w)\right|\\
\lesssim |b|^{-1}\|T\|^k_{l^\infty \to l^\infty}+\sum_{u:|u|>\frac{|b|}{2}}|T^k(u,w)|.
\end{multline}
Write $T^k(u,w)=\sum_{u_1,\dots,u_k}T(u_0,u_1)T(u_1,u_2)\dots T(u_k,u_{k+1}),$ with the convention $u_0=u$ and $u_{k+1}=w$. Note that for each tuple $u=u_0,u_1,\dots ,u_{k+1}=w$, there is an index $i$ such that $|u_i-u_{i+1}|>|b|/(4(k+1))$ and $|u_i|>2\Im v$; denote the smallest such index by $i_\text{min}.$ Then, by \eqref{eq:T_bound_w_far}, we have  \[|T(u_{i_\text{min}},u_{i_\text{min}+1})|\lesssim |u_{i_\text{min}}-u_{i_\text{min}+1}|^{-4}(\Im v)^{p(s)}.\]  Breaking the sum depending on $i_{\text{min}}$, and plugging in the above estimate, we obtain
\[
\sum_{u>\frac{|b|}{2}}|T^k(u,w)|\lesssim k(\Im v)^{p(s)}\|T\|^{k-1}_{l^\infty \to l^\infty}\left(\sum_{|u|>|b/(4(k+1))|}|u|^{-4}\right)\lesssim k^3(\Im v)^{p(s)}\|T\|^{k-1}_{l^\infty \to l^\infty}|b|^{-2}.
\]
Hence
\[
|(S_{s,-s}T^k)(b,w)|\leq |b|^{-1}\|T\|^{k-1}_{l^\infty \to l^\infty}(\|T\|_{l^\infty \to l^\infty}+k^3(\Im v)^{p(s)}|b|^{-1})
\]
Since $K_{s,-s}^{-1}(b,w)-S_{s,-s}(b,w)=\sum^\infty_{k=1}(-1)^k(S_{s,-s}T^k)(b,w)$ and $\|T\|^{k-1}_{l^\infty \to l^\infty}\lesssim (\Im v)^{-\beta}$, summing the above bounds yields
\[
|K_{s,-s}^{-1}(b,w)-S_{s,-s}(b,w)|\lesssim |b|^{-1}((\Im v)^{-\beta}+(\Im v)^{p(s)} |b|^{-1})\lesssim |b|^{-1}(\Im v)^{-\beta}.
\]
This completes the proof of \eqref{eq:error_bound_b_large}, and \eqref{eq:K_bound_b_large} follows from that and the asymptotics of $S_{s,-s}$.

To prove the uniqueness and the ``left inverse" claim, we construct separately a left inverse, denoted by $\widetilde{K}_{\rho}^{-1}$, following the exact same procedure as above, but reversing the roles of $b$ and $w$. To show that these two operators coincide, fix black and white $b_0,w_0$ and let $l$ be a large counterclockwise oriented simple loop on the dual lattice encircling these two points. Consider the quantity
  \[
    X = \sum_{(bw)^*\in l} K_{s,-s}^{-1}(b,w_0)\widetilde{K}_{s,-s}^{-1}(b_0,w) d(bw)^\ast.
  \]
  Deforming the contour inside we find out that
  \[
    X = \widetilde{K}_{s,-s}^{-1}(b_0,w_0) - K_{s,-s}^{-1}(b_0,w_0).
  \]
  On the other hand, deforming the contour outside all the way to the infinity and using the asymptotics~\eqref{eq:K_bound_b_large} for both $K_{s,-s}^{-1}$ and $\widetilde{K}_{s,-s}^{-1}$ we conclude that $X = 0$. It follows that $K_{s,-s}^{-1} = \widetilde{K}_{s,-s}^{-1}$. Since the argument only used that $K_{s,-s}^{-1}$ is a right inverse of $K_{s,-s}$ satisfying \eqref{eq:K_bound_b_large}, it also establishes the uniqueness claim.
\end{proof}
What we will use is the following corollary, which will be actually applied to neighboring $b,w$.
\begin{cor}
  \label{cor:nea-diag_estimate_on_Krhoinv}
  Let $K_{s,-s}^{-1}$ be as in Lemma~\ref{lemma:Krhoinv}, $\Im w>0$, and $|b-w|\leq (|w-v|\wedge \Im v)^\frac34$. Then, provided that $a$ is chosen small enough, we have
  \begin{multline}
    K_{s,-s}^{-1}(b,w) = \chi_{s,l}(w,b)^{-1} \left( K^{-1}(b,w) + \frac{s}{2\pi} \left[ \frac{1}{w-v} - \frac{(\eta_b\eta_w)^2}{\bar{w}-\bar{v}} \right] \right) \\+O\left(\frac{\log|w-v|+1}{|w-v|^{5/4}} + (\Im v)^{-1} \right)
  \end{multline}
\end{cor}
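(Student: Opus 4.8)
The plan is to obtain the estimate by combining Lemma~\ref{lemma:Krhoinv} with the explicit near-diagonal form of the parametrix $S_{s,-s}$, supplemented where needed by the one-puncture asymptotics of Lemma~\ref{lemma:near-diagonal-expansion}. By~\eqref{eq:error_bound_everywhere}, $K_{s,-s}^{-1}(b,w)=S_{s,-s}(b,w)+O((\Im v)^{-1-\beta})$, and since $\beta>0$ this error is $O((\Im v)^{-1})$; so it suffices to identify which branch of the definition of $S_{s,-s}$ is active at $(b,w)$ under the standing hypotheses and to rewrite that branch in the asserted form. Throughout I may assume that the constant $a$ fixed in Lemma~\ref{lemma:Krhoinv} is below a small absolute threshold, say $a<\frac{1}{40}$, which is harmless.

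First I would settle the geometry. Since $\Im w>0$ and $\Im v>0$ we have $|w-\bar{v}|\ge \Im w+\Im v>\Im v>r_{2}$ for $\Im v\ge\cst$, so entering the regime ``$|w-v|\wedge|w-\bar{v}|\ge r_{2}$'' only requires $|w-v|>r_{2}$. Next, $|b-w|\le(|w-v|\wedge\Im v)^{3/4}\le(\Im v)^{3/4}$ together with $a<\frac{1}{40}$ gives $|b-w|<r_{1}$; it also forces $|b-v|\ge|w-v|-|b-w|>r_{1}$ and $|b-\bar{v}|\ge|w-\bar{v}|-|b-w|>r_{1}$ when $|w-v|>r_{2}$, while in the regime $|w-v|\le r_{2}$ it gives $|b-v|\le|b-w|+|w-v|\le 2r_{2}<r_{3}$. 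Plugging these inequalities into the definition of $S_{s,-s}$ leaves exactly two cases: $S_{s,-s}(b,w)=\Kinv[b\approx w](b,w)$ when $|w-v|>r_{2}$, and $S_{s,-s}(b,w)=\Kinv[b\approx w\approx v](b,w)$ when $|w-v|\le r_{2}$.

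In the first case I would expand $\Kinv[b\approx w](b,w)=\chi_{s,l}(w,b)^{-1}\bigl(K^{-1}(b,w)+\Cbw_{b\approx w}+\eta_{b}^{2}\Cbw^{\star}_{b\approx w}\bigr)$ and simplify the coefficients using $(w-\Re v)^{2}+(\Im v)^{2}=(w-v)(w-\bar{v})$ and $v-\bar{v}=2i\Im v$, which yield $\frac{i\Im v}{(w-v)(w-\bar{v})}=\frac{1}{2(w-v)}-\frac{1}{2(w-\bar{v})}$ and its conjugate analogue for $\bar{w}$. This turns $\Cbw_{b\approx w}+\eta_{b}^{2}\Cbw^{\star}_{b\approx w}$ into $\frac{s}{2\pi}\bigl[\frac{1}{w-v}-\frac{(\eta_{b}\eta_{w})^{2}}{\bar{w}-\bar{v}}\bigr]$ plus ``$\bar{v}$-terms'' $\frac{s}{2\pi}\bigl[-\frac{1}{w-\bar{v}}+\frac{(\eta_{b}\eta_{w})^{2}}{\bar{w}-v}\bigr]$ that are $O((\Im v)^{-1})$ because $|w-\bar{v}|=|\bar{w}-v|\ge\Im v$; since $|\chi_{s,l}(w,b)|=1$, this gives the claim. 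In the second case, the two correction terms of $\Kinv[b\approx w\approx v]$, namely $\Cbw_{b\approx w\approx v}\,g_{-s}(w-v)g_{s}(b-v)$ and $\Cbw^{\star}_{b\approx w\approx v}\,f_{-s}(w-v)f_{s}(b-v)$, are $O((\Im v)^{-1})$ — their prefactors are $O(s(\Im v)^{-1})$ and, since $|b-v|\asymp|w-v|$ in this near-diagonal regime, the products of discrete exponentials are $O(1)$ by~\eqref{eq:asymp_of_fs}. For the main term, $K_{s}^{-1}(b,w;v)=K_{s}^{-1}(b-v,w-v)$ with $|(b-v)-(w-v)|=|b-w|\le|w-v|^{3/4}$, so the near-diagonal case of Lemma~\ref{lemma:near-diagonal-expansion} gives $\Kinv[s,b\approx w](b-v,w-v)+O\bigl((\log|w-v|+1)|w-v|^{-5/4}\bigr)$, and unwinding the definition of $\Kinv[s,b\approx w]$ (translation-covariance of $K^{-1}$, the identity $\overline{w-v}=\bar{w}-\bar{v}$, and the $\eta$-sign bookkeeping) identifies this with $\chi_{s,l}(w,b)^{-1}\bigl(K^{-1}(b,w)+\frac{s}{2\pi}[\frac{1}{w-v}-\frac{(\eta_{b}\eta_{w})^{2}}{\bar{w}-\bar{v}}]\bigr)$, which is the claim with the advertised error.

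The hard part is not conceptual but careful bookkeeping: checking that the hypothesis $|b-w|\le(|w-v|\wedge\Im v)^{3/4}$ (for a sufficiently small $a$) really does confine $(b,w)$ to one of the two branches above — that is, that $b$ never strays into the ``wrong'' annular collars $\Aring_{v}$, $\Aring_{\bar{v}}$, $\Aring_{w}$ separating the regions in the definition of $S_{s,-s}$ — which is precisely where the exponents in $r_{1},r_{2},r_{3}$ and the choice of $a$ come in. A secondary technical nuisance is the lattice-translation and $\eta$-sign matching needed to pass from $\Kinv[s,b\approx w](b-v,w-v)$, defined with the puncture at the origin, to the statement's expression with the puncture at $v$, and to confirm that the products $g_{-s}(w-v)g_{s}(b-v)$ and $f_{-s}(w-v)f_{s}(b-v)$ are genuinely $O(1)$ rather than merely $O(|w-v|^{\sss})$, which one reads off from the uniform-in-$s$ asymptotics~\eqref{eq:asymp_of_fs}.
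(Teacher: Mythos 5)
Your proposal is correct and follows essentially the same route as the paper's proof: reduce to the parametrix via \eqref{eq:error_bound_everywhere}, determine from the standing hypotheses which of the branches $\Kinv[b\approx w]$ (when $|w-v|>r_2$) or $\Kinv[b\approx w\approx v]$ (when $|w-v|\le r_2$) is active, and simplify — in the second case invoking the near-diagonal part of Lemma~\ref{lemma:near-diagonal-expansion} for $K_s^{-1}(b,w;v)$ to extract both the main term and the $|w-v|^{-5/4}\log|w-v|$ error. The only (minor and favorable) difference is in the first case, where you use the exact partial-fraction identity $\frac{i\Im v}{(w-v)(w-\bar v)}=\frac12\bigl(\frac{1}{w-v}-\frac{1}{w-\bar v}\bigr)$ to identify the $\bar v$-terms as $O((\Im v)^{-1})$ directly, whereas the paper estimates the same discrepancy somewhat more coarsely as $O(|w-v|^{-2})$; both land inside the claimed error.
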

\begin{proof}
  We have by Lemma~\ref{lemma:Krhoinv}, $K_{s,-s}^{-1}(b,w) = S_{s,-s}(b,w)+O((\Im v)^{-1-\beta}).$ When $|w-v|>r_2,$ we will have $|b-w|<r_1$ if $a$ is chosen small enough, so, by definition, \[S_{s,-s}(b,w)=\Kinv[b\approx w]=\chi_{s,l}(w,b)^{-1}\left(K^{-1}(b,w)+\Cbw_{b\approx w}+\eta^2_b\Cbw^\star_{b\approx w}\right),\] and we conclude by noticing that when $\Im w>0$ we have
  \[
  \Cbw_{b\approx w}+\eta^2_b\Cbw^\star_{b\approx w}=\frac{s}{2\pi} \left[ \frac{1}{w-v} - \frac{(\eta_b\eta_w)^2}{\bar{w}-\bar{v}} \right]+O\left(|w-v|^{-2}\right)=\frac{s}{2\pi} \left[ \frac{1}{w-v} - \frac{(\eta_b\eta_w)^2}{\bar{w}-\bar{v}} \right]+O\left((\Im v)^{-2+6a}\right).
  \]
  When $|w-v|<r_2$, we will have $|b-v|<r_3$, so
  \begin{multline*}
  S_{s,-s}(b,w)=\Kinv[b\approx w\approx v]=K_s^{-1}(b,w;v)+\Cbw_{b\approx w\approx v}\cdot g_{-s}(w-v)g_s(b-v)+\Cbw^\star_{b\approx w\approx v}\cdot f_{-s}(w-v)f_s (b-v)\\
  =K_s^{-1}(b,w;v)+O\left((\Im v)^{-1}\left(|w-v|^s|b-v|^{-s}+|w-v|^{-s}|b-v|^{s}\right)\right)
  \\=\chi_{s,l}(w,b)^{-1} \left( K^{-1}(b,w) + \frac{s}{2\pi} \left[ \frac{1}{w-v} - \frac{(\eta_b\eta_w)^2}{\bar{w}-\bar{v}} \right] \right)+O\left((\Im v)^{-1}+\frac{\log|w-v|+1}{|w-v|^\frac54}\right),
  \end{multline*}
  where we used the bound $|b-w|\leq |w-v|^{3/4}$, and, in the last inequality, Lemma~\ref{lemma:near-diagonal-expansion}.
\end{proof}

\subsection{Proof of Theorem \ref{thma:main1}}
\label{subsec:Proof_main1}

Given $s\in \RR$ we define the representation $\rho_s : \pi_1(\Cpd\smm\{v\})\to \SL(2,\CC)$ by declaring
\begin{equation}
\rho_s(\gamma) = \begin{pmatrix}
    e^{2\pi i s} & 0 \\ 0 & e^{-2\pi i s}
\end{pmatrix}
\end{equation}
for a simple loop $\gamma$ encircling $v$ in the counterclockwise direction. Recall the definition of the operator $K_{\Cpd, \rho_s}$ given by~\eqref{eq:def_of_KOmega_delta_rho}. Note that since $\rho_s$ is diagonal, we have $K_{\Cpd,\rho_s}=K_{\Cpd,s}\oplus K_{\Cpd,-s}$, where $K_{\Cpd,s}$ is defined as in \eqref{eq:def_of_Ks} but restricted to $\Cpd$. The inverse $K^{-1}_{\Cpd,s}$ of $K_{\Cpd,s}$ can be constructed as
\begin{equation}
      \label{eq:KHsinv_via_reflection}
      K_{\Cpd, s}^{-1}(b,w) = \delta^{-1}K_{s,-s}^{-1}(\delta^{-1}b,\delta^{-1}w) - \eta_b^2\delta^{-1} K_{s,-s}^{-1}(\delta^{-1}\bar{b},\delta^{-1}w),
\end{equation}
    where $K_{s,-s}^{-1}$ be the inverse operator constructed by Lemma~\ref{lemma:Krhoinv} with $v$ scaled to $\delta^{-1}v$. It is clear that this is a right inverse; an argument similar to the end of the proof of Lemma \eqref{lemma:Krhoinv} shows that it is also a left inverse, i.e., $K_{\Cpd, s}^{-1}K_{\Cpd, s} f=f$ for any $f$ compactly supported in $\Cpd$.

Using Lemma~\ref{lemma:det_Krho} we obtain
\begin{equation}
    \label{eq:EcosN_via_monodromy}
    \EE \left(\cos (2\pi s)\right)^{\NCd(v)}=\det(K_{\Cpd, \rho_s} K_{\Cpd, \Id_{2\times 2}}^{-1}) =\det(K_{\Cpd, s} K_{\Cpd}^{-1})\det(K_{\Cpd, -s} K_{\Cpd}^{-1}).
\end{equation}
We will use the following variational identity (cf.~\cite[Lemma~3]{DubedatFamiliesOfCR}):
\begin{lemma}
We have, for every $s$ small enough and $\delta$ small enough,
\begin{equation}
    \label{eq: log_det_variational}
        \frac{d}{ds}\log\det(K_{\Cpd, s}K_{\Cpd}^{-1}) = \Tr\left[ \left(\frac{d}{ds}K_{\Cpd, s}\right)K_{\Cpd, s}^{-1}\right].
    \end{equation}
\end{lemma}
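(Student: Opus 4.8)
The plan is to reduce the identity to the classical Jacobi formula for the logarithmic derivative of a (Fredholm) determinant, exploiting that $K_{\Cpd,s}$ differs from $K_{\Cpd}$ only along the cut $l$.

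First I would record the finite-rank structure. Since $v$ is a bulk point, the dual lattice path $l$ joining $v$ to the real boundary segment of $\Cpd$ has finite length, so only finitely many primal edges cross it. Hence $B_s:=K_{\Cpd,s}-K_{\Cpd}$ and $\dot K_s:=\tfrac{d}{ds}K_{\Cpd,s}$ are finite-rank operators whose non-zero entries occupy rows indexed by the fixed finite set $V_0$ of white vertices incident to an edge crossing $l$ and columns indexed by the corresponding finite set of black vertices; moreover $s\mapsto B_s$ and $s\mapsto\dot K_s$ are real-analytic, each entry being a fixed Kasteleyn weight times $e^{\pm 2\pi i s}-1$, respectively $\pm 2\pi i e^{\pm 2\pi i s}$. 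Consequently $L_s:=K_{\Cpd,s}K_{\Cpd}^{-1}=\Id+B_sK_{\Cpd}^{-1}$ differs from the identity by a finite-rank operator, and $\det L_s$ is precisely the determinant appearing in \eqref{eq:EcosN_via_monodromy}, understood as in Remark~\ref{ref: infinite_determinants}. By \eqref{eq:estimate_on_K_sinv} (with the uniformity of Lemma~\ref{lemma:C_is_uniform}) and \eqref{eq:KHsinv_via_reflection}, $\|B_sK_{\Cpd}^{-1}\|\to 0$ as $s\to 0$, so for $s$ and $\delta$ small $L_s$ is invertible, $\det L_s$ is close to $1$, and $\log\det L_s$ is defined unambiguously by continuation from $s=0$; its inverse is $L_s^{-1}=K_{\Cpd}K_{\Cpd,s}^{-1}$, with $K_{\Cpd,s}^{-1}$ the two-sided inverse from \eqref{eq:KHsinv_via_reflection} (two-sided by the remark there and Lemma~\ref{lemma:Krhoinv}).

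Next I would place everything in the weighted space $\ell^2_{\omega}(\Cpd)$ with $\omega(x)=(1+|x|)^{-1}$, as in Remark~\ref{ref: infinite_determinants} (cf.\ \cite{gohberg1978introduction}): on it $K_{\Cpd}$, $K_{\Cpd}^{-1}$, $K_{\Cpd,s}$, $K_{\Cpd,s}^{-1}$ act boundedly, the inverses are genuine two-sided bounded inverses, and $B_sK_{\Cpd}^{-1}$, $\dot K_sK_{\Cpd}^{-1}$ are trace-class with real-analytic $s$-dependence; the Fredholm determinant of $\Id+B_sK_{\Cpd}^{-1}$ coincides with $\det L_s$. The standard formula for the logarithmic derivative of a Fredholm determinant then yields
\[
\frac{d}{ds}\log\det L_s
=\Tr\!\left[L_s^{-1}\,\tfrac{d}{ds}\bigl(B_sK_{\Cpd}^{-1}\bigr)\right]
=\Tr\!\left[\bigl(K_{\Cpd}K_{\Cpd,s}^{-1}\bigr)\bigl(\dot K_s\,K_{\Cpd}^{-1}\bigr)\right].
\]
Finally, since $\dot K_sK_{\Cpd}^{-1}$ is trace-class and $K_{\Cpd}K_{\Cpd,s}^{-1}$ is bounded, cyclicity of the trace together with $K_{\Cpd}^{-1}K_{\Cpd}=\Id$ gives
\[
\Tr\!\left[\bigl(K_{\Cpd}K_{\Cpd,s}^{-1}\bigr)\bigl(\dot K_sK_{\Cpd}^{-1}\bigr)\right]
=\Tr\!\left[\bigl(\dot K_sK_{\Cpd}^{-1}\bigr)\bigl(K_{\Cpd}K_{\Cpd,s}^{-1}\bigr)\right]
=\Tr\!\left[\dot K_sK_{\Cpd,s}^{-1}\right]
=\Tr\!\left[\Bigl(\tfrac{d}{ds}K_{\Cpd,s}\Bigr)K_{\Cpd,s}^{-1}\right],
\]
which is \eqref{eq: log_det_variational}; the last trace is the honest finite sum $\sum_{(bw)^\ast\in l}(\pm 2\pi i e^{\pm 2\pi i s})K_{\Cpd}(w,b)K_{\Cpd,s}^{-1}(b,w)$.

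The only real work is the functional-analytic bookkeeping: checking that the Fredholm determinant entering \eqref{eq:EcosN_via_monodromy} is exactly the object being differentiated, that the operators involved are bounded, respectively trace-class, on the weighted space, and that the inverses there are two-sided, so that Jacobi's formula and the cyclicity of the trace apply verbatim. Given the finite-rank structure of $K_{\Cpd,s}-K_{\Cpd}$, the analytic content is minimal. An alternative that avoids the weighted-space discussion is to establish the identity first on finite Temperley domains exhausting $\Cpd$ — where it is the elementary finite-dimensional Jacobi formula — and then pass to the limit using convergence of determinants and inverse Kasteleyn operators as in \cite[Section~5.4]{DubedatDoubleDimers}; there the obstacle becomes the interchange of the finite-domain limit with $\tfrac{d}{ds}$.
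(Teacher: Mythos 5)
Your high-level strategy — identify $L_s=K_{\Cpd,s}K_{\Cpd}^{-1}$ as $\Id$ plus a finite-rank perturbation, apply the Jacobi/Gohberg–Krein variational formula for $\log\det$, then simplify the trace by associativity — is exactly the route the paper takes. However, you have left a genuine gap at the one step the paper actually labors over, and the functional-analytic framework you invoke to skip it does not work.

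The crucial point is the matrix identity $K_{\Cpd}^{-1}(K_{\Cpd}K_{\Cpd,s}^{-1})=K_{\Cpd,s}^{-1}$. You need this both to identify $L_s^{-1}$ with $K_{\Cpd}K_{\Cpd,s}^{-1}$ and, after the cyclicity step, to re-associate $(\dot K_sK_{\Cpd}^{-1})(K_{\Cpd}K_{\Cpd,s}^{-1})=\dot K_sK_{\Cpd,s}^{-1}$. This is not a formal manipulation: it is an interchange of infinite sums (Fubini over the full lattice) which fails in general. Your proposal to make it automatic by declaring that $K_{\Cpd}$, $K_{\Cpd}^{-1}$, $K_{\Cpd,s}$, $K_{\Cpd,s}^{-1}$ all act boundedly on a weighted $\ell^2$ is where the argument breaks down. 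The operator $K_{\Cpd}^{-1}$ has kernel $\sim (b-w)^{-1}$, i.e. it is a discrete Cauchy-transform-type operator whose symbol blows up at the Fermi points; it is not bounded on $\ell^2(\Cpd)$, and the weight $(1+|x|)^{-1}$ of Remark~\ref{ref: infinite_determinants} only makes the \emph{finite-rank} perturbations $L_\rho-\Id=B_sK_{\Cpd}^{-1}$ and $\dot K_sK_{\Cpd}^{-1}$ trace-class; it does not make $K_{\Cpd}^{-1}$ itself bounded. So "Jacobi's formula and cyclicity apply verbatim" does not follow, and the claim that "the analytic content is minimal" is precisely what fails.

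The paper's proof instead establishes the re-association identity by hand: fix $w$, truncate $K_{\Cpd,s}^{-1}(\cdot,w)$ to $|b|\le R$, apply $K_{\Cpd}^{-1}\circ K_{\Cpd}$ to the truncation (legitimate for finite support), and show the boundary contribution from $|u|\approx R$ vanishes as $R\to\infty$ using the kernel decay $|K_{\Cpd}^{-1}(b,u)|=O(R^{-1})$ and $|(K_{\Cpd}f_R)(u)|=O(R^{-1})$ (from \eqref{eq:K_bound_b_large} and \eqref{eq:K-1_asymp}) together with the fact that the boundary annulus has only $O(R)$ sites. That computation is the genuine content of the lemma, and it is what your proposal would need to supply in place of the unjustified boundedness claim. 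Your alternative route via exhaustion by finite Temperleyan domains is also viable in principle, but, as you note, it trades the re-association issue for an equally delicate interchange of $d/ds$ with the infinite-volume limit, which would itself need to be argued.
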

\begin{proof}
    We start with the following identity of matrices
    \begin{equation}
    \label{eq:K's associativity}
    K_{\Cpd}^{-1}\left(K_{\Cpd}K_{\Cpd, s}^{-1}\right)=K_{\Cpd, s}^{-1},
    \end{equation}
    which needs justification since the associativity does not hold in general for infinite matrices. Recall that $K_{\Cpd}^{-1}$ is a left inverse of $K_{\Cpd}$, hence we have $K_{\Cpd}^{-1}(K_{\Cpd} f)=f$ for every $f$ with finite support. Observe also that $(K_{\Cpd}K_{\Cpd, s}^{-1})(u,w)=0$ unless $u=w$ or $u$ is adjacent to the cut $l$.  Fixing $w$ and taking $f_R(b)=K_{\Cpd, s}^{-1}(b,w)\mathbb{I}_{b\leq R}$ with $R$ larger than $2|w|,2\Im v$, this implies, when $2|b|<R,$
    \begin{multline}
    K_{\Cpd, s}^{-1}(b,w)= \left(K_{\Cpd}^{-1}\left(K_{\Cpd}f_R\right)\right)(b,w)\\
    = \left(K_{\Cpd}^{-1}\left(K_{\Cpd}K_{\Cpd, s}^{-1}\right)\right)(b,w)+\sum_{|u|\in [R-1,R+1]}K_{\Cpd}^{-1}(b,u) (K_{\Cpd}f_R)(u).
    \end{multline}
    By \eqref{eq:K_bound_b_large} and \eqref{eq:K-1_asymp}, we have  $|(K_{\Cpd}f_R)(u)|=O(R^{-1})$ and $|K_{\Cpd}^{-1}(b,u)|=O(R^{-1})$ when $|u|\in [R-1,R+1]$ and $b$ is fixed, hence the last sum vanishes as $R\to\infty$, and \eqref{eq:K's associativity} follows.

    Since for a given $w$, $K_{\Cpd, s}(w,b)$ is non-zero for only four $b$, we deduce that
    $$
    (K_{\Cpd, s}K_{\Cpd}^{-1})(K_{\Cpd}K_{\Cpd, s}^{-1})=K_{\Cpd, s}\left(K_{\Cpd}^{-1}\left(K_{\Cpd}K_{\Cpd, s}^{-1}\right)\right)=\Id.$$
    Using the variational formula \cite[eq. 1.14]{gohberg1978introduction} for $\log \det$, the fact that for a given $u$, $\frac{d}{ds}K_{\Cpd, s}(u,b)$ is non-zero for an most one $b$,  and finally \eqref{eq:K's associativity}, we obtain
    \begin{multline}
    \frac{d}{ds}\log\det(K_{\Cpd, s}K_{\Cpd}^{-1})=\Tr\left[ \left(\frac{d}{ds}\left(K_{\Cpd, s}K_{\Cpd}^{-1}\right)\right)\left(K_{\Cpd}K_{\Cpd, s}^{-1}\right)\right]
    \\=\Tr\left[ \frac{d}{ds} K_{\Cpd, s}\left(K_{\Cpd}^{-1}\left(K_{\Cpd}K_{\Cpd, s}^{-1}\right)\right)\right]=\Tr\left[ \frac{d}{ds} K_{\Cpd, s}K_{\Cpd, s}^{-1}\right].
    \end{multline}
(Recall that here $\det$ and $\Tr$ are computed for restrictions to a finite-dimensional invariant subspace, as in Remark~\ref{ref: infinite_determinants}; note that taking the inverse commutes with restricting to that subspace.)
\end{proof}

We have the following:
\begin{prop}
  \label{prop: log_der_Laplace}
  Let $X_\delta=\frac{\NCd(v)-\mu_\delta}{\sigma_\delta}.$ Then, we have
  \[
    \frac{d}{d\lambda} \log \mathbb{E}e^{-\lambda X_\delta} = \lambda+O\left(\lambda^{-\frac12}|\log \delta|^{-\frac14}\right)
  \]
  uniformly over $\lambda\in(0,R]$ for any $R>0$.
\end{prop}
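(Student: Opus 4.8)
\emph{Plan.} The idea is to trade the $\lambda$–derivative for a derivative in the monodromy parameter, apply the variational formula~\eqref{eq: log_det_variational} for $\log\det$, and evaluate the resulting trace with the near–diagonal asymptotics of Corollary~\ref{cor:nea-diag_estimate_on_Krhoinv}.

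\emph{Reduction to a monodromy derivative.} By~\eqref{eq:EcosN_via_monodromy} write $\EE e^{-\lambda X_\delta}=e^{\lambda\mu_\delta/\sigma_\delta}g(s)$, where $g(s)=\EE(\cos2\pi s)^{\NCd(v)}=\det(K_{\Cpd,s}K_{\Cpd}^{-1})\det(K_{\Cpd,-s}K_{\Cpd}^{-1})$ and $s=s(\lambda)\in(0,s_0)$ solves $\cos2\pi s=e^{-\lambda/\sigma_\delta}$; since $\sigma_\delta\to\infty$ one has $s^2=\lambda/(2\pi^2\sigma_\delta)+O(\lambda^2/\sigma_\delta^2)$, in particular $s\asymp\sqrt\lambda\,|\log\delta|^{-1/4}$ for $\lambda\in(0,R]$ and $\delta$ small. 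Differentiating in $\lambda$ and using $d\lambda/ds=2\pi\sigma_\delta\tan2\pi s$,
\begin{equation*}
\frac{d}{d\lambda}\log\EE e^{-\lambda X_\delta}=\frac{\mu_\delta}{\sigma_\delta}+\bigl(F'(s)-F'(-s)\bigr)\,\frac{1}{2\pi\sigma_\delta\tan2\pi s},\qquad F'(u):=\frac{d}{du}\log\det\!\bigl(K_{\Cpd,u}K_{\Cpd}^{-1}\bigr),
\end{equation*}
and~\eqref{eq: log_det_variational} gives $F'(u)=\Tr\bigl[(\tfrac{d}{du}K_{\Cpd,u})K_{\Cpd,u}^{-1}\bigr]$, a sum over the dual edges $(bw)^{*}$ crossing the cut $l$ from $v$ to $\partial\Cpd$ (the only ones on which $\tfrac{d}{du}K_{\Cpd,u}\neq0$).

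\emph{Evaluating the trace.} On a crossing edge $\tfrac{d}{du}K_{\Cpd,u}(w,b)\,K_{\Cpd,u}^{-1}(b,w)=2\pi i\,\chi_{u,l}(w,b)\,d(bw)^{*}K_{\Cpd,u}^{-1}(b,w)$ with $d(bw)^{*}$ as in~\eqref{eq:def_of_dedge}. Inserting the reflected expression~\eqref{eq:KHsinv_via_reflection} for $K_{\Cpd,u}^{-1}$ and Corollary~\ref{cor:nea-diag_estimate_on_Krhoinv} (rescaled to mesh $\delta$) in the form $\chi_{u,l}(w,b)K_{\Cpd,u}^{-1}(b,w)=K_{\Cpd}^{-1}(b,w)+u\,B(b,w)+r_u(b,w)$, where $B(b,w)=\tfrac1{2\pi}\bigl[\tfrac1{w-v}-\tfrac{(\eta_b\eta_w)^2}{\bar w-\bar v}\bigr]$ up to the analogous reflected kernel and $\lvert r_u(b,w)\rvert=O\bigl(\delta^{1/4}\tfrac{\log(\lvert w-v\rvert/\delta)+1}{\lvert w-v\rvert^{5/4}}+(\Im v)^{-1}\bigr)$ uniformly in $u$, the phase cancels, the $u$–independent term $K_{\Cpd}^{-1}$ cancels between $F'(s)$ and $F'(-s)$, and
\begin{equation*}
F'(s)-F'(-s)=4\pi i\,s\sum_{(bw)^{*}\in l}d(bw)^{*}\,B(b,w)\;+\;E(s),\qquad E(s)=2\pi i\sum_{(bw)^{*}\in l}d(bw)^{*}\bigl(r_s(b,w)-r_{-s}(b,w)\bigr).
\end{equation*}
By~\eqref{eq:dedge=int} and the approximate telescoping of the $\eta^2$–alternating parts along straight segments (as in the proofs of Lemmas~\ref{lemma:C_is_uniform} and~\ref{lemma:asymptotics_of_h(0)2}), the main sum is a Riemann sum for $\tfrac{i}{2\pi}\int_l\bigl[\tfrac{dz}{z-v}+\tfrac{d\bar z}{\bar z-\bar v}\bigr]$; cutting off at the lattice scale at the $v$–endpoint and noting the reflected kernel has its pole at $\bar v\notin l$ (so contributes $O(1)$), this equals $-\tfrac{i}{\pi}\log\delta+O(1)=i\pi\mu_\delta+O(1)$ by Lemma~\ref{lemma:asymptotics_of_h(0)2}, hence $F'(s)-F'(-s)=-4\pi^2\mu_\delta s+E(s)$. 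For $E(s)$: there are $\asymp dr/\delta$ crossing edges at distance $r$ from $v$, so $\sum_{(bw)^*\in l}\lvert d(bw)^{*}\rvert\,\lvert r_{\pm s}(b,w)\rvert$ reduces, after $r=\delta u$, to $\int^{\Im v/\delta}\tfrac{\log u+1}{u^{5/4}}\,du+O(1)=O(1)$ uniformly in $\delta,s$; the finitely many edges within $O(\delta)$ of $v$ are handled by the exact asymptotics of Lemma~\ref{lemma:Ksinv_when_b_is_at_0} and contribute $O(s)$. Thus $E(s)=O(1)$ uniformly in $s\in(0,s_0)$ and $\delta$.

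\emph{Assembling, and the main obstacle.} Substituting $F'(s)-F'(-s)=-4\pi^2\mu_\delta s+E(s)$ and expanding $\tfrac1{2\pi\tan2\pi s}=\tfrac1{4\pi^2 s}-\tfrac s3+O(s^3)$ yields
\begin{equation*}
\frac{d}{d\lambda}\log\EE e^{-\lambda X_\delta}=\frac{\mu_\delta}{\sigma_\delta}-\frac{\mu_\delta}{\sigma_\delta}+\frac{4\pi^2\mu_\delta s^2}{3\sigma_\delta}+O\!\Bigl(\frac{\mu_\delta s^4}{\sigma_\delta}\Bigr)+\frac{E(s)}{4\pi^2\sigma_\delta s}\bigl(1+O(s^2)\bigr).
\end{equation*}
By Lemma~\ref{lemma:asymp_of_mu_delta_sigma_delta}, $\mu_\delta=\tfrac32\sigma_\delta^2+O(1)$, so $\tfrac{4\pi^2\mu_\delta s^2}{3\sigma_\delta}=2\pi^2\sigma_\delta s^2+O(s^2/\sigma_\delta)$, and $2\pi^2\sigma_\delta s^2=\lambda+O(\lambda s^2)$ since $-\lambda/\sigma_\delta=\log\cos2\pi s=-2\pi^2 s^2(1+O(s^2))$; moreover $\tfrac{E(s)}{\sigma_\delta s}=O((\sigma_\delta s)^{-1})=O(\lambda^{-1/2}|\log\delta|^{-1/4})$, and $O(\lambda s^2),\,O(s^2/\sigma_\delta),\,O(\mu_\delta s^4/\sigma_\delta)$ are $O(\lambda^2|\log\delta|^{-1/2})$ or smaller, hence $O(\lambda^{-1/2}|\log\delta|^{-1/4})$ for $\lambda\in(0,R]$ and $\delta$ small, giving the claim. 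The delicate steps are: (i) the error bookkeeping in the trace — the per–edge error of Corollary~\ref{cor:nea-diag_estimate_on_Krhoinv} is only $O(\lvert w-v\rvert^{-5/4}\log)$, and the point is that the $\delta^{1/4}$ gained on rescaling makes $\int^\infty u^{-5/4}\log u\,du$ converge, so the sum over the $\asymp\delta^{-1}\Im v$ edges stays $O(1)$ uniformly in $\delta$ and $s$, together with a separate treatment of the edges nearest $v$; and (ii) the cancellations: the leading $\asymp\sqrt{|\log\delta|}$ pieces of $\mu_\delta/\sigma_\delta$ and of the trace must cancel, while the target term $\lambda$ is produced not by the leading term of the trace but by the $O(s^2)$ correction to $\tan2\pi s$ combined with $\sigma_\delta^2=\tfrac23\mu_\delta+O(1)$ — so Lemma~\ref{lemma:asymp_of_mu_delta_sigma_delta} must be used (twice) to the stated precision.
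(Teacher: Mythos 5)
Your overall architecture matches the paper's: reduce to a monodromy derivative via~\eqref{eq:EcosN_via_monodromy}, differentiate with~\eqref{eq: log_det_variational}, observe that the $u$--independent part of the kernel cancels between $u=s$ and $u=-s$, identify the $O(s)$ coefficient as a Riemann sum producing $\sim s\log\delta^{-1}$, and feed this back through the Taylor expansions of $\arccos$ and of $\mu_\delta=\tfrac32\sigma_\delta^2+O(1)$. That bookkeeping, including the observation that the target $\lambda$ emerges from the \emph{second}-order term in $1/\tan(2\pi s)$, is done correctly.

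There is, however, a genuine gap in your treatment of the reflected part of the half-plane kernel. After inserting~\eqref{eq:KHsinv_via_reflection}, the sum splits into the direct part involving $K^{-1}_{\pm s,\mp s}(\delta^{-1}b,\delta^{-1}w)$ and the reflected part involving $K^{-1}_{\pm s,\mp s}(\delta^{-1}\bar b,\delta^{-1}w)$. You apply Corollary~\ref{cor:nea-diag_estimate_on_Krhoinv} to both and dismiss the reflected contribution with ``the reflected kernel has its pole at $\bar v\notin l$, so contributes $O(1)$.'' But the corollary is stated under the near-diagonal hypothesis $|b-w|\leq(|w-v|\wedge\Im v)^{3/4}$, which for the reflected pair $(\delta^{-1}\bar b,\delta^{-1}w)$ with $(bw)^\star\in l$ reads $\Im w\lesssim\delta^{1/4}(|w-v|\wedge\Im v)^{3/4}$. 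This fails on the bulk of the cut; in particular near $v$ one has $|\delta^{-1}\bar b-\delta^{-1}w|\sim 2\delta^{-1}\Im v$, far from diagonal. So you are applying the expansion outside its domain of validity, and ``pole location'' alone does not control the reflected sum: a priori each of the $\pm s$ reflected kernels contributes $O(\log\delta^{-1})$ near the real axis, and the cancellation in $s$ must be demonstrated.

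This is exactly why the paper treats what it calls $\Sigma_2$ separately from $\Sigma_1$. On the part of the cut with $\Im w>\Im v/4$ the paper replaces $K^{-1}_{\pm s,\mp s}$ by the explicit parametrix $S_{\pm s,\mp s}$ from Lemma~\ref{lemma:Krhoinv}, whose error sums to $O(1)$ by~\eqref{eq:error_bound_everywhere}, and estimates the Riemann sum directly. On the part with $\Im w\leq\Im v/4$ it observes that, for fixed $w$, the map $x\mapsto\eta_x^2\bigl(\chi_{-s,l}(w,x)\delta^{-1}K^{-1}_{-s,s}(\delta^{-1}\bar x,\delta^{-1}w)-\chi_{s,l}(w,x)\delta^{-1}K^{-1}_{s,-s}(\delta^{-1}\bar x,\delta^{-1}w)\bigr)$ is \emph{discrete holomorphic} in a ball of radius $\sim\Im v/2$ about $\Re v$, and uses the discrete maximum principle together with the bound already obtained on its boundary. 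Neither of these ideas appears in your write-up; without one of them, the claim $E(s)=O(1)$ (your $\Sigma_2=O(1)$) is unjustified. The rest of your argument, in particular the delicate cancellations between $\mu_\delta/\sigma_\delta$ and the leading term of the trace, and the error budget $O(\lambda^{-1/2}|\log\delta|^{-1/4})$, is in line with the paper.
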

\begin{proof}
We have
\[
\EE e^{-\lambda X_\delta}=e^{\lambda\frac{\mu_\delta}{\sigma_\delta}}\EE\left(\cos2\pi s_\delta\right)^{N_\delta(v)},
\]
where $s_\delta= s_\delta(\lambda)=(2\pi)^{-1}\arccos e^{-\frac{\lambda}{\sigma_\delta}}.$
Using \eqref{eq:EcosN_via_monodromy} and then  \eqref{eq: log_det_variational}, we get
\begin{multline}
\label{eq: pa_lambda_log_lap}
    \frac{d}{d\lambda} \log \EE e^{-\lambda X_\delta}
    \\
    = \frac{\mu_\delta}{\sigma_\delta} + \Tr\left[ \left(\frac{d}{d\lambda}K_{\Cpd,s_\delta(\lambda)}\right)K_{\Cpd,s_\delta(\lambda)}^{-1} \right]+\Tr\left[ \left(\frac{d}{d\lambda}K_{\Cpd,-s_\delta(\lambda)}\right)K_{\Cpd,-s_\delta(\lambda)}^{-1} \right] =\\
    = \frac{\mu_\delta}{\sigma_\delta} + s_\delta'(\lambda)\sum_{b,w} \left(\left(\frac{d}{ds}K_{\Cpd,s_\delta}(w,b)\right)K^{-1}_{\Cpd,s_\delta}(b,w)+\left(\frac{d}{ds}K_{\Cpd,-s_\delta}(w,b)\right)K^{-1}_{\Cpd,-s_\delta}(b,w)\right).
    \end{multline}
Using the definition \eqref{eq:def_of_Ks} and \eqref{eq:def_of_dedge}, we have
\[
\frac{d}{ds}K_{\Cpd,s}(w,b)=\begin{cases}\pm 2\pi i s e^{\pm 2\pi i s}K_{\Cpd}(w,b),&(bw)^*\in l \\0&\text{else.}\end{cases}=\begin{cases} 2\pi i s \chi_{s,l}(w,b)d(bw)^\star,&(bw)^*\in l \\0&\text{else.}\end{cases}
\]
where the sign $\pm$ is positive iff $b$ is on the left of $l$. In particular, the sum in \eqref{eq: pa_lambda_log_lap} is a finite sum. Recalling \eqref{eq:KHsinv_via_reflection}, we can therefore write
\begin{multline}
\label{eq: dlogExpectation_interim}
 \frac{d}{d\lambda} \log \EE e^{-\lambda X_\delta} = \frac{\mu_\delta}{\sigma_\delta}
    \\
    +2\pi i  s_\delta'\sum_{(bw)^\star\in l} \left(\chi_{s,l}(w,b)\delta^{-1}K^{-1}_{s,-s}(\delta^{-1}b,\delta^{-1}w)-\chi_{-s,l}(w,b)\delta^{-1}K^{-1}_{-s,s}(\delta^{-1}b,\delta^{-1}w)\right) d(bw)^*
\\
   +2\pi i  s_\delta'\sum_{(bw)^\star\in l} \eta^2_b\left(\chi_{-s,l}(w,b)\delta^{-1}K^{-1}_{-s,s}(\delta^{-1}\bar{b},\delta^{-1}w)-\chi_{s,l}(w,b)\delta^{-1}K^{-1}_{s,-s}(\delta^{-1}\bar{b},\delta^{-1}w)\right) d(bw)^*.
\end{multline}
 Denote the sums in the above expression by $\Sigma_1$ and $\Sigma_2$ and estimate them separately.

To estimate $\Sigma_1$, we use the approximation for $K_{s,-s}^{-1}$ and $K_{-s,s}^{-1}$ provided by Corollary \ref{cor:nea-diag_estimate_on_Krhoinv}, noting that the error term sums up to $O(1)$, and we have the cancellation $\chi_{s,l}(w,b)\chi_{s,l}(w,b)^{-1}  K^{-1}(b,w)-\chi_{-s,l}(w,b)\chi_{-s,l}(w,b)^{-1}  K^{-1}(b,w)=0.$ Therefore, taking into account $(\eta_b\eta_w)^2=1$ for $(bw)^\star\in l,$
\begin{multline*}
\Sigma_1=\frac{s_\delta}{\pi}\sum_{(bw)^\star\in l}\left(\frac{1}{w-v}-\frac{(\eta_b\eta_w)^2}{\bar{w}-\bar{v}}\right)d(bw)^\star+O(1)\\
=\frac{2is_\delta}{\pi}\sum_{(bw)^\star\in l}\frac{(bw)^\star}{\Im v-\Im w}+O(1)=\frac{2is_\delta}{\pi}\log \delta^{-1}+O(1).
\end{multline*}

To show that $\Sigma_2=O(1)$, we split $
l=\{(bw)^\star \in l: \Im w> \Im v/4\}\cup \{(bw)^\star \in l: \Im w\leq \Im v/4\}=:l_1\cup l_2.$ On $l_1$, we use Lemma \ref{lemma:Krhoinv} to replace $K_{s,-s}$ and $K_{-s,s}$ with $S_{s, -s}$ and $S_{-s,s}$; the error term in that lemma sums to $O(1)$, the sum behaves as a Riemann sum for $b\mapsto \delta^{-1}\Kinv[\pm s](\delta^{-1}\bar{b},\delta^{-1}b,v,\bar{v})$ which has an integrable (uniformly in $s$) singularity at $\delta^{-1}b=v$; specifically $\delta^{-1}S_{s,-s}(\delta^{-1}\bar{b},\delta^{-1} w)=O(|w-v_\delta|^{-2s})$. Hence the contribution of $l_1$ to $\Sigma_2$ is $O(1)$. Now, fix $(bw)^\star \in l_2$, and note that the function
\[
   U_w(x):=\eta^2_x\left(\chi_{-s,l}(w,x)\delta^{-1}K^{-1}_{-s,s}(\delta^{-1}\bar{x},\delta^{-1}w)-\chi_{s,l}(w,x)\delta^{-1}K^{-1}_{s,-s}(\delta^{-1}\bar{x},\delta^{-1}w)\right)
\]
is discrete holomorphic (i.e., satisfies $K_\delta U_w\equiv 0$) in the ball $|\delta^{-1}x-\Re{v}|\leq \Im v/2$ (One can view this function as obtained by representing the multi-valued functions $x\mapsto K_{-s,s}(\delta^{-1}\bar{x},\delta^{-1}w)$ and $x\mapsto K_{s,-s}(\delta^{-1}\bar{x},\delta^{-1}w)$ by their branches w.r.t. a cut disjoint with that ball.) Therefore, by maximum principle, \[|U_w(b)|\leq \max_{x:|\delta^{-1}x-\Re v|\in [\Im v/2-1,\Im v/2]}|U_w(x)|=O(1),\] uniformly in $(bw)^\star \in l_2$, using  We conclude that $\Sigma_2 = O(1).$

To conclude the proof, we note that $s'_\delta=O(\lambda^{-\frac12}\sigma_\delta^{-\frac12})$ and
 \begin{equation}
 \label{eq: ssprime}
 s_\delta'(\lambda)s_\delta(\lambda)=\frac{1}{4\pi^2 \sigma_{\delta}}-\frac{\lambda}{6\pi^2\sigma_\delta^2}+O\left(\lambda^2\sigma_\delta^{-3}\right).
 \end{equation}
Plugging this and \eqref{eq: ssprime} and the estimates of $\Sigma_1$ and $\Sigma_2$ into \eqref{eq: dlogExpectation_interim}, and using \eqref{eq:asymp_of_mu}, we get
\begin{multline*}
 \frac{d}{d\lambda} \log \EE e^{-\lambda X_\delta}=\frac{\mu_\delta}{\sigma_\delta}-\frac{\log \delta^{-1}}{\pi^2\sigma_\delta}+\frac{2\lambda\log \delta^{-1}}{3\pi^2\sigma_\delta^2}+O\left(\sigma_\delta^{-1}+\frac{\lambda}{\sigma_\delta^2}+\frac{\lambda^2\log \delta^{-1}}{\sigma_\delta^3}+\lambda^{-\frac12}\sigma_\delta^{-\frac12}\right)\\=\lambda+O\left(\lambda^{-\frac12}(\log \delta^{-1})^{-\frac14}\right),
\end{multline*}
uniformly in $\lambda\in(0,R]$ for any $R$.
\end{proof}

We will use the following elementary analog of L\'evy continuity theorem.
\begin{lemma}
\label{lem:tightness}
Let $\{Y_\delta\}_{\delta>0},$ be a family of scalar random variables such that for any $\lambda\geq 0$, one has
\[
\mathbb{E}e^{-\lambda Y_\delta}\stackrel{\delta\to 0}{\longrightarrow}h(\lambda)
\] where  $h:[0,\infty)\to \RR $ is continuous. Then, $Y_\delta\stackrel{\delta\to 0}{\longrightarrow}Y$ in distribution, where the distribution of $Y$ is uniquely characterized by the condition $\mathbb{E}e^{-\lambda Y}= h(\lambda)$ for all $\lambda\geq 0$.
\end{lemma}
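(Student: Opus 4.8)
The plan is to run the classical Prokhorov/Helly argument for Laplace transforms: prove that the family is tight, extract a weak subsequential limit, identify that limit through its Laplace transform, and invoke uniqueness of the Laplace transform to conclude. Since convergence in distribution as $\delta\to 0$ is the same as convergence along every sequence $\delta_k\to 0^+$, I would fix such a sequence and write $Y_k:=Y_{\delta_k}$. Two elementary observations are used throughout: for each fixed $\lambda\ge 0$ the numbers $\mathbb{E}e^{-\lambda Y_k}$ are bounded in $k$ (they converge to $h(\lambda)$), and $h(0)=1$ with $h(\lambda)\to 1$ as $\lambda\to 0^+$ by continuity of $h$ at $0$.

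\emph{Tightness.} The left tail is immediate from Markov's inequality: $\mathbb{P}[Y_k\le -M]\le e^{-M}\mathbb{E}e^{-Y_k}=O(e^{-M})$ uniformly in $k$, and the same trick applied with exponent $2$ yields the uniform estimate $\mathbb{E}[e^{-\lambda Y_k}\mathbf{1}[Y_k\le -M]]\le e^{-M}\mathbb{E}e^{-2Y_k}$ for $0<\lambda\le 1$, which I would keep for the identification step. The right tail is the only subtle point, and it is precisely where continuity of $h$ at $0$ enters: a Laplace transform restricted to $\lambda\ge 0$ only ``sees'' the right tail through the fact that $h(\lambda)\to 1$. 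I would write $\mathbb{E}[1-e^{-\lambda Y_k}]=(1-\mathbb{E}e^{-\lambda Y_k})+\mathbb{E}[(e^{-\lambda Y_k}-1)\mathbf{1}[Y_k<0]]$, bound the first summand by $1-h(\lambda)+o(1)$, and split the second at a level $-M'$: below $-M'$ apply the uniform-integrability bound above, while on $[-M',0)$ use $e^{\lambda|y|}-1\le \lambda|y|e^{\lambda|y|}\le \lambda M'e^{\lambda M'}$. Choosing $M'$ large and then $\lambda$ small (so that $\lambda M'$ is small) makes $\mathbb{E}[(1-e^{-\lambda Y_k})\mathbf{1}[Y_k\ge 0]]$ small uniformly in large $k$; since this quantity is at least $(1-e^{-\lambda M})\mathbb{P}[Y_k\ge M]$, letting $M\to\infty$ gives $\sup_k\mathbb{P}[Y_k\ge M]\to 0$. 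I expect the careful bookkeeping of the three parameters $M'$, $\lambda$, $M$ to be the main obstacle of the whole argument.

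\emph{Identification and conclusion.} By Helly's selection theorem (or Prokhorov), some subsequence of $Y_k$ converges in distribution, and by the tightness just proved the limiting subdistribution function carries full mass, i.e.\ is the distribution function of an honest random variable $Y$. Passing to the limit in $\mathbb{E}e^{-\lambda Y_k}$ needs only the uniform-integrability estimate from the tightness step (the integrand $e^{-\lambda\,\cdot}$ is unbounded only near $-\infty$) together with Fatou to control $\mathbb{E}e^{-2\lambda Y}$; this yields $\mathbb{E}e^{-\lambda Y}=h(\lambda)$ for all $\lambda\ge 0$. Finally, since $h$ is everywhere finite, the moment generating function $t\mapsto \mathbb{E}e^{-tY}$ is finite on the open interval $(0,\infty)$, hence extends holomorphically to $\{\operatorname{Re}z>0\}$ and, by dominated convergence, continuously up to the imaginary axis, where it recovers the characteristic function of $Y$; so the law of $Y$ is uniquely determined by $h$. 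In particular all subsequential weak limits of $\{Y_k\}$ share this law, so the full sequence $Y_k$ converges to it; as the sequence $\delta_k\to 0^+$ was arbitrary, $Y_\delta$ converges in distribution to $Y$, and the uniqueness assertion in the lemma is exactly the one just established.
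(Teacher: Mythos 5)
Your proof is correct and ultimately follows the same skeleton as the paper's: tightness, subsequential weak limit, identification of the limit's Laplace transform via uniform integrability, and uniqueness via analytic continuation to the characteristic function. The genuine difference is in how you rule out mass escaping to $+\infty$. You attack the right tail directly, showing $\sup_k\PP[Y_k\ge M]\to 0$ by a three-parameter $(M',\lambda,M)$ estimate that trades on $h(\lambda)\to 1$ as $\lambda\searrow 0$. This works (modulo a small typo: the left side of your decomposition should read $\EE[(1-e^{-\lambda Y_k})\indic[Y_k\ge 0]]$ rather than $\EE[1-e^{-\lambda Y_k}]$, and one must remember to choose $M\approx 1/\lambda$ so that $1-e^{-\lambda M}$ stays bounded away from zero), but it is precisely the ``careful bookkeeping'' you flagged. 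The paper sidesteps this by working in the compactification $\RR\cup\{+\infty\}$: there the family is automatically tight once the left tail is controlled by Markov, a subsequential limit $Y$ (possibly with an atom at $+\infty$) is extracted, and only \emph{afterwards} is $\PP(Y<+\infty)=1$ deduced from $\lim_{\lambda\searrow 0}\EE e^{-\lambda Y}=h(0)=1$ by dominated convergence. Both routes buy the same conclusion, but the compactification trick replaces your multi-parameter estimate with a one-line limit, and it cleanly isolates the exact role of the continuity of $h$ at $0$. The remaining steps (uniform integrability of $e^{-\lambda Y_\delta}$ from $L^2$-boundedness, and Laplace uniqueness via holomorphic extension to $\Re z>0$ and continuity up to the imaginary axis) are essentially identical in the two arguments.
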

\begin{proof}
Since $\mathbb{P}(Y_\delta\leq -R)\leq e^{-R}\mathbb{E}e^{-Y_\delta}$, the family $Y_\delta$ is tight in $\mathbb{R}\cup \{+\infty\}.$  From any sequence $\delta_k\to 0$, one can extract a subsequence $\delta_{k(m)}$ such that $Y_{\delta_{k(m)}}$, converges in distribution, say to $Y$. For any $\lambda>0$, the function $x\mapsto e^{-\lambda x}$, extended by $0$ at $+\infty$, is continuous on $\mathbb{R}\cup \{+\infty\},$  moreover, the family $\{e^{-\lambda Y_\delta}\}_{\delta>0}$ is bounded in $L^2$, in particular, uniformly integrable. It follows that $\mathbb{E}e^{-\lambda Y}=h(\lambda)$. By continuity of $h$ and dominated convergence, \[1=h(0)=\lim_{\lambda\searrow 0}\mathbb{E}e^{-\lambda Y}=\mathbb{P}(Y<+\infty),\]that is, $Y$ in fact is $\mathbb{R}$-valued. Since the condition $\mathbb{E}e^{-\lambda Y}=h(\lambda)$ for all $\lambda\geq 0$ determines the law of $Y$ uniquely (by analytic continuation to $\Re \lambda>0$ and Laplace inversion), the claim follows.
\end{proof}
\begin{proof}[Proof of Theorem \ref{thma:main1}.]
Parts 1 and 2 are already proven in Lemma \ref{lemma:asymp_of_mu_delta_sigma_delta}. Integrating the bound in Proposition \ref{prop: log_der_Laplace}, one readily checks that $X_\delta$ satisfy the conditions of Lemma \ref{lem:tightness} with $h(\lambda)=\exp(\lambda^2/2)$, which is a Laplace transform of a standard Gaussian.
\end{proof}

\begin{rem}
Note that tightness of $X_\delta$ also follows from \eqref{eq:asymp_of_mu}. On the other hand, the above proof of Part 3 of Theorem \ref{thma:main1}(3) \emph{does not}, in fact, rely on \eqref{eq:asymp_of_mu} or other results in Section \ref{sec:Height_function_loop_statistics_and_monodromy}, insofar as one is willing to replace $\mu_\delta$ and $\sigma_\delta$ in the the statement by $-\frac{1}{\pi^2}\log \delta$ and $\sqrt{-\frac{2}{3\pi^2}\log \delta}$ respectively.
\end{rem}

\section{The double dimer nesting field and the \texorpdfstring{${\CLE_4}$}{CLE(4)} nesting field}
\label{sec:CLE_nesting}

Recall that $\vphi_\delta(x) = N_\delta(x) - \EE N_\delta(x)$ denotes the nesting field associated with the double-dimer model in $\CC_\delta^+$. The main goal of this section is to prove Theorem~\ref{thma:main2} asserting that $\vphi_\delta$ converges to the nesting field $\vphi$ of $\CLE_4$ introduced by Miller, Watson and Wilson~\cite{miller2015conformal}. Let us briefly describe our strategy before we go into details. To control nesting fields we consider their regularized versions $\varphi^\eps(x)=N(x,x+\eps) - \EE N(x,x+\eps)$ and $\varphi_\delta^\eps(x)=N_\delta(x,x+\eps) - \EE N_\delta(x,x+\eps)$, as defined in Section \ref{subsec:combinatorial_corresp_betwen_nesting_and_height}. The proof of the theorem comes from comparing $\vphi$ to $\vphi^\eps$, $\vphi^\eps$ to $\vphi^\eps_\delta$, and finally $\vphi^\eps_\delta$ to $\vphi_\delta$. We implement it as follows:

\begin{enumerate}
    \item In Section~\ref{subsec:def_of_mL_loc}, we review a toolbox for convergence of random fields in Sobolev spaces of negative regularity.
    \item In Section~\ref{subsec:nesting_fields_for_CLE(4)}, we estimate the error between $\vphi$ and $\vphi^\eps$. This is done by repeating Miller--Watson--Wilson arguments almost verbatim; still we include this part for completeness.
    \item In Section~\ref{sec:FieldConv}, we proceed by proving that $\vphi^\eps_\delta$ converge in distribution to $\vphi^\eps$. This is done using results of Kenyon, Dub\'edat, Basok--Chelkak and Bai--Wan on topological properties of double-dimer loops.
    \item In Section~\ref{subsec:double-dimer_nesting_fields}, we estimate the error between $\vphi^\eps_\delta$ and $\varphi_\delta$, and in Section~\ref{subsec:proof_main2}, we put everything together to complete the proof.
\end{enumerate}

\subsection{
Random fields belonging to a Sobolev class}
\label{subsec:def_of_mL_loc}

Given an open set $\Omega\Subset \CC$, denote by $\mC_c^\infty(\Omega)$ the set of smooth functions with compact support lying in $\Omega$. Denote by $\Distr[\Omega]$ the set of Schwartz distributions in $\Omega$. Recall that, given $\phi\in \mC^\infty_c(\CC)$ and $s\in \RR$, the ($\mL^2$) Sobolev norm $\|\phi\|_{\mH^s(\CC)}$ is defined as
\[
    \|\phi\|^2_{\mH^s(\CC)} = \int_\CC (|1 + |z|^2|)^s |\widehat{\phi}(z)|^2\,dz
\]
where $\widehat{\phi}$ is the Fourier transform of $\phi$. The Sobolev norm of those functions $\phi$ whose support is contained in a square $x_0 + (R,R)^2$ can be conveniently measured using the Fourier transform on a torus instead of the plane.
\begin{lemma}
\label{lemma:Sobolev_on_torus}
    Fix $R>0, z_0\in \CC$, put $\TT^2 = \CC/(2\pi\ZZ)^2$, and let $\alpha: \TT^2\to z_0 + (-R,R]^2$ be defined by $\alpha(z) = z_0 + R\pi^{-1} z$, where $\TT^2$ is identified with $(-\pi, \pi]^2$. Given $k\in \ZZ^2$ and a smooth function $\phi$ with $\supp \phi\subset z_0 + (-R,R)^2$ put
    \[
        \widehat{\alpha^\ast\phi}(k) = (2\pi)^{-2}\int_{\TT^2}e^{-iz\cdot k}(\phi\circ \alpha)(z)\,dz.
    \]
    There exists a constant $C$ depending on $R$ and $s$ continuously and independent on everything else such that
    \[
        \|\phi\|^2_{\mH^s(\CC)}\asymp_C \sum_{k\in \ZZ^2}(1 + |k|^2)^s |\widehat{\alpha^\ast\phi}(k)|^2,
    \]
    where $A\asymp_C B$ means $C^{-1}A \leq B \leq CA$.
\end{lemma}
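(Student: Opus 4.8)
The plan is to reduce the Sobolev norm on $\CC$ to a Sobolev norm on the torus $\TT^2$ by an explicit rescaling, and then compare the continuous Fourier transform on $\TT^2$ with the Fourier \emph{series}. First I would normalize: by translating, it suffices to take $z_0=0$, and by the scaling behaviour of the $\mH^s$ norm under $\phi\mapsto \phi\circ\alpha$ one reduces to the case $R=\pi$, i.e. $\alpha=\mathrm{id}$ and $\supp\phi\subset(-\pi,\pi)^2$. (The constant picked up in this reduction depends only on $R$ and $s$, continuously, as claimed, since it is an explicit power of $R$ times a harmless factor coming from comparing $(1+|z|^2)^s$ with $(R^2\pi^{-2}+|z|^2\cdot\text{stuff})^s$; I would state this comparison of Japanese brackets as an elementary lemma, $ (1+|z/\lambda|^2)^s\asymp_{C(\lambda,s)}(1+|z|^2)^s$ for $z$ restricted to a fixed ball, but here we need it for all $z\in\CC$ in the plane transform, so more care is needed — see below.)

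The core step is then: for $\phi$ smooth and supported in $(-\pi,\pi)^2$, compare
\[
\|\phi\|_{\mH^s(\CC)}^2=\int_\CC(1+|z|^2)^s|\widehat\phi(z)|^2\,dz
\quad\text{with}\quad
\sum_{k\in\ZZ^2}(1+|k|^2)^s|\widehat{\phi}_{\mathrm{per}}(k)|^2,
\]
where $\widehat\phi_{\mathrm{per}}(k)=(2\pi)^{-2}\int_{\TT^2}e^{-iz\cdot k}\phi(z)\,dz$ is exactly $(2\pi)^{-1}$ times the value of the plane transform $\widehat\phi$ at the lattice point $k$ (with the convention $\widehat\phi(\xi)=\int e^{-iz\cdot\xi}\phi(z)\,dz$; the precise power of $2\pi$ is bookkeeping). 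So the right-hand side is $(2\pi)^{-2}\sum_k(1+|k|^2)^s|\widehat\phi(k)|^2$, and the claim becomes a statement that sampling the weighted $L^2$ density $(1+|\xi|^2)^s|\widehat\phi(\xi)|^2$ on the integer lattice gives something comparable to its integral. This is true precisely because $\widehat\phi$ is (the restriction to $\RR^2$ of) an entire function of exponential type $\leq\pi$ in each variable — a Paley–Wiener function — and for such functions one has a Plancherel–Pólya / Bernstein-type inequality: both $\int_\CC|\widehat\phi(\xi)|^2\,d\xi$ and $\sum_k|\widehat\phi(k)|^2$ are comparable (in fact for type $<\pi$ there is equality in the sampling theorem, and for type $\le\pi$ on a strictly larger lattice spacing one still gets two-sided bounds), and the weight $(1+|\xi|^2)^s$ varies slowly, namely $(1+|\xi|^2)^s\asymp(1+|k|^2)^s$ uniformly for $\xi$ in the unit cube around $k$, with constant depending only on $s$. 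Concretely I would: (i) write $\int_\CC(1+|z|^2)^s|\widehat\phi(z)|^2\,dz=\sum_{k\in\ZZ^2}\int_{k+[0,1)^2}(1+|z|^2)^s|\widehat\phi(z)|^2\,dz$; (ii) replace $(1+|z|^2)^s$ by $(1+|k|^2)^s$ up to the slowly-varying constant $c(s)$; (iii) invoke the Plancherel–Pólya inequality $\int_{k+[0,1)^2}|\widehat\phi(z)|^2\,dz\asymp_{C} \sum_{\text{nearby }k'}|\widehat\phi(k')|^2$ for exponential-type-$\pi$ functions — more cheaply, since $\phi$ is supported in $(-\pi,\pi)^2$ one can periodize onto a slightly larger torus $\TT_R^2=\CC/(2R\ZZ)^2$ with $R>\pi$ where the Fourier series converges to a genuine extension and Parseval is exact, and then a finite mollification/change-of-lattice argument brings the lattice back to $\ZZ^2$; (iv) sum over $k$ and combine with the same inequality with $k$ and $z$ interchanged (using $(1+|k|^2)^s\asymp(1+|z|^2)^s$ on $k+[0,1)^2$) to get the matching upper bound. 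The uniformity and continuity of $C$ in $R,s$ is then manifest from each step.

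The main obstacle is step (iii): making the comparison between the continuous $L^2$ norm of the band-limited function $\widehat\phi$ and its $\ell^2$ norm on the integer lattice completely rigorous and with a constant that genuinely depends only on $s$ (and, after de-normalizing, continuously on $R$). The cleanest route avoiding a citation to the Plancherel–Pólya literature is probably to enlarge the torus: choose $R'>R$, periodize $\phi$ on $(2R'\ZZ)^2$, apply exact Parseval on that torus to get $\|\phi\|_{L^2}^2 = c\sum_{k\in(\pi/R')\ZZ^2}|\widehat\phi(k)|^2$ and more generally $\|\phi\|_{\mH^s}^2\asymp\sum_{k\in(\pi/R')\ZZ^2}(1+|k|^2)^s|\widehat\phi(k)|^2$ by the same slowly-varying-weight argument, and then observe that passing between the lattice $(\pi/R')\ZZ^2$ and $(\pi/R)\ZZ^2$ only changes the sum by a bounded factor because $\widehat\phi$ and all its derivatives are controlled (Bernstein's inequality: $\|\partial\widehat\phi\|_{L^\infty(\text{cube})}\lesssim\|\widehat\phi\|_{L^2(\text{larger cube})}$ for band-limited $\widehat\phi$), so finitely many Taylor-type interpolation estimates relate the two lattice sums. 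I would present this as a short self-contained argument rather than grinding the constants, since only their existence, and the continuous dependence on $R$ and $s$, is needed downstream.
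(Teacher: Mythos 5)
Your route is genuinely different from the paper's, which simply cites a standard reference (Taylor, Ch.~4.3; the usual argument there exploits the \emph{locality} of $(1-\Delta)^m$), and the crucial step (iii) has a real gap that you correctly identified as the weak point. The ``local Plancherel--P\'olya'' inequality $\int_{k+[0,1)^2}|\widehat\phi(z)|^2\,dz\lesssim\sum_{k'\text{ near }k}|\widehat\phi(k')|^2$ for exponential-type-$\pi$ functions is false with a uniform constant: already in one variable, $f_\eps(\xi)=\sin\bigl(\pi(1-3\eps)\xi\bigr)\cdot\bigl(\tfrac{\sin\pi\eps\xi}{\pi\eps\xi}\bigr)^{3}$ is in $L^2$, has exponential type $\pi$, and satisfies $|f_\eps(k)|=O(\eps)$ for bounded $k$ while $\int_0^1|f_\eps|^2\to\tfrac12$ as $\eps\to 0$, so the local constant blows up as the type approaches $\pi$. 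In your setting the type of $\widehat\phi$ does approach $\pi$ as $\supp\phi$ approaches the boundary of the cube, whereas the lemma's constant must be independent of $\phi$. The genuine Plancherel--P\'olya identity $\sum_k|\widehat\phi(k)|^2\asymp\|\widehat\phi\|^2_{L^2(\RR^2)}$ is \emph{global} and does not localize cube-by-cube, which is exactly what is needed once the weight $(1+|k|^2)^s$ is inserted. Your enlarged-torus alternative does not repair this: the assertion that $\|\phi\|^2_{\mH^s}\asymp\sum_{k\in(\pi/R')\ZZ^2}(1+|k|^2)^s|\widehat\phi(k)|^2$ ``follows by the same slowly-varying-weight argument'' is not a consequence of Parseval --- it is precisely the statement of the lemma on the bigger torus --- and the subsequent lattice-change-via-Bernstein step would again implicitly require the local sampling bound.

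The elementary observation you are missing, and which makes the lemma easy, is that for $s=2m$ an even integer $\geq 0$, $(1-\Delta)^m$ is a \emph{local} differential operator, so $(1-\Delta)^m\phi$ is still supported in the cube; applying the $s=0$ Parseval identity to $(1-\Delta)^m\phi$ gives the $s=2m$ case as an exact identity, up to a universal normalization constant and with no sampling theory at all. Negative even integers then follow by duality and intermediate $s$ by interpolation, with the constant continuous in $s$. One direction of your strategy can be salvaged: writing $\widehat\phi=\widehat\phi\ast\chi$ with $\chi$ Schwartz and $\hat\chi\equiv 1$ on $[-\pi,\pi]^2$ gives $|\widehat\phi(k)|^2\lesssim\sum_{k'}(1+|k-k'|)^{-N}\int_{k'+[0,1)^2}|\widehat\phi|^2$, and Peetre's inequality then yields $\sum_k(1+|k|^2)^s|\widehat\phi(k)|^2\lesssim\int(1+|\xi|^2)^s|\widehat\phi|^2$. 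But the reverse inequality genuinely wants the locality of $(1-\Delta)^m$ rather than a pointwise sampling bound.
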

\begin{proof}
    Follows from the discussion in~\cite[Chapter~4.3]{taylor1996partial}.
\end{proof}

Given $f\in \Distr[\Omega]$ and $K\Subset\Omega$ we define
\[
    \|f\|_{\mH^s(K)} = \sup\{f(\phi)\ \mid\ \phi\in \mC_c(K),\ \|\phi\|_{\mH^{-s}(K)} = 1\}
\]
and put
\[
    \mH^s_\loc(\Omega) = \{f\in \Distr[\Omega]\ \mid\ \forall K\Subset\Omega,\ \|f\|_{\mH^s(K)}<\infty\}.
\]
Let now $(X,\PP)$ be a probability space. A map $\vphi: X\to \Distr[\Omega]$ that is measurable with respect to the Borel sigma algebra on the right-hand side is called a \emph{random field}. We consider two random fields to be the same if they coincide almost surely.
Given a random field $\vphi$, consider the linear operator
\[
    \Adj_\vphi: \mC_c^\infty(\Omega)\to \{F: X\to \RR\}/_\sim,\qquad \Adj_\vphi(\phi)(x)\coloneqq \vphi(x)(\phi),
\]
where $F_1\sim F_2$ if they coincide almost surely. Note that $\Adj_\vphi(\phi)$ is Borel measurable for every $\phi\in \mC_c(\Omega)$.
Given $K\Subset \Omega$ put
\begin{equation}
    \label{eq:def_of_mA(K)}
    \begin{split}
        &\mA(K) \coloneqq\{A: \mC_c(K)\to \mL^2(X,\PP)\ \mid\ A \text{ is linear, }\|A\|_{\mL^\infty(K)\to \mL^2(X)} < \infty\},\\
        &\|A\|_{\mL^\infty(K)\to \mL^2(X)}^2 = \sup_{\|\phi\|_\infty = 1} \EE A(\phi)^2.
    \end{split}
\end{equation}
Also, put
\begin{equation}
    \label{eq:def_of_mAloc}
    \mA_\loc(\Omega) \coloneqq \{ A: \mC_c(\Omega)\to \mL^2(X,\PP)\ \mid\ \forall K\Subset \Omega,\ A\vert_{\mC_c(K)}\in \mA(K) \}.
\end{equation}
Using the operator norm on each $A(K)$ as a semi-norm on $\mA_\loc(\Omega)$ we can make $\mA_\loc(\Omega)$ to be a Fr\'echet space. It is easy to see that this space is metrizable and the metric is complete.

The next proposition asserts that each $A\in \mA_\loc(\Omega)$ has the form $\Adj_\vphi$ for some random field and, moreover, a suitable Sobolev norm of $\vphi$ can be controlled via the operator norm of $A$.

\begin{prop}
    \label{prop:representable_operators}
    For each $A\in \mA_\loc(\Omega)$ there is a unique (defined almost everywhere) $\vphi:X\to \Distr[\Omega]$ such that $A = \Adj_\vphi$. Moreover, for each open relatively compact $K\Subset \widetilde{K}\Subset \Omega$ and $\nu>0$, there exists a constant $C_{K, \widetilde{K},\nu}>0$ independent of $\vphi$ such that
        \[
            \EE\|\vphi\|_{\mH^{-1-\nu}(K)}^2 \leq C_{K,\widetilde{K},\nu}\|\Adj_\vphi\|_{\mL^\infty(\widetilde{K})\to \mL^2(X)}^2.
        \]
\end{prop}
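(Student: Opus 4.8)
The plan is to produce the distribution $\vphi(x)$ by duality and then estimate its Sobolev norm using a partition of unity together with the torus description of $\mH^s$ from Lemma~\ref{lemma:Sobolev_on_torus}.

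\textbf{Construction of $\vphi$.} First I would fix a countable set $\Phi\subset\mC^\infty_c(\Omega)$ that is dense in $\mC^\infty_c(K)$ for every $K\Subset\Omega$ (in the $\mC^\infty_c$ topology), for instance a countable set of smooth bump functions. For each $\phi\in\Phi$ choose a representative of $\Adj_A(\phi)\in \mL^2(X,\PP)$; on the co-null set where all the chosen representatives are finite and the ($\QQ$-linear) relations among finitely many elements of $\Phi$ are respected, the map $\phi\mapsto \Adj_A(\phi)(x)$ is $\QQ$-linear on $\Phi$. The estimate $\|A|_{\mC_c(K)}\|_{\mL^\infty(K)\to\mL^2(X)}<\infty$ combined with a standard Sobolev-embedding argument (the one carried out quantitatively in the norm estimate below) shows that for a.e.\ $x$ the functional $\phi\mapsto \Adj_A(\phi)(x)$ is continuous for the $\mC^\infty_c(K)$-topology on $\Phi$, hence extends uniquely to an element $\vphi(x)\in\Distr[\Omega]$; measurability of $x\mapsto\vphi(x)$ follows since each $\vphi(\cdot)(\phi)$, $\phi\in\Phi$, is measurable and the Borel $\sigma$-algebra on $\Distr[\Omega]$ is generated by the evaluation maps. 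Uniqueness a.e.\ of $\vphi$ is immediate from density of $\Phi$ and the fact that $\Adj_\vphi$ is determined on $\Phi$.

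\textbf{The norm estimate.} Fix $K\Subset\widetilde K\Subset\Omega$ and $\nu>0$. Cover $\overline K$ by finitely many open squares $Q_m=z_m+(-R_m,R_m)^2$ with $2$-times-dilated squares still inside $\widetilde K$, and pick a smooth partition of unity $\chi_m$ subordinate to $\{Q_m\}$ with $\sum_m\chi_m\equiv 1$ near $\overline K$. For a test function $\phi$ supported in $K$ with $\|\phi\|_{\mH^{1+\nu}}=1$, write $\phi=\sum_m\chi_m\phi$; each $\chi_m\phi$ is supported in $Q_m$, and multiplication by $\chi_m$ is bounded on $\mH^{1+\nu}$, so $\|\chi_m\phi\|_{\mH^{1+\nu}}\lesssim 1$. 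By Lemma~\ref{lemma:Sobolev_on_torus} applied to the square $Q_m$,
\[
\vphi(x)(\chi_m\phi)=\sum_{k\in\ZZ^2}\widehat{\alpha_m^\ast(\chi_m\phi)}(k)\,\vphi(x)\big(\overline{e_{m,k}}\big),
\]
where $e_{m,k}=\alpha_m^{-*}(e^{ik\cdot z})$ is the corresponding Fourier mode transplanted to $Q_m$ (a fixed smooth function supported in a neighbourhood of $Q_m$ inside $\widetilde K$, with $\|e_{m,k}\|_\infty\le 1$). Then by Cauchy--Schwarz in $k$, with weights $(1+|k|^2)^{\pm(1+\nu)}$,
\[
\big|\vphi(x)(\chi_m\phi)\big|^2\ \le\ \Big(\sum_k (1+|k|^2)^{1+\nu}\big|\widehat{\alpha_m^\ast(\chi_m\phi)}(k)\big|^2\Big)\Big(\sum_k (1+|k|^2)^{-1-\nu}\big|\vphi(x)(\overline{e_{m,k}})\big|^2\Big),
\]
the first factor being $\asymp\|\chi_m\phi\|_{\mH^{1+\nu}}^2\lesssim 1$. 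Summing over $m$, taking expectations, and using $\EE|\vphi(x)(\overline{e_{m,k}})|^2=\EE \Adj_\vphi(\overline{e_{m,k}})^2\le\|\Adj_\vphi\|_{\mL^\infty(\widetilde K)\to\mL^2(X)}^2$ (legitimate since $\overline{e_{m,k}}\in\mC_c(\widetilde K)$ with sup-norm $\le1$), we get
\[
\EE\,\big|\vphi(x)(\phi)\big|^2\ \lesssim\ \Big(\sum_{m}\sum_{k\in\ZZ^2}(1+|k|^2)^{-1-\nu}\Big)\,\|\Adj_\vphi\|_{\mL^\infty(\widetilde K)\to\mL^2(X)}^2,
\]
and the bracketed sum converges because $\nu>0$ (this is exactly where the gain $-1-\nu$ over $-1$ is used: in dimension $2$, $\sum_k (1+|k|^2)^{-1-\nu}<\infty$ iff $\nu>0$). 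Taking the supremum over such $\phi$ gives $\EE\|\vphi\|_{\mH^{-1-\nu}(K)}^2\le C_{K,\widetilde K,\nu}\|\Adj_\vphi\|_{\mL^\infty(\widetilde K)\to\mL^2(X)}^2$. The same chain of inequalities, applied pointwise in $x$ to a countable dense family of $\phi$, also furnishes the a.e.\ continuity used in the construction step, so the two parts fit together.

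\textbf{Main obstacle.} The genuinely delicate point is the joint measurability / the "null set" bookkeeping in the construction: one must exhibit a single co-null event on which $\phi\mapsto\vphi(x)(\phi)$ is simultaneously $\QQ$-linear and $\mC^\infty_c$-continuous on a countable dense set, so that the extension to a distribution is well-defined for a.e.\ $x$ and depends measurably on $x$. Everything else is the routine harmonic-analysis estimate above; the partition of unity plus Lemma~\ref{lemma:Sobolev_on_torus} reduces the Sobolev norm to a weighted $\ell^2$ sum of the $\mL^2(X)$-norms of the (uniformly bounded) test functionals $\overline{e_{m,k}}$, and summability of $(1+|k|^2)^{-1-\nu}$ over $\ZZ^2$ closes the argument.
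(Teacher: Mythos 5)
Your proposal is correct and follows essentially the same route as the paper's proof: a partition of unity, the torus description of $\mH^s$ from Lemma~\ref{lemma:Sobolev_on_torus}, Cauchy--Schwarz with weights $(1+|k|^2)^{\pm(1+\nu)}$, and summability of $(1+|k|^2)^{-1-\nu}$ over $\ZZ^2$. The only cosmetic differences are that the paper expands $\psi_m^{1/2}\phi$ rather than $\chi_m\phi$, so that the building blocks $\psi_m^{1/2}e_k$ are automatically compactly supported smooth test functions (instead of cutting off the torus modes $e_{m,k}$ by hand, which as written leaves a small inconsistency between ``Fourier mode transplanted to $Q_m$'' and ``supported in a neighbourhood of $Q_m$''), and that the paper constructs $\vphi$ directly as a random Fourier series and then verifies consistency across compacts, rather than by $\QQ$-linear extension from a countable dense family.
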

\begin{proof}
Note that the uniqueness statement follows trivially from the inequality in the proposition and the fact that $\Adj_\vphi$ is linear in $\vphi$.

Fix $K,\widetilde{K}$ and pick a finite cover $\{B_m\}$ of $K$ by balls $B_m=B(z_m,\eps_m)$ such that $B(z_m,2\eps_m)\subset \widetilde{K}$. Let $\psi_m\in \mC_c^\infty(\Omega)$ be a smooth partition of unity subordinate to $B_m$, i.e., $\psi_m\in \mC^\infty$, $0\leq \psi_m\leq 1$, $\sum_m \psi_m\equiv 1$ on $K$ and $\supp \psi_m\subset B(z_m,2\eps_m)$. For a function $\phi \in \mC_c^\infty(K)$, we consider the Fourier expansion
\[
\left(\psi_m\right)^\frac12\phi=\sum_{k\in\ZZ^2} \alpha_{m,k} e_k,\]
where $e_k$ denotes the pullback of the function $e^{ix\cdot k}$ to $z_m+[-2\eps_m,2\eps_m]^2$ as in Lemma \ref{lemma:Sobolev_on_torus}. For each $m,k$ fix a function on $X$ representing $A(\psi_m^{\frac{1}{2}}e_k)$, and define $\vphi_{K, \widetilde{K}}:X\to \mathcal{D}'(K)$ by
\[
\vphi_{K, \widetilde{K}}(x)(\phi):=\sum_m\sum_{k\in\ZZ^2} \alpha_{m,k}A\left(\left(\psi_m\right)^\frac12 e_k\right)(x).
\]
By Cauchy--Schwarz, we have for every $x$
\begin{multline}
|\vphi_{K, \widetilde{K}}(x)(\phi)|\leq \sum_m \left(\sum_{k\in\ZZ^2} (\alpha_{m,k})^2 (1 + |k|^2)^{1+\nu}\right)^\frac12\left(\sum_{k\in\ZZ^2} \left|A\left(\left(\psi_m\right)^\frac12 e_k\right)(x)\right|^2(1 + |k|^2)^{-1-\nu}\right)^\frac12\\
\leq \sum_m\|\psi^\frac12_m\phi\|_{\mH^{1+\nu}(\TT^2)}\left(\sum_{k\in\ZZ^2} \left|A\left(\left(\psi_m\right)^\frac12 e_k\right)(x)\right|^2(1 + |k|^2)^{-1-\nu}\right)^\frac12,
\end{multline}
and the last factor is finite almost surely since it has a finite second moment:
\[
\EE\left(\sum_{k\in\ZZ^2} \left|A\left(\left(\psi_m\right)^\frac12 e_k\right)\right|^2(1 + |k|^2)^{-1-\nu}\right)\leq \|A\|_{\mL^\infty(\widetilde{K})\to \mL^2(X)}^2\sum_{k\in \ZZ^2}(1 + |k|^2)^{-1-\nu},
\]
see~\eqref{eq:def_of_mA(K)}. Since $\|\psi^\frac12_m\phi\|_{\mH^{1+\nu}(\TT^2)}\leq C_{\psi_m}\|\phi\|_{\mH^{1+\nu}(K)}$ (cf. Lemma~\ref{lemma:Sobolev_on_torus}), we conclude that $\|\vphi_{K, \widetilde{K}}\|_{\mH^{-1-\nu}(K)}$ is finite almost surely and satisfies $\EE\|\vphi_{K, \widetilde{K}}\|^2_{\mH^{-1-\nu}(K)}\leq C_{K,\widetilde{K},\nu}\|A\|_{\mL^\infty(\widetilde{K})\to \mL^2(X)}$.

Because of the above estimate, for a fixed $\phi$, we have almost surely
\[
A\left(\sum_m\left(\psi_m\right)^\frac12\sum_{|k|\leq R}\alpha_{m,k}e_k\right)(x)= \sum_m\sum_{|k|\leq R} \alpha_{m,k}A\left(\left(\psi_m\right)^\frac12 e_k\right)(x)\to \vphi_{K, \widetilde{K}}(x)(\phi).
\]
On the other hand, since $\sum_{|k|\leq R}\alpha_{m,k}e_k\to \left(\psi_m\right)^\frac12\phi$ in $L^\infty$, the left-hand side converges to $A(\phi)$ in $L^2$. We conclude that for a fixed $\phi$, we have $\vphi_{K, \widetilde{K}}(x)(\phi)=A(\phi)(x)$ almost surely.

Assume now that $K_1\Subset \widetilde{K}_1$ and $K_2\Subset \widetilde{K}_2$ are given and $K_1\subset K_2$. Picking a countable dense subset of $\mC_c^\infty(K_1)$, we infer that almost surely the restriction of $\vphi_{K_2, \widetilde{K}_2}$ to $\mC_c^\infty(K_1)$ coincides with $\vphi_{K_1,\widetilde{K}_1}$. It follows that the collection $\{\vphi_{K,\widetilde{K}}\}$ consistently defines a random field $\vphi: X\to \mH^{-1-\nu}_\loc(\Omega)$. This random field satisfies all the desired properties by construction.

\end{proof}

Proposition~\ref{prop:representable_operators} provides a correspondence between the space $\mA_\loc(\Omega)$ and a corresponding space of random fields whose `adjoint operators' have locally finite norm. More precisely, let us define
\begin{align}
    \label{eq:def_of_mA(A)_norm}
    &\|\vphi\|_{\Ff(K)}^2\coloneqq \sup_{\phi\in \mC_c^\infty(K),\ \|\phi\|_\infty = 1} \EE \vphi(\phi)^2 = \|\Adj_\vphi\|_{\mL^\infty(K)\to \mL^2(X)}^2,\\
    \label{eq:def_of_Ffloc}
    &\Ff_\loc(\Omega) = \{\vphi: X\to \Distr[\Omega]\ \mid\ \forall K\Subset \Omega,\ \|\vphi\|_{\Ff(K)} < \infty\}
\end{align}
and endow $\Ff_\loc(\Omega)$ with the topology generated by the semi-norms $\|\cdot\|_{\Ff(K)}$. We have
\begin{cor}
    \label{cor:Ff_is_complete}
    The topology on the space $\Ff_\loc(\Omega)$ is metrizable and the underlying metric space is complete. For each $\nu>0$ and $K\Subset \widetilde{K}\Subset \Omega$ there exists a constant $C_{K,\widetilde{K},\nu}>0$ such that
    \begin{equation}
        \label{eq:Sobolev_via_Ff}
        \EE \|\vphi\|^2_{\mH^{-1-\nu}(K)} \leq C_{K,\widetilde{K},\nu} \|\vphi\|^2_{\Ff(\widetilde{K})}\qquad \forall \vphi\in \Ff_\loc(\Omega).
    \end{equation}
    In particular, each $\vphi\in \Ff_\loc(\Omega)$ has its values in $\mH^{-1-\nu}_\loc(\Omega)$ almost surely.
\end{cor}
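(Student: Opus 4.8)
The plan is to deduce the corollary directly from Proposition~\ref{prop:representable_operators} together with the completeness of the Fréchet space $\mA_\loc(\Omega)$, which was noted right after its definition. The key observation is that the correspondence $\vphi\mapsto \Adj_\vphi$ is, by \eqref{eq:def_of_mA(A)_norm}, an isometric linear bijection between $\Ff_\loc(\Omega)$ and $\mA_\loc(\Omega)$ for the matching families of semi-norms: surjectivity and injectivity are exactly the existence and uniqueness halves of Proposition~\ref{prop:representable_operators}, while the identity $\|\vphi\|_{\Ff(K)}=\|\Adj_\vphi\|_{\mL^\infty(K)\to\mL^2(X)}$ is built into the definition~\eqref{eq:def_of_mA(A)_norm}. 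Hence the topology on $\Ff_\loc(\Omega)$ is the pullback of the topology on $\mA_\loc(\Omega)$ under this isometry, so it is metrizable and complete because $\mA_\loc(\Omega)$ is. I would spell out the one point that needs a line of care: a priori an element of $\Ff_\loc(\Omega)$ is a map $X\to\Distr[\Omega]$ rather than an abstract operator, so I should check that $\Adj_\vphi$ genuinely lands in $\mA_\loc(\Omega)$, i.e.\ that $\Adj_\vphi(\phi)=\vphi(\cdot)(\phi)$ is measurable (true by the definition of a random field) and that $\|\Adj_\vphi\|_{\mL^\infty(K)\to\mL^2(X)}<\infty$ for each $K\Subset\Omega$ (this is precisely the defining condition $\|\vphi\|_{\Ff(K)}<\infty$).

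Next I would establish \eqref{eq:Sobolev_via_Ff}. This is immediate: given $\vphi\in\Ff_\loc(\Omega)$ and $K\Subset\widetilde K\Subset\Omega$, apply Proposition~\ref{prop:representable_operators} to $A=\Adj_\vphi$ to obtain
\[
    \EE\|\vphi\|^2_{\mH^{-1-\nu}(K)}\le C_{K,\widetilde K,\nu}\,\|\Adj_\vphi\|^2_{\mL^\infty(\widetilde K)\to\mL^2(X)}=C_{K,\widetilde K,\nu}\,\|\vphi\|^2_{\Ff(\widetilde K)},
\]
using \eqref{eq:def_of_mA(A)_norm} in the last step. The final assertion, that $\vphi$ takes values in $\mH^{-1-\nu}_\loc(\Omega)$ almost surely, follows by exhausting $\Omega$ by a countable increasing sequence $K_j\Subset\widetilde K_j\Subset\Omega$ with $\bigcup_j K_j=\Omega$: by \eqref{eq:Sobolev_via_Ff} each $\|\vphi\|_{\mH^{-1-\nu}(K_j)}$ is finite almost surely, and a countable intersection of full-measure events is again of full measure, so almost surely $\|\vphi\|_{\mH^{-1-\nu}(K)}<\infty$ for every compact $K$ (each such $K$ being contained in some $K_j$).

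I do not expect a serious obstacle here; the content has all been front-loaded into Proposition~\ref{prop:representable_operators} and into the remark that $\mA_\loc(\Omega)$ is a complete metrizable Fréchet space. The only mild subtlety worth stating carefully is the measurability/identification issue above — namely that the map $A\mapsto\vphi$ produced by Proposition~\ref{prop:representable_operators} is a genuine two-sided inverse to $\vphi\mapsto\Adj_\vphi$ on the nose (as equivalence classes of random fields modulo almost-sure equality), so that transporting the metric is legitimate. Everything else is bookkeeping with semi-norms and a Borel--Cantelli-free countable-intersection argument.
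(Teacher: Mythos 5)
Your proposal is correct and follows essentially the same route as the paper: both deduce the corollary by observing that $\vphi\mapsto\Adj_\vphi$ is a norm-preserving bijection onto the complete Fr\'echet space $\mA_\loc(\Omega)$ (bijectivity from Proposition~\ref{prop:representable_operators}, norm-preservation from~\eqref{eq:def_of_mA(A)_norm}), and then read off~\eqref{eq:Sobolev_via_Ff} directly from that proposition. The extra details you flag --- measurability of $\Adj_\vphi(\phi)$ and the countable-exhaustion step for the final assertion --- are correct but already built into the surrounding definitions, so the argument matches the paper's.
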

\begin{proof}
    Proposition~\ref{prop:representable_operators} implies that the map $\vphi\mapsto \Adj_\vphi$ is bijection between $\Ff_\loc(\Omega)$ and $\mA_\loc(\Omega)$. By the definition, this map preserves the norms corresponding to each $K\Subset \Omega$, thus, it preserves the topology. Note that the metric on $\Ff_\loc(\Omega)$ can be explicitly written as follows:
    \begin{equation}
        \label{eq:def_of_dFf}
        d_\Ff(\vphi_1,\vphi_2) = \sum_{n = 1}^{+\infty} 2^{-n}\min(\|\vphi_1 - \vphi_2\|_{\Ff(K_n)} , 1)
    \end{equation}
    where $K_1\subset K_2\subset\ldots$ is an arbitrary sequence of open relatively compact subsets exhausting $\Omega$. Finally,~\eqref{eq:Sobolev_via_Ff} follows from Proposition~\ref{prop:representable_operators}.
\end{proof}

\begin{lemma}
    \label{lemma:mL-norm_via_kernel}
    Assume that $\vphi\in \Ff_\loc(\Omega)$ is a random field and there exists a function $G\in \mL^1_\loc(\Omega\times \Omega)$ such that for each $\phi_1,\phi_2\in \mC_c^\infty(\Omega)$ we have
    \[
        \EE (\vphi(\phi_1)\vphi(\phi_2)) = \int_{\Omega\times \Omega}\phi_1(x)G(x,y)\phi_2(y)\,dxdy.
    \]
    Then for any $K\subset \Omega$ we have $\|\vphi\|_{\Ff(K)}^2 \leq \|G\|_{\mL^1(K\times K)}$.
\end{lemma}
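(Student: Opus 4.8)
The plan is simply to unwind the definition of $\|\vphi\|_{\Ff(K)}$ and to bound the quadratic form of $\vphi$ pointwise by $|G|$. First I would fix $K\subset\Omega$ and recall from~\eqref{eq:def_of_mA(A)_norm} that
\[
\|\vphi\|_{\Ff(K)}^2 = \sup_{\phi\in\mC_c^\infty(K),\ \|\phi\|_\infty=1}\EE\,\vphi(\phi)^2 .
\]
For any admissible $\phi$, since $\mC_c^\infty(K)\subset\mC_c^\infty(\Omega)$, the hypothesis applied with $\phi_1=\phi_2=\phi$ gives
\[
\EE\,\vphi(\phi)^2 = \int_{\Omega\times\Omega}\phi(x)G(x,y)\phi(y)\,dx\,dy = \int_{K\times K}\phi(x)G(x,y)\phi(y)\,dx\,dy,
\]
the last equality because $\supp\phi\subset K$. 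Then I would estimate the integrand by $|\phi(x)G(x,y)\phi(y)|\le\|\phi\|_\infty^2\,|G(x,y)|=|G(x,y)|$, which is integrable on $K\times K$ whenever $\|G\|_{\mL^1(K\times K)}<\infty$ (and otherwise the claimed inequality is vacuous), to conclude
\[
\EE\,\vphi(\phi)^2 \le \int_{K\times K}|G(x,y)|\,dx\,dy = \|G\|_{\mL^1(K\times K)} .
\]
Taking the supremum over all $\phi\in\mC_c^\infty(K)$ with $\|\phi\|_\infty=1$ yields $\|\vphi\|_{\Ff(K)}^2\le\|G\|_{\mL^1(K\times K)}$.

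There is no genuine obstacle here; the only points worth a word of care are that the product $\vphi(\phi_1)\vphi(\phi_2)$ is integrable so that the right-hand side of the hypothesis is well defined — this follows by Cauchy--Schwarz, since $\vphi\in\Ff_\loc(\Omega)$ forces $\vphi(\phi_i)\in\mL^2(X,\PP)$ for $\phi_i\in\mC_c^\infty(K')$ with $K'\Subset\Omega$ containing $\supp\phi_i$ — and that one may restrict the domain of integration from $\Omega\times\Omega$ to $K\times K$, which is legitimate because $\phi$ is supported in $K$. If $K$ is not assumed open one reads $\mC_c^\infty(K)$ as the smooth compactly supported functions whose support lies in $K$, and the computation above applies verbatim.
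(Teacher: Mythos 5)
Your proof is correct and coincides with the paper's argument: apply the kernel identity with $\phi_1=\phi_2=\phi$ for $\phi\in\mC_c^\infty(K)$, bound $|\phi(x)G(x,y)\phi(y)|\le\|\phi\|_\infty^2|G(x,y)|$ on $K\times K$, and take the supremum. Nothing more is needed.
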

\begin{proof}
Indeed, we have for any $\phi\in \mC_c^\infty(K)$
\[
\EE \vphi(\phi)^2= \int_{\Omega\times \Omega}\phi(x)G(x,y)\phi(y)\,dxdy\leq \|\phi\|_{\mL^\infty(K)}^2\cdot \|G\|_{L_1(K\times K)}.
\]
\end{proof}
We conclude this section by observing that convergence in the topology of $\Ff_\loc$ implies convergence in distribution in $H^{-1-\nu}_\loc.$

\begin{lemma}
\label{lemma:conv_in_mean_implies_conv_in_law}
    Given relatively compact open $K\Subset \widetilde{K}\Subset \Omega$, a bounded continuous function $F:H^{-1-\nu}(K)\to\RR$, and $L,\eps>0$, one can find $\rho>0$ such that if $\varphi_1$ and $\varphi_2$ are two random fields satisfying $\|\vphi_1\|_{\Ff(\widetilde{K})}<L$, $\|\vphi_2\|_{\Ff(\widetilde{K})}<L$, and $\|\varphi_1- \varphi_2\|_{\Ff(\widetilde{K})}<\rho$, then \[|\EE F(\varphi_1)-\EE F(\varphi_2)|<\eps.\]
\end{lemma}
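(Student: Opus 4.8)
The statement is a quantitative "uniform continuity of $\EE F(\cdot)$ on bounded sets" assertion, and the natural route is a truncation argument combined with the Sobolev bound \eqref{eq:Sobolev_via_Ff} from Corollary~\ref{cor:Ff_is_complete}. First I would fix $K\Subset \widetilde K\Subset \Omega$ and the bound $L$. Since $F$ is bounded, say $|F|\le M$, the only difficulty is the region where $\varphi_1$ and $\varphi_2$ are far apart in $H^{-1-\nu}(K)$; on the complementary event $\{\|\varphi_1-\varphi_2\|_{H^{-1-\nu}(K)}\le \eta\}$ we would like to use (uniform) continuity of $F$ on a norm-ball. The obstruction is that $F$ is only assumed continuous, not uniformly continuous, on the whole space $H^{-1-\nu}(K)$, which is unbounded; to get around this I would additionally restrict to a large ball $B_R=\{\|\cdot\|_{H^{-1-\nu}(K)}\le R\}$, on which $F$ \emph{is} uniformly continuous (a bounded continuous function on a metric space is uniformly continuous on each bounded subset, since closed balls in a separable Hilbert space are not compact — wait, that is false in infinite dimensions; see the correction in the next paragraph).

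More carefully: uniform continuity of $F$ on $B_R$ is \emph{not} automatic in an infinite-dimensional space. So instead I would argue as follows. Pick $R$ with $R^2 > 3 C_{K,\widetilde K,\nu} L^2/\eta_0$ for a threshold $\eta_0$ to be chosen; by Markov's inequality and \eqref{eq:Sobolev_via_Ff},
\[
\PP\!\left[\|\varphi_j\|_{H^{-1-\nu}(K)} > R\right] \le \frac{\EE \|\varphi_j\|^2_{H^{-1-\nu}(K)}}{R^2} \le \frac{C_{K,\widetilde K,\nu}\,L^2}{R^2} < \frac{\eta_0}{3},\qquad j=1,2,
\]
and similarly $\PP[\|\varphi_1-\varphi_2\|_{H^{-1-\nu}(K)} > \eta] \le C_{K,\widetilde K,\nu}\rho^2/\eta^2$. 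Now on the event $E=\{\|\varphi_1\|_{H^{-1-\nu}(K)}\le R,\ \|\varphi_2\|_{H^{-1-\nu}(K)}\le R,\ \|\varphi_1-\varphi_2\|_{H^{-1-\nu}(K)}\le \eta\}$ both fields lie in $B_R$ and are $\eta$-close. The point is that $F$, being continuous on the (metrizable) space $H^{-1-\nu}(K)$, is uniformly continuous when restricted to $B_{R}$ \emph{intersected with the range relevant to us}; but to make this airtight without compactness I would instead invoke that $H^{-1-\nu}(K)$ embeds compactly into $H^{-1-\nu'}(K)$ for $\nu'>\nu$ — no, cleaner: simply note the statement only needs the \emph{modulus of continuity of $F$ at the relevant scale}, so set $\omega_R(\eta):=\sup\{|F(f)-F(g)| : f,g\in B_R,\ \|f-g\|_{H^{-1-\nu}(K)}\le\eta\}$, observe $\omega_R(\eta)\to 0$ as $\eta\to 0$ (this uses that $B_R$ in $H^{-1-\nu}(K)$ is \emph{relatively compact in $H^{-1-\nu''}(K)$} for $\nu''>\nu$ by Rellich, and one may choose the test topology accordingly, or more simply that it holds whenever $F$ is the kind of cylinder functional arising in the application); then pick $\eta$ so small that $\omega_R(\eta)<\eta_0/3$.

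Assembling the pieces: on $E$ we have $|F(\varphi_1)-F(\varphi_2)|\le \omega_R(\eta)<\eta_0/3$, while on $E^c$ we use the crude bound $|F(\varphi_1)-F(\varphi_2)|\le 2M$. Hence
\[
|\EE F(\varphi_1)-\EE F(\varphi_2)| \le \omega_R(\eta) + 2M\,\PP[E^c] \le \frac{\eta_0}{3} + 2M\left(\frac{\eta_0}{3}+\frac{\eta_0}{3}+\frac{C_{K,\widetilde K,\nu}\rho^2}{\eta^2}\right).
\]
Choosing first $\eta_0$ small enough that $(1+2M)\eta_0 < \eps/2$, then $R$ as above, then $\eta$ small enough that $\omega_R(\eta)<\eta_0/3$, and finally $\rho$ small enough that $2M C_{K,\widetilde K,\nu}\rho^2/\eta^2 < \eps/2$, gives $|\EE F(\varphi_1)-\EE F(\varphi_2)|<\eps$, as required.

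\textbf{Main obstacle.} The only genuinely delicate point is justifying that the modulus of continuity $\omega_R(\eta)\to 0$ as $\eta\to0$, i.e. that a bounded continuous $F$ on $H^{-1-\nu}(K)$ is uniformly continuous on norm-bounded sets. This fails for arbitrary continuous $F$ on an infinite-dimensional Hilbert space, so one must use that norm-bounded sets of $H^{-1-\nu}(K)$ are precompact in $H^{-1-\nu-\kappa}(K)$ for any $\kappa>0$ (Rellich–Kondrachov): for the purposes of the theorem one may harmlessly replace $H^{-1-\nu}$ throughout by $H^{-1-\nu/2}$ as the reference space, so that $B_R$ (a bounded set of $H^{-1-\nu/2}(K)$) is precompact in $H^{-1-\nu}(K)$, on which $F$ lives; a continuous function on a compact metric space is uniformly continuous, giving $\omega_R(\eta)\to0$. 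This is the step I would write out most carefully; everything else is the routine Markov/truncation bookkeeping above.
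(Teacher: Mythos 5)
Your argument, once the self-corrections in the last paragraph are applied (replacing the reference exponent $-1-\nu$ by $-1-\nu/2$ for the norm-ball $B_R$, so that $B_R$ is precompact in $H^{-1-\nu}(K)$ where $F$ is continuous), is exactly the paper's proof: Markov/Chebyshev tail bounds via \eqref{eq:Sobolev_via_Ff}, then Rellich compactness of $H^{-1-\nu/2}(K)\hookrightarrow H^{-1-\nu}(K)$ to upgrade continuity of $F$ to equicontinuity on $B_{H^{-1-\nu/2}}(0,R)$, then choose $R$, then $\eta$ (the paper's $\hat{\rho}$), then $\rho$. The false starts in your write-up (the aborted claim that a bounded continuous function is uniformly continuous on bounded sets, and the misdirected Rellich statement "$H^{-1-\nu}(K)$ embeds compactly into $H^{-1-\nu'}(K)$" with $F$ living on the stronger space) should be cleaned out, but the argument you land on is correct.
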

\begin{proof}
Note that $\vphi_{1,2}\in \mH^{-1-\nu}(K)$ almost surely due to Corollary~\ref{prop:representable_operators}, hence we can define $F(\vphi_{1,2})$ using the restrictions of $\vphi_{1,2}$ to $\mC_c^\infty(K)$. Let $M=\sup F$. We can write, for any $R>0$,
\begin{multline*}
 |\EE F(\varphi_1)-\EE F(\varphi_2)|\leq \EE \left(|F(\varphi_1)-F(\varphi_2)|\indic[\varphi_1,\varphi_2\in B_{H^{-1-\nu/2}}(0,R)]\right)\\+2M \PP[\|\varphi_1\|_{H^{-1-\nu/2}(K)}>R]+2M\PP[|\varphi_2\|_{H^{-1-\nu/2}(K)}>R].
\end{multline*}
By Chebyshev inequality and Proposition \ref{prop:representable_operators}, we have for any $R,\hat{\rho}>0$,
\[
\PP[\|\varphi_{1,2}\|_{H^{-1-\nu/2}(K)}>R]\leq \frac{2MC_{K,\widetilde{K},\nu/2}L}{R}\quad \text{and} \quad \PP[\|\vphi_1-\vphi_2\|_{H^{-1-\nu}(K)}> \hat{\rho}]\leq \frac{C_{K,\widetilde{K},\nu}\|\vphi_1-\vphi_2\|_{L^\infty(K)}}{\hat{\rho}},
\]
hence we can write, for any $\hat{\rho}>0$ and $R>0$,
\begin{multline*}
 |\EE F(\varphi_1)-\EE F(\varphi_2)|\leq \sup\{|F(\psi_1)-F(\psi_2)|:\psi_1,\psi_2\in B_{H^{-1-\nu/2}}(0,R),\|\psi_1-\psi_2\|_{H^{-1-\nu}(K)}\leq \hat{\rho}\}\\
 +\frac{2M C_{K,\widetilde{K},\nu}\|\vphi_1-\vphi_2\|_{L^\infty(K)}}{\hat{\rho}}+\frac{4MC_{K,\widetilde{K},\nu/2}L}{R}.
\end{multline*}
Since $H^{-1-\nu/2}$ embeds compactly into $H^{-1-\nu}$, $F$ is equicontinuous on $B_{H^{-1-\nu/2}}(0,R)$. This means that, given $\eps>0$, we can first choose $R>0$ such that the third term above is smaller than $\eps/3$, then choose $\hat{\rho}>0$ such that the supremum is smaller than $\eps/3$, and finally ensure that the second term is smaller than $\eps/3$ by the choice of $\rho$.
\end{proof}

\subsection{Two-point approximation of the \texorpdfstring{${\CLE_\kappa}$}{CLE(kappa)}-nesting field}
\label{subsec:nesting_fields_for_CLE(4)}

Nesting fields for ${\CLE_\kappa}$ with $\kappa\in (8/3,8)$ were introduced in~\cite{miller2015conformal} via the following procedure. Let $\Ll = \Ll_{{\CLE_\kappa}}$ be the ${\CLE_\kappa}$ sampled in a proper simply-connected domain $D$. Fix an $\eps>0$ and define
\begin{equation}
    \label{eq:approx_nesting_via_balls}
    \begin{split}
        &N^{B\eps}(x) = \#\text{ of loops in $\Ll$ surrounding $B(x,\eps)$},\\
        &\vphi^{B\eps}(x) = N^{B\eps}(x) - \EE N^{B\eps}(x).
    \end{split}
\end{equation}
The main result of~\cite{miller2015conformal} asserts that for each $\nu>0$ the family $\vphi^{B\eps}$ converges almost surely to a conformally invariant random field $\vphi$ with values in $\mH^{-2-\nu}_\loc(D)$, which the authors called the~\emph{nesting field} of $\CLE_\kappa$. Besides the almost sure convergence, the arguments of~\cite{miller2015conformal} imply the following theorem. Recall the space $\Ff_\loc(D)$ introduced in~\eqref{eq:def_of_Ffloc} and Corollary~\ref{cor:Ff_is_complete}.

\begin{thmas}[Miller--Watson--Wilson]
    \label{thmas:miller_nesting_approximation}
    The family $\{\vphi^{B\eps}\}_{\eps > 0}\subset \Ff_\loc(\Omega)$ has a limit $\vphi\in \Ff_\loc(\Omega)$ as $\eps\to 0$. In particular, for each $K\Subset D$ and $\nu>0$ the family $\{\vphi^{B\eps}\}_{\eps>0}$ converges in distribution in~$\mH^{-1-\nu}(K)$.
\end{thmas}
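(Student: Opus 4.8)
The plan is to deduce Theorem~\ref{thmas:miller_nesting_approximation} from the Miller--Watson--Wilson convergence of $\vphi^{B\eps}$ together with the general framework of Section~\ref{subsec:def_of_mL_loc}. First I would recall from~\cite{miller2015conformal} the two quantitative inputs that underlie their almost-sure convergence statement: (i) for each $K\Subset D$ there is a locally integrable kernel $G^{B\eps}(x,y)$ with $\EE(\vphi^{B\eps}(\phi_1)\vphi^{B\eps}(\phi_2)) = \iint \phi_1(x) G^{B\eps}(x,y)\phi_2(y)\,dx\,dy$, and (ii) the covariance kernels $G^{B\eps}$ converge in $\mL^1_\loc(D\times D)$ to a limiting kernel $G$ as $\eps\to 0$, with a uniform bound $\|G^{B\eps}\|_{\mL^1(K\times K)}\le C_K$. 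Input (ii) is exactly what their variance/second-moment estimates provide, since the number of loops surrounding two points is controlled by $-c\log|x-y|$ plus corrections, which is locally integrable. By Lemma~\ref{lemma:mL-norm_via_kernel} these give $\|\vphi^{B\eps}\|_{\Ff(K)}^2 \le \|G^{B\eps}\|_{\mL^1(K\times K)} \le C_K$ uniformly in $\eps$, and moreover $\|\vphi^{B\eps_1} - \vphi^{B\eps_2}\|_{\Ff(K)}^2 \le \|G^{B\eps_1} - 2G^{B\eps_1,\eps_2} + G^{B\eps_2}\|_{\mL^1(K\times K)}$, where $G^{B\eps_1,\eps_2}$ is the cross-covariance kernel; one checks this cross kernel also converges to $G$ in $\mL^1_\loc$, so the right-hand side tends to $0$ as $\eps_1,\eps_2\to 0$.

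Next I would invoke completeness. Since $\{\vphi^{B\eps}\}$ is Cauchy in each semi-norm $\|\cdot\|_{\Ff(K)}$, it is Cauchy in the complete metric $d_\Ff$ of Corollary~\ref{cor:Ff_is_complete}, hence converges to some $\vphi\in \Ff_\loc(D)$. The limiting field $\vphi$ has covariance kernel $G$ by passing to the limit in the bilinear forms, so it agrees with the Miller--Watson--Wilson nesting field (which is characterized, being Gaussian up to the known structure, or at any rate identified, by its covariance together with their a.s.\ construction). Finally, convergence in $\Ff_\loc(D)$ implies convergence in distribution in $\mH^{-1-\nu}_\loc(D)$, and in particular in $\mH^{-1-\nu}(K)$ for each $K\Subset D$: this is precisely Lemma~\ref{lemma:conv_in_mean_implies_conv_in_law} applied with $\vphi_1 = \vphi^{B\eps}$, $\vphi_2 = \vphi$, using the uniform bound $\|\vphi^{B\eps}\|_{\Ff(\widetilde K)}\le C_{\widetilde K}$ and $\|\vphi^{B\eps}-\vphi\|_{\Ff(\widetilde K)}\to 0$.

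The main obstacle is not any of the soft functional-analytic steps but rather extracting input (ii) cleanly from~\cite{miller2015conformal}: one must verify that their estimates are genuinely $\mL^1_\loc$ bounds on the covariance kernels (including the cross-covariances of $\vphi^{B\eps_1}$ and $\vphi^{B\eps_2}$, which are not written down explicitly there) with constants uniform as $\eps\to 0$. The key point is that $\EE[N^{B\eps_1}(x)N^{B\eps_2}(y)]$ minus its limit is controlled by the probability that the loop structure near $x$ or near $y$ is affected by the regularization scale, which decays and is dominated by an $\eps$-independent locally integrable function of $|x-y|$; I would spell out how this follows from the Markovian exploration and the one-point and two-point nesting estimates in~\cite{miller2015conformal}, being careful that the singularity at $x=y$ is only logarithmic and hence integrable in two dimensions. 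Everything after that is a mechanical application of the lemmas of Section~\ref{subsec:def_of_mL_loc}.
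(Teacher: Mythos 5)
Your proposal is correct and follows essentially the same route as the paper: control $\|\vphi^{B\eps_1}-\vphi^{B\eps_2}\|_{\Ff(K)}$ via Lemma~\ref{lemma:mL-norm_via_kernel} applied to the covariance kernel of the difference, invoke completeness of $\Ff_\loc$ (Corollary~\ref{cor:Ff_is_complete}) to get the limit, and then apply Lemma~\ref{lemma:conv_in_mean_implies_conv_in_law} to pass to convergence in distribution in $\mH^{-1-\nu}(K)$. The ``input (ii)'' you flag as the main obstacle is precisely the content of \cite[Theorem~4.1]{miller2015conformal}, which the paper cites directly; it already supplies the $\mL^1_\loc$ bound on the covariance kernel of $\vphi^{B\eps_1}-\vphi^{B\eps_2}$ (cross terms included), so no extra work is needed there.
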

\begin{proof}
    By~\cite[Theorem~4.1]{miller2015conformal} and Lemma~\ref{lemma:mL-norm_via_kernel} the sequence $\{\vphi^{B\eps}\}_{\eps>0}$ is Cauchy. By Corollary~\ref{cor:Ff_is_complete} the space $\Ff_\loc(\Omega)$ is a complete metric space, thus $\lim_{\eps\to 0+}\vphi^{B\eps}$ exists. By Lemma~\ref{lemma:conv_in_mean_implies_conv_in_law} this implies the weak convergence with respect to the $\mH^{-1-\nu}(K)$ norm for each each $K\Subset D$ and $\nu>0$. In particular, the limit coincides with the nesting field $\vphi$.
\end{proof}

For our purposes we need to consider a slightly different approximation of the nesting field $\vphi$. Similarly to the discrete definition in Section~\ref{subsec:combinatorial_corresp_betwen_nesting_and_height}, let $\vphi^\eps=N(x,x+\eps)-\EE(x,x+\eps)$, where $N(x,x+\eps)$ is the number of $\CLE_\kappa$ loops surrounding both $x$ and $x+\eps$; note that
it has exponentially decaying tails and thus finite expectation.

\begin{prop}
    \label{prop:two-point_approximation}
    For each $\eps>0$ we have $\vphi^\eps\in \Ff_\loc(D)$. For any relatively compact open $K\subset D$ we have
    \[
        \lim\limits_{\eps\to 0+} \|\vphi - \vphi^\eps\|_{\Ff(K)} = 0.
    \]
\end{prop}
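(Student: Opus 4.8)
The plan is to reduce everything to the kernel estimates of Corollary~\ref{lemma:nesting_via_height} and the behaviour of the $\CLE_4$-analogue of the polynomials $P$, exactly as in Remark~\ref{rem:P_well_defined_for_CLE}, combined with Theorem~\ref{thmas:miller_nesting_approximation}. First I would record that all of $N(x,y)$, $N^{B\eps}(x)$, $\vphi^\eps$ etc. have exponentially decaying tails uniformly over $x,y$ in a compact subset of $D$ (this is a standard consequence of the Koebe-type distortion estimates for $\CLE_\kappa$ and the i.i.d.\ structure of $-\log\CR$ of nested loops, see~\cite{schramm2009conformal,miller2015conformal}), which immediately gives $\vphi^\eps\in\Ff_\loc(D)$ via Lemma~\ref{lemma:mL-norm_via_kernel} once one writes $\EE(\vphi^\eps(\phi_1)\vphi^\eps(\phi_2))$ as an integral against the locally integrable kernel $G^\eps(x,y)=\EE[\vphi^\eps(x)\vphi^\eps(y)]$. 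The same bound $|G^\eps(x,y)|\lesssim |\log|x-y||^2+1$ holds uniformly in $\eps$, which is the integrability input needed throughout.

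Next, I would compare $\vphi^\eps$ with $\vphi^{B\eps'}$ for a suitable $\eps'=\eps'(\eps)$. On the event that there is no $\CLE_4$ loop separating $x$ from $x+\eps$, one has $N(x,x+\eps)=N^{B\eps}(x)$; the probability of the complementary event, given $x$, is $O(\eps^{\alpha})$ for some $\alpha>0$, again by the conformal-radius description of nesting. Hence, writing $\vphi^\eps-\vphi^{B\eps}=(\text{difference of counts})-\EE(\cdots)$, the covariance kernel of the difference field is bounded, in $L^1(K\times K)$, by a quantity tending to $0$ as $\eps\to0$: the diagonal contribution is controlled by Cauchy--Schwarz using the $L^2$-bound on $N(x,x+\eps)-N^{B\eps}(x)$ (which is $O(\eps^{\alpha/2})$ by the tail bounds and the small probability of the bad event), and the off-diagonal contribution is controlled using that the bad events at $x$ and at $y$ are, up to a small error, conditionally decorrelated at macroscopic distance — here I would invoke the quasi-independence/locality of $\CLE_4$ (the explorations near $x$ and near $y$ depend on disjoint regions with high probability). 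Concretely, one gets $\|\vphi^\eps-\vphi^{B\eps}\|_{\Ff(K)}^2\le \|G^\eps_{\mathrm{diff}}\|_{L^1(K\times K)}\to0$ by Lemma~\ref{lemma:mL-norm_via_kernel}.

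Combining the previous step with Theorem~\ref{thmas:miller_nesting_approximation}, which gives $\|\vphi^{B\eps}-\vphi\|_{\Ff(K)}\to0$, the triangle inequality
\[
\|\vphi-\vphi^\eps\|_{\Ff(K)}\le \|\vphi-\vphi^{B\eps}\|_{\Ff(K)}+\|\vphi^{B\eps}-\vphi^\eps\|_{\Ff(K)}
\]
yields the claim. The main obstacle I anticipate is the off-diagonal estimate on the covariance of the error field: one must show that the rare events ``a small loop separates $x$ from $x+\eps$'' contribute negligibly to $\EE[(\vphi^\eps-\vphi^{B\eps})(x)(\vphi^\eps-\vphi^{B\eps})(y)]$ even when $x,y$ are at a fixed macroscopic distance, which requires an honest use of the spatial Markov/quasi-independence property of $\CLE_4$ together with the uniform tail bounds on the nesting counts; the diagonal part and the passage through $\Ff_\loc$ are routine given Lemmas~\ref{lemma:mL-norm_via_kernel} and~\ref{lemma:conv_in_mean_implies_conv_in_law}. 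An alternative, perhaps cleaner, route to the off-diagonal bound is to avoid $\CLE$ independence entirely and instead bound everything by the $P$-type polynomials of Remark~\ref{rem:P_well_defined_for_CLE}: since $\vphi^\eps(x)\vphi^\eps(y)-\vphi^{B\eps}(x)\vphi^{B\eps}(y)$ can be written in terms of differences of nesting counts that vanish off a bad event of probability $O(\eps^\alpha)$ and are dominated by stochastically bounded quantities, Cauchy--Schwarz applied with the $L^4$ tail bounds gives an $L^1(K\times K)$ bound of order $\eps^{\alpha/2}$ directly, which is what I would actually write up.
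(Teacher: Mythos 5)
Your overall scaffolding is correct: pass through $\vphi^{B\eps}$ via the triangle inequality, invoke Theorem~\ref{thmas:miller_nesting_approximation} for the $\vphi^{B\eps}\to\vphi$ piece, and reduce $\|\vphi^{B\eps}-\vphi^\eps\|_{\Ff(K)}\to 0$ to an $L^1(K\times K)$ bound on the covariance kernel via Lemma~\ref{lemma:mL-norm_via_kernel}. This matches the paper's plan. But the heart of the argument -- the off-diagonal bound on $G_\eps(x,y)=\EE[(\vphi^{B\eps}(x)-\vphi^\eps(x))(\vphi^{B\eps}(y)-\vphi^\eps(y))]$ -- is wrong, and the error is not cosmetic.

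The "bad event" you rely on does not have small probability. Note that $N^{B\eps}(x)\le N(x,x+\eps)$ always, and their difference is the number of loops surrounding both $x$ and $x+\eps$ that nevertheless intersect $B(x,\eps)$. The innermost loop surrounding both $x$ and $x+\eps$ lives precisely at scale $\eps$, so the event that it hits $B(x,\eps)$ has probability bounded away from $0$ uniformly in $\eps$. (Your statement that $N(x,x+\eps)=N^{B\eps}(x)$ when no loop separates $x$ from $x+\eps$ is also incorrect as written -- that event is in fact null for CLE, and even the corrected version does not tie the two counts together.) Consequently neither of your two proposed routes can produce a bound of order $\eps^\alpha$ for fixed $|x-y|$: the difference $N(x,x+\eps)-N^{B\eps}(x)$ is not small in $L^2$, and Cauchy--Schwarz directly on it gives only an $O(1)$ bound, not a vanishing one.

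What the paper does instead is genuinely different and more subtle: it bounds not the difference $\vphi^{B\eps}(x)-\vphi^\eps(x)$ itself, but its \emph{conditional expectation} given $\Sigma_{x,y}$ (the outer loop structure up to the first loop separating $x$ from $y$). Conditional independence gives $G_\eps(x,y)=\EE\bigl[\EE[\Delta(x)\mid\Sigma_{x,y}]\,\EE[\Delta(y)\mid\Sigma_{x,y}]\bigr]$, and the content of Lemma~\ref{lemma:nesting_is_locally_constant} is precisely that $\EE\bigl[\EE[\Delta(x)\mid\Sigma_{x,y}]^2\bigr]=O((\eps/r)^\alpha)$. This works because $\Delta(x)$ is centered and its conditional law depends on $\Sigma_{x,y}$ only through the shape of $U_{x,S(x,y)}$, which the coupling argument of Miller--Watson--Wilson (adapted here via Lemma~\ref{lemma:conformal_change_of_vphieps}) shows affects the expectation of $\Delta(x)$ only by $O(\eps^\alpha)$. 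This exact-conditional-independence-plus-coupling step is the mechanism your proposal is missing, and appealing to generic "quasi-independence of CLE" or to small-probability bad events cannot substitute for it.
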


Our strategy of proving Proposition~\ref{prop:two-point_approximation} is to adapt the proof of the key lemma~\cite[Lemma~4.8]{miller2015conformal}, replacing $\vphi^{B\eps_1} - \vphi^{B\eps_2}$ with $\vphi^{B\eps} - \vphi^\eps$. This requires reproving some of the preparatory lemmas from~\cite{miller2015conformal} as well. We begin by recalling another technical result from~\cite{miller2015conformal}. Given a proper simply-connected domain $D\subset \CC$, denote by $G_D(x,y)$ the Green's function for this domain. According to~\cite[Theorem~1.3]{miller2015conformal}, for any $j>0$ we have
\begin{equation}
    \label{eq:jth_moment_of_N(x,y)}
    \EE N(x,y)^j = \left(\nu_{\mathrm{typical}} G_D(x,y)\right)^j + O((G_D(x,y) + 1)^{j-1}),
\end{equation}
where $\nu_{\mathrm{typical}}>0$ is certain absolute constant determined by $\kappa$ and the normalization of $G_D$.

Given $x\in D$ we enumerate all the loops in $\Ll_{{\CLE_\kappa}}$ encircling $x$ consequently starting from the outermost one. Let $U_{x,j}$ denote the connected component of the complement of the $j$-th loop which contains $x$, with $U_{x,0}=D$. Given $y\in D$ not equal to $x$ denote by $S_{x,y}$ the index of the first loop encircling $x$ but not $y$. Denote by $\CR(x;D)$ the conformal radius of $x$ in the domain $D$.

\begin{lemma}
    \label{lemma:CR_of_separating_loop}
    For any $\kappa\in (8/3,8)$ there exist constant $C,\alpha>0$ such that the following holds. Let $D$ be a simply-connected domain and $x,y\in D$ be two distinct points. Then for any $\eps>0$ we have
    \[
        \PP[\CR(x; U_{x,S(x,y)}) \leq \eps]\leq C\left(\frac{\eps}{\min(|x-y|, \CR(x; D))}\right)^\alpha.
    \]
\end{lemma}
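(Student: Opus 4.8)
The plan is to reduce the statement to a quantitative estimate on how quickly $\CR(x;U_{x,j})$ shrinks as $j$ grows, combined with a bound on the number of loops separating $x$ from $y$. Recall from the Markov property of $\CLE_\kappa$ that, conditionally on the $j$ outermost loops surrounding $x$, the law of the configuration inside $U_{x,j}$ is again a $\CLE_\kappa$ in $U_{x,j}$. By conformal invariance and the work of Schramm--Sheffield--Wilson~\cite{schramm2009conformal}, the increments $-\log \CR(x;U_{x,j}) + \log\CR(x;U_{x,j-1})$ are i.i.d.\ positive random variables with a known exponential moment; in particular there is $\lambda_0>0$ with $\EE e^{-\lambda_0 (\log\CR(x;U_{x,1})-\log\CR(x;D))}<1$, so that for the partial sums $T_j := \log\CR(x;D) - \log\CR(x;U_{x,j})$ we get, for a suitable $c,\alpha>0$,
\[
\PP[T_j \leq t]\leq c\, e^{-\alpha(j - t/\EE T_1)}\qquad\text{for all }t\geq 0,\ j\geq 1,
\]
i.e.\ it is exponentially unlikely for the first $j$ loops around $x$ to have only moved the conformal radius down by a bounded factor. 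This is the first ingredient.

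The second ingredient is control on $S(x,y)$ itself. By~\eqref{eq:jth_moment_of_N(x,y)} with $j=1$, together with (say) $j=2$ to control the variance, we have $\EE S(x,y) = \nu_{\mathrm{typical}}G_D(x,y) + O(1)$ and $\Var S(x,y) = O(G_D(x,y)+1)$; more relevantly, the nesting structure gives that $S(x,y)$ has an exponential tail: conditionally on the first $j$ loops around $x$ being loops that also surround $y$, the probability that the $(j{+}1)$-st loop still surrounds $y$ is bounded away from $1$ by a universal constant once the conformal radius $\CR(x;U_{x,j})$ has become small compared to $|x-y|$. Quantitatively, $\PP[S(x,y) > j]\leq \PP[\CR(x;U_{x,j})\geq c|x-y|] + (\text{const})\cdot(\ldots)$; I would phrase this as: either the first $j$ loops around $x$ shrank the conformal radius by less than a factor comparable to $|x-y|/\CR(x;D)$ (controlled by the first ingredient), or they did shrink it that much, and then each subsequent loop independently has a uniformly positive chance of separating $x$ from $y$, by conformal invariance and the fact that a $\CLE_\kappa$ loop in a domain of conformal radius $\ll |x-y|$ around $x$ separates $x$ from $y$ with probability bounded below (since $y$ is then "far" in the uniformized picture, at distance $\gtrsim |x-y|/\CR$). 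Putting these together yields, with $r := \min(|x-y|,\CR(x;D))$,
\[
\PP\bigl[S(x,y) > j\bigr]\leq C\,e^{-\alpha j}\bigl(1 + (\CR(x;D)/r)\bigr)^{\alpha'}\quad\text{or more simply}\quad \PP[S(x,y)>j + C\log(\CR(x;D)/r)]\leq Ce^{-\alpha j}.
\]

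Finally I would combine the two. On the event $\{\CR(x;U_{x,S(x,y)})\leq \eps\}$, writing $j=S(x,y)$, we have $T_j \geq \log(\CR(x;D)/\eps)$ only if we also know $j$; the cleanest route is: for any integer $m$,
\[
\PP[\CR(x;U_{x,S(x,y)})\leq \eps] \leq \PP[S(x,y) > m] + \PP[\CR(x;U_{x,m})\leq \eps],
\]
since $j\mapsto \CR(x;U_{x,j})$ is non-increasing, so on $\{S(x,y)\leq m\}$ we have $\CR(x;U_{x,S(x,y)})\geq \CR(x;U_{x,m})$. Now choose $m\asymp \log(r/\eps) + C\log(\CR(x;D)/r)$: the second term is bounded by $\PP[T_m \leq \log(\CR(x;D)/\eps)]\leq c e^{-\alpha(m - \log(\CR(x;D)/\eps)/\EE T_1)}$, and for this choice of $m$ the exponent is $\gtrsim \log(r/\eps)$, giving a bound $(\eps/r)^\alpha$ after adjusting constants; the first term is bounded by $\PP[S(x,y)>m]\leq Ce^{-\alpha' (m - C\log(\CR(x;D)/r))}\lesssim (\eps/r)^{\alpha'}$ by the second ingredient. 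Taking the smaller of the two exponents gives the claimed $C(\eps/r)^\alpha$.

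The main obstacle is making the second ingredient — the exponential tail of $S(x,y)$ — fully rigorous with the stated uniformity: one must argue that once $\CR(x;U_{x,j})$ is small relative to $|x-y|$, each subsequent $\CLE_\kappa$ loop around $x$ has a \emph{uniformly} (over all domains and all positions of $y$) positive chance of separating $x$ from $y$. This is where conformal invariance and a harmonic-measure / Beurling-type estimate enter: uniformizing $U_{x,j}$ to the unit disk sends $x$ to $0$ and $y$ to a point at distance $\gtrsim 1 - C\,\CR(x;U_{x,j})/|x-y|$ from $0$ (by the Koebe distortion theorem controlling how the uniformizing map distorts distances near $x$), hence $y$ is pushed close to the boundary, and the outermost $\CLE_\kappa$ loop in the disk around $0$ surrounds $0$ but not such a boundary-adjacent point with probability bounded below by a universal constant. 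Everything else is bookkeeping with the renewal estimate from~\cite{schramm2009conformal}; I expect to be able to cite~\cite{miller2015conformal} for most of the nesting-probability machinery and only need to supply this separation estimate.
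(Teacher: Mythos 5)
Your overall strategy---controlling $\CR(x;U_{x,j})$ via the i.i.d.\ decrements of~\cite{schramm2009conformal}, using Koebe distortion to see that $y$ sits far from $0$ after uniformizing, and invoking an exponential tail for the number of loops around a pair of points in $\DD$---is the same as the paper's. However, the deterministic-$m$ decomposition
\[
    \PP[\CR(x;U_{x,S(x,y)})\leq\eps]\leq\PP[S(x,y)>m]+\PP[\CR(x;U_{x,m})\leq\eps]
\]
has a genuine gap. Write $L_1=\log(\CR(x;D)/r)$ and $L_2=\log(r/\eps)$ with $r=\min(|x-y|,\CR(x;D))$, and let $T_j=\log\CR(x;D)-\log\CR(x;U_{x,j})$, a random walk with i.i.d.\ increments $\xi_j$ of mean $\mu>0$. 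First, a notational slip: $\{\CR(x;U_{x,m})\leq\eps\}$ is $\{T_m\geq L_1+L_2\}$, an \emph{upper}-tail event for $T_m$, so the lower-tail bound $\PP[T_j\leq t]\leq ce^{-\alpha(j-t/\mu)}$ from your first ingredient does not apply to it. More seriously, even with the correct upper-tail Chernoff bound the decomposition fails when $L_1\gg L_2^2$. Both $S(x,y)$ and $\tau(\eps):=\min\{j:\CR(x;U_{x,j})\leq\eps\}$ are driven by the \emph{same} random walk: $S(x,y)$ concentrates near $L_1/\mu$, $\tau(\eps)$ near $(L_1+L_2)/\mu$, and both fluctuate on scale $\sqrt{L_1}$. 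Once $\sqrt{L_1}\gg L_2$, no deterministic $m$ can lie simultaneously far above the bulk of $S(x,y)$ and far below the bulk of $\tau(\eps)$; for any $m$ that makes $\PP[S(x,y)>m]\lesssim(\eps/r)^\alpha$, the walk $T_m$ already exceeds $L_1+L_2$ with probability bounded away from zero. The union bound discards the strong positive correlation between $\{S(x,y)>m\}$ and $\{\tau(\eps)\leq m\}$, and that correlation is precisely what makes the lemma true.

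The paper repairs this by cutting at the \emph{random} time $\tau(r)$ rather than a deterministic $m$: conditionally on $U_{x,\tau(r)}$, the increments $\xi_{\tau(r)+1},\xi_{\tau(r)+2},\dots$ are fresh, so $\tau(\eps)-\tau(r)$ is the hitting time of a new walk targeting $\approx L_2$, and $\PP[\tau(\eps)-\tau(r)<\beta L_2]$ is exponentially small in $L_2$ with \emph{no $L_1$-dependence} (the small overshoot at $\tau(r)$ is absorbed via~\cite[Lemma~2.8]{miller2016extreme}). Likewise, given $y\in U_{x,\tau(r)}$ and the Koebe lower bound on $|f(y)|$, the quantity $N_{U_{x,\tau(r)}}(x,y)$ has an exponential tail depending only on $\kappa$. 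The inclusion $\{S(x,y)\geq\tau(\eps)\}\subset\{\tau(\eps)-\tau(r)<\beta L_2\}\cup\{N_{U_{x,\tau(r)}}(x,y)\geq\beta L_2\}$ then yields the lemma. You identified all of the right ingredients, including the separation estimate you flag as the ``main obstacle'' (it is handled exactly as you anticipate, via Koebe distortion); the missing move is to condition on the renewal time $\tau(r)$ rather than to union-bound at a fixed $m$.
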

\begin{proof}
    Denote $r = \min(|x-y|, \CR(x; D))$, and assume without loss of generality that $\eps\leq r$ and $\CR(x;D) = 1$. Given $t>0$ define $\tau(t) = \min\{ j\geq 0\ \mid\ \CR(x; U_{x,j}) \leq t \}$. Note that $\xi_j=\log \CR(x; U_{x,j}) - \log \CR(x; U_{x,j+1})$ are non-negative i.i.d. variables with exponentially decaying right tails. Let $\mathcal{T}=\log r-\log \eps;$ note that $\left(\frac{\eps}{r}\right)^\alpha=e^{-\alpha \mathcal{T}}$. We claim that for a suitable $\beta>0$ and $\alpha>0$, we have
\begin{equation}
        \label{eq:CRs2}
        \PP\left[\tau(\eps) - \tau(r)\geq \beta\mathcal{T} \right] = 1 - O\left( \frac{\eps}{r} \right)^\alpha.
    \end{equation}
Indeed, by union bound, we have
\[
\PP\left[\tau(\eps) - \tau(r)< \beta\mathcal{T} \right]\leq \PP\left[-\log\CR(x; U_{x,\tau(r)})+\log r \geq \mathcal{T}/2 \right]+\PP\left[\sum_{\tau(r)\leq j\leq \tau(r)+\beta\mathcal{T}}\xi_j\geq\mathcal{T}/2\right].
\]
Provided that $\beta\EE\xi_k<\frac12,$ both terms are  exponentially small in $\mathcal{T}$, the first one by the ``overshoot estimate" ~\cite[Lemma~2.8]{miller2016extreme}, and the second one by conditioning on $\tau(r)$ and applying Chernoff bound. This gives \eqref{eq:CRs2}.

    Let $f: U_{x,\tau(r)}\to \DD$ be a conformal mapping such that $f(x) = 0$, and let $g=f^{-1}$. We have $|g'(0)|=\CR(x; U_{x,\tau(r)})\leq r$, therefore, by Koebe distortion theorem, $|g(z)-x|\leq |z|r(1-|z|)^{-2}$. If $y\in U(x; \tau(r))$, then plugging in  $z=f(y)$ and taking into account that $r\leq|x-y|$ yields  $|f(y)|\geq \frac{3-\sqrt{5}}{2}$. Note that if $\tau(\eps)-\tau(r)\geq \beta \mathcal{T}$ and $S(x,y)\geq \tau(\eps)$, then $y\in U_{x,\tau(r)}$ and $N_{U_{x,\tau(r)}}(x,y)\geq \beta \mathcal{T}$. Hence
    \begin{multline*}
        \PP[\CR(x; U_{x,S(x,y)}) \leq \eps] = \PP[S(x,y)\geq \tau(\eps)] \leq \\
        \leq \EE\left[\PP[y\in U_{x,\tau(r)},\,N_{U_{x,\tau(r)}}(x,y)\geq \beta \mathcal{T}|U_{x,\tau(r)}]\right]+\PP[\tau(\eps) - \tau(r)< \beta\mathcal{T} \}]  \leq \\
        \leq \sup_{|z|\geq (3-\sqrt{5})/2}\PP\left[N_\DD(0,z) \geq \beta \mathcal{T} \right] + O\left( \frac{\eps}{r} \right)^\alpha = O\left( \frac{\eps}{r} \right)^\alpha,
    \end{multline*}
using \eqref{eq:CRs2} and the fact that $N_\DD(0,z)$ has exponential right tail with parameters depending only on $\kappa$ and the lower bound on $|z|$.
\end{proof}

\begin{lemma}
    \label{lemma:bound_on_loops_separating_pair_from_the_third}
    For any $j\geq 1$ and $\kappa\in (8/3,8)$ there exist constants $C,\alpha>0$ such that the following holds. Let $D\subset \CC$ be a proper simply-connected domain, $x,y\in D$ be distinct points such that $|x-y|\leq \min(\CR(x;D), \CR(y; D))$. Let $N^{B\eps}(x,y)$ denote the number of loops in a ${\CLE_\kappa}$ sample in $D$ encircling $x$ and $y$, but not intersecting their $\eps$-neighborhoods $B_\eps(x)$ and $B_\eps(y)$. Then for any $\eps > 0$ we have
    \[
        \EE \left( N(x,y) - N^{B\eps}(x, y) \right)^j \leq C\left(\frac{\eps}{|x-y|}\right)^\alpha \cdot \left(\log\left(\frac{\eps}{|x-y|}\vee 1\right) + 1\right)^j.
    \]
\end{lemma}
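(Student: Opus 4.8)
The quantity $N(x,y)-N^{B\eps}(x,y)$ counts loops that encircle both $x$ and $y$ but pass through $B_\eps(x)\cup B_\eps(y)$. I would first reduce to a statement about conformal radii, exactly as in Miller--Watson--Wilson. The key observation is that any loop contributing to $N(x,y)-N^{B\eps}(x,y)$ must be one of the loops surrounding $x$ and $y$ whose own conformal radius as seen from $x$ (or $y$) is comparable to $\eps$, because a loop encircling $B(x,\eps)$ from outside has $\CR(x;U)\gtrsim\eps$ by Koebe, while a loop encircling $x,y$ but cutting through $B_\eps(x)$ has $\CR(x;U)\lesssim \eps$. So, with $S=S(x,y)$ the index of the first loop separating $x$ from $y$, all but $O(1)$ of the loops counted by $N(x,y)-N^{B\eps}(x,y)$ lie among the loops indexed between roughly $\tau(c|x-y|)$ and $S$, and in fact one can bound $N(x,y)-N^{B\eps}(x,y)$ above by (a constant plus) the number of loops $j<S$ with $\CR(x;U_{x,j})\le C\eps$, i.e.\ by $(S-\tau(C\eps))_+ + O(1)$ where $\tau(t)=\min\{j:\CR(x;U_{x,j})\le t\}$.

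The second step is to estimate the moments of this random variable. On the event $\{\CR(x;U_{x,S})>\eps\}$ we have $S\le \tau(\eps)$, so $(S-\tau(C\eps))_+\le (\tau(\eps)-\tau(C\eps))_+$, and this last quantity is the number of steps the i.i.d.\ renewal process $\xi_j=\log\CR(x;U_{x,j})-\log\CR(x;U_{x,j+1})$ takes to cover an interval of length $\log(1/C)=O(1)$; since the $\xi_j$ have exponential tails this has all moments bounded by absolute constants (geometric-type tail), contributing $O(1)$ to the $j$-th moment. On the complementary event $\{\CR(x;U_{x,S})\le\eps\}$, whose probability is $O((\eps/|x-y|)^\alpha)$ by Lemma~\ref{lemma:CR_of_separating_loop}, I would use Cauchy--Schwarz (or H\"older) to split $\EE[(N(x,y)-N^{B\eps}(x,y))^j\indic_{\CR(x;U_{x,S})\le\eps}]$ into the square root of $\PP[\CR(x;U_{x,S})\le\eps]$ times the square root of $\EE(N(x,y)-N^{B\eps}(x,y))^{2j}\le \EE N(x,y)^{2j}$. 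By~\eqref{eq:jth_moment_of_N(x,y)} and the hypothesis $|x-y|\le\min(\CR(x;D),\CR(y;D))$, which forces $G_D(x,y)=\log(1/|x-y|)+O(1)$ up to the implicit normalization, this $2j$-th moment is $O((\log(1/|x-y|)+1)^{2j})$. Taking the square root and recombining with $(\eps/|x-y|)^{\alpha/2}$ gives a bound of the shape $C(\eps/|x-y|)^{\alpha/2}(\log(1/|x-y|)+1)^j$; absorbing $\log(1/|x-y|)$ into $\log(\eps/|x-y|)$ plus a constant (since $\eps$ can be assumed $\le |x-y|$, otherwise the left side is $O(1)$ and the claimed bound is trivial), and relabeling $\alpha/2$ as $\alpha$, yields the stated inequality.

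The main obstacle, and the place requiring care, is the first reduction: one must argue cleanly that $N(x,y)-N^{B\eps}(x,y)$ is, up to an $O(1)$ additive error with all moments bounded, controlled by $(\tau(\eps)-\tau(C\eps))_+$ conditionally on $\CR(x;U_{x,S})>\eps$. The subtlety is that a loop can encircle both $x$ and $y$ yet intersect $B_\eps(x)$ without being ``close'' to $x$ in conformal radius if the domain $U_{x,j-1}$ is long and thin; here one invokes the Koebe distortion theorem inside $U_{x,j-1}$ (as in the proof of Lemma~\ref{lemma:CR_of_separating_loop}) to relate Euclidean distance from $x$ to its conformal image, using that $y$ is also encircled and $|x-y|\gtrsim \eps$ keeps the image of $y$ away from the boundary. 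One also needs to handle loops intersecting $B_\eps(y)$ symmetrically, and to note that a loop encircling $x,y$ and meeting $B_\eps(x)$ has both $\CR(x;U)\lesssim\eps$ \emph{and} $\CR(y;U)\lesssim \eps$ up to constants (comparing conformal radii at two points at distance $\ge$ the radius), so the count from the $y$-side is comparable to the $x$-side and no double-counting issue spoils the bound. Everything else is a routine repetition of the renewal/overshoot estimates already used in Lemma~\ref{lemma:CR_of_separating_loop} and in~\cite{miller2015conformal,miller2016extreme}.
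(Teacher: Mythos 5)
Your first step---bounding $\PP[N(x,y)>N^{B\eps}(x,y)]$ by $O((\eps/|x-y|)^\alpha)$ via Lemma~\ref{lemma:CR_of_separating_loop}---agrees with the paper. The rest of your plan, however, does not deliver the claimed estimate, for two reasons. First, on the good event $\{\CR(x;U_{x,S})>\eps\}$ you bound the contribution by $(\tau(\eps)-\tau(C\eps))_+$, which has $O(1)$ moments. But a positive $O(1)$ contribution cannot be absorbed into the target bound $C(\eps/|x-y|)^\alpha(\cdots)^j$, which vanishes as $\eps/|x-y|\to 0$; the event $\{(\tau(\eps)-\tau(4\eps))_+\ge 1\}$ has probability bounded away from $0$ irrespective of $\eps/|x-y|$. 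What one should observe, using the threshold $4\eps$ rather than $\eps$ (via $\dist(x,\partial U)\le\eps\Rightarrow\CR(x;U)\le 4\eps$) together with the symmetric condition at $y$, is that $N-N^{B\eps}=0$ identically on the good event. Second, and more fundamentally, the Cauchy--Schwarz step in the bad case is too lossy: $\EE N(x,y)^{2j}\approx G_D(x,y)^{2j}\approx(\log(\CR(x;D)/|x-y|))^{2j}$ depends on $|x-y|$ and $\CR(x;D)$ separately, not through the ratio $\eps/|x-y|$, and this cannot be folded into the stated $(\log(\eps/|x-y|\vee 1)+1)^j$, which equals $1$ whenever $\eps\le|x-y|$. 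Fixing $\eps/|x-y|$ and letting $|x-y|\to 0$, your bound diverges while the lemma's does not; the ``absorption of $\log(1/|x-y|)$ into $\log(\eps/|x-y|)$ plus a constant'' you invoke is simply not valid.

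The paper closes this gap with a step your plan is missing: conditioning on the \emph{first offending loop} $\gamma$ rather than applying a global moment bound. Writing $U$ for the domain bounded by $\gamma$ and $f:U\to\DD$ with $f(x)=0$, Koebe's distortion theorem gives $|f(y)|\ge c\cdot(|x-y|\eps^{-1}\wedge 1)$; by conformal invariance of $\CLE_\kappa$ in $U$, the conditional $j$-th moment of $N-N^{B\eps}$ is then bounded by $\sup_{|z|\ge c(|x-y|\eps^{-1}\wedge 1)}\EE(N_\DD(0,z)+1)^j$, and applying \eqref{eq:jth_moment_of_N(x,y)} in the disc produces precisely the factor $(\log(\eps/|x-y|\vee 1)+1)^j$, independent of $|x-y|$ and $\CR(x;D)$. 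Multiplying by the small probability of the bad event gives the lemma. As a minor remark, your claim that a loop encircling $x,y$ and meeting $B_\eps(x)$ must also have $\CR(y;U)\lesssim\eps$ is false (the loop can graze $B_\eps(x)$ while $y$ sits deep inside $U$), but this is harmless since the $x$- and $y$-sides can be handled separately.
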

\begin{proof}
    Applying Lemma~\ref{lemma:CR_of_separating_loop} we get
    \begin{equation}
        \label{eq:ccv}
        \PP[N(x,y) > N^{B\eps}(x, y)]  = O\left(\frac{\eps}{|x-y|}\right)^\alpha.
    \end{equation}
    Let us condition on this event. Let $\gamma$ denote the first loop encircling $x,y$ and intersecting their $\eps$-neighborhood, and let $U$ be the Jordan domain encircled by $\gamma$. For definiteness, assume that $\gamma\cap B_\eps(x)\neq \emptyset$ and hence $\CR(x,U)\leq 4\eps$. Let $f:U\to \DD$ be a conformal mapping such that $f(x) = 0$. Applying Koebe's distortion theorem as in the proof of Lemma \ref{lemma:CR_of_separating_loop}, we get
    \[
        |f(y)|\geq c\cdot \left(|x-y|\eps^{-1}\wedge 1\right)
    \]
    where $c>0$ is some absolute constant.
    It follows that, on the event $N(x,y) > N^{B\eps}(x, y)$, we have by conformal invariance and \eqref{eq:jth_moment_of_N(x,y)}
    \begin{multline*}
    \EE[(N(x,y) - N^{B\eps}(x, y))^j|U]\leq \sup_{|z|\geq c\cdot \left(|x-y|\eps^{-1}\wedge 1\right)}\EE (N_{{\CLE_\kappa}\text{ in }\DD}(0, z)+1)^j\\
    =O\left(G_{\DD}\left(0,c\cdot \left(|x-y|\eps^{-1}\wedge 1\right)\right)+1\right)^j = O\left(\log\left(\frac{\eps}{|x-y|}\vee 1\right) + 1\right)^j.
    \end{multline*}
    Combining this with \eqref{eq:ccv} yields the claim.
\end{proof}

\begin{lemma}
    \label{lemma:conformal_change_of_vphieps}
    For any $j\geq 1$ and $\kappa\in (8/3,8)$ there exist constants $C,\alpha>0$ such that the following holds. Let $D\subset \CC$ be a proper simply-connected domain, $x\in D$ be such that $\CR(x;D) = 1$ and $\eps>0$ be given. Let $f$ be a conformal mapping from $D$ to another domain $f(D)$ such that $f'(x) = 1$, and assume that $f(x)+\eps\in f(D)$. Let $\Ll$ denote the ${\CLE_\kappa}$ sampled in $D$. Then we have
    \[
        \EE \left| N_\Ll(x,x+\eps) - N_{f(\Ll)}(f(x), f(x) + \eps) \right|^j \leq C\eps^\alpha.
    \]
\end{lemma}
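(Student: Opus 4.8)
The plan is to exploit the conformal invariance of $\CLE_\kappa$ together with the fact that $f$ is \emph{close to the identity} near $x$, so that the loop ensembles $\Ll$ and $f(\Ll)$ can be compared directly once one discards the loops that are close to $x$. First I would fix a scale: since $\CR(x;D)=1$, Koebe's distortion and the $1/4$-theorem control $f$ on a ball $B(x,r_0)$ for a universal $r_0$, and one checks that on that ball $|f(z)-z|\lesssim |z-x|^2$ and $|f'(z)-1|\lesssim |z-x|$. The key observation is then that for a loop $\gamma$ of $\Ll$ that stays at distance $\geq t$ from $x$ (say $\gamma\subset D\setminus B(x,t)$ with $t$ not too small), the event that $\gamma$ separates $\{x,x+\eps\}$ agrees with the event that $f(\gamma)$ separates $\{f(x),f(x)+\eps\}$, \emph{provided} $\eps$ is small compared to $t$ and $f$ distorts $B(x,t)$ by less than $t$ — the images $f(x)$ and $f(x)+\eps$ lie within $O(\eps^2/t + \eps)$ of $x$ and $x+\eps$ (after using $f'(x)=1$), which is well inside the region enclosed or excluded by $\gamma$ as long as $\eps\ll t$. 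Hence the only loops that can contribute to the difference $N_\Ll(x,x+\eps) - N_{f(\Ll)}(f(x),f(x)+\eps)$ are those passing within distance $\sim t$ of $x$, and those passing within distance $\sim t$ of $x+\eps$; but any such loop that nonetheless separates the pair from a point at distance $\gtrsim r_0$ is one of the loops counted in $N(x,x+\eps) - N^{B t}(x,x+\eps)$ (or its analogue at $x+\eps$).

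So the concrete steps are: (1) choose the cutoff $t = \eps^{1/2}$ (or $\eps^{\theta}$ for suitable $\theta\in(0,1)$); (2) show that on the event that every loop separating $x$ from $x+\eps$ stays at distance $\geq t$ from both $x$ and $x+\eps$ and moreover $f$ restricted to $B(x,2t)$ moves points by less than, say, $\eps$, we have an exact equality $N_\Ll(x,x+\eps) = N_{f(\Ll)}(f(x),f(x)+\eps)$ — this is a purely topological/Jordan-curve argument combined with the distortion bounds; (3) bound the probability of the complementary event and the contribution of the difference on it. For step (3), the number of loops separating $\{x,x+\eps\}$ but entering $B(x,t)$ is exactly $N(x,x+\eps) - N^{Bt}(x,x+\eps)$ in the notation of Lemma~\ref{lemma:bound_on_loops_separating_pair_from_the_third} (and symmetrically at $x+\eps$), so its $j$-th moment is $\lesssim (\eps/t)^{\alpha}(\log(t/\eps)+1)^j = \eps^{\alpha/2}\cdot O(|\log\eps|^j) \lesssim \eps^{\alpha'}$ after shrinking the exponent; the analogous count for $f(\Ll)$ around $f(x)$ is handled the same way after noting $\CR(f(x);f(D))$ is bounded above and below by universal constants (again Koebe), so Lemma~\ref{lemma:bound_on_loops_separating_pair_from_the_third} applies with $x\mapsto f(x)$, $y\mapsto f(x)+\eps$. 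One then combines via $\EE|A-B|^j \leq 2^{j-1}(\EE|A-A'|^j + \EE|B-B'|^j + \EE|A'-B'|^j)$ where $A',B'$ are the $B t$-truncated counts and $A'=B'$ on the good event, and controls the bad-event contribution by Cauchy--Schwarz using the exponential tails of $N(x,x+\eps)$ from \eqref{eq:jth_moment_of_N(x,y)} and the probability bound from Lemma~\ref{lemma:CR_of_separating_loop}.

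The main obstacle I expect is step (2): making the topological comparison genuinely rigorous. One has to be careful that a loop $\gamma$ with $\gamma\cap B(x,t)=\emptyset$ might still come close to $x+\eps$ (since $|x-(x+\eps)| = \eps \ll t$, a loop avoiding $B(x,t)$ automatically avoids $B(x+\eps, t-\eps)$, so this is actually fine), and conversely one must ensure that $f$ maps the ``inside'' of $\gamma$ to the inside of $f(\gamma)$ and that $f(x), f(x)+\eps$ end up on the correct side — here the point is that $f$ is a homeomorphism of $D$ onto $f(D)$, so it preserves the separation structure of \emph{images of loops}, and the only issue is that the marked point $f(x)+\eps$ is not $f(x+\eps)$. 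The gap between them is $|f(x)+\eps - f(x+\eps)| = |\eps - (f(x+\eps)-f(x))| \lesssim \eps^2$ by the second-order Taylor estimate with $f'(x)=1$, which is negligible at scale $t$; so $f(x)+\eps$ and $f(x+\eps)$ lie in the same complementary component of $f(\gamma)$ for every $\gamma$ avoiding $B(x,t)$. Assembling these estimates and choosing $\theta$ to balance the exponents gives the bound $C\eps^\alpha$ with a possibly reduced $\alpha$, which is all that is claimed.
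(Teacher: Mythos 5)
Your core insight matches the paper's: $f'(x)=1$ plus Koebe distortion gives $|f(x{+}\eps)-(f(x)+\eps)|\lesssim\eps^2$, and Lemma~\ref{lemma:bound_on_loops_separating_pair_from_the_third} then controls the moments. However, the paper's execution is cleaner and your step (2) as stated has a gap. The paper first uses conformal invariance to reduce everything to a single ensemble: setting $z=f^{-1}(f(x)+\eps)$, one has $N_{f(\Ll)}(f(x),f(x)+\eps)=N_\Ll(x,z)$ exactly, so the quantity to bound is $\EE|N_\Ll(x,y)-N_\Ll(x,z)|^j$ with $y=x+\eps$ and $|y-z|\lesssim\eps^2$. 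Comparing both terms to $N_\Ll(x,y,z)$, the discrepancy counts loops encircling $x$ and exactly one of $y,z$; every such loop passes within $A\eps^2$ of $y$, so Lemma~\ref{lemma:bound_on_loops_separating_pair_from_the_third} with inner radius $A\eps^2$ and separation $|x-y|=\eps$ gives $(\eps^2/\eps)^\alpha\cdot O(|\log\eps|^j)\lesssim\eps^{\alpha'}$, with no truncation scale and no good/bad event.

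The gap in your step (2): your good event controls only loops \emph{separating} $x$ from $x+\eps$, but a loop that encircles \emph{both} $x$ and $x+\eps$ while grazing $x+\eps$ can have an image $f(\gamma)$ that encircles $f(x)$ and $f(x+\eps)$ yet misses $f(x)+\eps$ (since $f(x)+\eps$ sits $O(\eps^2)$ away, possibly on the other side). Such a loop changes the count without being excluded by your stated condition, so the claimed exact equality on the good event does not follow as written. This is fixable by also conditioning on no loop encircling both points passing within $t$ of $x+\eps$ — which is in fact exactly what $N(x,x{+}\eps)-N^{Bt}(x,x{+}\eps)$ records — but the paper's single-ensemble reduction avoids the issue entirely. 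Your choice $t=\eps^{1/2}$ also gives a weaker exponent $\eps^{\alpha/2}$ versus the paper's $\eps^\alpha$ (both suffice for the statement). Finally, don't forget the trivial case $\eps\geq\eps_0$, handled in the paper by exponential tails of the nesting count.
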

\begin{proof}
    We first fix some small numerical constant $\eps_0$ and assume that $\eps<\eps_0$; in that case $f(x)+\eps\in f(D)$ by Koebe $\frac14$ theorem. Put $y = x + \eps$ and $z = f^{-1}(f(x)+ \eps)$. Note that Koebe $\frac14$ and distortion theorems guarantee that $f(u)=f(x)+u-x+O(|u-x|^2)$ and $f^{-1}(u)=x+u-f(x)+O(|u-f(x)|^2)$ when $|u-x|<\frac{1}{10}$ (respectively, $|u-f(x)|<\frac{1}{10}$), with an absolute constant in the $O(\cdot)$. Hence, $|y-z|\leq A\eps^2$, with an absolute constant $A$. By the conformal invariance it is enough to bound
    \[
        \EE \left| N_\Ll(x,y) - N_\Ll(x,y,z) \right|^j + \EE \left| N_\Ll(x,z) - N_\Ll(x,y,z) \right|^j.
    \]
    Koebe's $\frac14$ theorem ensures that if $\eps_0$ was chosen so that $\eps_0+A\eps_0^2<\frac{1}{100}$ and $A\eps^2_0 <\eps_0$, then $\max(|x-y|, |x-z|)< \min(\CR(x;D), \CR(y; D), \CR(z; D))$. Thus, we can apply Lemma~\ref{lemma:bound_on_loops_separating_pair_from_the_third} to get the desired bound, since any loop encircling $x,y$, but not $z$ must pass at distance at most $ A\eps^2<\eps$ from $y$.

    If $\eps\geq\eps_0$, we can use that $N_\Ll(x,y)$ and $N_{f(\Ll)}(f(x), f(x) + \eps)$ both are stochastically dominated by some geometric variable whose parameter depends on $\kappa$ and $\eps_0$ only.
\end{proof}

Given $x,y\in D$, let $\Sigma_{x,y}$ denote the sigma-algebra generated by all loops in ${\CLE_\kappa}$ surrounding $x$ or $y$ and having the index at most $S(x,y)$ (note that $S(x,y)=S(y,x)=N(x,y)+1$).

\begin{lemma}
    \label{lemma:nesting_is_locally_constant}
    Let $D\subset \CC$ be proper and simply connected. For any $\kappa\in (8/3,8)$ there exist $C,\alpha>0$ such that for any two distinct points $x,y\in D$ and $0<\eps\leq r \coloneqq \min(|x-y|, \CR(x;D))$
    \[
       \EE\left[ \EE\left( \varphi^{B\eps}(x) - \varphi^\eps(x) \vert
       \Sigma_{x,y}
       \right)^2\right] \leq C\left(\frac{\eps}{r}\right)^\alpha.
    \]
\end{lemma}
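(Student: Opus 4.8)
The plan is to adapt the reverse–martingale argument of Miller–Watson–Wilson (the proof of \cite[Lemma~4.8]{miller2015conformal}), with $\vphi^{B\eps}$ replaced by $\vphi^\eps$ in one of the two slots. First I would rewrite the quantity to be estimated as a conditional variance. Put $\Delta := N(x,x+\eps)-N^{B\eps}(x)$, the number of loops of $\Ll_{\CLE_\kappa}$ surrounding $\{x,x+\eps\}$ but not $B(x,\eps)$; every such loop surrounds $x$ and meets $B(x,\eps)$, and from the overshoot estimates used in Lemma~\ref{lemma:CR_of_separating_loop} together with \eqref{eq:jth_moment_of_N(x,y)} one sees that $\Delta\ge 0$ has moments of every order bounded by an absolute constant. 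Since $\vphi^{B\eps}(x)-\vphi^\eps(x)=\EE\Delta-\Delta$, setting $Z:=\EE(\Delta\mid\Sigma_{x,y})$ gives $\EE\!\left(\vphi^{B\eps}(x)-\vphi^\eps(x)\mid\Sigma_{x,y}\right)=\EE Z-Z$, so the left–hand side of the lemma equals $\Var Z$. It therefore suffices to exhibit a bounded deterministic constant $\nu_0$ with $\EE(Z-\nu_0)^2\le C(\eps/r)^{\alpha_0}$.

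Next I would introduce the good event. Let $\sigma=S(x,y)$ and $U=U_{x,\sigma}$; both are $\Sigma_{x,y}$–measurable, and, by the domain Markov property, conditionally on $\Sigma_{x,y}$ the loops of $\Ll_{\CLE_\kappa}$ strictly inside $U$ form a fresh $\CLE_\kappa$ in $U$. Put $E=\{\CR(x;U)\ge 8\eps\}\in\Sigma_{x,y}$; by Lemma~\ref{lemma:CR_of_separating_loop}, $\PP(E^c)\le C(\eps/r)^\alpha$ (and the whole assertion is trivial if $8\eps>r$, since then $\Var Z\le\EE\Delta^2\le C$). On $E$, Koebe's theorem gives $B(x,2\eps)\subset B(x,\CR(x;U)/4)\subset U$, so $x+\eps\in U$ and all the loops $\gamma_1,\dots,\gamma_\sigma$ surrounding $x$ of index $\le\sigma$ stay at distance $\ge 2\eps$ from $x$; hence every loop contributing to $\Delta$ has index $>\sigma$, i.e.\ belongs to the fresh $\CLE_\kappa$ in $U$. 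Thus on $E$ one has $\Delta=\Delta_U$, the analogous count for the $\CLE_\kappa$ in $U$, whence $\mathbf 1_E Z=\mathbf 1_E\,h(U)$ with $h(V):=\EE\bigl[\Delta_{\CLE_\kappa\text{ in }V}\bigr]$. Off $E$, Jensen plus Cauchy–Schwarz give $\EE(\mathbf 1_{E^c}Z^2)\le \PP(E^c)^{1/2}(\EE\Delta^4)^{1/2}\le C(\eps/r)^{\alpha/2}$, and $\nu_0^2\PP(E^c)\le C(\eps/r)^\alpha$.

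It then remains to control $h(U)$ on $E$. By conformal invariance of $\CLE_\kappa$, $h(V)$ depends only on the conformal class of $(V,x)$ and on $\eps$; mapping $V$ onto $B(x,\CR(x;V))$ by a conformal map fixing $x$ with derivative $1$ and invoking Lemma~\ref{lemma:conformal_change_of_vphieps} together with its (routine) analogue for $N^{B\eps}(x)$ — or, equivalently, estimating directly via Koebe distortion on $B(x,10\eps)$ and a variant of Lemma~\ref{lemma:bound_on_loops_separating_pair_from_the_third} applied to the loops whose contribution to $\Delta$ is altered by an $O(\eps^2)$ perturbation of $x+\eps$ and of the radius $\eps$ — one obtains, after rescaling so that $\CR(x;V)=1$,
\[
\bigl|\,h(V)-h_\DD\!\bigl(\eps/\CR(x;V)\bigr)\,\bigr|\le C\bigl(\eps/\CR(x;V)\bigr)^{\alpha_1},\qquad h_\DD(t):=\EE\bigl[\#\{\CLE_\kappa\text{ loops in }\DD\text{ surrounding }\{0,t\}\text{ but not }B(0,t)\}\bigr].
\]
Scaling $\DD$ by $t^{-1}$ and comparing $\CLE_\kappa$ in $t^{-1}\DD$ with whole–plane $\CLE_\kappa$ (using decay of the law of a large loop surrounding $\{0,1\}$ and the conformal–distortion bound near scale $1$), $h_\DD(t)$ converges as $t\to 0+$ to the universal constant $\nu_0:=\EE[\#\{\text{whole-plane }\CLE_\kappa\text{ loops surrounding }\{0,1\}\text{ but not }B(0,1)\}]$, polynomially fast: $|h_\DD(t)-\nu_0|\le Ct^{\alpha_2}$. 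Hence on $E$ (where $\eps/\CR(x;U)\le\tfrac18$) we get $|Z-\nu_0|=|h(U)-\nu_0|\le C(\eps/\CR(x;U))^{\alpha_3}$, so that
\[
\EE\bigl(\mathbf 1_E(Z-\nu_0)^2\bigr)\le C\,\eps^{2\alpha_3}\,\EE\bigl[\CR(x;U)^{-2\alpha_3}\mathbf 1_{\{\CR(x;U)\ge 8\eps\}}\bigr]\le C(\eps/r)^{2\alpha_3},
\]
the last step using Lemma~\ref{lemma:CR_of_separating_loop} in the form $\PP(\CR(x;U)\le t)\le C(t/r)^\alpha$ for $t\le r$, with $2\alpha_3<\alpha$. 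Combining the two regimes yields $\Var Z\le\EE(Z-\nu_0)^2\le C(\eps/r)^{\alpha_0}$ with $\alpha_0=\min(2\alpha_3,\alpha/2)$.

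The bookkeeping (the event $E$, the domain Markov decomposition, and the tail bounds on $\CR(x;U)^{-1}$ and on $\Delta$) is routine. The main obstacle, and the only genuinely new analytic input, is the \emph{quantitative local stabilization} of $\CLE_\kappa$ used twice in the third paragraph: that $h(V)$, a functional supported at scale $\eps$ near $x$, depends on the domain $V$ only through $\CR(x;V)$ up to a polynomially small error, and that $h_\DD(t)$ approaches its whole–plane value $\nu_0$ at a polynomial rate. Both follow the pattern of the preparatory estimates in~\cite{miller2015conformal} (loop–size tails and comparison of $\CLE_\kappa$ across nested domains), which the paper in any case reproves; making the rates explicit here is the step that requires the most care.
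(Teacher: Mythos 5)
Your plan genuinely departs from the paper's proof. The paper applies, as a black box, the coupling of two $\CLE_\kappa$'s in the same domain constructed in the proof of \cite[Lemma~4.7]{miller2015conformal}, and then bounds $\EE\Delta^2$ for the difference $\Delta$ of the two conditional expectations, splitting on the events $\CR(x;U_{x,K})>\sqrt{r\eps}$ and $\CR(x;U_{x,K})>10\eps$. You instead rewrite the quantity as $\Var Z$ for $Z=\EE(\Delta\mid\Sigma_{x,y})$, argue that on the good event $E=\{\CR(x;U_{x,S(x,y)})\geq 8\eps\}$ one has $Z=h(U_{x,S(x,y)})$ by domain Markov, reduce $h(V)$ to $h_\DD(\eps/\CR(x;V))$ up to a polynomial error, and then invoke polynomial stabilization of $h_\DD(t)$ to a universal constant $\nu_0$. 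The reduction to $\Var Z$, the identification of $Z$ with $h(U)$ on $E$, the off-$E$ bookkeeping via Cauchy--Schwarz and the fourth moment of $\Delta$, the distortion step (a), and the final integral over the law of $\CR(x;U)$ using Lemma~\ref{lemma:CR_of_separating_loop} are all sound and cleanly organized.

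The genuine gap is step (b): that $h_\DD(t)\to\nu_0$ at a polynomial rate. You acknowledge this requires care, but it is in fact the entire difficulty of the lemma in disguise. By scale invariance, $|h_\DD(t_1)-h_\DD(t_2)|$ is the difference of the expectation of the same local functional at scale $t_1$ for $\CLE_\kappa$ in $\DD$ versus in $(t_1/t_2)\DD$; estimating this requires a quantitative coupling of $\CLE_\kappa$ across nested domains near a common bulk point. That coupling is exactly the content of the proof of \cite[Lemma~4.7]{miller2015conformal} which the paper invokes directly. It is not something the paper ``in any case reproves'' --- the paper cites it, whereas your route would have to (re)construct it to obtain (b), and would also need to introduce whole-plane $\CLE_\kappa$ (e.g.\ Kemppainen--Werner) to give $\nu_0$ a meaning, an object the paper deliberately avoids. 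So your approach does not circumvent the coupling; it defers it into a standalone stabilization lemma. That lemma is nice to isolate, but as it stands your proposal leaves the central estimate unproved, and filling it in is at least as much work as the paper's direct application of the ready-made coupling.
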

\begin{proof}
    The lemma is analogous to~\cite[Lemma~4.7]{miller2015conformal}, and our proof will consist of adapting the arguments therein. According to~\cite[proof~of~Lemma~4.7]{miller2015conformal} there exists a coupling between two ${\CLE_\kappa}$ in $D$, denoted by $\Ll$ and $\widetilde{\Ll}$, such that the following holds. Let $U_{x,S(x,y)}$ and $\widetilde{U}_{x,\widetilde{S}(x,y)}$ be the domains defined as above for $\Ll$ and $\widetilde{\Ll}$ respectively. Then we have
    \begin{enumerate}
        \item The random domains $U_{x,S(x,y)}$ and $\widetilde{U}_{x,\widetilde{S}(x,y)}$ and the parts of $\Ll$, $\widetilde{\Ll}$ outside of them are independent.
        \item The two random walks
        \[
            \{ X_k = -\log \CR(x; U_{x,k}) \}_{k\geq 0}\qquad \text{and} \qquad \{ \widetilde{X}_k = -\log \CR(x; \widetilde{U}_{x,k}) \}_{k\geq 0}
        \]
        are coupled in such a way that there exist random indices $K,\tilde{K}$ such that $X_{K+j}=X_{\tilde{K}+j}$ for all $j\geq 0$, and moreover for some constants $C,c>0$ depending only on $\kappa$ we have

        \begin{equation}
            \label{eq:nilc1}
            \PP[X_K\leq \max(X_{S(x,y)}, \widetilde{X}_{\widetilde{S}(x,y)}) + M]\geq 1 - Ce^{-cM},
        \end{equation}

        \item Let $f: U_{x,K}\to \widetilde{U}_{x,\widetilde{K}}$ be the conformal map normalized by $f(x) = x$, $f'(x) = 1$. Then the set of loops of $\widetilde{\Ll}$ inside $\widetilde{U}_{x,\widetilde{K}}$ is the image of the set of loops of $\Ll$ inside $U_{x,K}$ under $f$.
    \end{enumerate}
    Following~\cite[proof~of~Lemma~4.7]{miller2015conformal} we put
    \[
        \Delta = \EE[N^{B\eps}(x) - N(x,x+\eps) \vert
        \Sigma_{x,y}
        ] - \EE[\widetilde{N}^{B\eps}(x) - \widetilde{N}(x,x+\eps) \vert
        \widetilde{\Sigma}_{x,y}
        ]
    \]
    and observe that it is enough to bound $\EE\Delta^2$, which in turn can be achieved by estimating
    \[
        \begin{split}
            &\EE\left[\EE( N^{B\eps}(x) - K - \widetilde{N}^{B\eps}(x) + \widetilde{K}\vert
            \Sigma_{x,y},\widetilde{\Sigma}_{x,y})^2
            \right],\\
            &\EE\left[\EE( N(x,x+\eps) - K - \widetilde{N}(x,x+\eps) + \widetilde{K}\vert
            \Sigma_{x,y},\widetilde{\Sigma}_{x,y})^2\right]
        \end{split}
    \]
    separately. The first expectation was already bounded in~\cite[proof~of~Lemma~4.7]{miller2015conformal}. We deal with the second one by repeating the same arguments with~\cite[Lemma~2.7]{miller2015conformal} replaced by Lemma~\ref{lemma:conformal_change_of_vphieps}. Let us sketch these arguments for the sake of completeness.

    -- On the event $A = \{\CR(x; U_{x,K})> \sqrt{r\eps}\}$ we use that $N(x,x+\eps) - K$ (respectively, $\widetilde{N}(x,x+\eps) - \widetilde{K}$) is the number of loops surrounding both $x,x+\eps$ in $\CLE_\kappa$ in the random domain $U_{x,K}$ (respectively, in its image under $f$). Hence, Lemma~\ref{lemma:conformal_change_of_vphieps} readily gives a bound of the form $O\left(\frac{\eps}{r}\right)^\alpha$.

    -- The properties of the coupling give also a bound of the same form on $1- \PP(A)$. Indeed, applying Lemma~\ref{lemma:CR_of_separating_loop} and using~\eqref{eq:nilc1} for $M = -\frac{1}{4}\log\frac{\eps}{r}$ we get
    \[
        1-\PP(A) \leq 2\PP[\CR(x; U_{x,S(x,y)}) \leq r^{3/4}\eps^{1/4}] + \PP[X_K\geq \max(X_{S(x,y)}, \widetilde{X}_{\widetilde{S}(x,y)}) + M] = O\left(\frac{\eps}{r}\right)^\alpha.
    \]
    for $\alpha>0$ small enough.

    -- We will conclude by
    Cauchy--Schwarz
    if we can bound
    \[
        \EE\left[\EE( N(x,x+\eps) - K - \widetilde{N}(x,x+\eps) + \widetilde{K}\vert
        \Sigma_{x,y},\widetilde{\Sigma}_{x,y}
        )^4
        \right]
    \]
    by a constant depending on $\kappa$ only. To this end, consider again two cases. First, on the event $B=\{\CR(x,U_{x,K})> 10\eps\}$, we have by Koebe's theorem $x+\eps \in U_{x,K}$ and $x+\eps \in \widetilde{U}_{x,\widetilde{K}}$, and applying Lemma~\ref{lemma:conformal_change_of_vphieps} as above readily gives the desired bound. On $B^c$, we write, using Jensen inequality,
\begin{multline*}
\EE\left[\EE( N(x,x+\eps) - K - \widetilde{N}(x,x+\eps) + \widetilde{K}\vert
        \Sigma_{x,y},\widetilde{\Sigma}_{x,y}
        )^4  \indic_{B^c}
        \right]
    \\
        \leq 8\EE\left[( N(x,x+\eps) - K
        )^4 \indic_{B^c}
        \right]+8\EE\left[( \widetilde{N}(x,x+\eps) - \widetilde{K}
        )^4\indic_{B^c}
        \right]
\end{multline*}
    We will estimate the first term; the second one is similar. It is enough to see that the random variable $(N(x,x+\eps) - K)\indic_{B^c}$ has exponentially decaying tails with the rate depending on $\kappa$ only. Indeed,
    \[
        \PP[N(x,x+\eps) - K \geq M,\ B^c ] \leq\PP[N_{U_{x,K}}(x,x+\eps)\geq M\ \vert B^c,\ x+\eps\in U_{x,K}]
    \]
    decays exponentially as $M\to +\infty$ because if $f:U_{x,K}\to \DD$ is conformal and $f(x) = 0$, then $|f(x+\eps)|$ is bounded from below by an absolute constant. On the other hand,
    \[
        \PP[K-N(x,x+\eps) \geq M] \leq \PP[K - S(x,y)\geq M/2] + \PP[S(x,y)-N(x,x+\eps)\geq M/2].
    \]
    The first term decays exponentially as $M\to +\infty$ by the construction of the coupling. To estimate the second one, consider the uniformizing map $f: U_{x, N(x,x+\eps)+1}\to \DD$ such that $f(x) = 0$. Then, on the event $y\in U_{x, N(x,x+\eps)+1}$, we have $|f(y)|$ bounded from below by some absolute constant due to Koebe's distortion theorem. It follows that
    \[
       \PP[S(x,y)-N(x,x+\eps)\geq M/2] = \PP[N_{U_{x, N(x,x+\eps)+1}}(x,y) \geq M/2-1]
    \]
    decays exponentially.
\end{proof}

\begin{proof}[Proof of Proposition~\ref{prop:two-point_approximation}]
    It is enough to prove that
    \begin{equation}
        \label{eq:tpa1}
        \lim\limits_{\eps\to 0+} \|\vphi^{B\eps} - \vphi^\eps\|_{\Ff(K)} = 0
    \end{equation}
    for any fixed relatively compact open $K\subset \CC^+$. In order to do this we will estimate the kernel
    \[
        G_\eps(x,y) = \EE \bigl[(\vphi^{B\eps}(x) - \vphi^\eps(x))(\vphi^{B\eps}(y) - \vphi^\eps(y))\bigr],
    \]
for $x,y\in K$; we will assume $2\eps<\dist(K,\partial \Omega)\wedge 1$. The estimate is similar to~\cite[proof~of~lemma~4.8]{miller2015conformal} but easier. First, note that by~\eqref{eq:jth_moment_of_N(x,y)}, we have
    $
    \EE (\varphi^\eps(z))^2=O(|\log \eps|)
    $ for $z\in K$. Similarly, by~\cite[Corollary 3.2]{miller2015conformal}, we have the same estimate for $\varphi^{B\eps}(z)$. Therefore, Cauchy--Schwarz implies
    \begin{equation}
        \label{eq:tpa2}
        |G_\eps(x,y)|\lesssim |\log \eps|,\qquad x,y\in K.
    \end{equation}

    Let us now assume that $|x-y|\geq \eps$. The random variables $\vphi^{B\eps}(x) - \vphi^\eps(x)$ and $\vphi^{B\eps}(y) - \vphi^\eps(y)$ are conditionally independent given $\Sigma_{x,y}$, therefore we have
    \begin{multline}
        \label{eq:tpa3}
        |G_\eps(x,y)| \leq \EE\bigl[\EE[(\vphi^{B\eps}(x) - \vphi^\eps(x))\vert \Sigma_{x,y}]^2\bigr]^{1/2}\cdot \, \EE\bigl[\EE[(\vphi^{B\eps}(y) - \vphi^\eps(y))\vert \Sigma_{x,y}]^2\bigr]^{1/2} \\
        \leq C\left(\frac{\eps}{\min(|x-y|, \CR(x;D))}\right)^\alpha,\qquad |x-y| > \eps.
    \end{multline}
    where $C$ and $\alpha$ are as in Lemma~\ref{lemma:nesting_is_locally_constant}. We conclude from~\eqref{eq:tpa2} and~\eqref{eq:tpa3} that
    \[
        \|G_\eps\|_{\mL^1(K\times K)}\lesssim -\eps\log \eps
    \]
    with some constant depending on $K$ and $\kappa$ only. Eq.~\eqref{eq:tpa1} now follows from Lemma~\ref{lemma:mL-norm_via_kernel}.
\end{proof}

\subsection{Cylindrical events and the convergence of \texorpdfstring{$\varphi_\delta^\eps$}{mdelta} to \texorpdfstring{$\varphi^\eps$}{mdelta}.}
\label{sec:FieldConv}

From now on let us put $\kappa = 4$, fix $D = \CC^+$ and denote by $\vphi$ and $\vphi^\eps$ the nesting field of $\CLE_4$ and its two point approximation respectively, see Section~\ref{subsec:nesting_fields_for_CLE(4)} for details. We also denote by $N(x,x+\eps)$ the number of $\CLE_4$ loops surrounding $x$ and $x+\eps$. Recall that $\Ll_\delta$ denotes the double-dimer loop ensemble in $\CC_\delta^+$ and $\vphi_\delta^\eps = N_\delta(x,x+\eps) - \EE N_\delta(x,x+\eps)$, where $N_\delta(x,x+\eps)$ is the number of loops in $\Ll_\delta$ surrounding $x,x+\eps$. In this section we will show that for a fixed $\eps>0$ the field $\vphi^\eps_\delta$ converges to $\vphi^\eps$ as $\delta\to 0+$.

To this end we will use the following result, which was established in the series of papers~\cite{kenyon2014conformal, DubedatDoubleDimers, BasokChelkak, bai2023crossing} of Kenyon, Dub\'edat, Basok--Chelkak and Bai--Wan. Given a set of distinct points $\lambda_1,\dots, \lambda_n\in \CC^+$ a macroscopic lamination $\Gamma$ in $\CC^+\smm\{\lambda_1,\dots, \lambda_n\}$ is the free homotopy class of any finite collection of simple pairwise non-intersecting loops such that each of them encircles at least two of the points $\lambda_1,\dots, \lambda_n$. Given $\lambda_1,\dots, \lambda_n$ let $l_1,\dots, l_n$ be simple disjoint paths connecting these points with the real line inside $\CC^+$ chosen as follows: if $\Re \lambda_i$ are all different, then $l_i$'s are chosen to be straight vertical segments; if $\Re \lambda_i = \Re\lambda_j, i\neq j$, then each of $l_i$ and $l_j$ can be chosen to be a concatination of horizontal and vertical segments. Given paths $l_1,\dots, l_n$, the complexity of any loop $\gamma$ in $\CC^+\smm\{\lambda_1,\dots, \lambda_n\}$ is defined to be the minimal number of intersections between $l_1\cup\dots\cup l_n$ and $\gamma'$ among all $\gamma'$ freely homotopic to $\gamma$. The complexity of a macroscopic lamination is, by definition, the sum of complexities of the underlying loops. Note that the notion of a complexity introduced in~\cite{BasokChelkak} is slightly different, but one can easily show that these two notions provide comparable answers.

Given a loop ensemble $\Ll$ in $\CC^+$ and a macroscopic lamination $\Gamma$ in $\CC^+\smm\{\lambda_1,\dots, \lambda_n\}$ we say that $\Ll\sim \Gamma$ if $\Gamma$ is obtained from $\Ll$ after removing all the loops that encircle at most one of $\lambda_1,\dots, \lambda_n$.

\begin{thmas}
    \label{thmas:on_convergence_of_cyl_prob}
    Let $\CC^+_\delta = \CC^+\cap \delta \ZZ^2$ and $\lambda_1,\dots,\lambda_n\in \CC^+$ be distinct points. Denote by $\Ll_\delta$ the double-dimer loop ensemble in $\CC^+_\delta$ and by $\Ll_{\CLE_4}$ the $\CLE_4$ in $\CC^+$. The following holds:
    \begin{enumerate}
        \item Then for each $R>0$ there is a constant $C>0$ such that for each macroscopic lamination $\Gamma$ in $\CC^+\smm\{\lambda_1,\dots, \lambda_n\}$ we have
        \begin{equation}
            \label{eq:superexp_estimate_CLE4}
            R^{|\Gamma|} \PP[\Ll_{\CLE_4}\sim \Gamma] \leq C.
        \end{equation}
        Moreover, this estimate holds uniformly in $(\lambda_1,\dots, \lambda_n)$ staying in any compact in $(\CC^+)^n\smm\mathrm{diags}$.
        \item For each $R>0$ we have
        \begin{equation}
            \label{eq:convergence_of_cyl_prob}
            \lim\limits_{\delta \to 0}R^{|\Gamma|} \PP[\Ll_\delta\sim \Gamma] = R^{|\Gamma|} \PP[\Ll_{\CLE_4}\sim \Gamma]
        \end{equation}
    uniformly in $\Gamma$ and $(\lambda_1,\dots, \lambda_n)$ staying in any compact in $(\CC^+)^n\smm\mathrm{diags}$.
    \end{enumerate}
\end{thmas}
\begin{proof}
    The estimate~\eqref{eq:superexp_estimate_CLE4} is the central result of~\cite{bai2023crossing}. Note that it is proven for all $\kappa\leq 4$. The convergence~\eqref{eq:convergence_of_cyl_prob} is the main output of the series of works~\cite{kenyon2014conformal, DubedatDoubleDimers, BasokChelkak}, in particular, in~\cite{BasokChelkak} it is proven based on~\eqref{eq:superexp_estimate_CLE4}, see~\cite[Corollary~1.7]{BasokChelkak}.
\end{proof}

We have the following corollary:
\begin{lemma}
    \label{lemma:vphi_delta_eps_to_vphi_eps}
    For any relatively compact open $K\subset \CC^+$, any $\nu>0$, and any $\eps>0$ small enough, the fields $\vphi_\delta^\eps$ converge as $\delta\to 0$ to $\vphi^\eps$ in distribution with respect to the topology of $\mH^{-1-\nu}(K)$.
\end{lemma}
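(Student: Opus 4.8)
The plan is to deduce convergence of the two-point fields $\vphi_\delta^\eps$ to $\vphi^\eps$ from the convergence of cylindrical probabilities in Theorem~\ref{thmas:on_convergence_of_cyl_prob}, together with the abstract framework of Section~\ref{subsec:def_of_mL_loc}. The first step is to observe that, since $\varepsilon$ is fixed, the random variable $N_\delta(x,y)$ with $y = x+\eps$ counts only loops that are macroscopic in the sense of Theorem~\ref{thmas:on_convergence_of_cyl_prob} with $n=2$, $\lambda_1 = x$, $\lambda_2 = y$: a loop surrounding both $x$ and $x+\eps$ must have complexity at least $2$, and conversely every such loop contributes. More precisely, for a macroscopic lamination $\Gamma$ of $\CC^+\smm\{x,y\}$, write $m(\Gamma)$ for the number of its underlying loops that encircle both $x$ and $y$; then $N_\delta(x,y)$, resp.\ $N_{\CLE_4}(x,y)$, equals $\sum_\Gamma m(\Gamma)\indic[\Ll_\delta\sim\Gamma]$, resp.\ $\sum_\Gamma m(\Gamma)\indic[\Ll_{\CLE_4}\sim\Gamma]$, and similarly for products $N(x_1,y_1)N(x_2,y_2)$ one sums over laminations of $\CC^+\smm\{x_1,y_1,x_2,y_2\}$ against the appropriate product weight. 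Since $m(\Gamma)\le |\Gamma|$ and $|\Gamma|\le R^{|\Gamma|/2}$ eventually, the super-exponential bound~\eqref{eq:superexp_estimate_CLE4} and its counterpart~\eqref{eq:convergence_of_cyl_prob} (which is uniform in $\Gamma$ and gives domination by a summable sequence) let us pass to the limit term by term via dominated convergence. This yields $\EE N_\delta(x,y)\to \EE N_{\CLE_4}(x,y)$, and likewise for $\EE\bigl[N_\delta(x_1,y_1)N_\delta(x_2,y_2)\bigr]\to\EE\bigl[N_{\CLE_4}(x_1,y_1)N_{\CLE_4}(x_2,y_2)\bigr]$, uniformly over $(x_1,y_1,x_2,y_2)$ in compacts away from the diagonals; one should also record that the same arguments bound all these quantities uniformly in $\delta$ by $O(|\log\mathrm{dist}|)$-type expressions, using $\EE N_\delta(x,y)=\EE[h_\delta(x)h_\delta(y)]$ and Lemma~\ref{lemma:asymptotics_of_h(0)2} for the variance.

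The second step is to upgrade convergence of two-point functions of $\vphi_\delta^\eps$ to convergence as random fields in $\Ff_\loc(\CC^+)$. Set $G_\delta^\eps(x,y) = \EE\bigl[\vphi_\delta^\eps(x)\vphi_\delta^\eps(y)\bigr]$ and $G^\eps(x,y) = \EE\bigl[\vphi^\eps(x)\vphi^\eps(y)\bigr]$; by the above these are locally integrable and $G_\delta^\eps\to G^\eps$ pointwise off the diagonals, with a $\delta$-uniform local bound of the form $|G_\delta^\eps(x,y)|\lesssim 1 + |\log|x-y||$ near the diagonal (coming from Lemma~\ref{lemma:asymptotics_of_h(0)2}/Lemma~\ref{lemma:moments_of_Nxy} and Cauchy--Schwarz). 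Hence $G_\delta^\eps\to G^\eps$ in $\mL^1_\loc(\CC^+\times\CC^+)$ by dominated convergence. Combined with Lemma~\ref{lemma:mL-norm_via_kernel}, applied to $\vphi_\delta^\eps-\vphi^\eps$ once the latter is shown to be a genuine element of $\Ff_\loc$ (which follows from the kernel estimate and Proposition~\ref{prop:representable_operators}), this gives $\|\vphi_\delta^\eps - \vphi^\eps\|_{\Ff(K)}\to 0$ for every relatively compact $K$; a small technical point is that one needs the mixed kernel $\EE[\vphi_\delta^\eps(x)\vphi^\eps(y)]$ to converge as well, which again reduces to convergence of $\EE[N_\delta(x_1,y_1)N_{\CLE_4}(x_2,y_2)]$ --- but here one cannot use independence across the two ensembles, so instead I would bound $\|\vphi_\delta^\eps - \vphi^\eps\|_{\Ff(K)}^2 = \|G_\delta^\eps - 2\EE[\vphi_\delta^\eps\otimes\vphi^\eps] + G^\eps\|$ directly; alternatively, and more cleanly, use that $\vphi_\delta^\eps$ is tight in $\Ff_\loc(K)$ (by the $\delta$-uniform kernel bound and Corollary~\ref{cor:Ff_is_complete}) and identify the limit by matching all joint moments $\EE\prod_i \vphi_\delta^\eps(\phi_i)$ against those of $\vphi^\eps$, each such moment being a finite linear combination of $\EE\prod_j N_\delta(x_j,y_j)$ integrated against test functions, to which the lamination expansion and Theorem~\ref{thmas:on_convergence_of_cyl_prob} apply. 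Finally, Lemma~\ref{lemma:conv_in_mean_implies_conv_in_law} (or Corollary~\ref{cor:Ff_is_complete} plus the compact embedding $\mH^{-1-\nu/2}\hookrightarrow\mH^{-1-\nu}$) converts convergence in $\Ff_\loc$ into convergence in distribution in $\mH^{-1-\nu}(K)$, which is the assertion of the lemma.

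The main obstacle I anticipate is handling the mixed second moments $\EE[\vphi_\delta^\eps(x)\vphi^\eps(y)]$, since $\Ll_\delta$ and $\Ll_{\CLE_4}$ live on different probability spaces and there is no product structure; the honest fix is to avoid them entirely by working with the characterization of $\Ff_\loc$-limits through joint moments of a \emph{single} field, reducing everything to statements of the form ``$\EE_\delta[\text{polynomial in }N_\delta]\to\EE_{\CLE_4}[\text{same polynomial in }N_{\CLE_4}]$'', which is exactly what the lamination expansion delivers from Theorem~\ref{thmas:on_convergence_of_cyl_prob}. A secondary point requiring care is the uniformity of the dominating sequence in the lamination sum: one must check that the bound $R^{|\Gamma|}\PP[\Ll_\delta\sim\Gamma]\le C$ holds with $C$ independent of $\delta$ (it does, by part~(2) of Theorem~\ref{thmas:on_convergence_of_cyl_prob} for $\delta$ small and trivially for $\delta$ bounded away from $0$ after passing to a subsequence), and that the weight $m(\Gamma)^k$ entering the $k$-th moment is absorbed by choosing $R$ large, since $m(\Gamma)^k\le |\Gamma|^k\le C_{k,R'}R'^{|\Gamma|}$ for any $R'>1$. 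Granting these, the proof is a routine assembly of the ingredients already in place.
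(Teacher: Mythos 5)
Your high-level plan correctly identifies Theorem~\ref{thmas:on_convergence_of_cyl_prob} as the engine, and you also correctly flag the core difficulty: $\vphi^\eps_\delta$ and $\vphi^\eps$ live on different probability spaces, so there is no ``mixed'' second moment $\EE[\vphi^\eps_\delta(x)\vphi^\eps(y)]$ and one cannot hope to show $\|\vphi^\eps_\delta - \vphi^\eps\|_{\Ff(K)}\to 0$ directly. However, your proposed fallback --- tightness in $\mH^{-1-\nu}(K)$ plus convergence of all joint moments $\EE\prod_i \vphi^\eps_\delta(\phi_i)$ --- leaves a real gap: for unbounded random fields, convergence of moments does not by itself identify the weak limit. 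You would need to argue (i) that moment convergence survives passage to subsequential weak limits, which is a uniform integrability issue you do not address, and (ii) that the joint moments of $\vphi^\eps$ determine its law, a moment-determinacy (Carleman-type) claim that you do not establish. Neither step is routine, and as written the argument is incomplete.

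The device the paper uses to close exactly this gap is a \emph{truncation} that your proposal does not contain. One replaces $N_\delta(x,x+\eps)$ by $N_\delta(x,x+\eps)\indic_{N_\delta(x,x+\eps)\leq n}$ (and likewise for CLE$_4$), obtaining fields $\vphi^{\eps,n}_\delta,\vphi^{\eps,n}$ whose truncation error in the $\Ff(K)$-norm is controlled uniformly in $\delta$ by the super-exponential estimate \eqref{eq:superexp_estimate_CLE4}--\eqref{eq:convergence_of_cyl_prob}. The crucial point is that the truncated field is uniformly bounded in $\mL^\infty(K)$, hence a.s.\ takes values in a fixed \emph{compact} subset $\Kk_n$ of $\mH^{-1-\nu}(K)$. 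On a compact metric space, the algebra generated by the evaluation functionals $F_\phi(\vphi)=\vphi(\phi)$ is dense in $\mC(\Kk_n)$ by Stone--Weierstrass, so convergence of all moments $\EE\prod_i F_{\phi_i}(\vphi^{\eps,n}_\delta)\to \EE\prod_i F_{\phi_i}(\vphi^{\eps,n})$ --- which is a direct consequence of Theorem~\ref{thmas:on_convergence_of_cyl_prob} and dominated convergence, since the truncated field is a bounded function of finitely many cylindrical events --- implies weak convergence with no moment-determinacy argument needed. One then removes the truncation via Lemma~\ref{lemma:conv_in_mean_implies_conv_in_law}. Your proposal has the same reductive spirit but misses this bounding step, which is what turns the moment computation into an actual proof of convergence in distribution.
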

\begin{proof}
    Fix $\nu>0$, $\eps>0$, and $K\Subset \Omega$. For $n>0$ put
    \[
    \begin{split}
        &\vphi_\delta^{\eps, n} = N_\delta(x,x+\eps) \indic_{N_\delta(x,x+\eps) \leq n} - \EE \left[N_\delta(x,x+\eps) \indic_{N_\delta(x,x+\eps) \leq n}\right],\\
        &\vphi^{\eps, n} = N(x,x+\eps) \indic_{N(x,x+\eps) \leq n} - \EE\left[ N(x,x+\eps) \indic_{N(x,x+\eps) \leq n}\right].
    \end{split}
    \]
Observe that when specialized to $n=2$, $\lambda_1=x$ and $\lambda_2=x+\eps$, Theorem~\ref{thmas:on_convergence_of_cyl_prob} gives, for any $R>0$, the bounds $\PP[N(x,x+\eps)\geq n] \leq C\cdot R^{-n}$ and $\PP[N_\delta(x,x+\eps)\geq n] \leq C\cdot R^{-n}$ with $C$ uniform in $x\in K$ and in $\delta>0$. Therefore, the same estimates hold also for $\EE [N^2_\delta(x,x+\eps)\indic_{N_\delta(x,x+\eps)\geq n}]$ and $\EE [N^2(x,x+\eps)\indic_{N(x,x+\eps)\geq n}].$ Using Lemma~\ref{lemma:mL-norm_via_kernel} and Cauchy--Schwarz inequality, we deduce that
    \[
        \begin{split}
            &\lim\limits_{n\to +\infty} \|\vphi^{\eps,n} - \vphi^\eps \|_{\Ff(K)} = 0,\\
            &\lim\limits_{n\to +\infty} \|\vphi_\delta^{\eps,n} - \vphi_\delta^\eps \|_{\Ff(K)} = 0\qquad \text{uniformly in }\delta>0.
        \end{split}
    \]
    Thus, by Lemma~\ref{lemma:conv_in_mean_implies_conv_in_law} it is enough to prove that for each $n>0$ the field $\vphi_\delta^{\eps,n}$ converges to $\vphi^{\eps,n}$ weakly with respect to the topology of $\mH^{-1-\nu}(K)$.

    Put
    \[
        \Kk_n = \mathrm{Cl}_{\mH^{-1-\nu}(K)}\,\{\vphi\in \mH^{-1-\nu}(K)\cap \mL^\infty(K)\ \mid\ \|\vphi\|_{\mL^\infty(K)} \leq n\}
    \]
    where $\mathrm{Cl}_{\mH^{-1-\nu}(K)}$ denotes the closure in the norm topology in $\mH^{-1-\nu}(K)$. Note that for each $n>0$ the set $\Kk_n$ is a compact subset of $\mH^{-1-\nu}$, and $\vphi_\delta^{\eps,n}\in \Kk_n$. For each $\phi\in \mC_c^\infty(K)$ define $F_\phi: \Kk_n\to \RR$ by
    \[
        F_\phi(\vphi) = \vphi(\phi) = \int_K \vphi(x)\phi(x)\,dx.
    \]
    Then $F_\phi$ is continuous in the topology of $\mH^{-1-\nu}(K)$ and, by Stone--Weierstrass theorem, the algebra with unit finitely generated by these functions is dense in $\mC(\Kk_n)$. Thus, in order to prove that $\vphi_\delta^{\eps,n}$ converge to $\vphi^{\eps,n}$ in the weak topology, it is enough to prove that for each $\phi_1,\dots, \phi_k\in \mC_c^\infty(K)$ we have
    \[
        \lim\limits_{\delta\to 0} \EE \prod_{i = 1}^k F_{\phi_i}(\vphi_\delta^{\eps,n}) = \EE \prod_{i = 1}^k F_{\phi_i}(\vphi^{\eps,n}).
    \]
    But we have, by dominated convergence and Theorem~\ref{thmas:on_convergence_of_cyl_prob},
    \begin{multline*}
        \EE \prod_{i = 1}^k F_{\phi_i}(\vphi_\delta^{\eps,n}) = \int_K\ldots \int_K \left( \EE \prod_{i = 1}^k\vphi_\delta^{\eps, n}(x_i) \right) \prod_{i = 1}^k\phi_i(x_i)\,dx_i \xrightarrow[\delta\to 0]{} \\
        \xrightarrow[\delta\to 0]{}\int_K\ldots \int_K \left( \EE \prod_{i = 1}^k\vphi^{\eps, n}(x_i) \right) \prod_{i = 1}^k\phi_i(x_i)\,dx_i = \EE \prod_{i = 1}^k F_{\phi_i}(\vphi^{\eps,n}).
    \end{multline*}

\end{proof}

\subsection{Two-point approximation of the double-dimer nesting field}
\label{subsec:double-dimer_nesting_fields}

Let $\psi_\delta,\psi_\delta^\eps$ be the fields $\psi,\psi^\eps$ sampled with respect to $\Ll_\delta$ as defined in Section~\ref{subsec:combinatorial_corresp_betwen_nesting_and_height}, that is,
    \[
        \psi_\delta(x) = h_\delta(x)^2 - \EE h_\delta(x)^2,\qquad \psi^\eps_\delta(x) = h_\delta(x)h_\delta(x+\eps) - \EE h_\delta(x)h_\delta(x+\eps),
    \]
where $h_\delta$ is the double-dimer height function. The goal of this section is to estimate $\|\vphi^\eps_\delta - \vphi_\delta\|_{\Ff(K)}$ uniformly in $\delta>0$. This is achieved in Proposition~\ref{prop:phi_delta_eps_to_phi_delta}. In order to obtain this estimate we compare $\vphi_\delta$ and $\vphi_\delta^\eps$ with $\psi_\delta$ and $\psi_\delta^\eps$ respectively. The control on $\|\psi_\delta - \psi_\delta^\eps\|_{\Ff(K)}$ is obtained by using Lemma~\ref{lemma:mL-norm_via_kernel} and the results of Section~\ref{sec:Height_function_loop_statistics_and_monodromy}.

We begin by proving the following:

\begin{lemma}
    \label{lem:psi_delta_eps_to_psi_delta}
    For any fixed relatively compact open $K\subset \CC^+$, there are functions $\alpha_0(\eps)$ and $\beta_0(\delta)$ such that $\lim_{\delta\to 0}\beta_0(\delta)=\lim_{\eps\to 0}\alpha_0(\eps)=0,$ and for any $\eps,\delta\in (0,1/2)$ one has
    \[
        \|\psi_\delta - \psi_\delta^\eps\|^2_{\Ff(K)} \leq \alpha_0(\eps)+\beta_0(\delta).
    \]
\end{lemma}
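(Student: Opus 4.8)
The plan is to estimate the kernel of the field $\psi_\delta-\psi_\delta^\eps$ and then invoke Lemma~\ref{lemma:mL-norm_via_kernel}. Write
\[
G_{\delta,\eps}(x,y)=\EE\bigl[(\psi_\delta(x)-\psi_\delta^\eps(x))(\psi_\delta(y)-\psi_\delta^\eps(y))\bigr],
\]
so that, by Lemma~\ref{lemma:mL-norm_via_kernel}, it suffices to bound $\|G_{\delta,\eps}\|_{\mL^1(K\times K)}$ by a quantity of the form $\alpha_0(\eps)+\beta_0(\delta)$. Expanding the product, $G_{\delta,\eps}$ is a linear combination of fourth moments $\EE[h_\delta(z_1)h_\delta(z_2)h_\delta(z_3)h_\delta(z_4)]$ with $z_i\in\{x,x+\eps,y,y+\eps\}$ (with the appropriate centerings), so the approximate Wick rule of Lemma~\ref{lemma:Wick_rule_dimer_height} (for $n=4$) applies: each such moment equals the sum over the three pairings of products of two-point functions $\EE[h_\delta(z_i)h_\delta(z_j)]$ plus an error $F_\delta$. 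The first step is therefore to substitute the Wick expansion into $G_{\delta,\eps}$ and observe massive cancellation among the pairing terms.

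Next I would analyse the surviving Gaussian (pairing) part. Here one uses Lemma~\ref{lemma:asymptotics_of_h(0)2}, which gives $\EE[h_\delta(z)h_\delta(z')]=-\tfrac{1}{\pi^2}\log\max(|z-z'|,\delta)+O(1)$. When we form the combination corresponding to $(\psi_\delta(x)-\psi_\delta^\eps(x))$ we are comparing $h_\delta(x)^2$ with $h_\delta(x)h_\delta(x+\eps)$; the difference of the corresponding log-kernels is $O\!\bigl(\tfrac{1}{\pi^2}\log\tfrac{\max(|x-y|,\delta)}{\max(|x-y|,\eps,\delta)}\bigr)$ when $x,y$ are macroscopically separated, and is bounded by $O(\log(1/\eps)+\log(1/\delta))$ in general. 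The key point is that after integrating over $x,y\in K$, the region where $|x-y|\lesssim\eps$ has area $O(\eps^2)$ and contributes $O(\eps^2(\log\tfrac1\eps+\log\tfrac1\delta)^2)$, while on the complementary region the integrand is $O(1)$ uniformly; more carefully, the difference of two-point functions there is supported on scales $|x-y|\in[\delta,\eps]$ and yields an $\mL^1$-bound of order $-\eps\log\eps$ plus a $\delta$-dependent piece $-\delta\log\delta$ from the lattice cutoff, exactly as in the proof of Proposition~\ref{prop:two-point_approximation}. This gives the Gaussian part of $\alpha_0(\eps)+\beta_0(\delta)$.

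Then I would bound the contribution of the error terms $F_\delta$. By the estimate~\eqref{eq:Wick_rule_dimer_height_error} with $n=4$, for a configuration of four points with minimal triple-diameter $r$ we have $|F_\delta|\le C\bigl(\min(\tfrac{\delta|\log\delta|^3}{r},1)+o(1)\bigr)$, where the $o(1)$ is uniform as $\delta\to0$ once $r$ is bounded below. In our situation $r$ is, up to constants, $\min(|x-y|,\eps)$ (since every triple among $\{x,x+\eps,y,y+\eps\}$ that is not a near-collision has diameter comparable to $|x-y|$, while triples containing both $x,x+\eps$ have diameter $\gtrsim\eps$ unless $|x-y|$ is small). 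Integrating $\min(\tfrac{\delta|\log\delta|^3}{\min(|x-y|,\eps)},1)$ over $(x,y)\in K\times K$ produces a bound of order $\delta|\log\delta|^3\cdot|\log(\delta/\eps)|$ (from the annulus $\delta\lesssim|x-y|\lesssim\eps$) plus lower-order terms, which is $\le\beta_0(\delta)$ for fixed $\eps$ and tends to $0$; the $o(1)$ piece integrates to $o(1)\cdot|K|^2$, again absorbed into $\beta_0(\delta)$. The only subtlety is the set of $(x,y)$ with $r=O(\delta)$, i.e.\ where no admissible paths exist; there the last factor in~\eqref{eq:Wick_rule_dimer_height_error} is $O(1)$, but this set has area $O(\delta^2)$ and $F_\delta$ is still $O(|\log\delta|)$ there, so its contribution is $O(\delta^2|\log\delta|^2)$.

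The main obstacle I anticipate is bookkeeping: organizing the cancellation in the pairing part so that what remains is genuinely of size $O(-\eps\log\eps)$ uniformly in $\delta$, rather than merely $O(\log\tfrac1\eps+\log\tfrac1\delta)$ pointwise, and simultaneously controlling the $\delta$-dependence so that it is a separate additive term $\beta_0(\delta)$ with no residual $\eps$-dependence obstructing $\beta_0(\delta)\to0$. This is precisely where one must exploit that the two-point function $\EE[h_\delta(z)h_\delta(z')]$ and the error $F_\delta$ both saturate at the lattice scale $\delta$ \emph{and} at the mesoscopic scale $\eps$ in a compatible way, so that the difference $\psi_\delta-\psi_\delta^\eps$ ``sees'' only the dyadic range between $\delta$ and $\eps$. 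Once the kernel bound $\|G_{\delta,\eps}\|_{\mL^1(K\times K)}\le \alpha_0(\eps)+\beta_0(\delta)$ is established with $\alpha_0(\eps)=O(-\eps\log\eps)$ and $\beta_0(\delta)=O(\delta|\log\delta|^4)+o(1)$, Lemma~\ref{lemma:mL-norm_via_kernel} finishes the proof.
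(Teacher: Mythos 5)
Your reduction to a kernel estimate via Lemma~\ref{lemma:mL-norm_via_kernel} and the expansion by the approximate Wick rule of Lemma~\ref{lemma:Wick_rule_dimer_height} match the paper's approach, and your treatment of the $F_\delta$ error terms is sound in spirit. The gap is in how you handle the Gaussian (pairing) part. You invoke Lemma~\ref{lemma:asymptotics_of_h(0)2}, which only gives $G_\delta(z,z') := \EE[h_\delta(z)h_\delta(z')] = -\tfrac{1}{\pi^2}\log\max(|z-z'|,\delta) + O(1)$ with an uncontrolled $O(1)$ term, and you then assert that differences such as $G_\delta(x,y)-G_\delta(x,y+\eps)$ are of size $O\bigl(\log\tfrac{\max(|x-y|,\delta)}{\max(|x-y-\eps|,\delta)}\bigr)$ (hence small when $|x-y|\gg\eps$). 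This does \emph{not} follow: the $O(1)$ corrections in the two evaluations of $G_\delta$ are not known to agree, and Lemma~\ref{lemma:asymptotics_of_h(0)2} provides no modulus of continuity for $G_\delta$. The cancellation you rely on --- that the pairing part of $\Psi_\delta - \Psi_{\delta,\eps} - \Psi_{\delta,\eps}^{\leftrightarrow} + \Psi_{\delta,\eps,\eps}$ is concentrated on the dyadic range $|x-y|\in[\delta,\eps]$ --- is precisely what is lost if the $O(1)$'s cannot be subtracted off.

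The paper closes this gap differently: it uses the stronger fact (from~\cite{KenyonGFF}, or by tracking constants in the proof of Lemma~\ref{lemma:asymptotics_of_h(0)2}) that $G_\delta(x,y)\to G(x,y) = -\tfrac{1}{\pi^2}\log\bigl|\tfrac{x-y}{x-\bar y}\bigr|$ pointwise for $x\neq y$, and then upgrades this via dominated convergence to convergence in $L^2(K\times K)$. After Cauchy--Schwarz, this replaces all occurrences of $G_\delta$ by the \emph{$\delta$-independent} limit $G$ at a cost $o_\delta(1)$ in $L^1(K\times K)$, uniformly in $\eps$. The remaining estimate is purely continuous: $G(\cdot,\cdot+\eps)\to G(\cdot,\cdot)$ in $L^2(K\times K)$ as $\eps\to 0$, which is elementary. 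To repair your argument you would either need to adopt this $L^2$-convergence route, or replace the appeal to Lemma~\ref{lemma:asymptotics_of_h(0)2} with a genuine local Lipschitz/modulus-of-continuity estimate for $G_\delta$ away from the diagonal (uniform in $\delta$), neither of which is stated in the tools you cite.
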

\begin{proof}
    Given two vertices $x,y$ of the dual lattice $(\CC^+_\delta)^\ast$ define
    \begin{equation}
        \label{eq:ppd1}
        \begin{split}
            &\Psi_\delta(x,y) = \EE(\psi_\delta(x)\psi_\delta(y)),\\
            &\Psi_{\delta,\eps}(x,y) = \EE(\psi_\delta(x)\psi_\delta^\eps(y)),\\
            &\Psi_{\delta,\eps,\eps}(x,y) = \EE(\psi_\delta^\eps(x)\psi_\delta^\eps(y)).
        \end{split}
    \end{equation}
    and $\Psi_{\delta,\eps}^{\leftrightarrow}(x,y)=\Psi_{\delta,\eps}(y,x)$. According to Lemma~\ref{lemma:mL-norm_via_kernel}, we have
    \begin{equation}
        \label{eq:ppd2}
        \|\psi_\delta - \psi_\delta^\eps\|^2_{\Ff(K)}\leq\|\Psi_\delta - \Psi_{\delta, \eps} - \Psi^{\leftrightarrow}_{\delta,\eps} + \Psi_{\delta,\eps,\eps}\|_{L^1(K\times K)}.
    \end{equation}
    Denote $G_\delta(x,y)=\EE \left[h_\delta(x)h_\delta(y)\right].$ By Lemma~\ref{lemma:Wick_rule_dimer_height}, we have as $\delta\to 0$
    \begin{equation}
        \label{eq:ppd3}
        \begin{split}
            &\Psi_\delta(x,y) = 2G^2_\delta(x,y) + o(1),\\
            &\Psi_{\delta,\eps}(x,y) = 2G_\delta(x,y)G_\delta(x,y+\eps) + o(1),\\
            &\Psi_{\delta,\eps,\eps}(x,y) =
        G_\delta(x,y)G_\delta(x+\eps,y+\eps)+G_\delta(x+\eps,y)G_\delta(x,y+\eps)+o(1).
        \end{split}
    \end{equation}
   with $o(1)$ uniform over $x,y\in K$ and independent of $\eps$. By the result of \cite{KenyonGFF} (or simply keeping track of the constant terms in the proof of Lemma \ref{lemma:asymptotics_of_h(0)2}), we have for distinct $x,y\in K$, \[\EE G_\delta(x,y)\stackrel{\delta\to 0}{\longrightarrow} G(x,y):=-\frac{1}{\pi^2}\log\left|\frac{x-y}{x-\bar{y}}\right|.\] The bound of Lemma \ref{lemma:asymptotics_of_h(0)2}, together with the dominated convergence theorem, ensures that the convergence also holds in $L^2(K\times K)$. Therefore, using Cauchy-Schwarz, we see that
   \[
\|\Psi_\delta - \Psi_{\delta, \eps} - \Psi_{\delta,\eps}^{\leftrightarrow} + \Psi_{\delta,\eps,\eps}\|_{L^1(K\times K)}=\|\Psi - \Psi_{\eps} - \Psi^{\leftrightarrow}_{\eps} + \Psi_{\eps,\eps}\|_{L^1(K\times K)}+o(1),\quad \delta\to 0,
   \]
   where $o(1)$ is independent of $\eps$, and we denote $\Psi(x,y)=2G^2(x,y),$ $\Psi_\eps(x,y)=2G(x,y)G(x,y+\eps)$, $\Psi^{\leftrightarrow}_\eps(x,y)=\Psi_\eps(y,x)$, and $\Psi_{\eps,\eps}(x,y) =
        G(x,y)G(x+\eps,y+\eps)+G(x+\eps,y)G(x,y+\eps)$. Since it is also elementary to check that each of $G(x,y+\eps)$, $G(x+\eps,y)$ and $G(x+\eps,y+\eps)$ converges to $G(x,y)$ as $\eps\to 0$ in $L^2(K\times K)$,
 the result follows.
\end{proof}

Denote
\[
        \begin{split}
            \Pi_\eps(x,y) &= \EE\left[P(x,y)  - P_\eps(x,y) - P_\eps(y,x)+ P_{\eps,\eps}(x,y)\right],\\
            \Pi_{\delta,\eps}(x,y) &=\EE\left[P_\delta(x,y)  - P_{\delta, \eps}(x,y) - P_{\delta, \eps}(y,x)+ P_{\delta,\eps,\eps}(x,y)\right]
        \end{split}
    \]
where $P,P_\eps,P_{\eps,\eps}$ (respectively, $P_\delta,P_{\delta,\eps},P_{\delta,\eps,\eps}$) are as in Lemma~\ref{lemma:nesting_via_height} with respect to $\CLE{}_4$ in $\CC^+$ (respectively, with respect to the double dimer loop model in $\CC^+_\delta$).
\begin{lemma}
    \label{lemma:norm_of_Q_eps}
    For any relatively compact open $K\subset \Omega$, we have
    \begin{equation}
    \label{eq:Pi_norm_goes_to_zero}
    \|\Pi_\eps\|_{L^1(K\times K)}\stackrel{\eps\to 0}{\longrightarrow}0,
    \end{equation}
    and there exist functions $\alpha_1(\eps)$ and $\beta_1(\eps,\delta)$ such that
    \begin{equation}
    \label{eq:Pi_delta_norm}
    \|\Pi_{\delta,\eps}\|_{L^1(K\times K)}\leq \alpha_1(\eps)+\beta_1(\eps,\delta),
    \end{equation}
and  $\lim_{\eps\to 0}\alpha_1(\eps)=0$ and $\lim_{\delta\to 0} \beta_1(\eps,\delta)=0$ for every $\eps>0$.
\end{lemma}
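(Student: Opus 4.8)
The plan is to reduce everything to the observation recorded in Remark~\ref{rem:P_well_defined_for_CLE}. Write $x_1=x$, $x_2=x+\eps$, $y_1=y$, $y_2=y+\eps$, and let $A=A(x,y,\eps)$ be the event that every loop of the ensemble encircling at least one of $x_1,x_2$ and at least one of $y_1,y_2$ in fact encircles all four of these points. On $A$ each of the quantities $N(\cdot)$ occurring in $P$, $P_\eps$, $P_{\eps,\eps}$ --- which, by Remark~\ref{rem:P_well_defined_for_CLE}, involve only sub-collections of $\{x_1,x_2,y_1,y_2\}$ meeting both pairs --- equals $N(x_1,x_2,y_1,y_2)$, and a direct substitution into the formulas of Corollary~\ref{lemma:nesting_via_height} shows that the combination $P(x,y)-P_\eps(x,y)-P_\eps(y,x)+P_{\eps,\eps}(x,y)$ vanishes identically on $A$; the same computation applies to the dimer quantities on the analogous event $A_\delta$. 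Hence $\Pi_\eps(x,y)=\EE[(P(x,y)-P_\eps(x,y)-P_\eps(y,x)+P_{\eps,\eps}(x,y))\indic_{A^c}]$ and likewise $\Pi_{\delta,\eps}(x,y)=\EE[(\cdots)\indic_{A_\delta^c}]$, so the whole estimate reduces to showing that the ``bad'' events are unlikely and that the quadratic integrand has controlled moments.

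To prove \eqref{eq:Pi_norm_goes_to_zero}, note that on $A^c$ there is, up to symmetry, a loop encircling $x$ and $y$ but not $x+\eps$; such a loop must cross the segment $[x,x+\eps]$, hence come within distance $\eps$ of $x$ while still encircling $x$ and $y$, so the number of such loops is at most the number of $\CLE_4$ loops encircling $x$ and $y$ that meet the $2\eps$-neighbourhood of $x$. By Lemma~\ref{lemma:bound_on_loops_separating_pair_from_the_third} (applied with $j=1$, after subdividing $K$ if necessary so that $|x-y|\le\min(\CR(x;\CC^+),\CR(y;\CC^+))$) this gives $\PP[A^c]\lesssim(\eps/|x-y|)^\alpha$ uniformly for $x,y\in K$. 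On the other hand, \eqref{eq:jth_moment_of_N(x,y)} together with the monotonicity of $N$ in its set of arguments bounds the fourth moment of each relevant $N(\cdot)$ by $(|\log|x-y||+1)^4$ once $\eps$ is small (the Green's function of $\CC^+$ at any two of the marked points being $\lesssim|\log|x-y||+1$). Applying Cauchy--Schwarz termwise to the explicit quadratic expression yields $|\Pi_\eps(x,y)|\lesssim(|\log|x-y||+1)^2(\eps/|x-y|)^{\alpha/2}$; integrating over $K\times K$ --- the contribution of $\{|x-y|<\eps\}$ being $O(\eps^2|\log\eps|^2)$ by the crude bound $|\Pi_\eps(x,y)|\lesssim(|\log|x-y||+1)^2$, and that of $\{|x-y|\ge\eps\}$ being $O(\eps^{\alpha/2})$ --- gives $\|\Pi_\eps\|_{L^1(K\times K)}\to0$.

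For \eqref{eq:Pi_delta_norm} I would set $\alpha_1(\eps)=\|\Pi_\eps\|_{L^1(K\times K)}$ and $\beta_1(\eps,\delta)=\|\Pi_{\delta,\eps}-\Pi_\eps\|_{L^1(K\times K)}$, so that $\alpha_1(\eps)\to0$ by the part just proved. To see that $\beta_1(\eps,\cdot)\to0$ for each fixed $\eps$, observe that every $N_\delta(\cdot)$ entering $P_\delta,P_{\delta,\eps},P_{\delta,\eps,\eps}$, and hence every product of two of them, is a function of the macroscopic lamination of $\Ll_\delta$ relative to the marked points $\{x,x+\eps,y,y+\eps\}$; by Theorem~\ref{thmas:on_convergence_of_cyl_prob} the law of this lamination converges as $\delta\to0$ to the corresponding $\CLE_4$ law (atomwise by \eqref{eq:convergence_of_cyl_prob}, with tightness supplied by \eqref{eq:superexp_estimate_CLE4}), while Lemma~\ref{lemma:moments_of_Nxy} bounds the second moments of those products by $(|\log|x-y||+1)^{C}$ uniformly in $\delta$, hence provides uniform integrability. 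Thus $\Pi_{\delta,\eps}(x,y)\to\Pi_\eps(x,y)$ for every fixed $x\ne y$; since moreover $|\Pi_{\delta,\eps}(x,y)|+|\Pi_\eps(x,y)|\lesssim(|\log|x-y||+1)^{C}\in L^1(K\times K)$ uniformly in $\delta,\eps$ by Lemma~\ref{lemma:moments_of_Nxy} and \eqref{eq:jth_moment_of_N(x,y)}, dominated convergence gives $\beta_1(\eps,\delta)\to0$ as $\delta\to0$.

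I expect the main obstacle to be the first ingredient: converting the heuristic ``a loop in $A^c$ separates one pair while enclosing the other, hence passes within $\eps$ of one of its points'' into an honest probability bound $\PP[A^c]\lesssim(\eps/|x-y|)^\alpha$ uniform over $K$ --- that is, correctly reducing $A^c$ to the hypotheses of Lemma~\ref{lemma:bound_on_loops_separating_pair_from_the_third} (including the conformal-radius condition there and the several symmetric sub-cases), and, in the dimer setting, identifying $A_\delta^c$ as a countable union of cylindrical events so that Theorem~\ref{thmas:on_convergence_of_cyl_prob} applies together with the uniform moment bounds of Lemma~\ref{lemma:moments_of_Nxy}. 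The remaining steps are routine applications of Cauchy--Schwarz and dominated convergence.
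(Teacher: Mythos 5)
Your proposal is correct and follows essentially the same route as the paper: both exploit the cancellation recorded in Remark~\ref{rem:P_well_defined_for_CLE}, control the ``separating loop'' contribution via Lemma~\ref{lemma:bound_on_loops_separating_pair_from_the_third}, split the $L^1$ integral into a near-diagonal piece (crude logarithmic bound) and a far piece (where the $(\eps/|x-y|)^\alpha$ gain applies), and then deduce~\eqref{eq:Pi_delta_norm} from~\eqref{eq:Pi_norm_goes_to_zero} by combining Theorem~\ref{thmas:on_convergence_of_cyl_prob} with the uniform moment bounds of Lemma~\ref{lemma:moments_of_Nxy}. The only cosmetic difference is the packaging of the cancellation: you apply Cauchy--Schwarz against the indicator of the bad event $A^c$ and therefore need the probability estimate $\PP[A^c]\lesssim(\eps/|x-y|)^\alpha$ (the ``obstacle'' you flag, which indeed requires a brief union-over-pairs argument and a factor-of-two adjustment of the ball radius in Lemma~\ref{lemma:bound_on_loops_separating_pair_from_the_third}), whereas the paper argues termwise via $|N(A)-N(x,y)|\leq|N(x,y)-N^{B_\eps}(x,y)|$ and uses only the moment form of that lemma, never $\PP[A^c]$ itself.
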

\begin{proof}
    Recall from Remark \ref{rem:P_well_defined_for_CLE} the properties of $P,P_\eps,P_{\eps,\eps}$. We can expand $\Pi_{\eps}(x,y)$ into a linear combination of $\EE[N(A)N(B)]$ and $\EE N(A),$ where $A,B\subset\{x,y,x+\eps,y+\eps\}$, apply Cauchy-Schwarz on each term, and note that each $N(A)$ can be upper bounded by one of $N(x,y)$, $N(x,y+\eps)$, $N(x+\eps,y)$, $N(x+\eps,y+\eps)$.  Applying~\eqref{eq:jth_moment_of_N(x,y)}, we get
    \begin{equation}
        \label{eq:nQe1}
        |\Pi_\eps(x,y)|\lesssim (|\log |x-y||+1)^2 + (|\log |x-y-\eps||+1)^2 + (|\log |x-y+\eps||+1)^2.
    \end{equation}
    On the other hand, when $x,y$ are far apart, say $|x-y|\geq 4\eps$, we observe that for every subset $A$ which contains at least one of $x,x+\eps$ and at least one of $y,y+\eps$, we have, in the notation of Lemma~\ref{lemma:bound_on_loops_separating_pair_from_the_third}, $|N(A)-N(x,y)|\leq |N(x,y)-N^{B_\eps}(x,y)|.$ Therefore, applying Cauchy-Schwarz, Lemma~\ref{lemma:bound_on_loops_separating_pair_from_the_third} and~\eqref{eq:jth_moment_of_N(x,y)}, we get
    \[
    \left|\EE [N(A)N(B)]-\EE N^2(x,y)\right|\lesssim \left(\frac{\eps}{|x-y|}\right)^\alpha\cdot (|\log |x-y||+1),
    \]
    and similarly for $\left|\EE [N(A)]-\EE N(x,y)\right|.$ Since replacing each $N(\cdot)$ with $N(x,y)$ in the definition of $\Pi_\eps$ yields identically zero, we have
    \begin{equation}
        \label{eq:nQe2}
        |\Pi_\eps(x,y)| \lesssim \left(\frac{\eps}{|x-y|}\right)^\alpha\cdot  (|\log |x-y||+1),\qquad |x-y|>4\eps.
    \end{equation}
    Integrating~\eqref{eq:nQe1} for $|x-y|<4\eps$ and~\eqref{eq:nQe2} for $|x-y|\geq 4\eps$, we get \eqref{eq:Pi_norm_goes_to_zero}.

    To prove \eqref{eq:Pi_delta_norm}, we write
    \begin{multline*}
         \|\Pi_{\delta,\eps}\|_{L^1(K\times K)}\leq \|\Pi_{\eps}\|_{L^1(K\times K)}\\
        + \int\limits_{|x-y|<4\eps,\ (x,y)\in K}|\Pi_{\delta,\eps}(x,y)|\,dxdy + \int\limits_{|x-y|\geq 4\eps,\ (x,y)\in K}|\Pi_{\delta,\eps}(x,y) - \Pi_\eps(x,y)|\,dxdy.
    \end{multline*}
       The second integral goes to zero as $\delta\to 0$, since given the set $K$ and $\eps>0$, we have
    \begin{equation}
        \label{eq:Qde1}
        \lim\limits_{\delta\to 0+} G_{\delta,\eps}(x,y) = G_\eps(x,y)
    \end{equation}
    uniformly in $x,y\in K,\ |x-y|\geq 4\eps,$ due to Theorem~\ref{thmas:on_convergence_of_cyl_prob}. To estimate the first integral, observe that we have, independently of $\delta$, the same bound for $\Pi_{\delta,\eps}$ as in \eqref{eq:nQe1}, with the same proof except that we use Lemma \ref{lemma:moments_of_Nxy} instead of \eqref{eq:jth_moment_of_N(x,y)}. Integrating this bound concludes the proof.
\end{proof}

\begin{prop}
    \label{prop:phi_delta_eps_to_phi_delta}
    For any relatively compact open $K\subset \CC^+$ there exist functions $\alpha(\eps),\beta(\eps,\delta)$ such that $\lim\limits_{\eps\to 0+}\alpha(\eps) = 0$, for any fixed $\eps\in (0,1/2)$ we have $\lim\limits_{\delta\to 0+}\beta(\eps,\delta) = 0$ and
    \[
        \|\vphi_\delta - \vphi_\delta^\eps\|^2_{\Ff(K)} \leq \alpha(\eps) + \beta(\eps,\delta)
    \]
    for all $\eps\in (0,1/2)$ and $\delta>0$.
\end{prop}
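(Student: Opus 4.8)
The plan is to compare $\vphi_\delta$ to $\psi_\delta$ and $\vphi_\delta^\eps$ to $\psi_\delta^\eps$ using the exact combinatorial identities of Corollary~\ref{lemma:nesting_via_height}, so that the difference $\vphi_\delta-\vphi_\delta^\eps$ is controlled by the difference $\psi_\delta-\psi_\delta^\eps$ plus error terms coming from the polynomials $P_\delta$, $P_{\delta,\eps}$, $P_{\delta,\eps,\eps}$. Concretely, set $G^{\vphi}_\delta(x,y)=\EE[(\vphi_\delta(x)-\vphi_\delta^\eps(x))(\vphi_\delta(y)-\vphi_\delta^\eps(y))]$ and expand it bilinearly. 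By Corollary~\ref{lemma:nesting_via_height}, each of the four terms $\EE\vphi_\delta(x)\vphi_\delta(y)$, $\EE\vphi_\delta(x)\vphi_\delta^\eps(y)$, $\EE\vphi_\delta^\eps(x)\vphi_\delta(y)$, $\EE\vphi_\delta^\eps(x)\vphi_\delta^\eps(y)$ equals the corresponding $\psi$-correlation minus $\EE P_\delta$, $\EE P_{\delta,\eps}$, etc. Collecting, we obtain the kernel identity
\[
G^{\vphi}_\delta(x,y) = \bigl(\Psi_\delta-\Psi_{\delta,\eps}-\Psi^{\leftrightarrow}_{\delta,\eps}+\Psi_{\delta,\eps,\eps}\bigr)(x,y) - \Pi_{\delta,\eps}(x,y),
\]
where the first bracket is exactly the kernel appearing in Lemma~\ref{lem:psi_delta_eps_to_psi_delta} and $\Pi_{\delta,\eps}$ is the combination of $P$'s appearing in Lemma~\ref{lemma:norm_of_Q_eps} (here one must be a little careful that the roles of $x,y$ in $P_{\delta,\eps}$ vs.\ $P_{\delta,\eps}(y,x)$ match up with the roles in $\psi_\delta(x)\psi_\delta^\eps(y)$ vs.\ $\psi_\delta^\eps(x)\psi_\delta(y)$, which is why $\Pi_{\delta,\eps}$ is defined with both orderings).

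Having this kernel identity, I would invoke Lemma~\ref{lemma:mL-norm_via_kernel}: since $\vphi_\delta-\vphi_\delta^\eps\in\Ff_\loc(\CC^+)$ (it is a finite linear combination of loop counts, hence has finite local $\Ff$-norm, and its two-point function is the integrable kernel just displayed), we get
\[
\|\vphi_\delta-\vphi_\delta^\eps\|^2_{\Ff(K)} \le \bigl\|\Psi_\delta-\Psi_{\delta,\eps}-\Psi^{\leftrightarrow}_{\delta,\eps}+\Psi_{\delta,\eps,\eps}\bigr\|_{L^1(K\times K)} + \|\Pi_{\delta,\eps}\|_{L^1(K\times K)}.
\]
The first term is bounded by $\alpha_0(\eps)+\beta_0(\delta)$ by Lemma~\ref{lem:psi_delta_eps_to_psi_delta}, and the second by $\alpha_1(\eps)+\beta_1(\eps,\delta)$ by Lemma~\ref{lemma:norm_of_Q_eps}. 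Setting $\alpha(\eps)=\alpha_0(\eps)+\alpha_1(\eps)$ and $\beta(\eps,\delta)=\beta_0(\delta)+\beta_1(\eps,\delta)$ gives exactly the claimed bound, with $\alpha(\eps)\to 0$ as $\eps\to 0$ and $\beta(\eps,\delta)\to 0$ as $\delta\to 0$ for fixed $\eps$.

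I do not expect a genuine obstacle here, since all the heavy lifting is already packaged in the two preceding lemmas and in Corollary~\ref{lemma:nesting_via_height}; the proof is essentially an assembly step. The one point requiring real care is the bookkeeping in the kernel identity: one must verify that the exact identities of Corollary~\ref{lemma:nesting_via_height}, which relate mixed moments $\EE\psi_\delta(\cdot)\psi_\delta^\eps(\cdot)$ to mixed moments $\EE\vphi_\delta(\cdot)\vphi_\delta^\eps(\cdot)$, reassemble — after taking the four combinations with the correct signs and the correct $x\leftrightarrow y$ orderings — into precisely the combination $\Psi_\delta-\Psi_{\delta,\eps}-\Psi^{\leftrightarrow}_{\delta,\eps}+\Psi_{\delta,\eps,\eps}$ of Lemma~\ref{lem:psi_delta_eps_to_psi_delta} and precisely the combination $\Pi_{\delta,\eps}$ of Lemma~\ref{lemma:norm_of_Q_eps}. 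A secondary small point is confirming that $\vphi_\delta-\vphi_\delta^\eps$ indeed lies in $\Ff_\loc(\CC^+)$ so that Lemma~\ref{lemma:mL-norm_via_kernel} applies; this follows because $N_\delta(x,\cdot)$ and $N_\delta(x,x+\eps,\cdot)$ have uniformly (in $\delta$) exponentially decaying tails on compacts — a consequence of Lemma~\ref{lemma:moments_of_Nxy} together with the tail bounds it yields, or directly from Theorem~\ref{thmas:on_convergence_of_cyl_prob} specialized to two marked points — so the relevant second moments are finite and locally bounded.
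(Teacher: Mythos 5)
Your proof is correct and follows essentially the same route as the paper: Corollary~\ref{lemma:nesting_via_height} yields the kernel identity $\EE[(\vphi_\delta(x)-\vphi_\delta^\eps(x))(\vphi_\delta(y)-\vphi_\delta^\eps(y))]=\EE[(\psi_\delta(x)-\psi_\delta^\eps(x))(\psi_\delta(y)-\psi_\delta^\eps(y))]-\Pi_{\delta,\eps}(x,y)$, and the result then follows by combining Lemmas~\ref{lemma:mL-norm_via_kernel}, \ref{lem:psi_delta_eps_to_psi_delta}, and~\ref{lemma:norm_of_Q_eps}. The only thing you add beyond the paper's terse assembly is the explicit remark that $\vphi_\delta-\vphi_\delta^\eps$ lies in $\Ff_\loc$, a legitimate hypothesis of Lemma~\ref{lemma:mL-norm_via_kernel} that the paper treats as obvious; your justification via the moment bounds is fine.
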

\begin{proof}
    Note that, due to Lemma~\ref{lemma:nesting_via_height}, we have
    \[
       \EE[(\vphi_\delta(x) - \vphi_\delta^\eps(x))(\vphi_\delta(y) - \vphi_\delta^\eps(y))] = \EE[(\psi_\delta(x) - \psi_\delta^\eps(x))(\psi_\delta(y) - \psi_\delta^\eps(y))]-\Pi_{\delta,\eps}(x,y).
    \]
    The result now follow directly from Lemmas \ref{lemma:mL-norm_via_kernel}, \ref{lem:psi_delta_eps_to_psi_delta} and \ref{lemma:norm_of_Q_eps}.
\end{proof}

\subsection{Proof of Theorem~\ref{thma:main2}}
\label{subsec:proof_main2}
Given a relatively compact $K\subset \CC^+$, a bounded, continuous function $F:H^{-1-\nu}(K)\to \RR,$ and $\eps>0$, we can write
\[
|\EE F(\varphi_\delta)-\EE F(\varphi)|\leq |\EE F(\varphi_\delta)-\EE F(\varphi_\delta^\eps)|+|\EE F(\varphi^\eps_\delta)-\EE F(\varphi^\eps)|+|\EE F(\varphi^\eps)-\EE F(\varphi)|.
\]
Let $\epsilon >0$ be given. By Proposition \ref{prop:two-point_approximation} and Lemma \ref{lemma:conv_in_mean_implies_conv_in_law}, we can ensure that $|\EE F(\varphi^\eps)-\EE F(\varphi)|<\epsilon/3$ by choosing $\eps$ small enough. By Proposition \ref{prop:phi_delta_eps_to_phi_delta} and Lemma \ref{lemma:conv_in_mean_implies_conv_in_law}, by taking $\eps>0$ small enough, we can ensure that $|\EE F(\varphi_\delta)-\EE F(\varphi_\delta^\eps)|<\epsilon/3$ for all small enough $\delta$. Fixing this $\eps$, by Lemma \ref{lemma:vphi_delta_eps_to_vphi_eps}, we have that $|\EE F(\varphi^\eps_\delta)-\EE F(\varphi^\eps)|<\epsilon/3$ provided that $\delta$ is small enough. Since $\epsilon$ and $F$ are arbitrary, this shows that $\varphi_\delta\to\varphi$ in distribution in $H^{-1-\nu}(K)$ for each $\nu>0$ and $K$.

It is now standard to derive convergence in distribution in $H^{-1-\nu}_\loc(\CC^+)$. Let $\mathcal{U}\subset H^{-1-\nu}_\loc(\CC^+)$ be open; by regularity of measures, given $\eps>0$, we can choose $\mathcal{K}\subset \mathcal{U}$ compact such that $\PP[\varphi\in \mathcal{U}\setminus \mathcal{K}]<\eps$. Every point $x$ of $\mathcal{K}$ has a neighborhood of the form $\{\psi:\psi|_{H^{1+\nu}(K)}\in U\}$ contained in $\mathcal{U},$ where $K\subset \CC^+$ is open and relatively compact, and $U\subset H^{-1-\nu}(K)$ is open. Picking a finite sub-cover, we can find an open set $\mathcal{U}'$ of the above form such that $\mathcal{K}\subset \mathcal{U}'\subset \mathcal{U}$. Since $\varphi_\delta\to\varphi$ in distribution in $H^{-1-\nu}(K)$, Portmanteau theorem gives
\[
\lim\inf \PP(\varphi_\delta\in \mathcal{U})\geq \lim\inf \PP(\varphi_\delta\in \mathcal{U}')\geq \PP(\varphi\in \mathcal{U}')\geq \PP(\varphi \in \mathcal{U})-\eps.
\]
Since $\eps>0$ is arbitrary, by Portmanteau theorem we have $\varphi_\delta\to\varphi$ in distribution in $H^{-1-\nu}_\loc(\CC^+)$.

\printbibliography

\end{document}